\pgfplotsset{compat=1.10}
\renewcommand{\l@paragraph}[2]{\ifnum 4>\c@tocdepth \else
  \vskip \z@ \@plus .2\p@ {\smaller
    \leftskip 7em\relax \rightskip \@tocrmarg
    \parfillskip -\rightskip
    \parindent 7em\relax
    \@afterindenttrue
    \interlinepenalty \@M
    \leavevmode
    \@tempdima 4.1em\relax
    \advance \leftskip \@tempdima
    \null \nobreak \hskip -\leftskip {#1}\hfill\nobreak\par }\fi}
\setlist[2]{nosep}
\renewcommand{\emph}{\textsl}
\newcommand*\smallbullet{\mathpalette\smallbullet@{.5}}
\newcommand*\smallbullet@[2]{\mathbin{\vcenter{\hbox{\scalebox{#2}{$\m@th#1\bullet$}}}}}
\DeclareSymbolFont{origAMSa}{U}{msa}{m}{n}
\DeclareMathSymbol{\Box}{\mathord}{origAMSa}{3}
\renewcommand{\geq}{\geqslant}
\renewcommand{\leq}{\leqslant}
\theoremstyle{plain}
\newtheorem{theorem}{Theorem}[section]
\newcommand{\mynewtheorem}[2]{
  \newaliascnt{#1}{theorem}
  \newtheorem{#1}[#1]{#2}
  \aliascntresetthe{#1}
  \expandafter\providecommand\csname #1autorefname\endcsname{#2}
}
\numberwithin{equation}{section}
\numberwithin{figure}{section}
\newcommand{\bei}{\begin{itemize}}
\newcommand{\eei}{\end{itemize}}
\newcommand{\be}{\begin{equation}}
\newcommand{\ee}{\end{equation}} 
\newcommand{\bel}{\be \label}
\newcommand{\bse}{\begin{subequations}}
\newcommand{\ese}{\end{subequations}}
\DeclareMathOperator{\Ric}{Ric}
\DeclareMathOperator{\diag}{diag}
\DeclareMathOperator{\Vol}{Vol}
\DeclareMathOperator{\Tr}{Tr}
\DeclareMathOperator{\sgn}{sgn}
\newcommand{\Id}{\textnormal{\bf I}}
\newcommand{\abs}[1]{\lvert#1\rvert}
\newcommand{\RR}{\mathbb{R}}
\newcommand{\ZZ}{\mathbb{Z}}
\newcommand{\Tbb}{\mathbb{T}}
\newcommand{\del}{\partial}
\newcommand{\SF}{\textnormal{SF}}
\newcommand{\Kasner}{\textnormal{Kasner}}
\newcommand{\Hcal}{\mathcal{H}}
\newcommand{\Lcal}{\mathcal{L}}
\newcommand{\Mcal}{\mathcal{M}}
\newcommand{\Ncal}{\mathcal{N}}
\newcommand{\Pcal}{\mathcal{P}}
\newcommand{\Ucal}{\mathcal{U}}
\newcommand{\Ispace}{\mathbf I_{\textnormal{space}}}
\newcommand{\Ispacep}{\mathbf I_{\textnormal{space}}^{K>0}}
\newcommand{\Itime}{\mathbf I_{\textnormal{time}}}
\newcommand{\Itimep}{\mathbf I_{\textnormal{time}}^{K>0}}
\newcommand{\Ibf}{\mathbf I}
\newcommand{\Sbf}{\mathbf S}
\newcommand{\Sirc}{\mathbf S^{\textnormal{iso,rc}}}
\newcommand{\Sarc}{\mathbf S^{\textnormal{ani,rc}}}
\newcommand{\Siso}{\mathbf S^{\textnormal{iso}}}
\newcommand{\Sani}{\mathbf S^{\textnormal{ani}}}
\newcommand{\Ipoint}{\mathbf{I}_{\textnormal{point}}}
\newcommand{\Spoint}{\mathbf{S}_{\textnormal{point}}}
\newcommand{\phimin}{\phi_{\textnormal{min}}}
\newcommand{\nablamoi}{\nabla_-{}}
\newcommand{\nablapoi}{\nabla_+{}}
\newcommand{\Gammamoi}{\Gamma_-}
\newcommand{\Gammapoi}{\Gamma_+}
\newcommand{\phimoi}{\phi@aux{-}}
\newcommand{\phipoi}{\phi@aux{+}}
\newcommand{\phi@aux}[1]{\@ifnextchar_{\phi@aux@{#1}}{\phi_{#1}}}
\newcommand{\phi@aux@}[3]{\phi_{#3#1}}
\newcommand{\ku}{\underline{k}}
\newcommand{\Ku}{\underline{K}}
\newcommand{\Kucirc}{\mathring{\Ku}}
\newcommand{\phiu}{\underline{\phi}}
\newcommand{\rmoi}{r_-}
\newcommand{\rpoi}{r_+}
\newcommand{\thetamoi}{\theta_-}
\newcommand{\thetapoi}{\theta_+}
\newcommand{\gmoi}{g_-{}}
\newcommand{\gpoi}{g_+{}}
\newcommand{\Kmoi}{K_-^{}{}}
\newcommand{\Kmoicirc}{\mathring{K}_-}
\newcommand{\Kpoi}{K_+^{}{}}
\newcommand{\Kpoicirc}{\mathring{K}_+}
\newcommand{\Kcirc}{\mathring{K}}
\newcommand{\gst}{g_*{}}
\newcommand{\Kst}{K_*^{}{}}
\newcommand{\phist}{\phi_*{}}
\newcommand{\Jst}{J_*{}}
\newcommand{\nablast}{\nabla_*{}}
\newcommand{\rhost}{\rho_*{}}
\newcommand{\Tst}{T_*{}}
\newcommand{\gpoint}{g{}}
\newcommand{\Kpoint}{K{}}
\newcommand{\phipoint}{\phi{}}
\newcommand{\gt}{\widetilde{g}}
\newcommand{\nablat}{\widetilde{\nabla}}
\newcommand{\Ht}{\widetilde{H}} 
\newcommand{\Lt}{\widetilde{L}} 
\newcommand{\preprint}[1]{%
  \newcommand{\ps@preprintnumber}{\ps@plain
    \renewcommand{\@oddhead}{\hfill\begin{picture}(0,0)\put(0,-40){\llap{#1}}\end{picture}}%
    \let\@evenhead\@oddhead}%
  \thispagestyle{preprintnumber}%
}
\begin{document}

\title{Cyclic spacetimes through singularity scattering maps.
\\
The laws of quiescent bounces
\footnotetext{$^1$ 
Philippe Meyer Institute, Physics Department, \'Ecole Normale Sup\'erieure, PSL Research University, 24 rue Lhomond, F-75231 Paris Cedex 05, France.  
\\
$^2$ Laboratoire de Physique Théorique et Hautes Énergies, 
Centre National de la Recherche Scientifique, 
Sorbonne Universit\'e, 4 Place Jussieu, 75252 Paris, France. Email: {\tt bruno@le-floch.fr}.
\\
$^3$ Laboratoire Jacques-Louis Lions \& Centre National de la Recherche Scientifique, 
Sorbonne Universit\'e,  
4 Place Jussieu, 75252 Paris Cedex, France. Email: {\tt contact@philippelefloch.org}. 
\\
$^4$ CERN, Theory Department, CH-1211 Geneva 23, Switzerland. Email: {\tt Gabriele.Veneziano@cern.ch.} 
\\
$^5$ Coll\`ege de France, 11 Place M. Berthelot, 75005 Paris, France. 
}
} 
\author{Bruno Le Floch$^{1,2}$, Philippe G. LeFloch$^3$, and Gabriele Veneziano$^{4,5}$  
}
\date{}

\maketitle

\begin{abstract} 
For spacetimes containing quiescent singularity hypersurfaces we propose a general notion of junction conditions based on a prescribed {\sl singularity scattering map}, as we call it, and we introduce the notion of a {\sl cyclic spacetime} (also called a multiverse) consisting of spacetime domains bounded by spacelike or timelike singularity hypersurfaces, across which our scattering map is applied. A local existence theory is established here while, in a companion paper, we construct plane-symmetric cyclic spacetimes. 
We study the singularity data space consisting of the suitably rescaled metric, extrinsic curvature, and matter fields which can be prescribed on each side of the singularity, and for the class of so-called quiescent singularities we establish restrictions that a singularity scattering map must satisfy.
We obtain a full characterization of all scattering maps that are covariant and ultralocal, in a sense we define and, in particular, we distinguish between, on the one hand, {\sl three laws of bouncing cosmology} of universal nature and, on the other hand, {\sl model-dependent junction conditions.}
The theory proposed in this paper applies to spacelike and timelike hypersurfaces and without symmetry restriction.
It encompasses bouncing-cosmology scenarios, both in string theory and in loop quantum cosmology, and puts strong restrictions on their possible explicit realizations.
\end{abstract}

\preprint{\href{https://doi.org/10.1007/JHEP04(2022)095}{doi:10.1007/JHEP04(2022)095}\hspace{8.6cm}CERN-TH-2020-100}

\setcounter{tocdepth}{2}
\tableofcontents


\section{Introduction}

\subsection{Toward a theory of cyclic spacetimes} 

\paragraph{Main contribution.}

We investigate here the problem of crossing cosmological singularities in the context of the Einstein-scalar field system.
We study the nature of singularities in general relativity (without symmetry restriction) and address
the question of extending a spacetime {\sl beyond a spacelike singularity hypersurface,}
as well as whether a spacetime can contain a {\sl timelike singularity hypersurface.} Our contribution is two-fold 
and relies on notions of Lorentzian geometry and theoretical physics modeling. 

\bei 

\item {\bf A notion of cyclic spacetimes.}
We propose a notion of {\sl cyclic spacetime} (in \autoref{def:singu2} below), which is cast in a form that 
can conveniently be applied.
As a direct application, we establish a local existence theory for the initial value problem,
based on the generic power-law behaviour of the metric understood by Belinsky, Khalatnikov and Lifshitz (BKL)~\cite{BK-scalar,BKL}.
Our construction produces a broad class of spacetimes containing Big Crunch-Big Bang transitions or timelike singular interfaces. 

\item {\bf A classification of all singularity scattering maps.}
Our notion of cyclic spacetime is based on specifying a {\sl singularity scattering map} that describes how data on both sides of the singularity are related. Inspired by the ultralocality of the BKL expansion near the singularity, we focus our attention on singularity scattering maps that are ultralocal (or pointwise) and we establish a complete classification thereof.

\eei

\noindent As a consequence of our analysis, across a bounce we can distinguish between, on the one hand,
 {\sl three laws of bouncing cosmology} which are of universal nature and, on the other hand, 
 {\sl model-dependent junction conditions} which involve only a limited number of defining functions and 
 must depend upon additional physics beyond general relativity. 
In addition, in the companion paper~\cite{LLV-3}, we study the {\sl global geometry of plane-symmetric cyclic spacetimes.}
In the plane-symmetric case we solve the gravitational wave interaction problem {\sl globally.}  This global resolution to the collision problem involves both spacelike and timelike singularity hypersurfaces, which are traversed using a singularity scattering map.
The reader is also referred to \cite{LLV-2} for a brief overview of our main results.   

\paragraph{Global dynamics of self-gravitating matter.}
 
Many spacetimes satisfying the Einstein equations exhibit singularities such as curvature singularities or, at least, suffer from geodesic incompleteness as established by Penrose and Hawking~\cite{HE}. However, our theoretical knowledge about the structure of such singularities is extremely limited. 
One important issue in general relativity is deciding whether the Einstein equations provide a fully predictive theory in the sense that it uniquely determines the global evolution of the geometry and matter fields from their knowledge on a Cauchy hypersurface. Rather partial results are available and typically encompass only solutions that are globally close to Minkowski spacetime for ``small'' matter fields. 

The series of papers \cite{LeFlochLeFloch-1}--\cite{LeFlochLeFloch-5} has recently initiated a program on the mathematical study of the global dynamics of matter fields, which stems from pioneering contributions by Christodoulou on the global evolution problem in spherical symmetry and Penrose's censorship conjectures. In this direction, one outstanding question that arises naturally is whether a spacetime determined by solving the initial value problem associated with the Einstein equations can be continued so that the corresponding future globally hyperbolic Cauchy development, understood in a suitable sense, is unique.  A mathematically as well as physically consistent theory must allow for an extension beyond geometric singularities. 
For further material on singular solutions to Einstein equations, see \cite{PLFMardare,PLFRendall,PLFSormani,PLFStewart}.  

\paragraph{Bouncing through singularities.}

Another motivation for traversing geometric singularities stems from cosmology.
In the past thirty years, bouncing cosmologies and junction conditions at the bounce were proposed in many approaches:
Pre-Big Bang scenarios of Gasperini and Veneziano~\cite{GasperiniVeneziano1,GasperiniVeneziano2}
(and \cite{BV,BV1,BDV,FKV,KohlprathVeneziano,VenezianoSFD})
expyrotic models spearheaded by Steinhardt and Turok~\cite{KOST,SteinhardtTurok2004},
matter bounces of Brandenberger and Finelli~\cite{Brandenberger,FinelliBrandenberger},
as well as constructions based on
string gas cosmology of Brandenberger and Vafa~\cite{BrandenbergerVafa,NBV},
loop quantum cosmology in the Ashtekar school~\cite{AshtekarPawlowskiSingh,Ashtekar,AshtekarWilsonEwing},
and certain modified gravity theories such as~\cite{BiswasMazumdarMazumdar,BBMS,Chamseddine,Cesare-bounce}.
These approaches resolve the initial cosmological singularity through violations of null-energy conditions, modifications of Einstein gravity, or quantum gravity effects that only affect dynamics near the bounce.
We discuss some of these scenarios further in this text, and refer the reader to the review by Brandenberger and Peter~\cite{BrandenbergerPeter} on bouncing cosmologies.
An important alternative proposal is the conformal cyclic cosmology introduced by Penrose~\cite{PenroseCCC1}, followed by Tod, L\"ubbe, and others~\cite{LubbeTod,Lubbe,Tod:2002wd}. Our method should extend to Penrose's scenario, but this issue is outside the  scope of the present paper.

The Einstein equations admit solutions representing matter spacetimes that have ``quiescent'' singularities ---a class first named by Barrow~\cite{BK-scalar,Barrow,DHS}. Our aim in the present paper is to analyze
the class of such spacetimes (without symmetry restriction), which encompasses behavior generically observed in the presence of a sufficiently ``strong'' massless scalar field; in~\cite{LLV-3} we apply our theory to study plane-symmetric spacetimes in this context. In contrast, vacuum spacetimes are expected to feature spacelike singularities with an oscillating behavior~\cite{BKL} or null Cauchy horizons~\cite{DafermosLuk}.

\subsection{The notion of cyclic spacetime}  

\paragraph{Beyond standard junction conditions.}

We are interested in $4$-dimensional spacetimes $(\Mcal, g^{(4)})$ (with boundary), required to satisfy Einstein-scalar field equations of general relativity 
\be
G = 8\pi T. 
\ee
Here, $G$ denotes the Einstein tensor of~$g^{(4)}$ and $T$ the energy-momentum tensor, while the Newton constant and the light speed are normalized to unity.
We consider a massless scalar field $\phi\colon \Mcal \to \RR$ with energy-momentum tensor
\be
T = d\phi \otimes d\phi -\frac{1}{2} |d\phi|^2 g^{(4)},
\ee
which can also be used to describe an irrotational stiff fluid.
Under these conditions, the Einstein equations are equivalent to equations on the Ricci curvature $\Ric$ of the metric, that is,  
\bse\label{intro-Einstein-scalar}
\bel{intro-Einstein-Rc}
\Ric = 8\pi\, d\phi \otimes d\phi. 
\ee
In addition, the Bianchi identities imply that the scalar field satisfies the wave equation associated with the wave operator $\Box$ associated with the metric, that is,
\be
\Box \phi = 0.
\ee 
\ese

Solutions to \eqref{intro-Einstein-scalar} may exhibit singularities localized along a hypersurface in $(\Mcal, g^{(4)})$. 
The standard junction conditions (discovered by Israel \cite{IJC}, and also investigated by Darmois, Lichnerowitz, Penrose, and others)
apply to (regular) hypersurfaces when the spacetime metric and the extrinsic curvature are sufficiently regular {\sl up to the hypersurface} and only possibly suffer a jump discontinuity across it. They are derived from the ADM equations (introduced below)
by integration in an arbitrarily small neighborhood of the hypersurface, and allow for impulsive (measure) contributions contributed by the matter. 

One of the ADM equations (see below) does not involve the matter field and, assuming that the extrinsic curvature remains bounded, this ADM equation implies the continuity of the metric. On the other hand, the other ADM equation implies that the jump of the extrinsic curvature is compensated by a (possibly vanishing) matter surface term in the energy-momentum tensor.
All terms constructed from the metric and extrinsic curvature remain bounded in this regime (albeit possibly discontinuous), and only matter provides singular contributions to the ADM equations and constraints.
In contrast, our setup in the present paper concerns a foliation of hypersurfaces whose extrinsic curvature blows up for some value of the foliation parameter.  

\paragraph{Notion of singularity scattering map.}

We thus consider the Cauchy problem in the ADM formalism, in which solutions of Einstein-scalar equations are represented as an {\bf Einstein flow} ($I$ being an interval) 
\be
t \in I \mapsto (g(t), K(t), \phi(t)) 
\ee
consisting of the time-dependent three-metric~$g(t)$ and extrinsic curvature~$K(t)$ of the hypersurfaces of the foliation, and a matter field~$\phi(t)$.
We assume sufficient regularity and work with functions defined on each side of the singularity hypersurface and blowing up as one approaches it.
We emphasize that no preferred junction condition is introduced in the present paper and, rather, we find it essential to propose a framework that can {\sl  accommodate many different junctions,} 
which we describe via the notion of {\sl singularity scattering map.}
As indicated above, we concentrate on the quiescent regime and on singularity scattering maps that preserve this regime.
We refer to Sections \ref{section---2} and~\ref{section---3} for the terminology (singularity scattering maps, cyclic spacetimes, etc.) and to  \autoref{theo:391} for our explicit construction of spacetimes based on such a singularity scattering map.

\paragraph{Notion of singularity data manifold.} 

Our analysis  
is based the ADM formulation for a foliation of hypersurfaces, together with Fuchsian-type arguments in order to rigorously validate {\sl asymptotic expansions} satisfied by the main unknowns of the problem, that is, the induced metric 
$g(t)$, the extrinsic curvature $K(t)$, and the matter field $\phi(t)$. 
We follow Andersson and Rendall \cite{AnderssonRendall} who treated spacetimes with non-oscillatory singularities of spacelike nature, while we also provide a generalization to timelike hypersurfaces. 
For the huge literature existing on the Fuchsian method in mathematical general relativity, we refer to \cite{Rendall:2008,RendallWeaver} as well as  \cite{AlexakisFournodavlos,AnderssonRendall,BeyerPLF3,Damour-et-al,Fournodavlos:2016,FournodavlosLuk} 
and the references cited therein. 

Considering solutions to the Einstein equations coupled to a scalar field, we begin by neglecting all spatial derivative terms and we solve a simpler system consisting of ordinary differential equations in the (Gaussian) time variable. In turn, this provides us with an explicit Ansatz which we can validate for general solutions in the vicinity of the singularity hypersurface of interest.

In the course of our analysis, we introduce the notion of the {\sl singularity data manifold,} which we denote by $\Ispace$ for spacelike hypersurfaces. The initial value problem is then posed directly on the singularity hypersurface by prescribing a data set $(\gmoi, \Kmoi, \phimoi_0, \phimoi_1) \in \Ispace$ and solving backward in time in order to describe the past of the singularity.
The future of the singularity is likewise solved for in terms of a data set $(\gpoi, \Kpoi, \phipoi_0, \phipoi_1) \in \Ispace$, itself obtained by applying a {\sl singularity scattering map} $\Sbf\colon\Ispace\to\Ispace$ to the prescribed data set $(\gmoi, \Kmoi, \phimoi_0, \phimoi_1)$.
We use a Gaussian foliation (see below) in each regularity domain, based on a proper time function~$t$ normalized such that the singularity hypersurface is at $t = 0$.

\paragraph{Suppression of instabilities.}

We emphasize that our analysis is concerned with those spacetimes that have a non-vanishing matter field near the singularity, so that the oscillating regime identified by Belinsky, Khalatnikov, and Lifshitz~\cite{BKL} in general vacuum spacetimes is beyond the scope of the present paper.
Namely, such oscillations on a singularity generically do not arise in the presence of scalar matter, nor in vacuum spacetimes enjoying some symmetry (or high enough dimensions).

The quiescent regime has recently been shown to be stable in these contexts when evolving towards the singularity in suitable Sobolev spaces~\cite{FournodavlosRonianskiSpeck,RodnianskiSpeck:2018b,RodnianskiSpeck:2018c,Speck:2018}.
Stability of the quiescent regime starting on the singularity has also been understood earlier with analytic regularity in~\cite{AnderssonRendall,Damour-et-al}.
These works establish rigorously the physics expectation developped in~\cite{BK-scalar,Barrow,DHS} that the scalar field removes instabilities of vacuum gravity near spacelike singularities.

\paragraph{Notion of cyclic spacetime.}

The quiescent regime, and our notion of singularity scattering map, apply equally well to spacelike and timelike singularity hypersurfaces.
Generic hypersurfaces feature spacelike and timelike regions, separated by lower-dimensional transitions where the hypersurface becomes null.
This motivates us to define cyclic spacetimes as obeying the Einstein-scalar field equations away from hypersurfaces, and admitting a quiescent expansion subject to our junction conditions along these hypersurfaces except at an exceptional locus of codimension~$2$.
Despite the unconstrained behaviour at the exceptional locus, we find that our notion of cyclic spacetime is sufficiently robust to specify a \emph{unique} global development for generic plane-symmetric collisions of gravitational waves, as we establish in the companion paper~\cite{LLV-3}.

The global existence of solutions with large data is a notoriously difficult endeavour beyond $1+1$ dimensions.
Spacelike singularity hypersurfaces (for the Einstein-scalar field system) are physically understood to be generically of quiescent type away from an exceptional locus of codimension~$2$.
However, it is not clear presently whether there can be null singularity hypersurfaces, or how to extend spacetimes beyond stable null Cauchy horizons exhibited in~\cite{DafermosLuk}.
We postpone to future work the analysis of junction conditions in the null case.
Another important question is to understand whether timelike singularity hypersurfaces, equipped with the junction conditions we define, are stable under the time evolution.
Regardless of the outcome of these investigations, quiescent singularities form an important class of singularities in the Einstein-scalar field system, for which our analysis of junction conditions is crucial.

\subsection{The classification of singularity scattering maps}

\paragraph{Toward a unification of bouncing scenarios.}

Bouncing cosmologies are normally constructed by selecting some particular quantum gravity theory or modification of Einstein gravity and finding spacetimes that are well-described by Einstein gravity on both sides of a bounce, with all corrections being concentrated near the bounce.
This approach starting from an explicit microscopic theory is only completely calculable in {\sl highly symmetric} spacetimes.
Our approach is instead to observe that, regardless of the mechanism causing the bounce, the resulting scattering map must respect Einstein constraints for the asymptotic behavior before and after the bounce in the regimes well-described by Einstein's gravity theory.
While these constraints are trivial in highly symmetric spacetimes, they are {\sl very constraining for scattering maps} that apply to general $3+1$ dimensional spacetimes. This {\sl macroscopic approach to scattering maps} is the avenue that we follow in this paper:
the effect of microscopic physics is entirely encapsulated in a singularity scattering map $\Sbf\colon\Ispace\to\Ispace$.

Our method applies whenever the corrections to general relativity are subleading away from the bounce and locality is preserved during the bounce.
More precisely, we propose to focus on {\sl ultralocal scattering maps,} which stem from bounces in which the evolution at different points in space are independent from each other, in agreement with the well-known BKL analysis on each side of the bounce (see the main text below).

\paragraph{Main statement of this paper.}

The maps of interest are scattering maps for which the values of $(\gpoi,\Kpoi,\phipoi_0,\phipoi_1)$ at a point~$x$ along the singularity hypersurface only depend on $(\gmoi,\Kmoi,\phimoi_0,\phimoi_1)$ at the same point, and not on (spatial) derivatives thereof.  
This strategy allows us to single out two classes of maps:
the {\sl anisotropic} ultralocal scattering maps~$\Sani_{\Phi,c,\epsilon}$
and the {\sl isotropic} ones~$\Siso_{\lambda,\varphi,\epsilon}$, which we describe in detail momentarily.
Remarkably, these two cases exhaust the set of ultralocal scattering maps, as the following theorem states.
In both cases, one easily checks that the shear (traceless part of the extrinsic curvature) $\Kcirc\coloneqq K-\tfrac{1}{3}(\Tr K)\delta$ weighted by the volume form~$\sqrt{|g|}$ is at most multiplied by a constant when traversing the singularity.
We uncover universal, as well as model-dependent, laws (in \eqref{equa-three-laws} below) and we summarize here our main discovery in this paper, as follows. 

\begin{theorem}[Classification of singularity scattering maps in general relativity]\label{thm:intro-classification}
Any ultralocal scattering map is either an anisotropic map $\Sani_{\Phi,c,\epsilon}$ or an isotropic map $\Siso_{\lambda,\varphi,\epsilon}$.
\end{theorem}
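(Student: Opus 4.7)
The plan is to exploit covariance together with the Einstein constraints in order to turn the pointwise map on $\Ispace$ into an algebraic classification problem.

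First, I would use the ultralocal hypothesis to work pointwise. At each point $x$ on the singularity hypersurface the scattering map sends $(\gmoi,\Kmoi,\phimoi_0,\phimoi_1)(x)$ to $(\gpoi,\Kpoi,\phipoi_0,\phipoi_1)(x)$ with no dependence on spatial derivatives. Covariance under spacetime diffeomorphisms thus reduces to equivariance of the resulting map under the natural action of $GL(T_x\Sigma)$ on the algebraic fiber of symmetric tensors and scalars, or equivalently of $O(3)$ once the metric $\gmoi(x)$ is used to identify the tangent space with Euclidean $3$-space. The problem is thereby recast as the classification of $O(3)$-equivariant maps on the finite-dimensional fiber consisting of a symmetric $3\times 3$ tensor, a reference inner product, and two scalars.

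Next I would simultaneously diagonalize $\Kmoi$ with respect to $\gmoi$, writing its trace-free part $\Kmoicirc$ in terms of eigenvalues $k_1,k_2,k_3$ summing to zero. The Hamiltonian constraint at the singularity ties $|\Kmoicirc|^2$, $\Tr\Kmoi$ and $\phimoi_1$ together, encoding the generalized Kasner exponents of the quiescent expansion. The observation made just before the theorem statement --- that the weighted shear $\sqrt{|g|}\,\Kcirc$ is, at most, multiplied by a constant under scattering --- is the lever that ties the eigenvalue data on the two sides of the bounce and severely restricts the allowed transformations. A critical case split then arises according to whether the eigenvalues of $\Kmoi$ are generically distinct (the \emph{anisotropic regime}) or degenerate (the \emph{isotropic regime}): in the first situation the $\gmoi$-orthonormal diagonalizing frame is well-defined up to signs and permutations, while in the second there is residual $SO(3)$ or $SO(2)$ freedom which forces extra symmetry of the map.

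In the anisotropic case I would argue that $\Kpoi$ must be simultaneously diagonalizable with $\Kmoi$ in the same orthonormal frame, since any mismatch is incompatible with $O(3)$-equivariance combined with the shear-constancy law. Tracking how the Kasner exponents are permuted, how $\sqrt{|\gmoi|}$ is rescaled, and how the scalar data $(\phimoi_0,\phimoi_1)$ are redistributed across the bounce then extracts the finite data defining $\Sani_{\Phi,c,\epsilon}$: the discrete parameter $\epsilon$ encoding permutations and signs, the parameter $c$ controlling the linear redistribution of the Kasner eigenvalues, and the free functional parameter $\Phi$ for the scalar sector. In the isotropic case, the enhanced rotational symmetry collapses the allowed output orbit and forces the restricted form $\Siso_{\lambda,\varphi,\epsilon}$ involving a single scaling $\lambda$, a scalar-field transformation $\varphi$, and a sign $\epsilon$. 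Consistency between the two families on the boundary where the regimes meet should then follow from continuity of the scattering map on $\Ispace$.

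The hardest step, I expect, will be the \textbf{completeness} in the anisotropic regime: proving that no further continuous or discrete degree of freedom survives beyond the listed parameters. This requires careful bookkeeping of the finite-group action of permutations and signs on eigenvectors, and a systematic combined use of both the Hamiltonian and momentum constraints on the two sides of the singularity, so as to rule out ``twisted'' maps that could mix the scalar sector with the tensorial sector in a way not captured by $(\Phi,c,\epsilon)$.
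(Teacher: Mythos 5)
Your opening moves are sound and match the paper: ultralocality plus diffeomorphism covariance reduces the problem to a $GL(3,\RR)$-equivariant map on the finite-dimensional fiber $\Ipoint$ of singularity data at a point (Lemma~\ref{prop-ultralocal}), and working in a $\gmoi$-orthonormal eigenbasis of~$\Kmoi$ is indeed how one extracts the ``polynomial structure'' of~$\gpoi$ and~$\Kpoi$ as linear combinations of $\delta,\Kmoicirc,\Kmoicirc^2$ (Lemma~\ref{lem:poly}).

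However, there is a significant circularity in your argument. You take as a ``lever'' the statement that the densitized shear $\sqrt{|g|}\,\Kcirc$ is, at most, scaled by a constant across the bounce. But this is precisely \autoref{cor:Kg-scaling}, which is a \emph{consequence} of the classification, not a hypothesis available at the outset. The remark before \autoref{thm:intro-classification} observes that ``in both cases, one easily checks'' this property — meaning one verifies it on the families $\Sani$ and $\Siso$ after the fact. The heart of the actual proof (\autoref{ssec:scaling}) is establishing this scaling from scratch: one plugs the polynomial Ansatz for $\Kpoi$ into the momentum constraint, observes that the term $\beta_2\Kmoicirc^2$ produces a divergence $\nablamoi_a(\Kmoicirc^2)^a_b$ which \emph{cannot} be written as $(\text{scalar})\,\del_b(\text{scalar})$, and then constructs specific singularity data (Lemmas~\ref{lem:derivatives}, \ref{lem:eigen-deriv}, \ref{lem:independent-derivatives}) that isolate this term and force $\beta_2=0$. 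The momentum constraint is doing all the work there; your proposal mentions it only in the final sentence as ``combined use of both constraints'' without any mechanism.

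A second, structural issue: your ``critical case split'' into anisotropic versus isotropic regimes is keyed to whether the eigenvalues of $\Kmoi$ are distinct or degenerate at a point. That is a condition on the \emph{input data}, which varies from point to point, not a property of the scattering map. Both families $\Sani_{\Phi,c,\epsilon}$ and $\Siso_{\lambda,\varphi,\epsilon}$ are globally-defined maps on the whole data space $\Ibf$, applied to data with distinct or degenerate eigenvalues alike. The actual dichotomy in the paper's proof is whether the constant $\gamma$ in $\Kpoicirc=\gamma\omega^{-1}\Kmoicirc$ vanishes or not: $\gamma=0$ forces $\Kpoi=\tfrac{1}{3}\delta$ identically (isotropic output, $\phipoi_0=\pm 1/\sqrt{12\pi}$, $\phipoi_1$ constant), while $\gamma\neq 0$ yields the anisotropic family with the matter map $\Phi$ obeying the canonical-transformation conditions. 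Your case split would not lead to a well-defined decomposition of the space of scattering maps. (A small slip as well: the Hamiltonian constraint at the singularity ties $\Tr\Kmoicirc^2$ to $\phimoi_0$, not $\phimoi_1$.)
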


Beyond \autoref{thm:intro-classification} classifying ultralocal scattering maps, \autoref{theorem-ultralocal} in the main text further describes several rich subclasses: maps that are quiescence-preserving, invertible, shift-covariant, momentum-preserving, etc.
The relevant restrictions depend on the application and on assumptions on the microscopic physics.
For instance, our local construction of cyclic spacetimes (cf.~\autoref{theo:391})
 involves {\sl quiescence-preserving maps}, which are defined as those that map singularity data with $\Kmoi>0$ to data with $\Kpoi>0$, thus do not generate oscillatory BKL behavior from quiescent behavior.

For our study of colliding gravitational waves in plane-symmetry (cf.~\cite{LLV-3}), it is natural to focus on the natural class of {\sl momentum-preserving maps}, defined by $\Kpoi=\Kmoi$ and $\phipoi_0=\phimoi_0$.
The name ``momentum'' stems from noticing that $(K_{\pm},\phi_{0\pm})$ are normal derivatives of the metric and scalar field at the singularity, while $(g_{\pm},\phi_{1\pm})$ pertain to values of the metric and scalar field.
Momentum-preserving ultralocal scattering maps are determined by a single function $f$ of $\phimoi_0$ and of Kasner exponents (eigenvalues of~$\Kmoi$).  They lead to the junction condition
\bel{intro-mom}
\Kpoi = \Kmoi , \qquad
\phipoi_0 = \phimoi_0 , \qquad
\phipoi_1 = \phimoi_1 + f ,
\ee
with $\gpoi$ given in full in~\cite{LLV-3}.
These maps are particular cases of the anisotropic maps~$\Sani_{\Phi,c,\epsilon}$ described below.
They are manifestly invertible, which is a useful feature for scattering maps that describe timelike singularities because it means data on either side $(g_{\pm},K_{\pm},\phi_{0\pm},\phi_{1\pm})$ is expressible in term of the other singularity data set.
Pleasantly, the maps can also be characterized (up to a sign normalization) by requesting $\Sbf$ and $\Sbf^{-1}$ to be quiescence-preserving and shift-covariant, in the sense that they respect the symmetry of the wave equation under constant shifts of~$\phi$.
By studying the collision of plane-symmetric gravitational waves in~\cite{LLV-3}, 
we discover that the evolution problem imposes an additional {\sl causality condition} on these scattering maps.
The condition expresses that gravitational waves that come out of the singular timelike interface must be determined from the incoming waves on the interface.
It constrains the function~$f$ in such a way that, for example, an identically vanishing $f=0$ is forbidden.

\paragraph{The three laws of bouncing cosmology.} 

In abstracting away all microscopic details of the physical model, we can focus on how solutions to Einstein equations should join across the bounce. Importantly, it turns out that we can distinguish between universal and model-dependent features of junction relations. From our classification we extract three universal laws obeyed by any ultralocal bounce, 
which are independent of the specific physics required in formulating the junction conditions and are summarized as follows. 

\bse
\label{equa-three-laws}

\bei

\item {\bf First law: scaling of Kasner exponents.} With a \emph{dissipation constant} $\gamma\in\RR$, we have
\bel{law1}
|g_+|^{1/2} \Kcirc_+ = - \gamma \, |g_-|^{1/2} \Kcirc_-, 
\ee
which involves the spatial metric~$g$ in synchronous gauge, its volume factor $|g|^{1/2}$, and the traceless part $\Kcirc$~of the extrinsic curvature as a $(1,1)$ tensor.
The isotropic maps~$\Siso_{\lambda,\varphi,\epsilon}$ have $\gamma=0$ while anisotropic maps~$\Sani_{\Phi,c,\epsilon}$ have $\gamma\neq 0$.

\item{\bf Second law: canonical transformation.} 
The massless scalar~$\phi$ undergoes a canonical transformation, as explicited in \autoref{def:canonical-transfo} below:
\bel{law2}
\Phi\colon(\phi_{0-},\phi_{1-}) \mapsto (\phi_{0+},\phi_{1+})
\, 
\text{ \ preserves \ } r(\phi_0)^3\,d\phi_0\wedge d\phi_1
\ee
up to a sign, in which $r(\phi_0)=(1-12\pi\phi_0^2)^{1/2}$.
The \emph{matter map}~$\Phi$ depends in addition on a scalar invariant $\chi\simeq\Tr\Kcirc_-^3/r(\phi_{0-})^3$.

\item{\bf Third law: directional metric scaling.}
The metric after the bounce is a nonlinear rescaling in each proper direction of~$K_-$, specifically 
\bel{law3}
g_{+} = \exp\bigl(\sigma_0 + \sigma_1 K_- + \sigma_2 K_-^2\bigr) g_{-} ,
\ee
in which $\sigma_0,\sigma_1,\sigma_2$ are arbitrary for isotropic scattering maps~$\Siso_{\lambda,\varphi,\epsilon}$ as explicited in~\eqref{Sani} below, and are made explicit (in~\eqref{Sani} below) for anistropic maps~$\Sani_{\Phi,c,\epsilon}$ in terms of $\Phi,\gamma$ for $\gamma\neq 0$.

\eei
\ese
\noindent The three laws are \emph{universal in the renormalization group sense:} they impose constrains on
the macroscopic aspects of all bounces
and apply to different microscopic corrections to Einstein equations.
Contrarily to field theory universality classes, which depend on finitely many parameters, ultralocal singularity scattering maps depend on a whole map, namely~$\Phi$.

\subsection{Organization of this paper}
  
In \autoref{section---1}, 
after introducing in more detail the isotropic and anisotropic scattering maps we explain  
how various physically-motivated bouncing scenarios fit in our framework.
In \autoref{section---2} we begin with the proposed definition of scattering maps for a spacelike singularity hypersurface, and next in \autoref{section---3} we present our general definition of cyclic spacetimes containing both spacelike and timelike singularity hypersurfaces. 
In \autoref{section---4} we establish the classification of all ultralocal scattering maps, while postponing to \autoref{section---5} the technical derivation. 
See also \cite{LLV-2} for a brief overview of our main results, and \cite{LLV-3} for a global construction in the class of plane-symmetric spacetimes. 

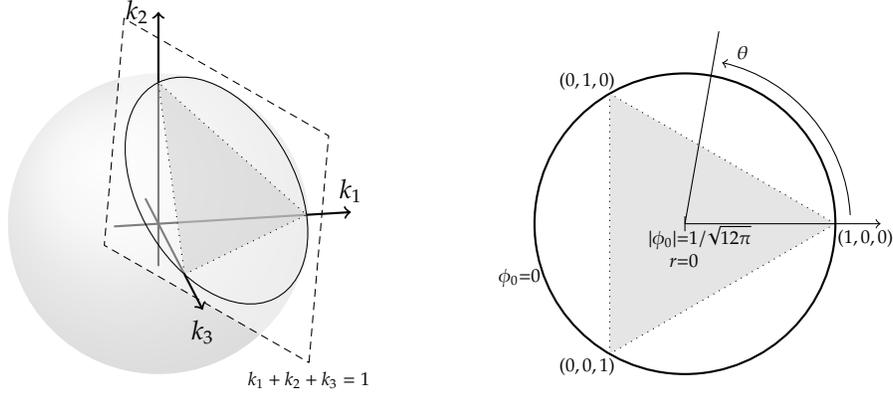
\begin{figure}\centering
  \tdplotsetmaincoords{70}{80}
  \begin{tikzpicture}
    \begin{scope}[tdplot_main_coords,scale = 2]
      \shade[ball color = black!20!white, opacity = 0.2] (0,0,0) circle (1cm);
      \tdplotsetrotatedcoords{45}{atan(sqrt(2))}{0}
      \draw[thick] (-.5,0,0) -- (1,0,0);
      \draw[thick] (0,-.3,0) -- (0,1,0);
      \draw[thick] (0,0,-.3) -- (0,0,1);
      \draw[tdplot_rotated_coords,densely dashed,fill opacity=0.5,fill=white,draw=black,text opacity=1] (1.3,0,{1/sqrt(3)}) node [below]{\scriptsize $k_1+k_2+k_3=1$} -- (0,1.3,{1/sqrt(3)}) -- (-1.3,0,{1/sqrt(3)}) -- (0,-1.3,{1/sqrt(3)}) -- cycle;
      \draw[tdplot_rotated_coords] (0,0,{1/sqrt(3)}) circle ({sqrt(6)/3});
      \draw[thick,->] (1,0,0) -- (1.7,0,0) node[below=0pt]{$k_3$};
      \draw[thick,->] (0,1,0) -- (0,1.3,0) node[above=0pt]{$k_1$};
      \draw[thick,->] (0,0,1) -- (0,0,1.5) node[left=0pt]{$k_2$};
      \draw[dotted,fill opacity=.5,fill=black!20!white] (1,0,0) -- (0,1,0) -- (0,0,1) -- cycle;
    \end{scope}
    \begin{scope}[shift={(7,0)},scale=2]
      \draw[thick] (0,0) circle (1);
      \draw[dotted,fill opacity=.5,fill=black!20!white] (0:1) -- (120:1) -- (-120:1) -- cycle;
      \node at (1.2,-.1) {\scriptsize$(1,0,0)$};
      \node at (-.65,.95) {\scriptsize$(0,1,0)$};
      \node at (-.65,-.95) {\scriptsize$(0,0,1)$};
      \draw[->] (0,0) -- (1.3,0);
      \draw (0,-.05) -- (0,.05);
      \draw (0,0) -- (80:1.3);
      \draw[->] (3:1.1) arc (3:77:1.1) node [pos=.9,above] {\scriptsize $\theta$};
      \node at (.1,-.17) {\scriptsize $\aligned|\phi_0|&{=}1/\!\sqrt{12\pi}\\[-1ex]r&{=}0\endaligned$};
      \node at (-1.1,-.36) {\scriptsize $\phi_0{=}0$};
    \end{scope}
  \end{tikzpicture}
  \caption{\label{fig:Kasner-ball}{\bf Kasner exponents allowed by the constraints.}
    The condition $\Tr K_{\pm}=1$ restricts eigenvalues $k_{1\pm},k_{2\pm},k_{3\pm}$ of $K_{\pm}$ to a plane.
    The Hamiltonian constraint $\Tr K_{\pm}^2=1-8\pi\phi_{0\pm}^2\leq 1$ restricts them to the shaded ball, specifically to a sphere of radius controlled by~$\phi_{0\pm}$.
    The plane and ball intersect along the Kasner disk, depicted on the right, which is conveniently parametrized by the Kasner radius $r(\phi_{0\pm})$ and angle~$\theta$.
    The center of the disk is $K_{\pm}=\tfrac{1}{3}\delta$, for which $\phi_{0\pm}=1/\sqrt{12\pi}$.  More generally, $r(\phi_{0\pm})^2=1-12\pi\phi_{0\pm}^2$.
    We also shade the Kasner triangle, subset of the disk in which all Kasner exponents are positive.
    Its corners are $(k_{1\pm},k_{2\pm},k_{3\pm})=(1,0,0)$ and permutations thereof.}
\end{figure}

\section{Outlook and applications}
\label{section---1}

\subsection{The anisotropic and isotropic maps}

\paragraph{Anisotropic ultralocal scattering.}

\bse\label{intro-Sani}
While omitting a few technical aspects (discussed in full detail later in Example 4; cf.~\eqref{Sani} below), we describe first our anisotropic scattering maps,
written as 
\be
\Sani_{\Phi,c,\epsilon} \colon (\gmoi,\Kmoi,\phimoi_0,\phimoi_1) \mapsto (\gpoi,\Kpoi,\phipoi_0,\phipoi_1). 
\ee
General covariance imposes that $(\phipoi_0,\phipoi_1)$ are functions of scalar invariants of the data, only, and there are a priori {\sl five such invariants} not involving any derivatives which are the matter components $\phimoi_0,\phimoi_1$ and the {\bf three Kasner exponents,} namely the eigenvalues of the extrinsic curvature~$\Kmoi$.  However, the Einstein constraints only allow for Kasner exponents to lie in a circle (with a $\phimoi_0$-dependent radius~$r(\phimoi_0)$), which we parametrize by a {\bf Kasner angle}~$\thetamoi$ as depicted in \autoref{fig:Kasner-ball} (and likewise~$\thetapoi$ for the image data on the other side of the singularity).
Altogether, these fields are described by a map
\be
\Phi\colon(\thetamoi,\phimoi_0,\phimoi_1)\mapsto(\phipoi_0,\phipoi_1) .
\ee

Regarding the extrinsic curvature, our main tool is the asymptotic version of the ADM momentum constraint, which expresses the divergence of~$K_{\pm}$ in terms of the scalar fields~$\phi_{0\pm},\phi_{1\pm}$.
Based on the fact that the scattering map must preserve this momentum constraint, we prove that the extrinsic curvature~$\Kpoi$ depends at most linearly on~$\Kmoi$.
In other words the traceless part of the extrinsic curvature is simply scaled as
\bel{intro-Kcirc-scaling}
(\Kpoi - \tfrac{1}{3}\delta)
= \epsilon\,\Omega(\phipoi_0,\phimoi_0) \,(\Kmoi - \tfrac{1}{3}\delta) ,
\ee
for some sign $\epsilon=\pm 1$ and a conformal factor $\Omega(\phipoi_0,\phimoi_0)=r(\phipoi_0)/r(\phimoi_0)$ determined by radii of the circles on which Kasner exponents lie.
The Kasner exponents (minus their average $1/3$) are scaled by a positive coefficient ($\epsilon=+1$) or negative coefficient ($\epsilon=-1$) that depends on $\phimoi_0$ and~$\phipoi_0$, and additionally the corresponding {\sl eigenvectors of the extrinsic curvature} are preserved.
Returning to our parametrization of Kasner exponents we learn that the scattering map either preserves the Kasner angle or shifts it by~$\pi$, that is, 
\be
\thetapoi = \thetamoi \quad \text{ if } \epsilon=+1,
\qquad \quad 
\thetapoi = \thetamoi+\pi \quad \text{ if } \epsilon=-1.
\ee
We then prove that $\Omega$ is a constant multiple of $\sqrt{\gmoi}/\sqrt{\gpoi}$, hence is identically vanishing (which leads to the isotropic maps $\Siso$ discussed next) or nowhere vanishing (which leads to anisotropic maps~$\Sani$ discussed presently).
In the anisotropic case $\Omega\neq 0$ we determine that the metric is scaled differently in each eigenspace of~$\Kmoi$ (or~$\Kpoi$) and reads 
\bel{intro-Saniso-g}
\gpoi = c^2 \Omega^{-2/3} \exp\Big(
16\pi\epsilon\xi\cos\Theta_-
- 16\pi\epsilon\bigl( \del_{\thetamoi} \xi + \frac{\phipoi_0}{r(\phipoi_0)} \del_{\thetamoi} \phipoi_1 \bigr) \sin\Theta_-
\Big) \, \gmoi. 
\ee
Here, $c>0$ is a constant parameter, the tensor $\Theta_-=\diag(\theta_-,\theta_-+2\pi/3,\theta_-+4\pi/3)$ is such that $\Kmoi=\frac{1}{3}\delta+\frac{2}{3}r(\phimoi_0)\cos\Theta_-$, while $\xi$~is an auxiliary function of $\thetamoi,\phimoi_0,\phimoi_1$ given explicitly as an integral formula in terms of~$\Phi$.
Finally, we prove that for each angle~$\thetamoi$, the map $\Phi(\thetamoi,\,.\,,\,.\,)$ is a {\sl canonical transformation} for a measure~\eqref{phase-space-measure} defined on the phase space of all matter data $(\phimoi_0,\phimoi_1)$, and prove suitable boundary conditions on~$\Phi$.

We emphasize the following features of the map $\Sani_{\Phi,c,\epsilon}$. 
\bei 

\item The singularity scattering map depends essentially on the prescription of a {\sl single scalar function} $\Phi$.

\item This function $\Phi = (\thetamoi,\phimoi_0,\phimoi_1)$ {\sl depends upon the Kasner angle and matter field} before the bounce, only, and can be chosen (almost) arbitrarily. 

\item The (trace-free part of) extrinsic curvature is {\sl conformally transformed}, by a conformal factor that is explicit in terms of $\Phi$.
In fact, the densitized trace-free extrinsic curvature $(K_{\pm}-\tfrac{1}{3}\delta)\sqrt{g_{\pm}}$ is unchanged up to a constant factor~$\epsilon c$.

\item The metric is rescaled anisotropically, differently along each eigenvector of~$K_{\pm}$; indeed, we stress that $\Theta_-$~is a {\sl matrix}.

\eei 
\ese
The aforementioned momentum-preserving maps~\eqref{intro-mom} correspond to the case with $\epsilon=+1$ and $\Phi=(\phimoi_0,\phimoi_1+f(\thetamoi,\phimoi_0))$.

\paragraph{Isotropic ultralocal scattering.}

The second class of ultralocal scattering maps we discover is obtained by taking $\Omega=0$ in~\eqref{intro-Kcirc-scaling}, hence $\Kpoi=\tfrac{1}{3}\delta$ and $r(\phipoi_0)=0$, which fixes $\phipoi_0$ up to a sign.  The momentum constraint then forces the scalar field~$\phipoi_1$ to be constant, while the metric is arbitrary.
The isotropic scattering map is written (in the spacelike case) as
\be
\Siso_{\lambda,\varphi,\epsilon} \colon (\gmoi,\Kmoi,\phimoi_0,\phimoi_1) \mapsto (\gpoi,\Kpoi,\phipoi_0,\phipoi_1) 
= 
\Big( \lambda(\Theta_-,\phimoi_0,\phimoi_1)^2 \gmoi, \ \frac{1}{3}\delta, \ {\epsilon}/{\sqrt{12\pi}}, \ \varphi \Big) 
\ee
for any constant $\varphi\in\RR$, any sign $\epsilon=\pm 1$, and any function $\lambda = \lambda(\theta_-,\phimoi_0,\phimoi_1)$ that is positive, $2\pi$-periodic and even in~$\theta_-$, and obeys suitable boundary conditions in~$\phimoi_0$.
Here, the tensor~$\Theta_-$ is as defined below~\eqref{intro-Saniso-g} and $\lambda$ is applied to each of its (diagonal) entries independently.
The sign $\epsilon=\pm 1$ and the constant $\varphi\in\RR$ can be normalized away using symmetries of the wave equation for~$\phi$ away from the singularity.
At first sight, $\Siso$ is obtained as a degenerate case of the anisotropic maps~$\Sani$ above: take $\Phi$~to be a constant map, specifically $\phipoi_0=\epsilon/\sqrt{12\pi}$ and $\phipoi_1=\varphi$, so that $\Kpoi=\tfrac{1}{3}\delta$.
However, these limits of~$\Sani$ do not give rise to the most general choice of function~$\lambda$.
The metric is less constrained in the isotropic case than the anisotropic case because obeying the momentum constraint is trivial in the isotropic case.

After the bounce under the map $\Siso_{\lambda,\varphi,\epsilon}$, we have the following features. 
\bei 

\item The scattering map depends essentially on the prescription of a {\sl single scalar function} $\lambda$.

\item This function $\lambda = \lambda(\theta_-,\phimoi_0,\phimoi_1)$ {\sl depends upon the Kasner angle and matter field} before the bounce, only, and can be chosen (almost) arbitrarily. 

\item The metric is rescaled differently by the bounce along the different eigenvectors of the extrinsic curvature.
 
\item The extrinsic curvature is a constant multiple of the identity, leading to an {\sl isotropic and homogeneous evolution} after the bounce: (the asymptotic profile of) the metric after the bounce simply entails a time-dependent conformal factor that is constant along leaves of the foliation.

\item The two components of the matter field after the bounce are overall constants. 

\eei

\paragraph{Vacuum case.}

Our scattering maps are defined for any values of the data compatible with Einstein constraints, in particular in regions of spacetime that may be vacuum.  To avoid creation of matter by the scattering, one may want to impose $\Phi(\thetamoi,0,0) = (0,0)$.
In that case, and restricting them to vacuum data, only, the scattering maps we define above reduce to (with $\epsilon=\pm 1$ and $c>0$ constant and $\xi$ an essentially arbitrary periodic function of~$\thetamoi$)
\[
\Kpoicirc = \epsilon \Kmoicirc , \qquad
\gpoi = c^2 \exp\bigl(16\pi\epsilon(\xi\cos\Theta_- - (\del_{\thetamoi}\xi)\sin\Theta_-)\bigr)\gmoi .
\]
It would be interesting to determine more generally what scattering maps exist in vacuum, without the restriction that the maps be defined in the presence of scalar fields as well.
While in vacuum the ultralocal scattering maps are likely much simpler than our classification \autoref{thm:intro-classification}, solutions to the Einstein equations may involve BKL oscillations that are not directly covered by our analysis. Furthermore, our classification method should also apply to spacetimes containing stiff fluids, an important class of spacetimes in order to deal with ultra-dense matter that can appear in cosmology; see Zel'dovich~\cite{Zeldovich}.

\subsection{Applications: collisions, string theory, and loop quantum cosmology}
\label{ssec:applications}

\paragraph{Microscopic versus macroscopic approach.}

Geometric singularities in solutions to Einstein equations suggest that general relativity should receive corrections in regions with high curvature, so as to avoid singularities.
In particular, various cosmological models exist where the Big Bang is replaced by a singular or non-singular bounce, achieved for example through quantum gravity effects, a modification of the Einstein--Hilbert action, or simply matter violating the null energy condition.
Our macroscopic approach abstracts away details of the bounce by approximating both sides as a solution of general relativity and, from the Einstein constraints, deducing strong a priori restrictions on possible bounces regardless of microscopic details.

The microscopic approaches are mostly studied in the cosmological literature for very symmetrical spacetimes such as Bianchi (homogeneous) spacetimes, and perturbations thereof, for which calculations are analytically tractable.  For our approach, in contrast, it is essential to consider general spacetimes, in which preserving Einstein constraints is a very restrictive condition on scattering maps.

Our method applies whenever a microscopic theory produces bounces that are well-described by the BKL solutions to Einstein equations on both sides of the bounce and whose behavior is dominated by time derivatives rather than spatial gradients: such a bounce must be described by one of our ultralocal scattering map, which depends solely on the chosen microscopic theory and not on details of the bounce.
The relevant singularity scattering map can be identified simply by working out bounces in Bianchi~I (homogeneous but anisotropic) spacetimes.
Our scattering map approach then predicts features of bounces in {\sl arbitrarily inhomogeneous spacetimes}.
After validating these predictions (hence the ultralocality assumption) in simplified setups where first principles microscopic derivations are possible, such as linearized perturbations around Bianchi~I spacetimes, one can start applying our general tools to learn about cosmological features after bounces with arbitrary inhomogeneities in the chosen microscopic theory.

\paragraph{Pre-Big Bang scenario in string cosmology.} 

Let us outline the situation for the pre-Big Bang scenario in a spatially homogeneous setting, ignoring various constants and postponing a more detailed analysis to later work.
We keep the dimension~$d$ of spatial slices unspecified in this paragraph, to ease comparison with available literature.  The reader can substitute $d=3$ to match the rest of this paper.

In the string frame~(SF), the homogeneous metric-dilaton equations of motion (at tree level and truncated to the lowest order in derivatives) admit Bianchi~I solutions of the form: 
\[
g_{\SF} = - dt_{\SF}^2 + \sum_{i=1}^d |t_{\SF}|^{2\beta_i}dx^idx^i , \qquad
\phi_{\SF} = (\Sigma -1) \log|t_{\SF}| , \qquad
\text{with} ~ \Sigma = \sum_{i=1}^d \beta_i, ~\quad \sum_{i=1}^d \beta_i^2 = 1 .
\]
Thus, any given solution (i.e.\ any given choice of the $\beta_i$) belongs to a set  of $2^{d+1}$ choices, corresponding to the possibility of flipping the sign of $t_{\SF}$ as well as the one of any $\beta_i$. This possibility is guaranteed by a symmetry (scale-factor duality~\cite{VenezianoSFD}) of the string-cosmology equations in the presence of $d$ abelian isometries.
The idea of the pre-Big Bang scenario~\cite{VenezianoSFD,GasperiniVeneziano1} is to combine, in a single cosmology valid from $t_{\SF} = - \infty$ to $t_{\SF}= + \infty$, two solutions in this set that differ for both the sign of $t_{\SF}$ and for that of each $\beta_i$, so that each Hubble parameter $\beta_i/t_{\SF}$ does not change sign from  $t_{\SF} < 0$ to $t_{\SF}  >0$.
Each solution becomes singular at $t_{\SF}=0$ but it is conjectured that higher derivative and/or higher loop corrections will remove the singularity and allow for a smooth joining of the two solutions.

In the present context we then write, for all $t_{\SF} \ne0$,
\[
g_{\SF} = - dt_{\SF}^2 + \sum_{i=1}^d |t_{\SF}|^{2\beta_{i\pm}}dx^idx^i , \qquad
\phi_{\SF} = (\Sigma_{\pm}-1) \log|t_{\SF}| , \qquad
\text{with} \quad \Sigma_{\pm} = \sum_{i=1}^d \beta_{i\pm} , \quad \sum_{i=1}^d \beta_{i\pm}^2 = 1 ,
\]
where the subscripts~$\pm$ are the sign of~$t_{\SF}$. 
This sign distinguishes two sides of the bounce. As mentioned above, a solution for $t_{\SF}<0$ with some values of the exponents~$\beta_{i-}$ is joined to a solution with $t_{\SF}>0$ with all $\beta_{i+}= -\beta_{i-}$, hence $\Sigma_+=-\Sigma_-$.

The Einstein-frame metric is $\exp(-2\phi_{\SF}/(d-1))g_{\SF}$, and the corresponding proper time coordinate~$t$ (vanishing at the bounce) is
\[
t = \pm \frac{d-1}{d-\Sigma_{\pm}} |t_{\SF}|^{(d-\Sigma_{\pm})/(d-1)} \quad
\text{for } \pm t_{\SF} > 0 .
\]
The Einstein-frame metric~$g$ then takes the form
\bel{intro-PBB-gEF}
g = - dt^2 + \sum_{i=1}^d g_{\pm\,ii}|t|^{2k_{i\pm}}dx^idx^i , \qquad
\text{with} \quad k_{i\pm} = \frac{1}{d} + \frac{d-1}{d-\Sigma_{\pm}} \biggl(\beta_{i\pm} - \frac{\Sigma_{\pm}}{d}\biggr) , \quad
g_{\pm\,ii} = \biggl(\frac{d-\Sigma_{\pm}}{d-1}\biggr)^{k_{i\pm}} .
\ee
We wrote the Kasner exponents $k_{i\pm}$ in a form that makes manifest that $\sum_ik_{i\pm}=1$, since $\Sigma_{\pm}/d$ is the average of the~$\beta_{i\pm}$.
In addition, we readily translate the junction condition $\beta_{i+}=-\beta_{i-}$ (and $\Sigma_+=-\Sigma_-$) to a rescaling of all shears $k_{i\pm}-1/d$ and of the volume factor by inverse amounts:
\[
k_{i+} - \frac{1}{d} = - \frac{d-\Sigma_-}{d+\Sigma_-} \biggl( k_{i-} - \frac{1}{d} \biggr) , \qquad
\sqrt{|\gpoi|} = \frac{d+\Sigma_-}{d-\Sigma_-} \sqrt{|\gmoi|} .
\]
This is precisely as predicted by the first law~\eqref{law1} of ultralocal scattering maps, suggesting that the pre-Big Bang scenario bounce is described by one of our maps.
If so, the map must be an anisotropic map $\Sani_{\Phi,c,\epsilon}$ with $c=1$ and $\epsilon=-1$, because $(k_{i\pm} - \frac{1}{d})\sqrt{|\gpoi|}$ simply changes sign.

One can in principle determine~$\Phi$ by studying how the Einstein-frame canonically normalized dilaton jumps.  Its leading coefficient $\phi_{0\pm}$ (in units where Newton's constant is $G=1$) is given by
\[
\phi = \phi_{0\pm} \log|t| + O(1) , \qquad
|\phi_{0\pm}| = \sqrt{\frac{d-1}{8\pi}} \, \frac{\Sigma_{\pm}-1}{d-\Sigma_{\pm}} ,
\]
and one easily checks $\sum_ik_{i\pm}^2=1-8\pi\phi_{0\pm}^2$.
Clearly, $\Sigma_+=-\Sigma_-$ allows us to express $\phipoi_0$ as a function of~$\phimoi_0$, only, and not of individual Kasner exponents.
In the language of~\eqref{intro-Sani} this means that $\phipoi_0$ does not depend on the Kasner angle (or angles, in dimension $d>3$).
More precise calculations suggest that $\phipoi_1$ also does not depend on these angles, so that our expression of the metric~\eqref{intro-Saniso-g} simplifies to an expression of the form
\[
g_+ = \Omega^{-2/d} \exp(\lambda(K_--1/d))g_-
\]
where $\lambda$ may a priori depend on~$\phimoi_0,\phimoi_1$.
This is consistent with the junction condition we found on~$g$ in~\eqref{intro-PBB-gEF}, with
$\lambda = \log\frac{d-1}{d-\Sigma_-} + \frac{d-\Sigma_-}{d+\Sigma_-}\log\frac{d-1}{d+\Sigma_-}$.
Note that, in principle, one can try to construct alternative bouncing cosmologies by matching, across the singularity, any two of the $2^d$ duality-related Kasner cosmologies. It is easy to check, however, that in other cases our junction conditions are not satisfied: thus, the only bounce consistent with ultralocality is the one where all $\beta_+=-\beta_-$. 
A specific example of this in the context of the plane-symmetric case is presented in~\cite{LLV-3}.

\paragraph{Modified gravity theories.}

Bounces were also considered in a class of modified gravity theories including metric or Palatini $f(R)$ gravity, Brans-Dicke theory, and more general scalar-tensor theory in~\cite{Cesare-bounce}.
The set of gravity theories under consideration is too general to obtain a specific scattering map.
Nevertheless, the densitized trace-free extrinsic curvature $\sqrt{|g|}(K-\tfrac{1}{3}\Tr(K)\delta)$ was shown in this setting to remain conserved throughout the bounce, hence to be the same on both sides of the bounce.
In terms of the singularity scattering data $(g_{\pm},K_{\pm})$ this gives\footnote{The extrinsic curvatures $K_{\pm}$ are defined with respect to unit normals pointing away from the singularity, while in a smooth bounce one more naturally works with the normals pointing in the same direction on both sides of the bounce.  This leads to a sign in the scattering map from $\Kmoi$ to~$\Kpoi$.} $\sqrt{|\gpoi|}\,\Kpoicirc=-\sqrt{|\gmoi|}\,\Kmoicirc$, which is consistent with the first law~\eqref{law1} above.
Combined with our classification of ultralocal scattering maps (in \autoref{theorem-ultralocal}), this suggests that bounces in rather general modified gravity theories are governed by an anisotropic scattering map of the form~$\Sani_{\Phi,1,-}$, as in the pre-Big Bang scenario. It would be interesting to extend our arguments to 
yet further models of gravity such as the one studied mathematically in \cite{LuzMena}. 

\paragraph{Loop quantum cosmology.}

Loop quantum cosmology following the Ashtekar school~\cite{AshtekarSingh,AshtekarPawlowskiSingh} leads quite generically to cosmological bounces. A different standpoint by Bojowald~\cite{Bojowald:2020wuc} was analyzed and opposed in \cite{Ashtekar:2009,CorichiSingh,KaminskiPawloski}. 
 
In loop quantum cosmology~\cite{AshtekarWilsonEwing,Wilson-Ewing-LQC}, and in some classical gravity theories such as limiting curvature mimetic gravity~\cite{Chamseddine}, the junction condition for the extrinsic curvature in a Bianchi I bouncing spacetime with a stiff fluid or massless scalar field is $\Kpoi=\frac{2}{3}\delta-\Kmoi$.
Assuming that bounces in these modifications of general relativity respect the ultralocality expected from the BKL analysis, they must be described by a scattering map listed in our classification in \autoref{thm:intro-classification} above.
As we explicitize near~\eqref{scattering-map-antipodal-K} below, the only scattering maps that give rise to this sign flip $\Kpoicirc=-\Kmoicirc$ are $\Sani_{\Phi,c,\epsilon}$ with $\Phi(\thetamoi,\phimoi_0,\phimoi_1)=\pm(-\phimoi_0,f(\thetamoi,\phimoi_0)+\phimoi_1)$ and $\epsilon=-1$.
These maps are parametrized by a single function $f\colon \RR\times I_0\to\RR$ (periodic in~$\theta$) and an unimportant constant $c>0$ and sign~$\pm$.
We call these maps momentum-reversing, in analogy to the momentum-preserving case that we discussed above.

In this way, our method provides an explicit form of the scattering map applicable to general spacetimes, starting only from the map of Kasner exponents in a homogeneous spacetime.
It would be interesting to test our assumption of ultralocality by checking whether the scattering map~\eqref{scattering-map-antipodal-K} (see below) is compatible with results in loop quantum cosmology with Gowdy symmetry~\cite{Brizuela:2009nk} or with linearized perturbations around homogeneous spacetimes in limiting curvature mimetic gravity.

\paragraph{Further generalizations.}

\bei 

\item{\bf Bounces with no classical description.} In some other quantum gravity approaches such as quantum reduced loop gravity~\cite{Alesci:2019sni}, the solutions do not admit a classical description after the bounce, which makes our techniques inapplicable.

\item{\bf On non-ultralocal scattering maps.} More generally, we could also consider singularity scattering maps that are not ultralocal, namely for which the values of $(\gpoi,\Kpoi,\phipoi_0,\phipoi_1)$ at $x\in\Hcal$ can depend on values of $(\gmoi,\Kmoi,\phimoi_0,\phimoi_1)$ and their derivatives at that point.  While in principle an approach similar to the one we take in the ultralocal case might lead to a classification of singularity scattering maps involving derivatives of a given order, the calculations appear intractable.

\eei 


\section{Spacelike singularity hypersurfaces in \((3+1)\)-dimensional spacetimes} 
\label{section---2}

\subsection{The \(3+1\) ADM formulation} 

\paragraph{Gaussian foliation.}

We describe here the geometry near a spacelike singularity hypersurface~$\Hcal_0$.
In the following we shall make use of a {\sl local} Gaussian foliation emanating from the singular hypersurface and constructed as follows.
Geodesics normal to the hypersurface~$\Hcal_0$ cover a neighborhood of that hypersurface, so that a time coordinate~$s$ can be defined as the proper time along such geodesics, with $s=0$ at~$\Hcal_0$.
Level sets of~$s$ form a local spacelike foliation of spacetime
\[
\Mcal^{(4)} = \bigcup_{s \in [s_{-1}, s_1]} \Hcal_s, 
\]
by a time coordinate denoted by $s\colon \Mcal^{(4)} \mapsto [s_{-1}, s_1]$
for two parameters $s_{-1} < 0 < s_1$,
consisting of a past region $s\in[s_{-1}, 0)$ and a future region $s\in (0, s_1]$.
These two regions are pasted at $s=0$ along a spacelike singularity hypersurface $\Hcal_0$ on which curvature invariants may blow up.
Each slice $\Hcal_s$, $s\neq 0$ is endowed with a Riemannian metric $g(s) = (g_{ab}(s))$ and an extrinsic curvature tensor (or second fundamental form) $K(s) =  (K_a^b(s))$. Here, both tensor fields are symmetric, thus $g_{ab}=g_{ba}$ and $K_{ab} = K_{ba}$ where, as usual, indices are lowered (or raised) with the metric~$g$.  In our notation, local coordinate indices are written with Latin letters $a,b, \ldots = 1,2,3$. 
The trace $\Tr(K) = K_b^b = g^{ab} K_{ab}$ represents the mean curvature of the slices within the spacetime and, in our setup, blows up at $s=0$.

Locally, in addition to defining a proper time coordinate~$s$, the geodesics emanating from~$\Hcal_0$ and normal to it provide a diffeomorphism from each leaf~$\Hcal_s$ to~$\Hcal_0$.
The shift vector is then identically~$0$, and the lapse function is identically~$1$ by construction, so that the foliation is a {\bf Gaussian foliation}.
Then the four-dimensional metric in $(\Mcal^{(4)}, g^{(4)})$ is expressed in terms of the three-dimensional one as\footnote{This gauge choice $g_{00}=-1$ and $g_{0a}=0$ is also called synchronous gauge, but we avoid this terminology, as it is not applicable to the case of timelike foliations we consider later on.  In the ADM formalism the gauge choice sets the lapse to~$1$ and the shift to~$0$.  Such a choice of coordinates can only be made locally, as there are typically obstructions to the existence of a {\sl global} synchronous gauge coordinate system.  Note additionally that the synchronous gauge (Gaussian foliation) does not guarantee a simultaneous singularity, but that one can {\sl choose} to set up the foliation starting from the singularity hypersurface (as we do) to ensure that the singularity indeed happens simultaneously at $s=0$.}
\[
g^{(4)} = \big(g^{(4)}_{\alpha\beta}\big)
= - ds^2 + g(s), \quad \text{ with } g(s) = g_{ab}(s) dx^a dx^b. 
\]
Here, Greek indices $\alpha, \beta, \ldots$ range from $0$ to $3$, while for Latin indices we take $a,b, \ldots= 1,2,3$. 
We sometimes call $s$ a {\bf Gaussian time coordinate}.
Here and throughout this paper, we use Greek indices for spacetime indices $\alpha, \beta = 0,1,2,3$. 
In such a foliation, $K_{ab} = -(1/2) \del_s g_{ab}$.

\paragraph{Gravitational field equations.}  

The metric and extrinsic curvature tensor fields are assumed to satisfy the ADM (Arnowitt-Deser-Misner) first-order formulation of {\bf Einstein's evolution equations}, i.e. 
\bel{equa:ADMsysytem}
\del_s g_{ab}  = - 2 \, K_{ab},
\qquad
\del_s K^a_b - \Tr(K)  K^a_b =  R^a_b - 8 \pi \, M^a_b, 
\qquad M^a_b =  \frac{1}{2} \rho \delta^a_b + \big( T^a_b - \frac{1}{2} \Tr(T)  \delta^a_b\big). 
\ee
Here, $R^a_b$ denotes the (intrinsic, $3$-dimensional) Ricci curvature of the slices, while the mass-energy density 
$\rho = T^{(4)}_{00} = T^{(4)}(n,n)$, the momentum vector $J = -T^{(4)}_{0\,\smallbullet} = -T^{(4)}(n, \smallbullet\,)$ and the stress tensor $T =(T^a_b)$ are components of the spacetime energy-momentum tensor $T^{(4)}_{\alpha\beta}$ specified below, where $n$ is the future-oriented, unit normal to the foliation.

In addition, the equations \eqref{equa:ADMsysytem} are supplemented with {\bf Einstein's constraint equations}
\bel{equa-constr}
R + (\Tr K)^2 - \Tr (K^2) = 16 \pi \rho,
\qquad 
\nabla_a K^a_b  - \del_b (\Tr K) = 8 \pi J_b,
\ee
in which $R = R_b^b$ denotes the trace of the Ricci tensor. These latter two equations are referred to as the Hamiltonian and momentum equations, respectively, and provide one with a restriction of the initial data set that can be prescribed (on any given regularity hypersurface, say). In the regions $s<0$ and $s>0$ of regularity, it is well-known that they hold on any hypersurface $\Hcal_s$ provided they hold on any other one.

\paragraph{Coupling with the matter field.}

The right-hand sides of the equations \eqref{equa:ADMsysytem}--\eqref{equa-constr} contain contributions whose explicit expression requires a modeling assumption about the matter content of our spacetime. Here, we work with a massless scalar field $\phi$ whose energy-momentum tensor is quadratic in the first-order derivatives of $\phi$, namely  
\[
T^{(4)}_{\alpha\beta} \coloneqq \del_\alpha \phi \del_\beta \phi 
- \frac{1}{2} \Big(g^{(4)\gamma\delta} \del_\gamma \phi \del_\delta \phi  
\Big) g^{(4)}_{\alpha\beta} .
\]
After projection on the slices of the foliation, the matter components are found to read 
\bel{equa:rhoandJ}
\rho = \frac{1}{2} \Big( (\del_s \phi)^2 + | d\phi|^2_g \big), 
\qquad
J = - \del_s \phi \, d \phi, 
\qquad
T = d\phi \otimes d\phi + \frac{1}{2} \Big( (\del_s \phi)^2 - | d\phi|^2_g \big) g. 
\ee
By virtue of the Euler equations $\nabla^{(4)}_\alpha T^{(4)\alpha}_\beta = 0$, where $\nabla^{(4)}$ is the connection associated with the spacetime metric, the field $\phi$ is determined by solving the wave equation $\nabla^{(4)}_\alpha \nabla^{(4)\alpha} \phi = 0$, that is, the {\bf matter evolution equation} 
\bel{equa-wavephi}
- \del_s^2 \phi + \Tr(K) \, \del_s \phi + \Delta_g \phi = 0
\ee 
with $\Delta_g \phi  = \nabla_b \nabla^b \phi$. 
This is a linear wave equation which, of course, is coupled to \eqref{equa:ADMsysytem}. For this matter model, the prescription of two scalar fields, that is, the restrictions of $\phi$ and $\del_s \phi$, are required as part of the initial data set on a (regularity) hypersurface. 
Furthermore, we emphasize that the term involving $\Tr(K)$ accounts for the expanding or contracting nature of the spacetime.

\paragraph{Local Cauchy developments from regularity hypersurfaces.} 

It is a standard matter than the system \eqref{equa:ADMsysytem}--\eqref{equa-wavephi} admits a unique local-in-time solution defined on an interval $[s_{-1}, s_0)$, provided a sufficiently regular initial data set\footnote{Here $\Lcal$ denotes the Lie derivative operator.}
$\big(g(s_{-1}), K(s_{-1}), \phi(s_{-1}), \Lcal_n \phi(s_{-1}) \big)$ is prescribed on a regularity hypersurface $\Hcal_{s_{-1}}$ and $s_0$ is sufficiently close to $s_{-1}$.  In general, a solution initiating at $s= s_{-1}$ may not exist over a sufficient long time interval and may not reach the singularity hypersurface. An alternative and more natural approach, which we investigate in the rest of this section, consists of prescribing data directly on the singularity hypersurface and evolving {\sl away} from it.

\subsection{Singularity data and asymptotic profile} 

\paragraph{BKL behaviors of quiescent or oscillating type.}

The BKL conjecture~\cite{BKL} describes how, near a spacelike singularity, the evolution at different points in space generically decouples. Depending on dimensionality and on the matter content, one expects two possible regimes~\cite{BKL}:

\bei

\item The {\bf quiescent regime} (studied by Barrow and others~\cite{BK-scalar,Barrow,DHS} and which is of main interest to the present study) where the metric is close to a Bianchi~I metric (with well-defined Kasner exponents) {\sl at each point} near the singularity hypersurface~$\Hcal_0$ (as we describe below). 

\item The {\bf oscillating regime,} where the spacetime has successive epochs each being described by a Bianchi~I metric {\sl at each point}, separated by rapid transitions during which the Kasner exponents and the directions transform non-trivially.
\eei
In our setting with a massless scalar field in $3+1$ dimensions there are generically no oscillation and the metric can be approximated by a Bianchi~I metric at each point of the singularity hypersurface.  More precisely, in our existence theory (cf.\ \autoref{theo:391}) dealing with solutions with a prescribed asymptotic behavior on the singularity, we are able to treat {\bf quiescent bounces,} obtained when the second fundamental form has a definite sign in the sense that all of the Kasner exponents are positive.
Furthermore, in each of the two generic regimes above, singularities may additionally feature spikes~\cite{RendallWeaver} in co-dimension~$1$.
This motivates us, later on in this text, to work away from a two-dimensional exceptional locus.

\paragraph{Evolution equations for the asymptotic profile.}

We consider first the time interval $s \in [s_{-1}, 0)$ and we investigate the behavior of the solutions $(g, K, \phi)$ to the coupled system \eqref{equa:ADMsysytem}--\eqref{equa-wavephi}, as $s \to 0$.  We seek an {\bf asymptotic profile} denoted by $(\gst, \Kst, \phist)$ that accurately approximates a general solution as one approaches the singularity. Such an asymptotic profile (cf.~the review in Rendall's textbook \cite{Rendall:2008}) should be determined by solving the so-called {\bf velocity-dominated evolution equations\footnote{The terminology ``velocity dominated'' refers to the fact that time-differentiated terms (interpreted as ``velocity'' terms) are dominant.}}, obtained by removing all spatial derivatives in the evolution equations, as follows. 

Namely, from the evolution equations \eqref{equa:ADMsysytem} and \eqref{equa-wavephi} we formally deduce the following equations with unknowns $\gst, \Kst, \phist$, respectively, 
\begin{subequations}
\label{ADM-zero}
\bel{ADM-zero-a}
\del_s \gst_{ab} = - 2 \, \Kst_{ab},
\ee
\bel{ADM-zero-b}
\del_s \Kst_a^b - \Tr(\Kst) \Kst_a^b = 0, 
\ee
\bel{ADM-zero-c} 
\del_s^2 \phist - \Tr(\Kst) \del_s \phist = 0.
\ee
\end{subequations}
The system can be solved explicitly, as follows. 
\bei 

\item By taking the trace of \eqref{ADM-zero-b}, we find $\del_s \Tr(\Kst) = (\Tr\Kst)^2$ and, provided we normalize the singularity to take place at the time $s=0$, it follows that 
\[
\Tr(\Kst)(s) = - \frac{1}{s}, 
\]
so that this asymptotic profile consists of a CMC (constant mean-curvature) foliation. 

\item Consequently, the same equation in \eqref{ADM-zero-b} tells us that $(-s) \, \Kst_a^b$ is a constant in time, which we denote by $\Kmoi_a^b$. Hence, we find (with the spatial variable $x$ describing $\Hcal_s \simeq \Hcal_0$): 
\[
\Kst_a^b(s, x) = \frac{-1}{s} \Kmoi_a^b(x),
\qquad 
\Tr(\Kmoi(x)) = 1, 
\qquad x \in \Hcal_0. 
\]

\item Next, the metric equation \eqref{ADM-zero-a} reads 
$s \, \del_s \gst_{ab} = 2 \, \Kmoi_a^c \, \gst_{cb}$ and leads us to  
\[
\gst_{ab}(s,x) = \bigl( |s|^{2\Kmoi(x)} \bigr)_a^c \, \gmoi_{cb}(x), 
\qquad x \in \Hcal_0, 
\]
in which the two-tensor $|s|^{2\Kmoi} = e^{2\Kmoi\log |s|}$ is defined by exponentiation. 

\item Finally, from the matter equation \eqref{ADM-zero-c} we obtain
\[
\phist(s,x) = \phimoi_0(x) \log|s| + \phimoi_1(x), 
\qquad x \in \Hcal_0, 
\]
in which the fields $\phimoi_0, \phimoi_1$ are arbitrary. 
\eei 

As we will observe in the proof of \autoref{theo:391}, the asymptotic system~\eqref{ADM-zero} is a controlled approximation of the Einstein-scalar field equations if $\Kmoi$~is positive definite.  Beyond this so-called quiescent regime, the asymptotic profile is generically unstable, with a well-understood transition~\cite{BKL} to another value of the exponents~$\Kmoi$.  Providing the definitions for general exponents remains useful nevertheless, because non-quiescent singularities are stable in certain symmetry classes, for instance the plane-symmetric spacetimes that we explore in~\cite{LLV-3}.

Altogether, an asymptotic profile is uniquely determined from the prescription, on the singularity hypersurface $\Hcal_0$, 
of an arbitrary Riemannian metric $\gmoi_{ab}$ and a symmetric $2$-tensor field $\Kmoi_{ab}$ satisfying $\Tr\Kmoi = 1$, 
together with two scalar fields $\phimoi_0, \phimoi_1$.  We observe that the condition $\Tr\Kmoi = 1$ implies that the determinant $\abs{\gst}$ of $\gst_{ab}$ is proportional to $s^2$ and, more precisely, 
\[
\sqrt{\abs{\gst(s, x)}} = \abs{s} \sqrt{\abs{\gmoi(x)}}.  
\]
We also observe that the asymptotic profile can be extended to $s\in(-\infty,0)$ using the same formulas, and that the data $(\gmoi,\Kmoi,\phimoi_0,\phimoi_1)$ coincide with the asymptotic profile $(\gst,\Kst,s\del_s\phist,\phist)$ at $s=-1$.

It is important to check that $\gst$ and $\Kst$ have the desired symmetry provided $\gmoi_{ab} = \gmoi_{ba}$ and $\Kmoi_a^b \gmoi_{bc} = \Kmoi_c^b \gmoi_{ba}$.  By applying the second identity $n$ times one easily checks that $(K_-^n)_a^b\gmoi_{bc}$ is symmetric too, thus for any entire function~$f$ we have that $f(\Kmoi)_a^b\gmoi_{bc}$ is symmetric.  Since $\gst_{ac}$ and $\Kst_a^b \gst_{bc}$ both have this form they are symmetric.

\paragraph{Constraint equations for the asymptotic profiles.}

The above data are {\sl not independent} and we also require the following asymptotic version of Einstein's constraint equations~\eqref{equa-constr}: 
\bse
\label{equa:reducedconstraints}
\begin{align}
\label{equa:asympconstr-a} 
(\Tr \Kst)^2 - \Kst^a_b \Kst_a^b & = 16 \pi \, \rhost,
\\[.5ex]
\label{equa:asympconstr-b} 
\nablast_a \Kst_b^a - \del_b (\Tr \Kst) & = 8 \pi \, \Jst_b, 
\end{align}
referred to as the {\bf velocity-dominated constraint equations.} 
Here, we have neglected the scalar curvature term and, in addition, space derivatives are  neglected in the matter components~\eqref{equa:rhoandJ}. Precisely, we set  
\bel{equa:asympconstr-c} 
\rhost \coloneqq \frac{1}{2} (\del_s \phist)^2,  
\qquad
\Jst \coloneqq - \del_s \phist d \phist,
\qquad 
\Tst \coloneqq \frac{1}{2} (\del_s \phist)^2  \gst.
\ee
\ese

We denote by $C$ the left-hand side minus the right-hand side of \eqref{equa:asympconstr-a} and by $D_b$ the same difference for the second constraint~\eqref{equa:asympconstr-b}. A calculation shows us that the evolution equations~\eqref{ADM-zero} imply  
\[
\del_s C = 2 (\Tr\Kst) \, C,
\qquad
\del_s D_b = (\Tr\Kst) D_b - \frac{1}{2} \del_b C.
\]
The first equation is a first-order differential equation for $C$, while --once the coefficient $C$ is known from the first equation-- the second equation can also be seen as a first-order differential equation for each component $D_b$.  
Therefore, these evolution equations imply that if constraints are satisfied (that is, $C=0$ and $D_b = 0$) on a hypersurface $\Hcal_s$ for some fixed time $s$, then they are satisfied for all $s<0$.

\begin{figure}\centering
\begin{tikzpicture}
    \draw[very thick] (-2,0) .. controls +(40:2) and +(-160:1.3) .. +(4,.5);
    \foreach \y in {-.6,-.4,-.2,.2,.4,.6} {
      \draw (-2,\y) .. controls +(40:2) and +(-160:1.3) .. +(4,.5); }
    \node at (-2,.3) [left] {$s>0$};
    \node at (-2,-.3) [left] {$s<0$};
    \node at (0,1.4) {$(\gst,\Kst,\phist)(s)$};
    \draw [->] (2.2,.7) node [right] {$(\gpoi,\Kpoi,\phipoi_0,\phipoi_1)$} -- (1.7,.7) -- (1.7,.45);
    \draw [->] (2.2,.15) node [right] {$(\gmoi,\Kmoi,\phimoi_0,\phimoi_1)$} -- (1.7,.15) -- (1.7,.4);
    \node at (0,-.4) {$(\gst,\Kst,\phist)(s)$};
  \end{tikzpicture}
  \caption{\label{fig:spacetimefoliation}{\bf Spacetime foliation by spacelike hypersurfaces~$\Hcal_s$.}
A singularity hypersurface~$\Hcal_0$ along which past and future singularity data $(g_\pm,K_\pm,\phi_{0\pm},\phi_{1\pm})$ are prescribed.
These data specify asymptotic profiles $(\gst,\Kst,\phist)(s)$ that solve~\eqref{ADM-zero} for $s<0$ and $s>0$, with explicit expressions given in \eqref{asymp-profile-m} and~\eqref{asymp-profile-p}, respectively.}
\end{figure}
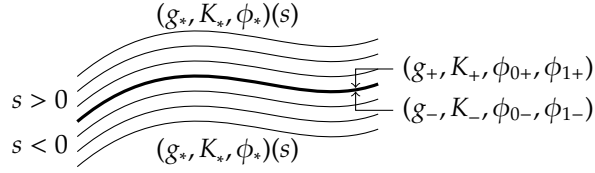

\paragraph{Initial data set on a singularity.}

We now translate the constraints on the asymptotic profile into constraints on the singularity data set $(\gmoi,\Kmoi,\phimoi_0,\phimoi_1)$.  This is simply a matter of setting $s=-1$: as we just saw, imposing the constraints at that time ensures that they hold at all times $s\in(-\infty,0)$. In addition, at this time, the tuplet $(\gst,\Kst,s\del_s\phist,\phist)$ is equal to $(\gmoi,\Kmoi,\phimoi_0,\phimoi_1)$.  Using additionally $\Tr\Kmoi = 1$, the asymptotic version of the Einstein constraints \eqref{equa:reducedconstraints} read as follows in terms of the singularity data: 
\bel{equa-const-singu} 
1  - \Kmoi^a_b  \Kmoi_a^b  = 8\pi \, (\phimoi_0)^2, 
\qquad
\nabla^-_a \Kmoi^a_b
= 8 \pi \, \phimoi_0 \, \del_b \phimoi_1.
\ee

At this stage of our general definitions, we do not need to make specific regularity assumptions (with respect to the spatial variable). In our main result below we actually work in the analytic class (but, when plane symmetry is assumed, a much weaker regularity can be handled). Throughout, $\Hcal$ denotes a $3$-manifold. 

\begin{definition}\label{def:singularity-data}
1. A set $(\gmoi, \Kmoi, \phimoi_0, \phimoi_1) = (\gmoi_{ab}, \Kmoi_a^b, \phimoi_0, \phimoi_1)$ consisting of two tensor fields and two scalar fields defined on $\Hcal$ is called a {\bf spacelike singularity initial data set} provided: 
\bel{equ:sing-space-T}
\aligned
& (i)  && \gmoi \text{ is a Riemannian metric on $\Hcal$.}
\\
& (ii)  && \Kmoi \text{ is symmetric, that is, } \gmoi_{ac} \Kmoi_b^c = \gmoi_{bc} \Kmoi_a^c. 
\\
& (iii)  && \Hcal \text{ has unit mean curvature, that is, } \Tr\Kmoi = 1 \text{ on $\Hcal$.}
\\
& (iv) && \text{The asymptotic Hamiltonian and momentum constraints \eqref{equa-const-singu} hold on $\Hcal$.}
\endaligned
\ee
The set of all such data is referred to as the {\bf space of spacelike singularity data} and is denoted by $\Ispace(\Hcal)$. 

\vskip.15cm

2. The data $(\gmoi, \Kmoi, \phimoi_0, \phimoi_1)$ are {\bf quiescent} if $\Kmoi>0$.  The space of such quiescent data is denoted by $\Ispacep(\Hcal)$.

\vskip.15cm

3. The {\bf spacelike asymptotic profile} associated with the data set $(\gmoi, \Kmoi, \phimoi_0, \phimoi_1) \in \Ispace(\Hcal)$ is the flow $s \in (-\infty,0) \mapsto \big(\gst(s), \Kst(s), \phist(s) \big)$ defined on $\Hcal$ by 
\bel{asymp-profile-m}
\gst(s) = |s|^{2 \Kmoi} \gmoi, 
\qquad
\Kst(s) = \frac{-1}{s} \Kmoi, 
\qquad
\phist(s) = \phimoi_0 \log|s|  + \phimoi_1. 
\ee
\end{definition}

For a discussion of the properties of the space $\Ispace(\Hcal)$ we refer to \autoref{subsec:param-data}. So far, we have discussed the direction toward the singularity but, clearly, a similar definition can be given in order to evolve {\sl away} from the singularity hypersurface toward the future. For the corresponding data we use the notation $(\gpoi, \Kpoi, \phipoi_0, \phipoi_1) \in \Ispace(\Hcal)$ and we define the corresponding asymptotic profile over the time interval $(0,\infty)$ by
\bel{asymp-profile-p}
\gst(s) = \abs{s}^{2\Kpoi} \gpoi, 
\qquad
\Kst(s) = \frac{-1}{s} \Kpoi, 
\qquad
\phist(s) = \phipoi_0 \log|s|  + \phipoi_1. 
\ee
We emphasize that our sign conventions in \eqref{asymp-profile-m}--\eqref{asymp-profile-p} are such that $\Tr K_{\pm}=1$ and $\Kst$ is the extrinsic curvature measured using the {\sl future-pointing} unit normal to the foliation, which explains the opposite sign of $\Tr\Kst$ for $s\lessgtr 0$.
Note that while (asymptotic profiles $\Kst$ of) the extrinsic curvatures change sign if one changes the sign of~$s$, hence of the unit normal $\del_s$, the normalized tensors $K_{\pm}$ have unambiguous signs, as exemplified by the condition $\Tr K_{\pm}=1$.
These notations are summarized in \autoref{fig:spacetimefoliation} (above), while in \autoref{fig:aspectscyclic} (below) we depict some aspects of light-cones near a singularity hypersurface.

\paragraph{An important example.} 

As an illustration of our definitions, let us consider a particular class of data sets and asymptotic profiles, in which for simplicity $\gmoi$ is chosen to be the Euclidean metric on $\Hcal \simeq \RR^3$ and $\Kmoi$ has constant eigenvectors.  In suitable coordinates, we can write $\Kmoi \equiv \diag(k_1,k_2,k_3)$ for three functions $k_1, k_2, k_3$ defined on~$\RR^3$.
This choice leads us the following {\bf generalized Kasner metric:} 
\bel{eq:metric-BianchiI}
\aligned
\gst_\Kasner 
& = (- s)^{2k_1(x)} (dx^1)^2 + (- s)^{2k_2(x)} (dx^2)^2 +  (- s)^{2k_3(x)} (dx^3)^2, \qquad s<0,
\\
g_{*\Kasner}^{(4)}
& = - ds^2 + \gst_\Kasner. 
\endaligned
\ee
This is an asymptotic profile included in the general framework above, {\sl provided suitable restrictions} are put on the data functions $k_1, k_2, k_3$. Namely, the CMC requirement $\Tr\Kmoi =1$ reads 
\bse
\label{eq:KasConditions}
\be 
k_1(x) + k_2(x) + k_3(x) = 1, 
\ee
and from the Hamiltonian constraint in \eqref{equa-const-singu} we get  
\bel{psquareleq1}
(k_1(x))^2 + (k_2(x))^2 + (k_3(x))^2 \leq 1.
\ee
We also have three differential constraints
\bel{phimoi0squared}
\del_a k_a(x) = 8 \pi \phimoi_0(x) \del_a \phimoi_1(x), 
\qquad 
\phimoi_0(x)^2 = \frac{1}{8 \pi} \Big( 1 - (k_1(x))^2 + (k_2(x))^2 + (k_3(x))^2 \Big). 
\ee
\ese

For instance, if $\phimoi_1$ is chosen to be a constant, then from the equations $\del_1k_1=\del_2k_2=\del_3k_3=0$ together with $k_1+k_2+k_3=1$, we conclude that $\del_1\del_2 k_3=0$. Hence, for this class of singularity data, $k_3$ is the sum of a function of~$x^1$ and a function of~$x^2$. Using again $k_1+k_2+k_3=1$ we arrive at the family of solutions
\[
k_1(x) = \frac{1}{3} + f_2(x^2) - f_3(x^3) , \quad
k_2(x) = \frac{1}{3} + f_3(x^3) - f_1(x^1) , \quad
k_3(x) = \frac{1}{3} + f_1(x^1) - f_2(x^2), 
\]
parametrized by three functions on~$\RR$ up to an overall shift, subject only to the inequality~\eqref{psquareleq1}, easily satisfied for example by functions with all $\abs{f_a(x^a)}<1/\sqrt{12}$.
We also recall that $\phimoi_0$ is given by~\eqref{phimoi0squared}.

Furthermore, we observe that, when the $k_a$ are chosen to be constant, the metric~\eqref{eq:metric-BianchiI} is not only an asymptotic profile but, in fact, a genuine solution to the Einstein equations. It is a vacuum solution (the Kasner solution~\cite{Kasner}) only if moreover $\phi_0^-$ vanishes.

\begin{figure}\centering
  \begin{tikzpicture}
    \fill[opacity=.5,black!20!white] (1.8,-1.44)
    parabola bend (3,0) (3,0)
    parabola (3.75,1)
    parabola bend (4.5,0) (4.5,0)
    parabola (5.7,-1.44) -- cycle;
    \draw[very thick] (6,0) -- (1.5,0) node [left] {$s=0$};
    \foreach \x in {3,4.5} {
      \filldraw (\x,0) circle (.05);
      \draw (\x,0) parabola +(.9,1.44);
      \draw (\x,0) parabola +(-.9,1.44);
      \draw (\x,0) parabola +(-1.2,-1.44);
      \draw (\x,0) parabola +(1.2,-1.44);
    }
    \draw[dashed] (2,-1) -- (5.5,-1);
    \node at (3.75,1) {$\times$};
  \end{tikzpicture}
  \caption{\label{fig:aspectscyclic}{\bf Aspects of light-cones near a singularity hypersurface.}  Past and future light-cones of two points on the $s=0$ singularity hypersurface~$\Hcal_0$, and domain of dependence (in gray) of a spacetime point (cross).  Kasner exponents $k_{a\pm}$ before and after the singularity are all less than~$1$, except in the special case $k_{1\pm}=k_{2\pm}=0,k_{3\pm}=1$ (and permutations thereof).  Null geodesics then travel by a finite amount $\sim\int ds/|s|^{k_{a-}}$ in the three spatial directions before reaching the singularity, and likewise after the singularity, hence the domain of determinacy of a sufficiently large region (such as depicted by the dashed line) can include parts of the spacetime after the singularity.  The fact that null rays can ``traverse'' the singularity enables us to set up null coordinates globally in the plane-symmetric gravitational collision problem treated in \cite{LLV-3}.} 
\end{figure}
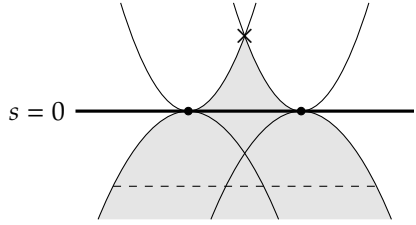

\subsection{Singularity scattering maps}

\paragraph{Beyond Israel's junction conditions.}

In order to construct a solution to the Einstein equations that crosses over a singularity hypersurface,  
some prescription has to be found for connecting data reached from both sides. The standard approach to tackle this problem, in principle, is offered by the Israel (also called Israel--Darmois) junction conditions~\cite{IJC}. However, these conditions were introduced under the assumption that, near the hypersurface, the local geometry on each side is sufficiently \emph{regular}. The conditions found by Israel were the continuity of the metric, as well as the continuity of the extrinsic curvature unless a surface matter term is present which then introduces a jump discontinuity in the extrinsic curvature. A suitable generalization of Israel's junction conditions is required in order to encompass singularity hypersurfaces such that the metric and extrinsic curvature are both blowing up. We expect that the junctions will require extra physical input and possibly a matter source of impulsive type. 

In fluid dynamics, junction conditions are necessary when two flows of different materials are separated by a moving interface or a fixed membrane \cite{LeFloch-Oslo}. For instance, for car traffic flows or other flows through a network one often introduces jump conditions that are not a consequence of the first principles of the physical theory. 
Connecting a contracting spacetime with an expanding one may be thought as analogous to a fluid flow in a converging-diverging nozzle (a so-called De Laval nozzle), and some loss might be observed (and be described by suitable small-scale physics modeling) at the throat of the nozzle; see for instance \cite{Ebersohn}.

\paragraph{The notion of scattering map.}

We regard a singularity hypersurface as an interface between two ``phases'' across which the geometric and matter fields may encounter a jump, due to small-scale physics that we are not modeling here.  We are only interested in the ``average'' effect rather than the detailed physics that may take place within this interface. This is a standard strategy in fluid dynamics or material science when small parameters like viscosity, capillarity, heat conduction, etc., are neglected in the modeling, yet have a macroscopic effect that is captured by imposing suitable jump conditions. 

\begin{definition}
\label{def:singu}
A {\bf spacelike singularity scattering map} on a $3$-manifold $\Hcal$ is a diffeomorphism-covariant map
\bel{scattering-def}
\Sbf\colon (\gmoi, \Kmoi, \phimoi_0, \phimoi_1) \in \Ispace(\Hcal) 
\mapsto (\gpoi, \Kpoi, \phipoi_0, \phipoi_1) \in \Ispace(\Hcal)
\ee
defined over the space of singularity data $\Ispace(\Hcal)$ and satisfying the following {\bf locality property}: for any open set $U \subset \Hcal$, the restriction 
\[
\Sbf(\gmoi, \Kmoi, \phimoi_0, \phimoi_1)\big|_U \text{ depends only on } (\gmoi, \Kmoi, \phimoi_0, \phimoi_1) \big|_U. 
\]
\end{definition}

\begin{remark}
The locality property ensures that $\Sbf$~is entirely determined by its restriction to any small open ball, which is independent of the ball thanks to diffeomorphism invariance. This restriction is an arbitrary singularity scattering map on the ball.
Specifying a singularity scattering map~$\Sbf$ on~$\Hcal$ is thus equivalent to specifying one on a unit ball, and it is therefore natural to identify singularity scattering maps~$\Sbf$ on all $3$-manifolds and suppress the dependence on~$\Hcal$.
\end{remark}

A map $\Sbf$ is said to be {\bf ultralocal (or pointwise)} if 
\[
\Sbf(\gmoi, \Kmoi, \phimoi_0, \phimoi_1)(x) \text{ depends only on } (\gmoi, \Kmoi, \phimoi_0, \phimoi_1)(x), 
\qquad x \in \Hcal. 
\]
By diffeomorphism invariance, the restrictions $\Sbf_x$ to every point~$x$ are the same.
The ultralocality condition is motivated by the fact that the dynamics at different spatial points decouple near a spacelike singularity.  As we see in \autoref{secti-24}, below, the class of ultralocal singularity scattering maps is rich while still being amenable to classification.
We classify in \autoref{theorem-ultralocal} all ultralocal scattering maps for a self-gravitating scalar field.

We say that $\Sbf$ is a {\bf quiescence-preserving singularity scattering map} if it preserves positivity of~$K$ in the sense that
\[
\text{if } \Kmoi > 0 \text{ then } \Kpoi > 0 , \text{ where } \Kpoi \text{ is defined by \eqref{scattering-def}.}
\]
In asymptotic profiles with $\Kmoi,\Kpoi>0$ all distances decrease to zero as $s\to 0^-$, then increases back to finite values for $s>0$: such profiles describe a ``bounce''. (This positivity condition is also motivated by the absence of BKL oscillations in the presence of matter.) 

In view of the earlier literature (for instance \cite{LubbeTod}), we can also single out singularities across which the metric jumps by a conformal transformation: a singularity scattering map~$\Sbf$ is called {\bf rigidly conformal} if
\[
\gpoi=\lambda^2\gmoi
\]
for some scale factor~$\lambda$.  We introduce a more general and natural concept of {\bf conformal} maps in \autoref{subsec-example-comparison}.

\paragraph{Further properties of scattering maps.}

Another natural physical requirement on scattering maps is to respect symmetries under constant shifts of~$\phi$.  This means that shifting $\phimoi_1\to\phimoi_1+\varphi$ for some constant $\varphi\in\RR$ should only affect the result by a shift $\phipoi_1\to\phipoi_1+\psi$ that is constant in space.  Locality only allows~$\psi$ to depend on~$\varphi$, and composing two shifts shows that $\psi$ is simply a multiple of~$\varphi$.  Thus, we say that $\Sbf$ is {\bf shift-covariant} if there exists a coefficient $a\in\RR$ such that
\[
\Sbf(\gmoi,\Kmoi,\phimoi_0,\phimoi_1+\varphi) = (\gpoi,\Kpoi,\phipoi_0,\phipoi_1+a\varphi)
\]
for any constant $\varphi\in\RR$, any singularity data set $(\gmoi,\Kmoi,\phimoi_0,\phimoi_1)$, and its image $(\gpoi,\Kpoi,\phipoi_0,\phipoi_1)$ under~$\Sbf$.
The cases $a=\pm 1$ are particularly interesting: at least for ultralocal maps classified in \autoref{theorem-ultralocal}, we find {\bf momentum-preserving} maps ($\Kpoi=\Kmoi$ and $\phipoi_0=\pm\phimoi_0$) or {\bf momentum-reversing} maps ($\Kpoi=\frac{2}{3}\delta-\Kmoi$ and $\phipoi_0=\pm\phimoi_0$).
This is somewhat unsurprising, from Noether's theorem applied to whichever microscopic theory governs the bounce if it is shift-covariant.

Finally, a map $\Sbf$ is called {\bf idempotent} if
\[
\Sbf \circ \Sbf \text{ is the identity map on } \Ispace(\Hcal) .
\]
The condition states that the two sides of the singularity play the same role.
A weaker requirement is that $\Sbf$~be an {\bf invertible} map, or equivalently that either singularity data set $(g_{\pm},K_{\pm},\phi_{0\pm},\phi_{1\pm})$ can be expressed in terms of the other.
This becomes a very natural requirement when extending our definitions to timelike singularities, where one may want both sides of the singularity to play the same role.

As an aside, it is easy to check that composing two scattering maps gives a scattering map and that if the two scattering maps are both ultralocal, quiescence-preserving, rigidly conformal, or shift-covariant, then their composition also has the same property.  The sets of scattering maps with any of these properties thus forms a semigroup under composition.


\section{Cyclic \((3+1)\)-dimensional spacetimes based on a scattering map} 
\label{section---3}

\subsection{Spacetimes with spacelike singularity hypersurfaces}

\paragraph{Main contribution of this section.}

In order to provide a first application of the formalism we propose in the present paper, we now 
establish the following result, together with the more general statement in \autoref{theo:391} below.  In addition, for a global construction scheme within the class of plane-symmetric  cyclic spacetimes, we refer to the companion paper \cite{LLV-3}. 

We are going to show in the following result. 
Given any a quiescence-preserving singularity scattering map defined on a three-manifold with boundary~$M^3$, 
there exists a large class of spacetimes diffeomorphic to $M^{3+1} \simeq [t_{-1}, t_1] \times M^3$, satisfying the Einstein-scalar field system and containing a spacelike singularity hypersurface that separates the two regions of regularity $[t_{-1}, 0) \times M^3$ and $(0,t_1] \times M^3$.
These spacetimes are expressed in Gaussian coordinates (also called synchronous gauge) in which the singularity is simultaneous, while the past and future limits at the singularity hypersurface $t=0$ are related by the prescribed scattering map. Moreover, these solutions are parametrized by the expected degrees of freedom for the Cauchy problem, that is, the induced metric, extrinsic curvature, and matter field on one of the foliation hypersurfaces.

\paragraph{Generalization.}

An analogous result holds with timelike hypersurfaces, as we will state in \autoref{theo:391}. 
Interesting, our results admit several direct extensions. Using the recent advances in \cite{AlexakisFournodavlos,Fournodavlos:2016,FournodavlosLuk,FournodavlosRonianskiSpeck,RodnianskiSpeck:2018b,RodnianskiSpeck:2018c,Speck:2018} it is straightforward to reformulate our conclusions at the Sobolev regularity level. 
Furthermore, for any initial data set in a large class of data in the sub-critical regime,
 as described in \cite{FournodavlosRonianskiSpeck}, the initial value problem can be solved from a spacelike hypersurface {\sl toward} a spacelike singularity hypersurface; next, to the corresponding initial data set on the singularity hypersurface we can then apply our singularity scattering map and, finally,
we can evolve toward the future by Fuchsian techniques following~\cite{FournodavlosLuk}. 

\subsection{Timelike singularity scattering maps}

\paragraph{A generalization of the ADM formalism.} 

The treatment of timelike hypersurfaces is formally analogous to the one of spacelike hypersurfaces, and we now outline the necessary modifications that are required in the definitions and results above.  After defining singularity scattering maps for (quiescent) timelike singularities in this section, we will introduce the notion of cyclic spacetimes with singularity hypersurfaces, and apply Fuchsian techniques to construct such spacetimes locally.

For spacelike singularity hypersurfaces we worked with a Gaussian foliation (also called synchronous gauge) such that the four-dimensional metric takes the form $g^{(4)}=-ds^2+g(s)$.  The analogous setup in the timelike case starts with a local foliation of a spacetime by hypersurfaces $\Hcal_s$, $s\in[s_{-1},s_1]$ with $s_{-1}<0<s_1$, endowed with a (symmetric) {\sl Lorentzian} metric $g(s)=(g_{ab}(s))$ and an extrinsic curvature $K(s)=(K_a^b(s))$ such that $K_a^b g_{bc}$ is symmetric.  Here, indices $a,b,\dots$ are local coordinate indices on slices of the foliation.  Without loss of generality locally, we assume the foliation to be a {\bf proper distance foliation}, in the sense that one has diffeomorphisms $\Hcal_s\simeq\Hcal_0$ such that the four-dimensional metric reads $g^{(4)} = ds^2 + g(s)$.
Such a foliation can be constructed in a neighborhood of~$\Hcal_0$ by defining $s$ as the proper distance along geodesics normal to~$\Hcal_0$.

It is useful to treat both spacelike and timelike hypersurfaces together by writing the four-dimensional metric in a proper time or proper distance foliation as
\be
  g^{(4)} = \epsilon \, ds^2 + g(s),
\ee
where $\epsilon=-1$ for a spacelike foliation and $\epsilon=+1$ for a timelike foliation.
Taking into account the signature, the matter evolution equation~\eqref{equa-wavephi} for the massless scalar field~$\phi$ becomes
\bel{equa:time-wave}
- \del_s^2 \phi + \Tr(K)\, \del_s \phi = \epsilon \, \Delta_g \phi, 
\ee
where $\Delta_g\phi = \nabla_a\nabla^a\phi$ is the Laplacian operator on spacelike slices or the D'Alembertian on timelike slices.  The ADM formulation \eqref{equa:ADMsysytem}--\eqref{equa-constr} for the Einstein equations now reads 
\bel{equa:time-ADM}
\aligned
\del_s g_{ab} + 2 \, K_{ab} & = 0 , \\
\del_s K_a^b - (\Tr K)  K_a^b  & =  -\epsilon \, R_a^b + 8 \pi \epsilon \del_a\phi \del^b\phi, \\
(\Tr K)^2 - \Tr (K^2) - 8 \pi (\del_s\phi)^2 & = \epsilon \,R - 8\pi\epsilon\del_a\phi\del^a\phi, \\
\nabla_a K^a_b  - \del_b (\Tr K) + 8 \pi \del_s\phi \del_b\phi & = 0.
\endaligned
\ee
While in the spacelike case the first two equations are evolution equations and the last two are constraints on the initial data, no such interpretation is available in the timelike case since $\del_s$ is then a spatial derivative.

\paragraph{Data for timelike hypersurfaces.}

In both spacelike and timelike cases, asymptotic profiles are found by neglecting derivatives along leaves of the foliation compared to $s$~derivatives.  This turns out to exactly remove the $\epsilon$-dependent terms (all right-hand sides) in the system \eqref{equa:time-wave}--\eqref{equa:time-ADM}.  In both cases asymptotic profiles thus take the form
\bel{equa:time-asymptotic-profile}
  \gst(s) = \abs{s}^{2\Kmoi}\gmoi , \qquad \Kst(s) = \frac{-1}{s} \Kmoi , \qquad
  \phist(s) = \phimoi_0\log|s| + \phimoi_1 ,
\ee
in terms of singularity data $(\gmoi,\Kmoi,\phimoi_0,\phimoi_1)$ such that $\Tr\Kmoi=1$ and the asymptotic Hamiltonian and momentum constraints~\eqref{equa-const-singu} hold.  This leads to a natural extension of \autoref{def:singularity-data} to the case of timelike singularity hypersurfaces.
We keep the notion of quiescent data defined as in the spacelike case since the same positivity condition appears in both cases in our main existence theorem.

\begin{definition}\label{def:time-singularity-data}
  1. A {\bf timelike singularity initial data set} on $\Hcal$ is a set $(\gmoi, \Kmoi, \phimoi_0, \phimoi_1)$ consisting of a Lorentzian metric $\gmoi$ on~$\Hcal$, a two-tensor $\Kmoi=(\Kmoi_a^b)$ that is symmetric (namely $\gmoi_{ac} \Kmoi_b^c = \gmoi_{bc} \Kmoi_a^c$) and obeys $\Tr\Kmoi=1$, and two scalar fields such that the constraints~\eqref{equa-const-singu} hold.
  The {\bf space of timelike singularity data}, denoted by $\Itime(\Hcal)$, is the set of all such data.

\vskip.15cm

2. Data $(\gmoi, \Kmoi, \phimoi_0, \phimoi_1)$ are {\bf quiescent} if $\Kmoi>0$.  The space of such quiescent timelike singularity data is denoted by $\Itimep(\Hcal)$.

\vskip.15cm

3. The {\bf timelike asymptotic profile} associated with a data set $(\gmoi, \Kmoi, \phimoi_0, \phimoi_1) \in \Itime(\Hcal)$ is the flow $s \in (-\infty,0) \mapsto \big(\gst(s), \Kst(s), \phist(s) \big)$ defined on $\Hcal$ by~\eqref{equa:time-asymptotic-profile}.
\end{definition}

\paragraph{An example.}

It is useful to consider again the class of asymptotic profiles with Kasner behavior.
These, now, only depend on the proper distance from a timelike hypersurface labelled $s=0$. Specifically, the {\bf generalized Kasner profile with timelike singularity} is defined as 
\[
\aligned
  g_{*\Kasner}^{(4)} & =  ds^2 + g_{*\Kasner} , 
  \qquad  \qquad
  g_{*\Kasner}  = - (- s)^{2k_1(t,x)} dt^2 + (- s)^{2k_2(t,x)} (dx^2)^2 +  (- s)^{2k_3(t,x)} (dx^3)^2, \qquad s<0.
\endaligned
\]
The discussion of~\eqref{eq:metric-BianchiI} applies verbatim, apart from renaming the coordinate $x^1$ to $t$ to emphasize that it is now a time coordinate.  Namely, $g_{*\Kasner}$ is an asymptotic profile included in our framework provided $k_1+k_2+k_3=1$, $k_1^2+k_2^2+k_3^2\leq 1$, and the three differential constraints~\eqref{phimoi0squared} are obeyed.  As in the spacelike case, the profile is an exact solution of Einstein's equations when the exponents $k_a$~are constants.

\paragraph{Junction along timelike hypersurfaces.}

Our \autoref{def:singu} of singularity scattering maps extends straightforwardly from the spacelike to the timelike case by changing $\Ispace$ to~$\Itime$.

\begin{definition}
  A {\bf timelike singularity scattering map} on a $3$-manifold~$\Hcal$ is a local diffeomorphism-covariant map $\Sbf\colon\Itime(\Hcal)\to\Itime(\Hcal)$.
\end{definition}

As in the spacelike case, the singularity hypersurface on which the (local) scattering map is defined is irrelevant.
Likewise, a timelike singularity scattering map is defined to be ultralocal, quiescence-preserving, or rigidly conformal under the same conditions as for the spacelike case.
Since the only difference between $\Ispace$ and $\Itime$ is the signature of the metric, it is natural to combine these two spaces into the {\bf space of singularity data}
\[
\Ibf(\Hcal) = \Ispace(\Hcal) \sqcup \Itime(\Hcal) ,
\]
whose elements are tuples $(\gmoi,\Kmoi,\phimoi_0,\phimoi_1)$ in which $\gmoi$ is a Riemannian or Lorentzian metric, $\Kmoi$ is symmetric, $\Tr\Kmoi=1$, and the Hamiltonian and momentum constraints are obeyed.

\begin{definition}
  A {\bf singularity scattering map} is a local diffeomorphism-covariant map $\Sbf\colon\Ibf(\Hcal)\to\Ibf(\Hcal)$ that maps $\Ispace(\Hcal)$ to itself and $\Itime(\Hcal)$ to itself.\footnote{We tacitly assume that the scattering maps are regular.  Specifically, we require that smooth data are mapped to (at least) continuous data.}
\end{definition}

\subsection{The notion of cyclic spacetimes}

\paragraph{The main definition.}

We are now ready to introduce a notion of spacetimes with singularities that encompasses the common construction of cyclic spacetimes made of successive epochs separated by bounces.  We describe each bounce using a singularity scattering map.

In our case-study of colliding plane symmetric gravitational waves in \cite{LLV-3} we naturally construct a spacetime with intersecting singularity hypersurfaces, and with singularity hypersurfaces whose spacelike or timelike nature generically changes along a two-dimensional locus.  Away from that locus, singularity hypersurfaces have a fixed nature and it is natural to match the asymptotic descriptions of the metric on both sides using a singularity scattering map, as described by the following definition.
Another motivation to exclude a two-dimensional exceptional locus in the definition below is that it allows for the presence of non-generic ``spikes'', where derivatives parallel to the singularity are not negligible compared to derivatives transverse to it~\cite{RendallWeaver}. 

\begin{definition}
\label{def:singu2}
A {\bf cyclic spacetime} $(\Mcal^4, \Ncal^3, \Pcal^2, g^{(4)}, \phi)$ based on a singularity scattering map $\Sbf$ is a smooth oriented $4$-manifold $\Mcal^4$, endowed with a Lorentzian metric~$g^{(4)}$ and a scalar field~$\phi$, both defined outside a singular locus $\Ncal^3 \subset \Mcal^4$ consisting of the union of a collection of oriented and smooth hypersurfaces with boundary and an exceptional $2$-dimensional locus $\Pcal^2\subset\Ncal^3$, with the following properties.

\bei 

\item {\bf Einstein equations.} 
The Einstein-scalar field evolution and constraint equations $G^{(4)}_{\alpha\beta}=8\pi T^{(4)}_{\alpha\beta}$ and the matter evolution equation $g^{(4)\alpha\beta}\nabla_\alpha^{(4)}\nabla_\beta^{(4)}\phi=0$ hold outside the singular locus~$\Ncal^3$.

\item {\bf Local foliations.} 
Every point in $\Ncal^3 \setminus\Pcal^2$ admits a neighborhood~$\Ucal$ that can be endowed with a foliation by hypersurfaces $\Hcal_s$ for $s$ in an interval $(s_{-1},s_1)$ containing $s=0$, such that $\Hcal_0 = \Ncal^3 \cap \Ucal$, the hypersurfaces $\Hcal_s$ are all diffeomorphic to~$\Hcal_0$, and the metric reads $g^{(4)}=\pm ds^2+g(s)$ for a one-parameter family of metrics $g(s)$ defined on~$\Hcal_s \simeq \Hcal_0$. (The orientation of $\del_s$ is chosen to be compatible with the orientation 
of $\Mcal^4$ and the hypersurfaces.) 

\item {\bf Singularity behavior.} 
Near each such~$\Hcal_0$, the singularity data from both sides are well-defined, as the limits 
\bel{equa-limits}
\aligned
(\gpoi, \Kpoi, \phipoi_0, \phipoi_1) 
& \coloneqq \lim_{\substack{s \to 0\\s > 0}} \Big(|s|^{2 s K} g,\; -s K,\; s \del_s \phi,\; \phi - s \log |s|  \del_s \phi \Big)(s), 
\\
(\gmoi, \Kmoi, \phimoi_0, \phimoi_1) 
& \coloneqq \lim_{\substack{s \to 0\\s < 0}} \Big(|s|^{2 s K} g,\; -s K,\; s \del_s \phi,\; \phi - s \log |s|  \del_s \phi \Big)(s). 
\endaligned
\ee

\item {\bf Scattering conditions.}
On each such~$\Hcal_0$, the following junction conditions hold: 
\bel{equa:thejunction}
(\gpoi, \Kpoi, \phipoi_0, \phipoi_1) = \Sbf(\gmoi, \Kmoi, \phimoi_0, \phimoi_1).
\ee 

\eei 
\end{definition}

To motivate the choice of limits in~\eqref{equa-limits}, we remark that for any asymptotic profile $s \in (-\infty,0) \mapsto (\gst(s), \Kst(s), \phist(s) )$, these limits coincide with the corresponding singularity data $(\gmoi, \Kmoi, \phimoi_0, \phimoi_1)$ and, in fact, the arguments of the limits are {\sl independent} of $s$.
In \autoref{theo:391} below we check that, when data are imposed on one side of the singularity with positive definite~$\Kmoi$, suitable solutions as described in \autoref{def:singu2} do exist and admit well-defined limits~\eqref{equa-limits}.
\autoref{fig:singu2} summarizes some notations.

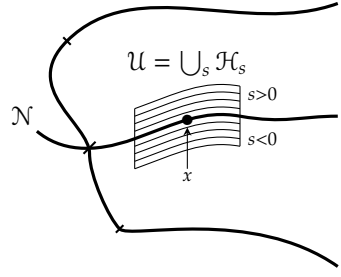
\begin{figure}\centering
  \begin{tikzpicture}
    \draw[very thick] (-2,-.2)
    .. controls +(-40:.4) and +(-160:.6) .. (-.7,-.3)
    .. controls +(20:.6) and +(170:.5) .. (.7,0)
    .. controls +(-10:.6) and +(180:.5) .. (2,0);
    \foreach \y in {-.7,-.6,-.5,-.4,-.2,-.1,0,.1} {
      \draw (-.7,\y) .. controls +(20:.6) and +(170:.5) .. +(1.4,.3); }
    \draw (-.7,-.7) -- (-.7,.1);
    \draw (.7,-.4) -- (.7,.4);
    \node at (0,.7) {$\Ucal=\bigcup_s\Hcal_s$};
    \node at (0,-.065) {$\bullet$};
    \draw[-{stealth}] (0,-.7) node [below=-.5ex] {$\scriptstyle x$} -- (0,-.15);
    \node at (1,.3) {$\scriptstyle s>0$};
    \node at (1,-.3) {$\scriptstyle s<0$};
    \node at (-2.2,0) {$\Ncal$};
    \draw[very thick] (2,1.5)
      .. controls +(-160:1.3) and +(45:1.3) .. (-1.6,1)
      .. controls +(-135:.9) and +(90:.4) .. (-1.3,-.5)
      .. controls +(-90:.3) and +(135:.2) .. (-.9,-1.5)
      .. controls +(-45:.2) and +(145:1.4) .. (2,-2);
    \draw[thick] (-1.55,.95) -- (-1.65,1.05);
    \draw[thick] (-.95,-1.55) -- (-.85,-1.45);
    \draw[thick] (-1.38,-.5) -- (-1.22,-.34);
  \end{tikzpicture}
  \caption{\label{fig:singu2}{\bf Singular locus in a cyclic universe.}  In this example, the singular locus~$\Ncal$ consists of two singularity hypersurfaces, one of which is spacelike.  The codimension~$2$ locus~$\Pcal$ (depicted by three ticks along~$\Ncal$) consists of their intersection and of the two-dimensional locus where a hypersurface changes between timelike and spacelike nature.  It could have additional components near which the approximation of the metric by an asymptotic profile breaks down (so-called spikes).  By definition, every $x\in\Ncal\setminus\Pcal$ admits a neighborhood $\Ucal$ with a (spacelike or timelike) Gaussian foliation such that $\Hcal_0=\Ncal\cap\Ucal$.}
\end{figure}

\paragraph{Geometric properties near a singularity.} 

We explain now some geometric consequences of the definition.
In the more constrained setting of \autoref{theo:391}, we will see that somewhat stronger statements (whose proof requires Fuchsian techniques) hold.
For the time being we study the curvature of a general cyclic spacetime $(\Mcal^4, \Ncal^3, \Pcal^2, g^{(4)}, \phi)$, based only on the definition.
Let $x\in\Ncal\setminus\Pcal$ be a point on the singularity locus such that $\phi_{0\pm}$ defined by~\eqref{equa-limits} are \emph{non-zero.}
  Consider a local foliation $\Hcal_s$, $s\in(s_{-1},s_1)$ whose existence is entailed by \autoref{def:singu2}, namely a foliation such that $x\in\Hcal_0\subset\Ncal$ while other leaves do not intersect~$\Ncal$, and such that $g^{(4)}=\epsilon ds^2+g(s)$ ($\epsilon=\pm 1$) for some diffeomorphisms $\Hcal_s\simeq\Hcal_0$.
Restrict the foliation to a smaller neighborhood if necessary so that 
\bel{equa-nondeg} 
\phi_{0\pm}\neq 0 \text{ throughout } \Hcal_0.
\ee
Observe that this condition is for instance ensured if we assume quiescent data ($K_{\pm}>0$), since $\Tr K_{\pm}=1$ and the asymptotic Hamiltonian constraint then imply
$4\pi \phi_{0\pm}^2 = k_{1\pm} k_{2\pm} + k_{2\pm} k_{3\pm} + k_{3\pm} k_{1\pm} > 0$,
where $k_{a\pm}$ are the eigenvalues of $K_{\pm}$.

 \bei 
 
\item{\bf Behavior of the curvature.}
Under the assumption~\eqref{equa-nondeg}, we now show that the spacetime curvature component $R^{(4)}_{00}$ along the unit normal to the foliation blows up with a uniform power of~$s$, namely 
$
\lim_{s \to 0^{\pm}} s^2 R^{(4)}_{00}(s) = 8\pi \phi_{0\pm}^2$
on $\Hcal_0,
$
so that $\Hcal_0$ is a curvature singularity.
This is checked as follows. By the Einstein equations and the form of the massless scalar stress-energy tensor,
we have 
$
R^{(4)}_{00} = 8\pi \, (\del_s\phi)^2$.
Since $s\del_s\phi\to\phi_{0\pm}$ as $s\to 0^{\pm}$, this term behaves as $\phi_{0\pm}^2/s^2$, as announced.
On the other hand, for general cyclic spacetimes we cannot show that the spacetime \emph{scalar curvature} $R^{(4)}$ blows up, since 
we are not assuming any control of the spatial derivatives, such as the property $\del_a\phi=O(\log|s|)$ for solutions that we construct in \autoref{theo:391}.

\item{\bf Behavior of the second fundamental form.}
We see that the mean curvature $H=\Tr K$ of the leaves blows up uniformly:
\bse\label{sHseqm1}
\be
\lim_{s \to 0} s \, H(s) = -1 \quad \text{ on } \Hcal_0.
\ee
We check this rather easily by noting that $-sK\to K_{\pm}$ implies $-sH(s)\to \Tr K_{\pm}=1$.

\item{\bf Behavior of the volume.} 
In the spacelike case ($g^{(4)}=-ds^2+g(s)$ with $g$ Riemannian) the volume of co-moving regions vanishes on the singularity.
Namely, consider a compact subset $C\subset\Hcal_0$ and let $V_C(s)$ be the volume of its image under the diffeomorphism $\Hcal_0\simeq\Hcal_s$.  This volume shrinks to zero
\be
  \lim_{s \to 0} V_C(s) = 0 ,
\ee
which, together with $H(s)\to-\infty$, implies that $\Hcal_0$ is (by definition) a crushing singularity.
\ese

This is checked as follows. By definition, $|s|^{2sK}g\to g_{\pm}$ as $s\to 0^{\pm}$, so $|s|^{s\Tr K}|g|^{1/2} \to |g_{\pm}|^{1/2}$.  Since $s\Tr K\to -\Tr K_{\pm}=-1$, for sufficiently small~$s$ we have $s\Tr K<-1/2$ (say) so
\[
|g|^{1/2} \lesssim |s|^{1/2} |g_{\pm}|^{1/2} \to 0 \text{ as } s\to 0^{\pm} .
\]
We conclude that the volume of any compact region shrinks as $s\to 0$. 

\eei

\subsection{Existence theory and qualitative behavior}

\paragraph{Solutions generated by the Fuchsian method.}

We are now in a position to show the existence of a large class of spacetimes with prescribed singularity data on a spacelike or timelike hypersurface.
Motivated by \cite{AnderssonRendall} our result is restricted to the regime where the extrinsic curvature is positive. The regime where some of the Kasner exponents (eigenvalues of $K_{\pm}$ as introduced above)
are negative is not amenable to the theory of Fuchsian equations since 
the more involved BKL oscillation mechanism generically takes place.
Despite a local existence theory being available only in the all-positive regime, we have allowed our definitions of singularity data and singularity scattering maps to cover the general case where some Kasner exponents may be negative, as this case appears in our study of plane-symmetric spacetimes in~\cite{LLV-3}. 

\begin{theorem}[A class of cyclic spacetimes based on arbitrary Fuchsian data] 
\label{theo:391} 
Consider an analytic three-manifold $\Hcal_0$ together with a singularity scattering map $\Sbf\colon \Ibf(\Hcal_0) \mapsto \Ibf(\Hcal_0)$ defined over the space of singularity data on $\Hcal_0$.
Assume that the map $\Sbf$ is quiescence-preserving in the sense that it preserves positivity of the extrinsic curvature.

\bei

\item {\bf Existence theory.}  
Then, given any singularity data $(\gmoi, \Kmoi, \phimoi_0, \phimoi_1)$ defined and analytic on $\Hcal_0$, that is quiescent in the sense that
$
\Kmoi > 0,
$
there exists a cyclic spacetime $(\Mcal^{(4)}, g^{(4)})$ 
in which $\Hcal_0$ embeds as a single singularity hypersurface such that  the initial conditions \eqref{equa-limits} hold on the two sides of the singularity, with the singularity data $(\gpoi, \Kpoi, \phipoi_0, \phipoi_1)$ determined from $(\gmoi, \Kmoi, \phimoi_0, \phimoi_1)$ via the singularity scattering map $\Sbf$, as stated in \eqref{equa:thejunction}. In particular, for every compact subset $L \subset \Hcal_0$ there exists an $s_*>0$ such that every geodesic originating from $L$ normal to the singularity hypersurface
exists for a proper time or distance $s_*$. 

\item {\bf Crushing curvature singularity property.}
Furthermore, in the spacelike case and assuming for definiteness that $\Hcal_0$ is compact, the mean curvature $H(s)$ and the volume $V(s) = \Vol_{\Hcal_s}$ of the slices  (as pointed out in  \eqref{sHseqm1}) satisfy
\[
\lim_{s \to 0} s \, H(s) = -1 \quad \text{ on } \Hcal_0, 
\qquad
\lim_{s \to 0} V(s) = 0, 
\]
so that $s=0$ is a crushing singularity, while the spacetime scalar curvature blows up
in a uniform way, namely 
$
\lim_{s \to 0^{\pm}} s^2 R^{(4)}(s) = 8\pi \, \epsilon \, \phi_{0\pm}^2$ on $\Hcal_0$. 

\eei
\end{theorem}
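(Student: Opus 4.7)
The plan is to construct the cyclic spacetime in three stages, using Fuchsian techniques on each side of $\Hcal_0$ and bridging them with $\Sbf$. Concretely, I would first build a one-sided development on $s\in(s_{-1},0)$ whose singularity data recover $(\gmoi,\Kmoi,\phimoi_0,\phimoi_1)$; then apply the scattering map to obtain $(\gpoi,\Kpoi,\phipoi_0,\phipoi_1)=\Sbf(\gmoi,\Kmoi,\phimoi_0,\phimoi_1)\in\Ibf(\Hcal_0)$, which remains quiescent by the quiescence-preserving hypothesis; and finally construct a second one-sided development on $s\in(0,s_1)$ emanating from this scattered data. The two pieces glue into a metric of the form $g^{(4)}=\epsilon\,ds^2+g(s)$ in a Gaussian or proper-distance foliation, and the four bullet points of \autoref{def:singu2} are then straightforward to verify: the Einstein-scalar equations hold away from $\Hcal_0$ by construction, the foliation is provided directly, the limits \eqref{equa-limits} are computed from the Fuchsian asymptotic expansion, and the junction~\eqref{equa:thejunction} holds by the very choice of scattered data.

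The substance of each one-sided construction is the Fuchsian scheme of Andersson–Rendall~\cite{AnderssonRendall}, applied to both signatures $\epsilon=\pm 1$. One substitutes $g=\gst+\tilde g$, $K=\Kst+\tilde K$, $\phi=\phist+\tilde\phi$, where $(\gst,\Kst,\phist)$ is the asymptotic profile~\eqref{asymp-profile-m} determined by the prescribed singularity data and the tildes are remainders vanishing as $s\to 0$. The ADM system~\eqref{equa:time-ADM} together with the wave equation~\eqref{equa:time-wave} translate into a Fuchsian system for the remainders whose indicial operator is determined linearly by $\Kmoi$. Strict positivity $\Kmoi>0$ guarantees that all indicial eigenvalues are strictly positive, which is exactly the condition ensuring existence and uniqueness of analytic remainders decaying as a strict positive power of $|s|$. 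In the timelike case $\epsilon=+1$ one loses the interpretation as an initial value problem, but the Fuchsian ODE structure in $s$ is unchanged (only the spatial-derivative source terms flip sign), so the same analysis applies. Propagation of the full Einstein constraints from $\Hcal_0$ into the bulk follows from the Bianchi identities, using that the asymptotic constraints~\eqref{equa-const-singu} are built into $\Ibf(\Hcal_0)$ and that $\Sbf$ by hypothesis maps $\Ibf$ into itself.

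The main technical obstacle is to confirm that the Fuchsian solution genuinely matches the prescribed singularity data rather than suffering log-shifts from resonances: one must check that the spectrum of the indicial operator associated with $\Kmoi$ does not generate resonant logarithmic corrections at leading order, so that in particular $\phi-s\log|s|\,\del_s\phi\to\phimoi_1$ and $|s|^{2sK}g\to\gmoi$ hold as required in~\eqref{equa-limits}. This resonance check is standard but must be performed explicitly using $\Tr\Kmoi=1$ and the Hamiltonian constraint. A secondary point is that $\Sbf$ must preserve analyticity of the data; this is a regularity requirement imposed directly on $\Sbf$, easily verified for the explicit scattering maps of \autoref{theorem-ultralocal}.

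The geometric consequences follow directly from the asymptotic profile and the Fuchsian control on remainders. For the uniform geodesic existence, compactness of $L\subset\Hcal_0$ combined with the uniform non-degeneracy of $\Kmoi$ on $L$ provides an $s_*>0$ depending only on $L$ below which normal geodesics remain in the constructed domain. In the spacelike case with $\Hcal_0$ compact, one has $\Tr K(s)=\Tr\Kst(s)+o(1/s)=-1/s+o(1/s)$, whence $sH(s)\to -1$; the volume factor $|g|^{1/2}=|s|\,|\gmoi|^{1/2}(1+o(1))$ integrates to $V(s)\to 0$; and from the Einstein equation $\Ric=8\pi\,d\phi\otimes d\phi$ one has $R^{(4)}=8\pi|d\phi|^2_{g^{(4)}}$, whose dominant contribution $\epsilon(\del_s\phi)^2\sim\epsilon\phimoi_0^2/s^2$ yields $s^2R^{(4)}\to 8\pi\epsilon\phimoi_0^2$, the spatial-gradient contribution $g^{ab}\del_a\phi\,\del_b\phi$ being subdominant since $\del_a\phi$ is at worst logarithmic while $g^{ab}$ carries a positive power of $|s|$ along each eigendirection of~$\Kmoi$.
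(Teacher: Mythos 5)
Your overall strategy matches the paper's: build a one-sided Fuchsian development for $s<0$ from $(\gmoi,\Kmoi,\phimoi_0,\phimoi_1)$, apply $\Sbf$, build the other side from the scattered data, and glue. The spacelike side is indeed handled by citing Andersson--Rendall, and the timelike side by observing that the signature only flips source terms, leaving the Fuchsian structure of the rescaled remainder system unchanged. So the architecture is right.

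There are two places, however, where your sketch papers over the real work. The more serious one is \textbf{constraint propagation in the timelike case}. You assert that the constraints propagate ``from the Bianchi identities, using that the asymptotic constraints are built into $\Ibf(\Hcal_0)$.'' That is not how the Fuchsian argument runs: the Fuchsian scheme produces a solution of the \emph{evolution} equations only, and one must then show separately that the Hamiltonian and momentum constraints vanish. The standard route is to exhibit suitably rescaled constraint quantities as solutions of a \emph{linear} Fuchsian system that tend to zero as $s\to 0$, hence vanish identically by Fuchsian uniqueness. But this direct argument only works when the Kasner exponents are ``isotropic enough'' (all $|k_{a+}-\tfrac13|$ small compared to the Fuchsian parameter $\alpha_0$, which is itself bounded by $\min_a k_{a+}/10$). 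For a general quiescent data set one must supply an analytic-continuation argument: deform the given data through a one-parameter family of quiescent data sets that degenerates to the isotropic regime, use analyticity of the Fuchsian solution in the parameter, and conclude that the constraints vanish for all parameter values. Andersson--Rendall achieve this via a detour through stiff fluids; in the timelike case that detour is not physically available (the normalized gradient $\nabla\phi/|\nabla\phi|$ is spacelike), so the paper constructs a different explicit one-parameter family \eqref{ARproof-rescaled-data} of rescaled data sets that leaves both constraints invariant and becomes isotropic at small parameter. Omitting this, or replacing it by a Bianchi-identity appeal, leaves a genuine hole in the timelike construction.

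The secondary point is the verification of the metric limit in \eqref{equa-limits}. You frame it as a ``resonance check'' on indicial exponents, but the difficulty is of a different nature: the quantity $|s|^{2sK}g$ carries in its exponent the \emph{time-dependent} tensor $sK$, not the limiting $K_\pm$, and these two need not commute with the remainder $\gst^{-1}g-\delta$. The paper handles this by a path-ordered exponential estimate showing $|s|^{2K_\pm}|s|^{2sK}\to\delta$ at a rate $o(|s|^{\alpha_0})$, using the componentwise Andersson--Rendall estimates on $(K-\Kst)^a{}_b$. The corresponding limits for $-sK$, $s\del_s\phi$, and $\phi - s\log|s|\,\del_s\phi$ do follow directly from the Fuchsian decay estimates, but the metric one requires this extra non-commutative argument, which your sketch does not supply. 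Also note that the indicial matrix $A$ is \emph{not} strictly positive: it is a direct sum of a zero block (accounting for the free data) and a positive block, so ``all indicial eigenvalues strictly positive'' is not literally correct, though it captures the right intuition.
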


\paragraph{Step 1. Spacelike hypersurface.} 

We rely on the existence theorem established in \cite{AnderssonRendall}, which treats spacelike hypersurfaces only.
  With our terminology, the main theorem therein states that given any singularity initial data set such that extrinsic curvature has a definite sign, there exists an actual solution to the Einstein-scalar field system that enjoys the same asymptotic behavior as the associated asymptotic profile. Recall that analyticity of the Fuchsian data is assumed in \cite{AnderssonRendall} and throughout the present discussion. 

  The sign condition is that the initial data has negative definite extrinsic curvature tensor~$K$ (defined using a normal pointing \emph{away} from the singularity), which holds in our case on both sides of the singularity because of the signs in \eqref{asymp-profile-m}, \eqref{asymp-profile-p} and our convention to use a normal pointing toward increasing values of~$s$.
  Equivalently, the corresponding Kasner exponents (eigenvalues of~$-K$) are all positive.
  (Other behaviors are in principle possible but the Fuchsian-type arguments in \cite{AnderssonRendall} would not apply.) 

  The given singularity data $(\gmoi,\Kmoi,\phimoi_0,\phimoi_1)$ has everywhere positive-definite~$\Kmoi$ by assumption, hence their existence theorem provides a local solution defined on the $s<0$ side\footnote{We recall from \autoref{def:singu2} that this may be the past or future side of the singularity depending on the cyclic spacetime's orientation.} of the singularity.
  By the same token, the image $(\gpoi,\Kpoi,\phipoi_0,\phipoi_1)$ under the singularity scattering map~$\Sbf$ also has $\Kpoi>0$ (because $\Sbf$ is quiescence-preserving), so that applying the same existence theorem, but forward in time, yields a local solution for small $s>0$.
  Pasting these two solutions together along the singularity hypersurface at $s=0$ completes our construction in the spacelike case.

  To prove that we constructed a cyclic spacetime in the sense of \autoref{def:singu2}, there remains to show that $(|s|^{2sK}g,-sK,s\del_s\phi,\phi-s\log|s|\del_s\phi)$ tends pointwise to the given singularity data $(g_{\pm},K_{\pm},\phi_{0\pm},\phi_{1\pm})$ as $s\to 0^{\pm}$ on both sides of the singularity.
  We fix once and for all a side $\pm s>0$ and a specific spatial point~$x_0$ at which we study the behavior.
  We rely on detailed estimates of \cite{AnderssonRendall} on the difference between the solution $(g,K,\phi)$ and its asymptotic profile $(\gst,\Kst,\phist)$ as $s\to 0^{\pm}$.
  They fix a positive $\alpha_0<\min(k_{a\pm}(x_0))/10$ and construct a frame on a neighborhood of~$x_0$ such that at each point~$x$ in this neighborhood, $K_{\pm}(x)$ is close to a tensor $Q(x)=\diag(q_1(x),q_2(x),q_3(x))$ that is diagonal in that frame, in the sense that the spectrum of $K_{\pm}-Q$ stays in a small interval $(-\alpha_0/8,\alpha_0/8)$.
  One can arrange for $K_{\pm}(x_0)=\diag(k_{1\pm}(x_0),k_{2\pm}(x_0),k_{3\pm}(x_0))$ to be diagonal in that frame at~$x_0$, and the closeness condition implies $|k_{a\pm}(x_0)-q_a(x_0)|<\alpha_0/8$.
  For each $a,b=1,2,3$ they prove in particular the following estimates:
  \bel{ARproof-0}
  \Bigl(
  |s|^{-\alpha^a{}_b} (g_*^{-1} g-\delta)^a_{\ b} , \quad
  |s|^{1-\alpha^a{}_b}(K-\Kst)^a_{\ b} , \quad
  \phi - \phist , \quad
  |s|\log|s|\,\del_s(\phi-\phist) , \quad
  \del_a (\phi - \phist)
  \Bigr)
  \xrightarrow{s\to 0^{\pm}} 0 ,
  \ee
  where $\alpha^a{}_b=\alpha_0+2\max(0,q_b-q_a)>0$.
  These imply our desired limits immediately except for the metric, which we now study.

  By symmetry of~$K$ we have $|s|^{2sK}g=g|s|^{2sK}$, so it is enough to show that $g_{\pm}^{-1} g|s|^{2sK}\to\delta$.
  Using that $g_{\pm}=\gst|s|^{-2K_{\pm}}$ by construction of the asymptotic profile, we find
  \bel{ARproof-1}
  g_{\pm}^{-1} g|s|^{2sK} = \bigl(|s|^{2K_{\pm}} \gst^{-1} g \, |s|^{-2K_{\pm}}\bigr) \bigl(|s|^{2K_{\pm}} |s|^{2sK}\bigr) .
  \ee
  The first factor is close to the identity:
  \[
    \bigl(|s|^{2K_{\pm}} \gst^{-1} g \, |s|^{-2K_{\pm}}\bigr)^a_{\ b} = |s|^{2k_{a\pm}-2k_{b\pm}} (\gst^{-1}g)^a_{\ b} = |s|^{2k_{a\pm}-2k_{b\pm}} \bigl(\delta^a_b + o\bigl(|s|^{\alpha^a{}_b}\bigr)\bigr) = \delta^a_b + o(|s|^{\alpha_0/2})
  \]
  where we first used that $K_{\pm}$ is diagonal, then~\eqref{ARproof-0}, then $\alpha^a{}_b+2k_{a\pm}-2k_{b\pm} \geq \alpha_0+2(q_b-k_{b\pm})-2(q_a-k_{a\pm}) \geq \alpha_0-4\alpha_0/8$.
  Next, we write the second factor as $\Delta(1)$ with $\Delta(\lambda)\coloneqq|s|^{2\lambda K_{\pm}}|s|^{2\lambda sK}$.
  We observe that
  \[
  \tfrac{1}{2} \del_\lambda\Delta(\lambda) = |s|^{2\lambda K_{\pm}}(K_{\pm}+sK)|s|^{2\lambda sK} = |s|^{2\lambda K_{\pm}}(K_{\pm}+sK)|s|^{-2\lambda K_{\pm}}\Delta(\lambda) ,
  \]
  whose solution with $\Delta(0)=\delta$ is explicitly given by the series (a path-ordered exponential)
  \bel{ARproof-3}
  \Delta(\lambda)
  = \sum_{n\geq 0} \int_{0\leq\mu_n\leq\dots\leq\mu_1\leq\lambda} \biggl(\prod_{i=1}^n 2 |s|^{2\mu_iK_{\pm}}(K_{\pm}+sK)|s|^{-2\mu_iK_{\pm}}\biggr)\, d^n\mu .
  \ee
  Indeed, it is easy to check this is formally a solution, while convergence of the series is checked as follows.
  Their estimate~\eqref{ARproof-0} on~$K$ reads $(K_{\pm}+sK)^a{}_b=o(|s|^{\alpha^a{}_b})$, hence, for $0\leq\mu\leq 1$, we get
  \[
  \bigl(|s|^{2\mu K_{\pm}}(K_{\pm}+sK)|s|^{-2\mu K_{\pm}}\bigr)^a_{\ b}
  = |s|^{2\mu(q_a-q_b)} (K_{\pm}+sK)^a{}_b
  = o\bigl( |s|^{2\min(0,q_a-q_b)+\alpha^a{}_b} \bigr) = o(|s|^{\alpha_0}) .
  \]
  Thus the matrix norm of all factors $2 |s|^{2\mu_iK_{\pm}}(K_{\pm}+sK)|s|^{-2\mu_iK_{\pm}}$ in~\eqref{ARproof-3} is bounded by $C|s|^{\alpha_0}$ for some~$C$, so that the $n$-th term in the sum is bounded by $C^n|s|^{n\alpha_0}/n!$ (where the $1/n!$ factor comes from the volume of $\{0\leq\mu_n\leq\dots\leq\mu_1\leq 1\}$).  In addition, we learn from this bound that $\Delta(\lambda)-\delta=o(\exp(|s|^{\alpha_0})-1)=o(|s|^{\alpha_0})$.
  This concludes the proof that~\eqref{ARproof-1} tends to the identity matrix as $s\to 0$.
  Altogether, the spacetime we constructed using the result of~\cite{AnderssonRendall} in the spacelike case is indeed a cyclic spacetime in the sense of \autoref{def:singu2}, as we expected.

\paragraph{Step 2. Timelike hypersurface.}  

By applying the same strategy as in the spacelike case, and using that the scattering map~$\Sbf$ preserves the signature of the metric and is quiescence-preserving, we reduce the problem to proving a timelike counterpart to the existence theorem of~\cite{AnderssonRendall}.
For definiteness we work on the $s>0$ side.
We summarize their proof and explain along the way how to modify it to include the signature $\epsilon=\pm 1$ (we recall our convention that $g^{(4)}=\epsilon ds^2+g(s)$ so that $\epsilon=-1$ is the spacelike case).
The key point is that~$\epsilon$ arises only in source terms in~\eqref{equa:time-ADM}, so that it does not change the analysis of the Fuchsian differential equations and, especially, their behavior in powers of~$s$.

Following~\cite{AnderssonRendall}, we fix a point $x_0\in\Hcal_0$ and prove existence and uniqueness in a small neighborhood of~$x_0$, the full solution being easily patched up from these local ones.
We fix a positive number $\alpha_0<\min(k_{a+}(x_0))/10$.
On a sufficiently small neighborhood~$U$ of~$x_0$ we can find a frame $\{e_a\}$ on~$U$, and a diagonal tensor $Q=\diag(q_1,q_2,q_3)$ in this frame, such that the spectrum of $\Kpoi-Q$ is in the small interval $(-\alpha_0/8,\alpha_0/8)$ and such that all $k_{a+}>5\alpha_0$ everywhere.\footnote{\label{foot:ARalpha0}As a side-note we point out an easily fixed mistake in~\cite{AnderssonRendall}.  In their Section~5, $\alpha_0/8$~is implicitly chosen to be smaller than any {\sl non-zero} difference $|k_{a+}(x_0)-k_{b+}(x_0)|$ so that their case distinction between ``near Friedmann'', ``near double eigenvalue'', and ``diagonalizable'' exactly matches with the number of coincident eigenvalues of $K_{\pm}$ at~$x_0$.  As we discuss momentarily, this additional restriction on~$\alpha_0$ would break the proof that constraint equations are obeyed.  Thankfully, this additional restriction on~$\alpha_0$ and its consequence $K_{\pm}(x_0)=Q(x_0)$ are not actually used in their paper.}

We parametrize the differences between exact solutions and asymptotic profiles using the same Ansatz as~\cite{AnderssonRendall}:
\bel{AR-proof5}
\aligned
(\gst^{-1}g-\delta)^a{}_b & = s^{\alpha^a{}_b} \gamma^a{}_b , \quad &
e_c(\gamma^a{}_b) & = s^{-\zeta} \lambda^a{}_{bc} , \quad &
s(K-\Kst)^a{}_b & = s^{\alpha^a{}_b} \xi^a{}_b , \\
\phi-\phist & = s^{2\zeta} \psi , &
e_c(\psi) & = s^{-\zeta} \omega_c , &
s\del_s(\phi-\phist) & = s^{2\zeta} \chi ,
\endgathered
\ee
where $\alpha^a{}_b=\alpha_0+2\max(0,q_b-q_a)>0$, $\zeta=\alpha_0/800$, and we seek solutions with $\gamma^a{}_b,\lambda^a{}_{bc},\xi^a{}_b,\psi,\omega_c,\chi=o(1)$.
Let us turn to the equations obeyed by these six {\sl remainder functions}.
Injecting the Ansatz for $(g,K,\phi)$ into the Einstein-scalar field equations in ADM formalism \eqref{equa:time-wave}--\eqref{equa:time-ADM} yields equations with first order and second order derivatives of $\gamma^a{}_b,\xi^a{}_b,\psi$.
As anticipated in \eqref{AR-proof5} we give names $\omega_c,\chi$ to weighted derivatives of $\psi$~along and across leaves of the foliation, and likewise $\lambda^a{}_{bc},\xi^a{}_b$ to weighted derivatives of~$\gamma^a{}_b$.
Then first order $s\del_s$ derivatives of the six remainder functions can be expressed in terms of their first order derivatives along leaves of the foliation:
generalized to either signature $\epsilon=\pm 1$ their equations (48) and (47) read
\bel{AR-proof6}
\aligned
s\del_s\gamma^a{}_b + \alpha^a{}_b \gamma^a{}_b + 2\xi^a{}_b - 2[\gamma,\Kpoi]^a{}_b & = - 2 s^{\alpha^a_c+\alpha^c_b-\alpha^a_b} (\gamma\xi)^a{}_b , \\
s\del_s\lambda^a{}_{bc} & = s^\zeta e_c(s\del_s\gamma^a{}_b) + \zeta s^\zeta e_c(\gamma^a{}_b) , \\
s\del_s\xi^a{}_b + \alpha^a{}_b \xi^a{}_b + \Kpoi^a{}_b \Tr\xi & = s^{\alpha_0} (\Tr\xi) \xi^a{}_b - \epsilon\, s^{2-\alpha^a{}_b} \bigl({}^{S\!}R^a{}_b - 8\pi g^{ac} e_c(\phi) e_b(\phi)\bigr) , \\
s\del_s\psi + 2\zeta \psi - \chi & = 0 , \\
s\del_s\omega_a & = s^\zeta e_a(\chi - \zeta \, \psi) , \\
s\del_s\chi + 2\zeta \chi & = s^{\alpha_0-2\zeta} (\phipoi_0 + s^{2\zeta} \chi) \Tr\xi - \epsilon \, s^{2-2\zeta} \bigl( \Delta_g \phist + \nabla_g^a \omega_a \bigr) .
\endaligned
\ee
The only differences between the spacelike and timelike cases are the sign of two source terms in~\eqref{AR-proof6} as expected from our~\eqref{equa:time-ADM}, and the signature of~$g$, which does not affect their discussion at all.

Observe that the system takes the form 
$
s\del_s U+AU=F(s,x,U,U_x),
$
where $U=(\gamma^a{}_b,\lambda^a{}_{bc},\xi^a{}_b,\psi,\omega_c,\chi)$ is the vector of remainder functions, $A$~is $s$-independent, and the source term~$F$ only involves at most first-order $x$~derivatives of~$U$.
To be precise, $s\del_s\gamma^a{}_b$ in the right-hand side of the second equation should be replaced by its value according to the first equation.
The Ricci curvature tensor ${}^{S\!}R^a{}_b$ of ${}^{S\!}g_{cd}\coloneqq \frac{1}{2}(g_{cd}+g_{dc})$ must likewise be expressed in terms of first-order derivatives of $\lambda^a{}_{bc}$ instead of second-order derivatives of~$\gamma^a{}_b$.
Finally, while $\Delta_g\phist+\nabla_g^a\omega_a$ involves second order derivatives of the asymptotic profile~$\phist$, it only involves first order derivatives of the remainder functions $\gamma^a{}_b,\omega_a$.
The most technical part of~\cite{AnderssonRendall} is to prove that~\eqref{AR-proof6} are Fuchsian equations, in the sense that the matrix~$A$ is the direct sum of a zero matrix and a matrix with positive spectrum, and that $F$~is suitably analytic and tends to zero as some positive power of~$s$.
The matrix~$A$ is does not depend on~$\epsilon$ so their analysis of its spectrum applies verbatim to the timelike case.
The bounds they derive on~$F$ as $s\to 0$ do not rely on cancellations between terms so the same bounds apply to both signs $\epsilon=\pm 1$.

Once the system is shown to be Fuchsian, a general theorem implies the existence and uniqueness of $U=(\gamma^a{}_b,\lambda^a{}_{bc},\xi^a{}_b,\psi,\omega_c,\chi)$ on a small interval $s\in (0,s_1]$ such that $U$ tends to zero as $s\to 0$.
Such a solution automatically has symmetric $g_{ab}$ and $K_{ab}$; indeed, $s^{1-\alpha_0}(K_{ab}-K_{ba})$ turns out to obey a {\sl linear} Fuchsian equation and tends to zero as $s\to 0$ hence vanishes, which then implies that $g_{ab}-g_{ba}$ is $s$-independent yet bounded by a positive power of~$s$ as $s\to 0$, hence vanishes.
At this point, we know that $(g,K,\phi)$ given in terms of $\gamma^a{}_b,\xi^a{}_b,\psi$ by the Ansatz~\eqref{AR-proof5} is a solution of the Einstein-scalar field ``evolution'' (meaning~$\del_s$) equations, but not necessarily the constraints.
If the prescribed singularity data set has ``isotropic enough'' $\Kpoi\simeq\frac{1}{3}\delta$, then (suitable rescalings of) the Hamiltonian and momentum constraints are shown to obey {\sl linear} Fuchsian equations and tend to zero as $s\to 0$ hence they vanish identically.
More precisely, this argument goes through provided all $|k_{a+}-1/3|<\alpha_0/10$.
Concretely, this establishes the existence and uniqueness in neighborhoods of points~$x_0$ at which all $|k_{a+}(x_0)-1/3|<1/500$ (say) because this inequality ensures that $\alpha_0=1/50<\min(k_{a+}(x_0))/10$ fulfills all the necessary inequalities.\footnote{Even when all $|k_{a+}-1/3|$ are very small it is not necessarily possible to fulfill the additional restriction on~$\alpha_0$ explained in \autoref{foot:ARalpha0}.  Indeed, consider $k_{1+}=1/3-a+a^2$, $k_{2+}=1/3-a-a^2$ and $k_{3+}=1/3+2a$ for a tiny $a>0$.  This is arbitrarily close to~$1/3$, but the condition $|k_{a+}-1/3|<\alpha_0/10$ of applicability of the argument for constraints requires $\alpha_0>20a$; then $|k_{1+}-k_{2+}|<\alpha_0/8$, which fits in Case II of~\cite{AnderssonRendall} even though $\Kpoi(x_0)$ is non-degenerate.  As we point out in \autoref{foot:ARalpha0} this minor oversight is corrected by simply ignoring the idea that the number of equal eigenvalues of~$\Kpoi(x_0)$ controls whether the data set should be treated as Case~I, II, or~III in their case distinction.}

The general case $\Kpoi>0$ relies on an analytic continuation argument.
Consider a singularity data set $(\gpoi,\Kpoi,\phipoi_0,\phipoi_1)$ that is quiescent ($\Kpoi>0$), and a point~$x_0$ along the singularity hypersurface.
In the neighborhood~$U$ of~$x_0$ considered previously we have that all $k_{a+}>5\alpha_0$.
From this and the constraints $\sum_a k_{a+}=1$ and $\sum_a k_{a+}^2 = 1 - 8\pi\phipoi_0^2$ one can work out that $12\pi\phipoi_0^2 > 15\alpha_0$.
For a (constant) parameter $a\in(0,1]$ we consider the data set
\bel{ARproof-rescaled-data}
\bigl(\mu^2\gpoi,\tfrac{1}{3}\delta+\mu^{-3}\Kpoicirc,a^{-1}\mu^{-3}\phipoi_0,a\phipoi_1\bigr) , \qquad
\mu = \bigl(1 + 12\pi \phipoi_0^2(a^{-2}-1)\bigr)^{1/6} .
\ee
For $a=1$ this is the original data set $(\gpoi,\Kpoi,\phipoi_0,\phipoi_1)$.
One readily checks that this choice of conformal rescaling of the metric, trace-free extrinsic curvature, and~$\phipoi_0$ by powers of~$\mu$ leaves the momentum constraint $\nablapoi_a\Kpoi^a_b=8\pi\phipoi_0\del_b\phipoi_1$ invariant, and that the expression of~$\mu$ in terms of~$a$ leaves the Hamiltonian constraint $1-\Tr\Kpoi^2=8\pi\phipoi_0^2$ invariant.
For $a\in(0,1]$ we have $\mu\geq 1$ so $\tfrac{1}{3}\delta+\mu^{-3}\Kpoicirc$ is closer than~$\Kpoi$ to the center $\tfrac{1}{3}\delta$ of the Kasner disk, hence is positive.
We have thus constructed an analytic family (parametrized by~$a$) of singularity data sets that are quiescent.
Solutions of Fuchsian equations are known to depend analytically on parameters in this context, and in particular the constraints depend analytically on~$a$.
For $a\in(0,1]$ sufficiently small we have $15\alpha_0(a^{-2}-1)>10^6$ so $\mu>10$, which makes the rescaled data~\eqref{ARproof-rescaled-data} ``isotropic enough'' in the sense above.
The constraints thus vanish identically for small~$a$, and depend analytically on~$a$, so they vanish for all~$a$.
This concludes our proof of the timelike analogue of Andersson and Rendall's result on Einstein-scalar field equations.

The analytic continuation argument in~\cite{AnderssonRendall} does not rely on our construction~\eqref{ARproof-rescaled-data}.
Instead, they use a much simpler construction that applies to stiff fluids to conclude for this type of matter, then they remark that solutions with a massless free scalar~$\phi$ give rise to (particular) solutions with stiff fluids whose velocity is the normalized gradient $\nabla\phi/|\nabla\phi|$.
Our nontrivial construction~\eqref{ARproof-rescaled-data} makes for a conceptually clearer argument, as it makes the proof for the massless scalar completely independent of that for stiff fluids.
Another motivation for our approach is that the normalized gradient of a scalar field is spacelike near timelike singularities, so that it cannot be interpreted physically as the velocity of a fluid.

\paragraph{Step 3. Behavior of the curvature.}

By construction of the foliation, $g^{(4)00}=\epsilon$ and $g^{(4)0a}=0$ so we compute
\[
R^{(4)} = - 8\pi \Tr^{(4)}(T) = 8\pi g^{(4)\alpha\beta} \del_\alpha \phi \del_\beta \phi
= 8\pi \bigl(\epsilon(\del_s\phi)^2 + g^{ab} \del_a \phi \del_b \phi \bigr).
\]
Since $s\del_s\phi\to\phi_{0\pm}$ as $s\to 0^{\pm}$, the first term behaves as $\epsilon\phi_{0\pm}^2/s^2$ as announced in the theorem.
There remains to prove that the second term does not contribute to the limit of $s^2 R^{(4)}$.
On the other hand, the asymptotic Hamiltonian constraint implies an upper bound on each eigenvalue $k_{a\pm}$ of~$K_{\pm}$:
\[
k_{a\pm}^2 \leq \Tr(K_{\pm}^2) = 1 - 8\pi\phi_{0\pm}^2 \leq (1 - 4\pi\phi_{0\pm}^2)^2 .
\]
Since $-sK\to K_{\pm}$ as $s\to 0^{\pm}$, we deduce that, for sufficiently small~$s$, eigenvalues of $-sK$ are less than $1 - 2\pi\phi_{0\pm}^2$ (say).  This, and the fact that $|s|^{2sK} g$ has a finite limit~$g_{\pm}$, enables us to bound the inverse metric as
$
g^{-1} = |s|^{2sK} \bigl(|s|^{2sK} g\bigr)^{-1} = O\bigl(|s|^{4\pi\phi_{0\pm}^2-2}\bigr) .
$
On the other hand, $\del_a\phi=\del_a\phist+o(1)=O(\log|s|)$ so we deduce as desired (provided $\phi_{0\pm}\neq 0$)
\[
g^{ab} \del_a\phi \del_b\phi = O\bigl(|s|^{4\pi\phi_{0\pm}^2-2}(\log|s|)^2\bigr) = o(s^{-2}). \qedhere 
\]

\subsection{Parametrization of the set of singularity data}
\label{subsec:param-data}

\paragraph{The space of spacelike singularity data.} 

To get a better handle on the space of singularity data, we now turn to parametrizing it, first in the spacelike setting.
Consider the conditions \eqref{equ:sing-space-T}, together with \eqref{equa-const-singu}, defining the space of spacelike singularity data $\Ispace(\Hcal)$. We drop the ``$-$'' subscripts for brevity. Hence, $g$ is an arbitrary Riemannian metric on $\Hcal$ while $K$ is a symmetric two-tensor satisfying on $\Hcal$
\[
\Tr K = 1,
\qquad
1  - \Tr(K^2)  = 8\pi \, \phi_0^2,
\qquad
\nabla_a K^a_b  = 8 \pi \, \phi_0 \del_b \phi_1. 
\]
For definiteness, we treat the case where $\Hcal$ is a compact three-manifold.
We parametrize the subspace of singularity data defined by the restriction that $\phi_0>0$ everywhere.

We adapt here the so-called conformal method, originally proposed by Lichnerowicz and recently generalized by several authors; see \cite{Maxwell} and the references therein.
It turns out to be convenient to scale the metric~$g$ and trace-free part of~$K$, by introducing a metric $\gt$ and a tensor $\Ht$ as follows: 
\bel{asym-scale-phi0}
g_{ab} = \phi_0^{-2/3} \gt_{ab}, 
\qquad K^a_b - \frac{1}{3} \delta^a_b = \phi_0 \Ht^a_b .
\ee
Then, the symmetry of $K$, the trace condition $\Tr K = 1$, and the Hamiltonian constraint in~\eqref{equa-const-singu} read
\bel{constr-tilde}
\gt_{ab} \Ht^b_c = \gt_{cb} \Ht^b_a, 
\qquad
\Ht^a_a = 0, 
\qquad
\phi_0 = \sqrt{\frac{2/3}{8\pi+\Tr\Ht^2}}, 
\ee
while the differential constraint in~\eqref{equa-const-singu} simplifies to
$
\nablat_a \Ht^a_b = 8\pi \, \del_b\phi_1.
$
Here $\nablat$ is the Levi-Civita connection of~$\gt$.  Let us also define $\nablat^b$ by $\nablat_a=\gt_{ab}\nablat^b$.

Any symmetric traceless tensor, such as $\gt \, \Ht$, can be decomposed into a symmetric tranverse-traceless tensor $\sigma$ and a vector field part $W$: 
\bel{gtKt-decom}
\gt \, \Ht = \sigma + \frac{1}{2N} \Lt W, 
\qquad \nablat^a \sigma_{ab} = 0, 
\ee
in which $N>0$ is any prescribed function and $\Lt$ denotes the conformal Killing operator of the metric~$\gt$
\[
\bigl(\Lt W\bigr)_{ab} = \nablat_a W_b + \nablat_b W_a -\frac{2}{3} \bigl(\nablat^c W_c\bigr) \gt_{ab}. 
\]
Its dual $\Lt^*$ act on symmetric, traceless tensors $A_{ab}$ and is defined as $(\Lt^* A)_b = -2 \nablat^a A_{ab}$. 

Plugging this Ansatz into our momentum equation, we obtain the elliptic system
\bel{asymptotic-constraints-elliptic}
\Bigl(\Lt^* \Bigl( \frac{1}{2N} \Lt W \Bigr)\Bigr)_b = -16 \pi \, \del_b \phi_1. 
\ee
This is a system of three equations for a vector field $W$ defined on $\Hcal$ (assumed to be compact). A unique solution $W$ exists (see Section 6.1 in \cite{HMM}) provided the right-hand side is $L^2$-orthogonal to any conformal Killing field on $\Hcal$. For instance, this is always true whenever the metric~$\gt$, or equivalently the metric~$g$, has no conformal Killing field.

Given the solution~$W$ of~\eqref{asymptotic-constraints-elliptic} and any chosen transverse-traceless tensor~$\sigma$ we obtain $\Ht$ from~\eqref{gtKt-decom}, then deduce $\phi_0$ from~\eqref{constr-tilde}.  Finally, we scale $(\gt,\Ht)$ to get $(g,K-\delta/3)$.

Choosing $N=1$ in the above for definiteness, we find the following proposition.

\begin{proposition}[Parametrization of the space of singularity data in the spacelike case] 
  On a compact $3$-manifold $\Hcal$, there is a one-to-one correspondence between the singularity data sets $(g,K,\phi_0,\phi_1)$ with $\phi_0>0$ and the triples $(\gt,\sigma,\phi_1)$ consisting of a Riemannian metric, a symmetric transverse-traceless (TT) tensor field $\sigma$ on $(\Hcal,\gt)$, and a scalar field $\phi_1$ such that $\nabla\phi_1$ is $L^2$-orthogonal to the conformal Killing fields of $(\Hcal,\gt)$ (if any exists).
\end{proposition}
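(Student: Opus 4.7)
The plan is to verify both directions of the correspondence by following the rescaling \eqref{asym-scale-phi0} and the York-type decomposition \eqref{gtKt-decom} already set up in the text. The essential analytic tool is the Fredholm alternative for the self-adjoint elliptic operator $\Lt^*\Lt$ acting on vector fields of the compact manifold~$\Hcal$, whose kernel is exactly the finite-dimensional space of conformal Killing fields of the conformal class of~$\gt$.

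For the forward direction $(g,K,\phi_0,\phi_1)\mapsto(\gt,\sigma,\phi_1)$: starting from a spacelike singularity data set with $\phi_0>0$, define $\gt_{ab}=\phi_0^{2/3}g_{ab}$ and $\Ht^a_b=\phi_0^{-1}(K^a_b-\tfrac13\delta^a_b)$, so that the symmetry, tracelessness, and the expression for $\phi_0$ in \eqref{constr-tilde} are immediate consequences of the constraints \eqref{equ:sing-space-T} and \eqref{equa-const-singu}. Since $\gt\Ht$ is then symmetric and traceless, the York decomposition $\gt\Ht=\sigma+\tfrac12\Lt W$ with $\sigma$ transverse-traceless is produced by solving $\Lt^*\Lt W=-4\nablat^a(\gt\Ht)_{ab}$; solvability holds because for every conformal Killing field~$V$ and every symmetric traceless tensor~$T$ one has $\int\langle\nablat^{\phantom{a}}\!\cdot T,V\rangle=-\tfrac12\int\langle T,\Lt V\rangle=0$. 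The tensor $\sigma$ is then uniquely determined (two decompositions differ by a $\Lt$-image, which vanishes upon taking divergence, forcing $\sigma=\sigma'$), while $W$ is determined modulo $\ker\Lt$.

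Substituting the York decomposition into the asymptotic momentum constraint $\nablat_a\Ht^a_b=8\pi\del_b\phi_1$ gives, after using $\nablat^a\sigma_{ab}=0$ and the definition $\Lt^*A=-2\nablat^aA_{ab}$, precisely the elliptic equation \eqref{asymptotic-constraints-elliptic}, namely $\Lt^*\Lt W=-32\pi\,d\phi_1$. By Fredholm applied to $\Lt^*\Lt$, a vector field~$W$ solving this equation exists if and only if the right-hand side is $L^2$-orthogonal to $\ker\Lt$, i.e.\ to the space of conformal Killing fields of~$\gt$. This is exactly the orthogonality condition on $\nabla\phi_1$ stated in the proposition, and it is thus a necessary consequence of the momentum constraint. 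Conversely, when it holds, Fredholm furnishes a~$W$, so that $\Lt W$ (and hence~$\Ht$) is uniquely defined despite the non-uniqueness of~$W$ itself.

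For the reverse direction $(\gt,\sigma,\phi_1)\mapsto(g,K,\phi_0,\phi_1)$: given the three pieces of data and the orthogonality hypothesis, solve \eqref{asymptotic-constraints-elliptic} for a~$W$ (unique mod $\ker\Lt$), set $\Ht=\gt^{-1}(\sigma+\tfrac12\Lt W)$, which is well-defined, then define $\phi_0=\sqrt{(2/3)/(8\pi+\Tr\Ht^2)}>0$, $g=\phi_0^{-2/3}\gt$, and $K=\tfrac13\delta+\phi_0\Ht$. The conditions \eqref{equ:sing-space-T} follow directly: symmetry and $\Tr K=1$ from \eqref{constr-tilde}, the Hamiltonian constraint from the explicit formula for~$\phi_0$, and the momentum constraint by reading the forward computation backward. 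The two maps are manifestly inverse to each other, which completes the proof. The main analytic obstacle is establishing the Fredholm theory for $\Lt^*\Lt$ on a compact three-manifold and identifying its kernel with the conformal Killing fields; once this is taken as an input from standard elliptic theory, the rest is bookkeeping tracking the rescaling by powers of~$\phi_0$.
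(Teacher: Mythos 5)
Your proposal follows essentially the same route as the paper: rescale via \eqref{asym-scale-phi0}, decompose $\gt\Ht$ into a TT part and an $\Lt W$ part, and recognize the momentum constraint as the elliptic equation \eqref{asymptotic-constraints-elliptic}, whose solvability is governed by $L^2$-orthogonality to conformal Killing fields. You spell out the Fredholm alternative, the uniqueness of $\sigma$ and of $\Lt W$ modulo $\ker\Lt$, and both directions of the bijection in a bit more detail than the paper, but the approach and key computations (including the factors in $\Lt^*\Lt W=-32\pi\,d\phi_1$ at $N=1$) match the paper's argument.
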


\paragraph{The space of timelike singularity data.}

For timelike hypersurfaces we can proceed in a similar way as above, but a significant difference arises: the equation \eqref{asymptotic-constraints-elliptic} for the (vector-valued) unknown $W$ is now a {\sl coupled system of wave equations}. Hence, it is natural to assume that the hypersurface topology is $\Hcal \simeq I \times \Sigma^2$ where $I \subset \RR$ is an interval containing $0$, say, and $\Sigma^2$ is a two-surface.

From a singularity data set $(g,K,\phi_0,\phi_1)$ with $\phi_0>0$, we scale the metric as in~\eqref{asym-scale-phi0} to define $\gt$ and~$\Ht$.
To construct a solution~$W$ of the wave equation~\eqref{asymptotic-constraints-elliptic}, suitable initial data should be prescribed on the two-dimensional slice $\Sigma_0 = \{0\} \times \Sigma^2$, that is 
\[
W|_{\Sigma_0} = W_0, \qquad \Lcal_\nu W|_{\Sigma_0} = W_1,
\]
in which $\nu$ is a unit (for $\gt$) normal vector field along~$\Sigma_0$.
Then $\sigma$ is defined by~\eqref{gtKt-decom} as in the spacelike case.
We can thus expect a one-to-one correspondence between the singularity data sets $(g,K,\phi_0,\phi_1)$ with $\phi_0>0$ and the tuples $(\gt,\sigma,W_0, W_1, \phi_1)$ consisting of a Riemannian metric, a symmetric transverse-traceless (TT) tensor field $\sigma$ on $(\Hcal,\gt)$, and a scalar field $\phi_1$ on $\Hcal$, as well as two scalar fields $W_0, W_1$ prescribed on the surface $\Sigma^2$.

The above parametrizations in the spacelike and timelike cases are not directly used below, since we prefer to describe the scattering maps $\Sbf$ as maps defined for all singularity data including configurations where $\phi_0$~may vanish.
To use these parametrizations, we would need to require scattering maps to map any data set with a positive matter field $\phi_0$ to an image with positive matter field.
However, the conformal rescaling method used in~\eqref{asym-scale-phi0} is useful at various points later on.


\section{Classification of ultralocal scattering maps} 
\label{section---4}

\subsection{Preliminaries}

\paragraph{Organization of this section.}

Interestingly, many choices of junction are allowed by our definitions above, and it is only after {\sl additional physical input} is specified that one can decide which junction conditions are actually achieved. The same phenomenon occurs with phase interfaces in fluids undergoing phase transitions:
 an augmented physical model is required which provides us with the ``internal structure'' of the interfaces (or shock waves) and, in turn, a complete description of the global dynamics of the fluid. See \cite{LeFloch-Oslo} for a review. 
Motivated by the observations in Belinsky, Khalatnikov, and Lifshitz~\cite{BKL} that, along a singularity,  the dynamics typically decouples completely at different points,
our aim in the present section is to parametrize the class of {\sl ultralocal} scattering maps;

After some more preliminaries on scalar invariants of singularity data in this section, we introduce in \autoref{secti-23} the class of {\sl rigidly conformal scattering maps,} as we call them, which are defined as those for which $\gpoi$ and~$\gmoi$ have the same conformal class.  We then give the full classification of ultralocal maps in \autoref{secti-24}, specifically \autoref{theorem-ultralocal}.
We end in \autoref{subsec-example-comparison} by defining a more ``robust'' notion of conformality, that is, the class of {\sl conformal scattering maps} which (in contrast with rigid conformality) are independent of the implicit normalization we made when defining~$g_{\pm}$.

\autoref{theorem-ultralocal}, which classifies ultralocal maps, is established in \autoref{section---5}.
Let us highlight here some steps of our proof, to give the reader some insight on why ultralocal maps are so restricted.
Ultralocality means that the value of $(\gpoi,\Kpoi,\phipoi_0,\phipoi_1)=\Sbf(\gmoi, \Kmoi, \phimoi_0, \phimoi_1)$ at a point~$x$ only depends on $(\gmoi, \Kmoi, \phimoi_0, \phimoi_1)$ at the same point, and not on their derivatives.
As we shall show, general covariance then only allows a {\sl finite number of tensor structures}  for $\Kpoi$ (namely $\delta$, $\Kmoi$, and~$K_-^2$), and likewise for~$\gpoi$, with scalar coefficients.
Scattering maps must respect the momentum constraint, which expresses the divergence $\nabla_{\pm}\cdot K_{\pm}$ in terms of the scalar fields.  We work out that if the expression of~$\Kpoi$ includes the tensor structure~$K_-^2$, then $\nablapoi\cdot\Kpoi$ involves not only $\nablamoi\cdot\Kmoi$ but also other derivatives of~$\Kmoi$ that cannot be expressed in terms of scalar fields, thus violating the momentum constraint.  This entails a most crucial property: the trace-free part of~$\Kpoi$ is proportional to that of~$\Kmoi$.  The momentum constraint then further restricts various scalar fields appearing in the construction, which leads us to a complete classification of all ultralocal scattering maps.

When describing scattering maps in the rest of this section we generally omit the subscript~``$-$'' for brevity but keep it and the subscript~``$+$'' where necessary.
Scattering maps must be sufficiently regular to preserve the regularity of the data $(g_{\pm}, K_{\pm}, \phi_{0\pm}, \phi_{1\pm})$ in any chosen functional space.  In particular, we tacitly require our scattering maps to be sufficiently regular so that $\gpoi$ is at least continuous when $(\gmoi,\Kmoi,\phimoi_0,\phimoi_1)$ are smooth.

\paragraph{Kasner radius and angle.}

Ultralocal scattering maps are conveniently described in terms of the following parametrization of the eigenvalues $k_1,k_2,k_3$ of~$K$ at a given point $x\in\Hcal$.
Since $K$ is symmetric with respect to the quadratic form~$g$ (at~$x$), it admits eigenvectors $v_1,v_2,v_3$ that are orthogonal with respect to~$g$.  In this basis, $K$ and $g$~are diagonal.
For spacelike singularities, $g$~is Riemannian so up to rescaling the eigenvectors we obtain $g(v_a,v_b)=\delta_{ab}$, and the three eigenvalues $k_1,k_2,k_3$ are indistinguishable.  On the other hand, for timelike singularities $g$~is Lorentzian, so $g(v_a,v_b)$ can be normalized to $\diag(-1,1,1)$: the eigenvalue~$k_1$ (say) is singled out as the one with (at least) one timelike eigenvector.

In terms of the eigenvalues $k_1,k_2,k_3$ of $K$, the constraints $\Tr K=1$ and $\Tr K^2\leq 1$ describe a unit disk.
We essentially use the Jacobs parametrization~\cite{Jacobs} of this disk by polar coordinates: a {\bf Kasner radius} $r \in [0,1]$ and a {\bf Kasner angle}~$\theta$, such that
\bel{Kasner-rtheta}
k_1 - \frac{1}{3} = \frac{2}{3} r \cos\theta,
\qquad
k_2 - \frac{1}{3} = \frac{2}{3} r \cos\Bigl(\theta+\frac{2\pi}{3}\Bigr),
\qquad
k_3 - \frac{1}{3} = \frac{2}{3} r \cos\Bigl(\theta+\frac{4\pi}{3}\Bigr).
\ee
For spacelike singularities, eigenvalues are indistinguishable, so the angle $\theta$ has periodicity~$2\pi/3$ by definition.  For timelike singularities, the eigenvalue~$k_1$ is special due to the timelike eigenvector, so $\theta$ has periodicity~$2\pi$.  In both cases, mapping $\theta\to-\theta$ simply exchanges $k_2\leftrightarrow k_3$, which are indistinguishable.  When describing scattering maps it is nevertheless more convenient to keep the somewhat redundant parametrization with $\theta\in\RR$ and suitable symmetry and periodicity requirements.  As usual for polar coordinates, the value of~$\theta$ is meaningless when $r=0$.

For a $2\pi$-periodic even function $f\colon\RR\to\RR$ we introduce the notation
\bel{fTheta-notation}
f(\Theta) \coloneqq \diag\biggl(f(\theta),f\Bigl(\theta+\frac{2\pi}{3}\Bigr),
f\Bigl(\theta+\frac{4\pi}{3}\Bigr)\biggr) \qquad \text{in the basis } v_1,v_2,v_3,
\ee
which (as we explain momentarily) is well-defined for $r\neq 0$: when $r=0$, $\theta$~is completely ambiguous.
Here, $\Theta$~stands schematically for $\diag(\theta,\theta+2\pi/3,\theta+4\pi/3)$, which is ill-defined for two reasons.
\bei 

\item 
First, the Kasner angle is only defined up to changing $\theta\to-\theta$ and $\theta\to\theta+2\pi/3$ or $2\pi$ depending on signature, and such changes, together with the corresponding permutations of eigenvectors, only preserve $\Theta$ modulo $2\pi$ shifts and overall sign changes.  These ambiguities do not affect $f(\Theta)$ thanks to evenness and periodicity of~$f$.

\item Second, when (exactly) two eigenvalues coincide ($\theta=0\bmod{\pi/3}$) the basis $v_1,v_2,v_3$ is ambiguous.  This is cured since at these values of~$\theta$ the corresponding eigenvalues of~$f(\Theta)$ coincide, so that changing the basis does not affect~$f(\Theta)$.

\eei 
Using the notation~\eqref{fTheta-notation}, the equation~\eqref{Kasner-rtheta} is simply
\bel{Kasner-rtheta-2}
K = \frac{1}{3}\delta + \frac{2}{3} r \cos\Theta ,
\ee
in which the factor $r$ suppresses the ambiguity in $\cos\Theta$ when $r=0$.
It will be useful to compute various powers of the traceless extrinsic curvature $\Kcirc=K-\frac{1}{3}\delta$:
\bel{Kcirc-powers}
\Kcirc = \frac{2r}{3} \cos\Theta , \qquad
\Kcirc^2 = \frac{2r^2}{9} \bigl(\delta+\cos(2\Theta)\bigr) , \qquad
\Kcirc^3 = \frac{2r^3\cos(3\theta)}{27} \delta + \frac{r^2}{3} \Kcirc ,
\ee
and their traces
\bel{Kcirc-powers-tr}
\Tr\Kcirc = 0 , \qquad
\Tr\Kcirc^2 = \frac{2r^2}{3} , \qquad
\Tr\Kcirc^3 = \frac{2r^3\cos(3\theta)}{9} .
\ee

Finally, the Kasner radius can in fact be determined solely from $\phi_0$ thanks to the asymptotic Hamiltonian constraint in ~\eqref{equa-const-singu} and $\Tr K=1$:
\[
r^2 = \frac{3}{2} \Tr(\Kcirc^2)
= \frac{3}{2} \Tr K^2 - \Tr K + \frac{1}{2} = 1-12\pi\phi_0^2 .
\]
In particular, $\phi_0$ lies in a bounded interval and it is natural to define
\bel{r-of-phi0}
r(\phi_0) \coloneqq \sqrt{1-12\pi\phi_0^2} \qquad \text{for } \phi_0 \in I_0 \coloneqq \bigl[-1/\sqrt{12\pi},1/\sqrt{12\pi}\bigr] .
\ee
Observe that the relation cannot be inverted since the sign of~$\phi_0$ cannot be deduced from~$r$.

\paragraph{Scalar invariants of singularity data.}

An important building block in the classification is to understand what ultralocal scalar invariants the singularity data $(g,K,\phi_0,\phi_1)$ admits, namely what functions of the data at a point~$x$ (and not its derivatives) are invariant under changes of coordinates.

As explained above, the symmetry of~$K$ with respect to~$g$ ensures that $K=\diag(k_1,k_2,k_3)$ and $g=\diag(\pm 1,1,1)$ in some basis $v_1,v_2,v_3$, where the sign depends on the signature of~$g$.
Any ultralocal scalar is therefore determined by its value for such diagonal matrices, and can thus only depend on $k_1,k_2,k_3,\phi_0,\phi_1$.  Expressing the eigenvalues in terms of $(r,\theta)$, and $r$ in terms of~$\phi_0$ using~\eqref{r-of-phi0} we obtain the following lemma (recall that $\theta$ is undefined when $r=0$ namely $\phi_0=\pm 1/\sqrt{12\pi}$).

\begin{lemma}[Ultralocal scalars]\label{lem:scalars}
  Any $GL(3,\RR)$-invariant function of the singularity data $(g,K,\phi_0,\phi_1)$ at a point
  can be written as a function of the scalars $\theta,\phi_0,\phi_1$ defined in~\eqref{Kasner-rtheta-2}
  that is an even and periodic function in~$\theta$ with period $2\pi/3$ (spacelike case) or $2\pi$ (timelike case), and that is $\theta$-independent for $\phi_0=\pm 1/\sqrt{12\pi}$.
\end{lemma}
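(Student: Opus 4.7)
The plan is to apply the spectral theorem at~$x$ and then track the residual freedom in the choice of diagonalizing basis.

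First, I would invoke the symmetry condition (ii) of \autoref{def:singularity-data}, recalled above the lemma, to diagonalize~$K$ in a $g$-orthonormal basis $v_1,v_2,v_3$ of $T_x\Hcal$. In such a basis $K=\diag(k_1,k_2,k_3)$ while $g$ has Gram matrix $\delta$ in the spacelike case or $\diag(-1,1,1)$ in the timelike case. Any $GL(3,\RR)$-invariant function takes the same value in every basis, hence its value on the data $(g,K,\phi_0,\phi_1)$ is determined by the five scalars $(k_1,k_2,k_3,\phi_0,\phi_1)$ alone. The trace condition $\Tr K=1$ and the asymptotic Hamiltonian constraint~\eqref{equa-const-singu} combine to $\sum_a(k_a-\tfrac{1}{3})^2=\tfrac{2}{3}r(\phi_0)^2$ with $r(\phi_0)$ as in~\eqref{r-of-phi0}, so the Jacobs parametrization~\eqref{Kasner-rtheta} rewrites any such invariant as a function of $(\theta,\phi_0,\phi_1)$.

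Second, I would identify how the residual freedom in the diagonalizing basis acts on~$\theta$. Two $g$-orthonormal bases diagonalizing~$K$ differ by a permutation of the~$v_a$, possibly composed with an orthogonal transformation acting within coincident eigenspaces (which is irrelevant for the present argument). In the spacelike case, every permutation of $v_1,v_2,v_3$ preserves the Gram matrix~$\delta$, so the full symmetric group~$S_3$ acts. A direct calculation from~\eqref{Kasner-rtheta} shows that the cycle $(v_1,v_2,v_3)\mapsto(v_2,v_3,v_1)$ corresponds to $\theta\mapsto\theta+\tfrac{2\pi}{3}$ while the transposition $v_2\leftrightarrow v_3$ corresponds to $\theta\mapsto-\theta$, so invariance forces the function to be $\tfrac{2\pi}{3}$-periodic and even in~$\theta$. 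In the timelike case, preserving the Gram matrix $\diag(-1,1,1)$ forbids moving the timelike direction~$v_1$; only $v_2\leftrightarrow v_3$ survives, yielding evenness with period~$2\pi$.

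Third, I would handle the degenerate locus $r(\phi_0)=0$, i.e.\ $\phi_0=\pm 1/\sqrt{12\pi}$, where $K=\tfrac{1}{3}\delta$ is a scalar matrix and every $g$-orthonormal basis diagonalizes it. The angle~$\theta$ is then entirely unconstrained, so invariance imposes $\theta$-independence at these boundary values. The converse is immediate: any function $f(\theta,\phi_0,\phi_1)$ with the stated symmetry, periodicity, and boundary behavior descends through the parametrization to a well-defined $GL(3,\RR)$-invariant, which concludes the argument.

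The only point requiring genuine care is the bookkeeping in the second step: transcribing the action of~$S_3$ on eigenvectors into shifts and reflections of~$\theta$, which relies on the identities $\cos(\theta+\tfrac{4\pi}{3})=\cos(\theta-\tfrac{2\pi}{3})$ and $\cos(-\theta+\tfrac{2\pi}{3})=\cos(\theta+\tfrac{4\pi}{3})$. Once this dihedral-group action on~$\theta$ is identified, the remaining steps are routine.
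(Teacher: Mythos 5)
Your proposal is correct and follows the same route as the paper: diagonalize $K$ in a $g$-orthonormal eigenbasis so that the invariant reduces to a function of $(k_1,k_2,k_3,\phi_0,\phi_1)$, pass to the Jacobs $(r,\theta)$ parametrization with $r$ eliminated via the Hamiltonian constraint, and read off the evenness and $2\pi/3$ or $2\pi$ periodicity from the dihedral action of the residual permutation freedom (full $S_3$ in the spacelike case, only $v_2\leftrightarrow v_3$ in the timelike case), with $\theta$-independence at the disk center coming from the ambiguity of the eigenbasis when $K=\tfrac{1}{3}\delta$. The paper gives this argument more tersely, with the basis-freedom accounting already carried out in the preceding ``Kasner radius and angle'' discussion, but the content is identical.
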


In particular, the scalar fields $(\phipoi_0,\phipoi_1)$ obtained after applying an ultralocal scattering map are ultralocal scalar invariants.
They are thus described by a function $\Phi\colon \RR\times I_0\times\RR\to I_0\times\RR$ suitably even and periodic in~$\theta$, such that $(\phipoi_0,\phipoi_1)=\Phi(\thetamoi,\phimoi_0,\phimoi_1)$.
The function~$\Phi$ plays a key role in describing the most general ultralocal scattering in \autoref{secti-24}.

\subsection{Rigidly conformal scattering maps}
\label{secti-23}

\paragraph{The notion of rigid conformality.}

As a warm-up before giving the most general ultralocal scattering map, we describe in this section all {\bf rigidly conformal scattering maps,} in the sense that $\gmoi$ and~$\gpoi$ are in the same conformal class.
Recall that $\gmoi$ and~$\gpoi$ are the values of asymptotic profiles for proper times (or proper distances) $s=\pm 1$ around the singularity hypersurface.
We introduce later a more flexible notion of conformal scattering map, defined by comparing the whole asymptotic profiles, rather than specifically their values $\gmoi$ and~$\gpoi$ at $s=\pm 1$.

\vskip.15cm

{\bf Example 1. Isotropic rigidly conformal scattering.} 
  For any ultralocal {\sl scale factor} $\lambda>0$, namely a function of $(\theta,\phi_0,\phi_1)\in\RR\times I_0\times\RR$ that is even and periodic in~$\theta$ with period $2\pi/3$ (spacelike case) or $2\pi$ (timelike case), any constant $\varphi\in\RR$ and either sign $\epsilon=\pm 1$, we introduce the map
\bel{Sirc}
  \Sirc_{\lambda,\varphi,\epsilon} \colon (g, K, \phi_0, \phi_1)
  \mapsto 
  \Big(\lambda^2 g, \ \frac{1}{3}\delta,\ {\epsilon}/{\sqrt{12\pi}},\ \varphi \Big). 
\ee
  Observe that after the bounce the three Kasner exponents are equal, hence the expansion (in the spacelike case) is isotropic.  The scalar field and extrinsic curvature are completely shielded by the singularity, except that they make the scale factor~$\lambda$ of the metric space-dependent.

  The constant $\varphi$ and sign $\epsilon$ are mostly irrelevant since they simply affect the overall sign and constant part of the asymptotic profile~$\phi_*$, and the Einstein-scalar field equations are invariant under mapping $\phi\to-\phi$ or $\phi\to(\phi+\text{constant})$.
  One could thus focus on the special case $\Sirc_\lambda\coloneqq\Sirc_{\lambda,0,+}$, but to have a complete classification we keep all parameters.

\vskip.15cm

{\bf Example 2. Anisotropic rigidly conformal scattering.}
For any differentiable function $F\colon\RR\to\RR$ with nowhere vanishing derivative, any constant $c>0$, and any sign $\epsilon=\pm 1$,
we introduce the map
\bse\label{Sarc}
\be
  \Sarc_{F,c,\epsilon} \colon (g, K, \phi_0, \phi_1)
  \mapsto \Biggl(c^2 \mu^2 g, \ \epsilon\mu^{-3}(K - \tfrac{1}{3}\delta) + \tfrac{1}{3}\delta,\
  \epsilon\mu^{-3} \frac{\phi_0}{F'(\phi_1)},\ F(\phi_1) \Biggr), 
\ee
in which  
\bel{ex2-mu}
\mu = \mu(\phi_0,\phi_1) = \bigl(1+12\pi \phi_0^2 \bigl(F'(\phi_1)^{-2} - 1\bigr) \bigr)^{1/6}.
\ee
\ese
Observe that $1-12\pi\phi_0^2 = \tfrac{3}{2}\Tr((K-\delta/3)^2) \geq 0$ and $12\pi\phi_0^2 F'(\phi_1)^{-2}\geq 0$ with equality when $\phi_0=0$, so that their sum is positive and $\mu$ is indeed well-defined and nonzero.
As we will see, if $\epsilon=+1$ and $F$~is contracting ($\abs{F'}\leq 1$), then $\mu\geq 1$ so the scattering map brings the Kasner exponents closer to the isotropic case~$\tfrac{1}{3}$.

As in the isotropic case, changing $F\to -F$ or shifting it by a constant is mostly irrelevant due to the Einstein-scalar field equations being invariant under mapping $\phi\to-\phi$ or $\phi\to(\phi+\text{constant})$.  In contrast, the sign $\epsilon$ in $\Sarc_{F,c,\epsilon}$ affects $K$ hence has a very strong effect on the asymptotic profile $g_*(s)=|s|^{2\Kpoi}\gpoi$.

\paragraph{Special cases, limits, and regularity.}

\bei
\item A special case is that $\Sarc_{\Id,1,+}$ (where $\Id$ denotes here the identity map $\RR\to\RR$) is the identity map, which we dub {\bf continuous scattering}.  Another interesting case that plays a role in \autoref{prop-conformal-ultralocal} below is the momentum reversing
\[
\Sarc_{\Id,1,-}\colon(g,K,\phi_0,\phi_1)\mapsto\bigl(g,\,\tfrac{2}{3}\delta-K,\,-\phi_0,\,\phi_1\bigr) .
\]

\item
  The function~$F$ must be {\sl monotonic}.
  For the anisotropic scattering $\Sarc_{F,c,\epsilon}$ to lead to a continuous~$\gpoi$, one needs $\mu$ to be continuous, which requires in particular $F'$ to be continuous.
  Since $F'$ is nowhere vanishing, it must be either positive or negative everywhere, hence forcing $F$ to be monotonic.
  The precise regularity condition to impose on~$F$ for $\Sarc_{F,c,\epsilon}$ (likewise $\lambda$ for $\Sirc_{\lambda,\varphi,\epsilon}$) depends on the chosen regularity of singularity data sets.

\item Some~$\Sirc$ can essentially be obtained as limits of~$\Sarc$.
  For this, consider the limit of $\Sarc_{\varphi+c^3F,c,\epsilon}$ as $c\to 0$ for some monotonic $F\colon\RR\to\RR$ with nowhere vanishing derivative.  For $\phi_0>0$ or $\phi_0<0$, the limit is well-defined and coincides with an isotropic rigidly conformal scattering
\[
  \lim_{c\to 0} \Sarc_{\varphi+c^3F,c,\epsilon}(g, K, \phi_0, \phi_1)
  = \Sirc_{\lambda,\varphi,\epsilon\sgn\phi_0\sgn F'}(g, K, \phi_0, \phi_1)
\]
with $\lambda^6 = 12\pi \phi_0^2 F'(\phi_1)^{-2}$.  Observe that the limit is discontinuous and ill-defined whenever $\phi_0$ vanishes.
\eei

\paragraph{Classification of rigidly conformal ultralocal singularity scattering maps.} 

In fact, the following proposition states that examples 1 and 2 {\sl cover all possible classes of such scattering maps.} Interestingly, the classification is the {\sl same} for spacelike and for timelike maps, except for the different $\theta$ periodicity of~$\lambda$ in~$\Sirc_{\lambda,\varphi,\epsilon}$.  Later on, in \autoref{theorem-ultralocal}, we state the more general classification of ultralocal scattering maps that are not necessarily rigidly conformal.

\begin{proposition}[Rigidly conformal ultralocal scattering maps for self-gravitating scalar fields]
\label{prop-conformal-ultralocal}
A spacelike or timelike scattering map~$\Sbf$ that is rigidly conformal and ultralocal 
is either $\Sirc_{\lambda,\varphi,\epsilon}$ or $\Sarc_{F,c,\epsilon}$ defined in~\eqref{Sirc} and~\eqref{Sarc} above.
Among these maps, one distinguishes several subclasses:
\bei
\item Quiescence-preserving maps are $\Sirc_{\lambda,\varphi,\epsilon}$ and $\Sarc_{F,c,+}$ with $0<\abs{F'}\leq 1$ identically.

\item Idempotent maps are $\Sarc_{F,1,\epsilon}$ with $F\circ F=\Id$ and nowhere vanishing~$F'$,
  which implies that $F=\Id$ or $F'<0$ everywhere.

\item Shift-covariant maps are $\Sirc_{\lambda,\varphi,\epsilon}$ with $\phi_1$-independent~$\lambda$, and $\Sarc_{F,c,\epsilon}$ with $F''=0$.

\item Quiescence-preserving idempotent maps are  $\Sarc_{\Id,1,+}=\Id$ and $\Sarc_{(y\mapsto\varphi-y),1,+}\!\colon(g,K,\phi_0,\phi_1)\mapsto(g,K,-\phi_0,\varphi-\phi_1)$.
  They are automatically shift-covariant.
\eei 
\end{proposition}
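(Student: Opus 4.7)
The plan is to extract a general ansatz for the image of $\Sbf$ using covariance and ultralocality, apply the Hamiltonian and momentum constraints to cut the ansatz down to the two claimed families, and then read off the subclasses algebraically. First, I would exploit ultralocality and general covariance together with \autoref{lem:scalars} to parametrize the image. For a rigidly conformal map we have $\gpoi = \lambda^2 \gmoi$ with $\lambda = \lambda(\thetamoi,\phimoi_0,\phimoi_1)$ a positive ultralocal scalar subject to the evenness and periodicity prescribed by the lemma, while $\phipoi_0,\phipoi_1$ are likewise ultralocal scalars forming a map $\Phi(\thetamoi,\phimoi_0,\phimoi_1)$. Since $\Kpoi$ is a covariant $(1,1)$-tensor symmetric with respect to $\gmoi$ and built pointwise from $\gmoi$, $\Kmoi$ and the scalars, the Cayley--Hamilton theorem constrains it to the form $\Kpoi = \alpha\,\delta + \beta\,\Kmoi + \gamma\,\Kmoi^2$ for ultralocal scalars $\alpha,\beta,\gamma$; the trace condition $\Tr\Kpoi = 1$ and the asymptotic Hamiltonian constraint then provide two algebraic relations among $\alpha,\beta,\gamma,\phipoi_0$ and $\phimoi_0$.

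The key step is to impose the momentum constraint $\nablapoi_a\Kpoi^a{}_b = 8\pi\,\phipoi_0\,\del_b\phipoi_1$. Under the conformal rescaling $\gpoi=\lambda^2\gmoi$, a direct computation using $\Tr\Kpoi=1$ yields
\[
\nablapoi_a\Kpoi^a{}_b - \nablamoi_a\Kpoi^a{}_b = \bigl(3\,\Kpoi^d{}_b - \delta^d_b\bigr)\,\del_d\log\lambda.
\]
Expanding $\nablamoi_a\Kpoi^a{}_b$ via the Leibniz rule then produces gradients of $\alpha,\beta,\gamma$ (pulled back from $\thetamoi,\phimoi_0,\phimoi_1$ by the chain rule), together with $\nablamoi_a\Kmoi^a{}_b$, $\nablamoi_a(\Kmoi^2)^a{}_b$, and bilinear terms $\Kmoi^a{}_c\,\nablamoi_a\Kmoi^c{}_b$ that are \emph{not} determined by the single scalar identity $\nablamoi_a\Kmoi^a{}_b = 8\pi\,\phimoi_0\,\del_b\phimoi_1$ available on the minus side. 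Requiring the output to reduce to $8\pi\,\phipoi_0\,\del_b\phipoi_1$ for \emph{all} admissible data forces the unconstrained tensor structures to vanish: this kills $\gamma$ identically and removes the $\thetamoi$-dependence of $\alpha$ and $\beta$, leaving only scalar relations between $\phipoi_0,\phipoi_1$ and $\phimoi_0,\phimoi_1$.

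A case-split on whether $\beta\equiv 0$ then completes the classification. In the isotropic case $\beta=0$, the trace condition gives $\alpha = 1/3$, the Hamiltonian constraint forces $\phipoi_0 = \epsilon/\sqrt{12\pi}$ with $\epsilon=\pm 1$, the reduced momentum equation forces $\phipoi_1$ to be a constant $\varphi\in\RR$, and the scale factor $\lambda$ is unrestricted beyond the symmetries of \autoref{lem:scalars}; this is exactly $\Sirc_{\lambda,\varphi,\epsilon}$. In the anisotropic case $\beta\not\equiv 0$, absorbing an overall sign into $\epsilon=\sgn\beta$ and a multiplicative constant into $c>0$, consistency of the trace, Hamiltonian, and momentum conditions collapses the remaining data onto a single function $F\colon\RR\to\RR$ with nowhere vanishing derivative, with $\mu$ forced to be the expression~\eqref{ex2-mu}; this reconstructs $\Sarc_{F,c,\epsilon}$.

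The subclass descriptions are then read directly off these formulas. Quiescence-preservation of $\Sirc_{\lambda,\varphi,\epsilon}$ is automatic since $\Kpoi = \tfrac{1}{3}\delta$; for $\Sarc_{F,c,\epsilon}$ positivity of the eigenvalues of $\tfrac{1}{3}\delta + \epsilon\mu^{-3}(\Kmoi-\tfrac{1}{3}\delta)$ over every admissible Kasner triple is equivalent to $\epsilon=+1$ and $\mu\geq 1$, which simplifies via~\eqref{ex2-mu} to $\abs{F'}\leq 1$. Idempotence of $\Sarc_{F,c,\epsilon}$ forces $F\circ F=\Id$ and, by matching the $\gpoi$ rescaling, $c=1$; monotonicity of $F$ then leaves $F=\Id$ or $F'<0$. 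Shift-covariance imposes that $\lambda$ is $\phimoi_1$-independent (for $\Sirc$) or that $F''=0$ (for $\Sarc$), and combining quiescence with idempotence selects $F=\Id$ or $F(y)=\varphi-y$, both automatically shift-covariant. The main obstacle I anticipate is the momentum-constraint step: one must carefully track which tensor structures in $\nablamoi\cdot\Kpoi$ are genuinely free versus tied down by $\nablamoi\cdot\Kmoi=8\pi\,\phimoi_0\,\del\phimoi_1$, which requires a clean covariant expansion and a genericity argument over admissible singularity data.
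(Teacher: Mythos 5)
Your overall strategy (polynomial ansatz for $\Kpoi$ via covariance, reduction through the momentum constraint, case split on the coefficient of the Kasner-anisotropic tensor structure, then read off the subclasses) is the same route the paper takes, and your formula $\nablapoi_a\Kpoi^a{}_b - \nablamoi_a\Kpoi^a{}_b = (3\Kpoi^d{}_b-\delta^d_b)\del_d\log\lambda$ is correct and agrees with the paper's $\omega^{-1}\nablamoi_a(\omega\Kpoicirc{}^a_b)$ with $\omega=\lambda^3$. The subclass verifications also match the paper.

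The genuine gap is the step you flag yourself: ``Requiring the output to reduce to $8\pi\phipoi_0\del_b\phipoi_1$ for all admissible data forces the unconstrained tensor structures to vanish.'' This is precisely where the paper invests most of its technical effort, and it cannot be dispatched as a generic ``genericity argument over admissible singularity data,'' because the asymptotic momentum constraint $\nablamoi_a\Kmoi^a{}_b = 8\pi\phimoi_0\del_b\phimoi_1$ is itself a nontrivial PDE coupling $\Kmoi$, $\phimoi_0$, $\phimoi_1$ across space; one cannot freely choose the value of $\nablamoi_a(\Kmoi^2)^a{}_b$ and the scalar gradients independently at a point. The paper's proof of this step requires constructing, \emph{within the constraint surface}, (a) data with $(\theta,\phimoi_0,\phimoi_1)$ locally constant but $\nablamoi_a(\Kmoicirc^2)^a_b\neq 0$, in order to kill the coefficient of $\Kmoicirc^2$ (\autoref{lem:derivatives}); (b) data showing that a relation $\del_a\gamma\,\Kmoicirc{}^a_b=\sum_I\chi_I\del_b\zeta_I$ forces $\gamma$ constant (\autoref{lem:eigen-deriv}); and (c) data showing $\del_b\theta_-,\del_b\phimoi_0,\del_b\phimoi_1$ are pointwise linearly independent in a suitable sense (\autoref{lem:independent-derivatives}). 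Each of these requires explicit layered constructions on a ball compatible with the Hamiltonian and momentum constraints, and continuity arguments to extend from non-degenerate to degenerate data (\autoref{lem:continuity}). Without these constructions, the passage from ``for all admissible data'' to the vanishing of the $\Kmoicirc^2$ coefficient and to the three scalar equations~\eqref{proof-conformal-coefs-vanish} is not justified. A secondary, smaller imprecision: killing the $\Kmoi^2$ term does not by itself remove the $\theta_-$-dependence of $\alpha,\beta$; in the anisotropic branch this only emerges after solving the three-equation system, and in the isotropic branch the metric scale factor $\lambda$ \emph{does} remain $\theta_-$-dependent (consistent with \eqref{Sirc}).
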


\paragraph{Proof of which maps are quiescence-preserving, idempotent and/or shift-covariant.}

We defer to \autoref{section---5} the proof that rigidly conformal ultralocal scattering maps are \eqref{Sirc} or~\eqref{Sarc}.  For now we determine which of these maps are quiescence-preserving, idempotent, and/or shift-covariant.  Idempotence is primarily relevant for the case of timelike singularities, but the classification is independent of signature.
It is convenient to re-introduce in this proof the notation $g_\pm$, etc.\@ to distinguish between the two sides of the singularity hypersurface.
All tensors are considered at a given point $x\in\Hcal$, which we omit from notations.

Recall that $\Sbf$ is {\sl quiescence-preserving} if $\Kmoi>0$ (quiescent data) implies $\Kpoi>0$.
Manifestly, $\Sirc_{\lambda,\varphi,\epsilon}$ is quiescence-preserving since $\Kpoi=\tfrac{1}{3}\delta>0$ regardless of~$\Kmoi$.  To show that $\Sarc_{F,c,\epsilon}$ is quiescence-preserving when $\epsilon=+1$ and $0<\abs{F'}\leq 1$ identically, note that $F'(\phimoi_1)^{-2}-1\geq 0$ so
\[
\mu(\phimoi_0,\phimoi_1)^{-3} = \bigl(1+12\pi\phimoi_0^2(F'(\phimoi_1)^{-2}-1)\bigr)^{-1/2} \leq 1
\]
identically, thus $\Kpoi=\mu^{-3}(\Kmoi-\tfrac{1}{3}\delta)+\tfrac{1}{3}\delta$ is on the line segment joining $\tfrac{1}{3}\delta$ and~$\Kmoi$.  Since the triangle defined by $\Tr K=1$ and $K>0$ is convex and $\tfrac{1}{3}\delta$ lies in it, we conclude that $\Kmoi>0$ implies $\Kpoi>0$.
Conversely, let us prove next that $\Sarc_{F,c,\epsilon}$ is otherwise not quiescence-preserving.
For this we consider a configuration on $\Hcal=\RR^3$ with constant $(\gmoi,\Kmoi,\phimoi_0,\phimoi_1)$ where $\gmoi$ is flat Euclidean or Minkowski and $\Kmoi=\diag(1-2\xi,\xi,\xi)$ for some $\xi\in(0,\tfrac{1}{2})$.  There are two cases to study.
\bei
\item If $\epsilon=-1$, we consider the limit $\xi\to 0$ with $\phimoi_1$ fixed.  In this limit, $\phimoi_0^2\to 0$ so $\mu(\phimoi_0,\phimoi_1)\to\epsilon=-1$ so $\Kpoi\to\diag(-\tfrac{1}{3},\tfrac{2}{3},\tfrac{2}{3})$, hence $\Kpoi\not>0$ for sufficiently small $\xi>0$.
\item If $\abs{F'(y)}>1$ for some $y\in\RR$, we take $\phimoi_1=y$ identically and $\xi\to\tfrac{1}{2}$: in this limit $12\pi\phimoi_0^2\to\tfrac{3}{4}>0$ so $\mu^{-3}>1$ and the first diagonal entry in $\Kpoi=\mu^{-3}(\Kmoi-\tfrac{1}{3}\delta)+\tfrac{1}{3}\delta$ tends to $\tfrac{1}{3}(1-\mu^{-3})<0$.  For $\xi$ sufficiently close to $\tfrac{1}{2}$ we get $\Kpoi\not>0$.
\eei

Recall that $\Sbf$ is {\sl idempotent} if $\Sbf\circ\Sbf$ is the identity.  Since $\Sirc_{\lambda,\varphi,\epsilon}(g,K,\phi_0,\phi_1)$ is independent of the eigenvectors of~$K$, it is not injective, let alone idempotent.  To determine when $\Sarc_{F,c,\epsilon}$ is idempotent, it is useful to note that
\[
\Sarc_{F_2,c_2,\epsilon_2} \circ \Sarc_{F_1,c_1,\epsilon_1}
= \Sarc_{F_3,c_3,\epsilon_3}
\]
where $c_3=c_2c_1$, $\epsilon_3=\epsilon_2\epsilon_1$, and $F_3=F_2\circ F_1$.
The resulting scattering map is the identity if and only if $F_3$ is the identity, $c_3=1$, and $\epsilon_3=+1$.
Thus, $\Sarc_{F,c,\epsilon}$ is idempotent if and only if $c=1$ and $F\circ F=\Id$.
The condition can be refined in the case $F'>0$.  In that case we can show $F=\Id$: indeed, if for any $y\in\RR$ we have $y<F(y)$ (resp.\@ $y>F(y)$) then applying the increasing function~$F$ implies the opposite inequality $F(y)<F(F(y))=y$ (resp.\@ $F(y)>F(F(y))=y$).  Once we know $F=\Id$ and $c=1$ there are only two maps, $\Sbf=\Sarc_{\Id,1,+}=\Id$ and $\Sbf=\Sarc_{\Id,1,-}$.  In contrast, the case $F'<0$ features a large family of idempotent scattering maps, as there are many strictly decreasing idempotent functions $F$ on~$\RR$.

Next, we consider maps that are {\sl quiescence-preserving and idempotent}.  The identity map $\Sarc_{\Id,1,+}=\Id$ clearly is.  For $F'<0$ we need to understand the interplay of $F\circ F=\Id$ and $\abs{F'}\leq 1$.  The latter condition states that $F$ is a map that reduces distances.  In order for it to be idempotent, it should thus preserve distances: such isometries of~$\RR$ are translations and reflections.  Due to $F'\leq 0$ we are left only with reflections $F(y)=\varphi-y$, as described in the proposition.

Finally, we study shift-covariant maps, such that shifting $\phimoi_1\to\phimoi_1+\varphi$ shifts $\phipoi_1\to\phipoi_1+a\varphi$ for some constant~$a$ and leaves $(\gpoi,\Kpoi,\phipoi_0)$ untouched.
In the case of $\Sirc_{\lambda,\varphi,\epsilon}$, this simply means that the conformal factor $\lambda(\thetamoi,\phimoi_0,\phimoi_1)$ multiplying the metric must be invariant under shifts of its last argument.
For $\Sarc_{F,c,\epsilon}$ we need $\mu$ defined by~\eqref{ex2-mu} to be $\phimoi_1$-independent, hence need $F'(\phimoi_1)$ to be a constant, as stated in the theorem.  It is easy to check that such affine~$F$ lead to a shift-covariant map.

The second part of \autoref{prop-conformal-ultralocal} is thus proven, while the proof of the classification of rigidly conformal ultralocal scattering maps will be done later in \autoref{section---5} together with the classification of all ultralocal scattering maps.

\subsection{General ultralocal scattering maps} 
\label{secti-24} 

\paragraph{Example 3. Isotropic ultralocal scattering.}

We now generalize the examples above to arbitrary ultralocal scattering maps, distinguishing again isotropic and anisotropic scattering maps.
For any three functions $\alpha_0,\alpha_1,\alpha_2$ of $(\theta,\phi_0,\phi_1)\in\RR\times I_0\times\RR$ that are even in~$\theta$ and $2\pi/3$-periodic (spacelike case) or $2\pi$-periodic (timelike case), and are such that $\del_\theta\alpha_0,\alpha_1,\alpha_2$ vanish at the boundary $\phi_0=\pm 1/\sqrt{12\pi}$ of~$I_0$, and for any constant $\varphi\in\RR$ and sign $\epsilon=\pm 1$, we introduce the map
\bel{Siso}
\Siso_{\alpha_0,\alpha_1,\alpha_2,\varphi,\epsilon}
\colon (g,K,\phi_0,\phi_1)
\mapsto \biggl( \exp\Bigl(\alpha_0\,\delta + \alpha_1 \cos\Theta + \alpha_2 \cos(2\Theta)\Bigr) g , \ \frac{1}{3}\delta , \ \frac{\epsilon}{\sqrt{12\pi}}, \ \varphi \biggr), 
\ee
where $\cos\Theta$ and $\cos(2\Theta)$ are defined through~\eqref{fTheta-notation}.
In the spacelike case the map can equivalently be written as
\bse
\label{Siso-space}
\begin{gather}
  \Siso_{\lambda,\varphi,\epsilon}
  \colon (g,K,\phi_0,\phi_1)
  \mapsto \biggl( \lambda(\Theta,\phi_0,\phi_1)^2 g , \ \frac{1}{3}\delta , \ \frac{\epsilon}{\sqrt{12\pi}}, \ \varphi \biggr) , \\
  \lambda(\theta,\phi_0,\phi_1)\coloneqq \exp\biggl(\frac{1}{2}\bigl(\alpha_0(\theta,\phi_0,\phi_1) + \alpha_1(\theta,\phi_0,\phi_1)\cos\theta + \alpha_2(\theta,\phi_0,\phi_1) \cos(2\theta)\bigr)\biggr) ,
\end{gather}
\ese
where $\lambda$ is an arbitrary positive $2\pi$-periodic even function that becomes $\theta$-independent along the boundary~$\del I_0$.  To retrieve the $2\pi/3$-periodic functions $\alpha_0,\alpha_1,\alpha_2$ from~$\lambda$ one can decompose $2\log\lambda$ into Fourier modes.

For both spacelike and timelike hypersurfaces, $\Siso$~reduces to its rigidly conformal case~$\Sirc$ upon setting $\alpha_1=\alpha_2=0$ in~\eqref{Siso}, or in the description~\eqref{Siso-space} imposing $2\pi/3$-periodicity of~$\lambda$ so as to make $\lambda(\Theta,\phi_0,\phi_1)$ into a multiple of the identity matrix.  In contrast to the rigidly conformal case, the metric gets generally both scaled and sheared upon crossing the singularity.

Our comments about $\Sirc$ in Example 1 are equally applicable to the general isotropic ultralocal scattering~$\Siso$.  The name ``isotropic'' stems from how the expansion after the bounce is isotropic,  given that the three Kasner exponents are equal.  Both $\phi$ and~$K$ are shielded by the singularity, except for their effect on how the metric transforms.  Again, the constant $\varphi$ and sign $\epsilon$ are mostly irrelevant due to how the Einstein-scalar field equations are invariant under mapping $\phi\to-\phi$ or $\phi\to(\phi+\text{constant})$.

\paragraph{Phase space and canonical transformation.}

The initial data for matter on the singularity hypersurface consists of two scalar fields $\phi_0,\phi_1$.  Since the evolution is ultralocal near the singularity, it is natural to consider the phase space $I_0\times\RR$ in which $(\phi_0,\phi_1)$ can take values at each point.  Here, $I_0\coloneqq[-1/\sqrt{12\pi},1/\sqrt{12\pi}]$ is the interval~\eqref{r-of-phi0} in which the momentum $\phi_0$ can vary.
We also recall from~\eqref{r-of-phi0} that $r(\phi_0)=(1-12\pi\phi_0^2)^{1/2}$ vanishes when momentum is maximal ($\phi_0=\pm 1/\sqrt{12\pi}$), namely along the boundary of $I_0\times\RR$.
Our construction below is based on the following symplectic form (or volume form) on the interior of the phase space:
\bel{phase-space-measure}
d\biggl(\frac{\phi_0}{r(\phi_0)}\biggr)\,d\phi_1
= \frac{d\phi_0 d\phi_1}{r(\phi_0)^3} .
\ee

\begin{definition}[Canonical transformation for matter]\label{def:canonical-transfo}
An {\bf $\epsilon$-canonical transformation} for the matter is a function $\Phi=(\Phi_0,\Phi_1)\colon \RR\times I_0\times\RR\to I_0\times\RR$ obeying the following properties for some sign~$\epsilon$:
\bei
\item[(i)] {\bf Periodic.} The image $\Phi(\theta,\phi_0,\phi_1)$ is $2\pi/3$ (spacelike case) or $2\pi$ (timelike case) periodic and even in~$\theta$, and at the boundary $\phi_0=\pm 1/\sqrt{12\pi}$ it is $\theta$-independent.

\item[(ii)] {\bf Maximal-momentum\footnote{We could also say that $\Phi$ is isotropy preserving since maximal momentum $\phi_0=\pm1/\sqrt{12\pi}$ corresponds to $K=\frac{1}{3}\delta$.} preserving.}
The function~$\Phi$ maps boundary to boundary and interior to interior, in the sense that $r(\Phi_0(\theta,\phi_0,\phi_1))=0$ if and only if $r(\phi_0)=0$.
Moreover, the ratio $r(\Phi_0)/r(\phi_0)$ has finite limits as $\phi_0\to\pm 1/\sqrt{12\pi}$ for each $(\theta,\phi_1)$, and these limits are $\theta$-independent and nowhere vanishing.

\item[(iii)] {\bf Volume preserving.} 
For each $\theta\in\RR$, the map $\Phi(\theta,\,.\,,\,.\,)$ is volume-preserving
 for the measure~\eqref{phase-space-measure},
 namely is a {\rm canonical transformation} of (the interior of) the phase space $I_0\times\RR$, up to the sign~$\epsilon$.  Explicitly,
 \[
 r(\Phi_0)^{-3} \bigl(\del_{\phi_0}\Phi_0\, \del_{\phi_1} \Phi_1 - \del_{\phi_1}\Phi_0\, \del_{\phi_0} \Phi_1\bigr) = \epsilon\, r(\phi_0)^{-3} .
 \]
 
\item[(iv)] {\bf Regular at boundaries.}
For each $(\theta,\phi_1)\in\RR^2$,
$\dfrac{\Phi_0}{r(\Phi_0)}\del_\theta\Phi_1\to 0$ and
$\dfrac{\Phi_0}{r(\Phi_0)} \del_{\phi_1}\Phi_1 - \epsilon \dfrac{\phi_0}{r(\phi_0)}\to 0$ at the boundaries $\phi_0\to\pm 1/\sqrt{12\pi}$.
In addition, the integral
$\displaystyle\int_{I_0}\frac{\Phi_0}{r(\Phi_0)} \del_{\phi_0} \Phi_1\, d\phi_0$, which is $\phi_1$-independent due to the other conditions, vanishes for all~$\theta$.

\eei
\end{definition}

\paragraph{Example 4. Anisotropic ultralocal scattering.}

We now define a scattering map
\bse\label{Sani}
\be
\Sani_{\Phi,c,\epsilon} \colon (\gmoi, \Kmoi, \phimoi_0, \phimoi_1)
\mapsto (\gpoi , \Kpoi , \phipoi_0 , \phipoi_1)
\ee
that depends on a constant {\bf scale factor} $c>0$, a sign $\epsilon=\pm 1$, and an $\epsilon$-canonical transformation $\Phi\colon \RR\times I_0\times\RR\to I_0\times\RR$ obeying conditions (i)--(iv) above with the sign~$\epsilon$.
First, the scalar fields are given by~$\Phi$ as
\be
(\phipoi_0,\phipoi_1) = \Phi(\thetamoi,\phimoi_0,\phimoi_1) ,
\ee
where the Kasner angle~$\thetamoi$ is given by the parametrization~\eqref{Kasner-rtheta} of eigenvalues of~$\Kmoi$.
From $\phi_{0\pm}$ one determines the Kasner radii
\bel{Sani-rpmoi}
r_{\pm} = r(\phi_{0\pm}) = \sqrt{1-12\pi\phi_{0\pm}^2} .
\ee
Second, the trace-free extrinsic curvature $\Kcirc=K-\tfrac{1}{3}\delta$ is continuous through the bounce, up to a scaling:
\bel{Sani-Kpoi}
\Kpoicirc = 0 \text{ for } \rmoi=0, \qquad \frac{\Kpoicirc}{\rpoi} = \epsilon \frac{\Kmoicirc}{\rmoi} \text{ for }\rmoi\neq 0.
\ee
The first case is imposed by condition (ii) since $\rmoi=0$ implies $\rpoi=0$ thus $\Kpoicirc=0$.  Condition (ii) also states that $\rmoi\neq 0$ implies $\rpoi\neq 0$, which makes the second equality well-defined.
Observe that the allowed scaling factors $\pm\rpoi/\rmoi$ are the only ones consistent with $\Tr(\mathring{K}_{\pm}^2)=2r_{\pm}^2/3$.
The proportionality has two consequences: $\Kpoi,\Kmoi$ share their eigenvectors and $\thetapoi-\thetamoi=0$ (for $\epsilon=+1$) or $\pi$ (for $\epsilon=-1$) modulo $2\pi$.
Importantly, even in the spacelike case where $\thetapoi$ and $\thetamoi$ are only defined modulo $2\pi/3$ (due to permuting eigenvectors), their difference is defined modulo $2\pi$ since one can compare eigenvalues of $\Kpoi$ and $\Kmoi$ on the same eigenvectors.
Third, an auxiliary function $\xi\colon \RR\times I_0\times\RR$ is given explicitly by
\bel{Sani-kappa}
\xi(\thetamoi,\phimoi_0,\phimoi_1)
= - \int_{-1/\sqrt{12\pi}}^{\phimoi_0} \frac{\Phi_0(\thetamoi,y,\phimoi_1)}{r(\Phi_0(\thetamoi,y,\phimoi_1))} \del_y \Phi_1(\thetamoi,y,\phimoi_1) \,dy.
\ee
Boundary regularity (iv) shows that $\xi$ vanishes at both boundaries $\phimoi_0=\pm 1/\sqrt{12\pi}$.
Just as $\Phi$~is, the function $\xi$~is manifestly even and ($2\pi/3$ or $2\pi$) periodic in~$\thetamoi$, and $\theta$-independent (in fact vanishing) at the boundary $\rmoi=0$.
Thanks to the volume-preserving condition~(iii), it is a solution of
\bel{Sani-dkappa}
\del_{\phimoi_0} \xi + \frac{\phipoi_0}{\rpoi} \del_{\phimoi_0} \phipoi_1 = 0 ,
\qquad
\qquad
\del_{\phimoi_1} \xi + \frac{\phipoi_0}{\rpoi} \del_{\phimoi_1} \phipoi_1 = \epsilon \frac{\phimoi_0}{\rmoi} .
\ee

Finally, for $\rmoi,\rpoi\neq 0$ the metric is scaled along the three eigenvectors of $\Kpoi,\Kmoi$:
\bel{Sani-gpoi}
\gpoi = c^2 \biggl(\frac{\rmoi}{\rpoi}\biggr)^{2/3}
\exp\biggl(16\pi\epsilon\xi\cos\Theta_-
-16\pi\epsilon\bigl( \del_{\thetamoi} \xi + \frac{\phipoi_0}{\rpoi} \del_{\thetamoi} \phipoi_1 \bigr)
\sin\Theta_-\biggr) \gmoi
\ee
where $\Theta_-=\diag(\thetamoi,\thetamoi+2\pi/3,\thetamoi+4\pi/3)$ in an eigenbasis of~$\Kmoi$.  As discussed above~\eqref{fTheta-notation}, $\Theta_-$~is rather ambiguous but $\cos\Theta_-=\tfrac{3}{2}\Kmoicirc/\rmoi$ is well-defined away from $\rmoi=0$.  While $\sin\Theta_-$ is ill-defined, being odd under $\thetamoi\to-\thetamoi$, this sign ambiguity is precisely fixed by the fact that the $\del_{\thetamoi}$ derivatives of the even functions~$\xi$ and $\phipoi_1$ are odd as well.
The value of $\gpoi$ when $\rmoi=\rpoi=0$ is determined as the $\phimoi_0=\pm 1/\sqrt{12\pi}$ limit of~\eqref{Sani-gpoi}.  The limit $(\lim\rmoi/\rpoi)$ is well-defined and non-zero by condition~(ii).  Exponentials drop out thanks to $\xi=0$ and condition~(iv) at the boundary, so $\gpoi=c^2(\lim\rmoi/\rpoi)^{2/3}\gmoi$.
\ese

\paragraph{Remarks on special cases and limits.}

We will see in \autoref{theorem-ultralocal} that there are no other ultralocal scattering maps. Let us make a few preliminary comments.
\bei

\item Example 3 reduces to Example 1 as follows.  The isotropic map
  $\Siso_{\alpha_0,\alpha_1,\alpha_2,\varphi,\epsilon}$ defined
  in~\eqref{Siso} is rigidly conformal if and only if the matrix
  multiplying the metric is a multiple of the identity matrix, namely if
  and only if $\alpha_1=\alpha_2=0$.  In this case it coincides with
  $\Sirc_{\lambda,\varphi,\epsilon}$ with $\lambda=\exp(\alpha_0/2)$.

\item Example 4 reduces to Example 2 as follows.
  The anisotropic map $\Sani_{\Phi,c,\epsilon}$ defined in~\eqref{Sani} is rigidly conformal if and only if the metric is scaled by a scalar namely the exponential in~\eqref{Sani-gpoi} is trivial.
  This requires the auxiliary function~$\xi$ to vanish and $\del_\theta\Phi_1=0$.
  The function $\xi$~\eqref{Sani-kappa} can only vanish identically if $\Phi_1$ is $\phi_0$-independent as well, namely $\phipoi_1=F(\phimoi_1)$ for some function~$F$.
  Then volume preservation gives that $(\phipoi_0/\rpoi)\del_{\phimoi_1}\phipoi_1 - \epsilon\phimoi_0/\rmoi$ is $\phimoi_0$-independent, but boundary regularity imposes that it vanishes at $\phimoi_0=\pm 1/\sqrt{12\pi}$ so it vanishes throughout.
  This fixes $\Phi$ completely in terms of~$F$.
  One then easily works out that the scattering map coincides with $\Sarc_{F,c,\epsilon}$ given in~\eqref{Sarc}.

\item As in its rigidly conformal special case $\Sarc_{F,c,\epsilon}$,
  the only effect of changing $\Phi\to-\Phi$ or shifting $\Phi_1$ by a
  constant in $\Sani_{\Phi,c,\epsilon}$ is to map $\phi\to-\phi$ or
  shift $\phi$ after the scattering, both of which are symmetries of
  the Einstein-scalar field equations.

\eei

\begin{figure}[t]\centering
  \begin{tikzpicture}[scale=2.5]
    \draw[thick] (0,0) circle (1);
    \fill[opacity=.5,color=black!20!white] (0:1) -- (120:1) -- (-120:1) -- cycle;
    \draw[dotted] (0:1) -- (120:1) -- (-120:1) -- cycle;
    \draw[->] (0,0) -- (1.3,0);
    \draw (0,-.05) -- (0,.05);
    \draw (0,0) -- (150:1.3);
    \draw ({-.5+.05*cos(60)},{-.5*tan(150)+.05*sin(60)}) -- ({-.5-.05*cos(60)},{-.5*tan(150)-.05*sin(60)});
    \draw[->] (3:1.1) arc (3:147:1.1) node [pos=.9,above] {\scriptsize $\theta$};
    \node at (.1,-.15) {\scriptsize $\aligned|\phi_0|&{=}1/\!\sqrt{12\pi}\\[-1ex]r&{=}0\endaligned$};
    \node at (-.2,.35) {\scriptsize $\aligned|\phi_0|&{=}\phimin(\theta)\\[-1ex]r&{=}1/(2\cos\widehat\theta)\endaligned$};
    \node at (-1.07,-.36) {\scriptsize $\phi_0{=}0$};
    \node at (-120:.6) {\scriptsize $K>0$};
  \end{tikzpicture}
  \caption{\label{fig:Kasner-triangle}{\bf Kasner disk and triangle.} The disk consists of points with $r^2=1-12\pi\phi_0^2\leq 1$, and the (shaded) triangle is characterized by $K>0$ or equivalently by $|\phi_0|>\phimin(\theta)$ with $\phimin$ defined in~\eqref{phimin}.}
\end{figure}
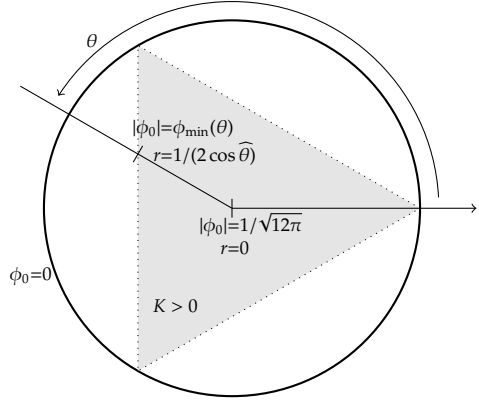

\paragraph{Classification of ultralocal scattering maps.}

To state the following theorem we introduce the notation~$\widehat{\theta}$ and the $2\pi/3$-periodic function~$\phimin$ that gives the value of~$\phi_0$ for which constraints impose that (at least) one eigenvalue of~$K$ vanishes:
\bel{phimin}
\phimin(\theta) = \biggl(\frac{1}{12\pi}\biggl(1-\frac{1}{4\cos^2\widehat{\theta}}\biggr)\biggr)^{1/2} ,
\qquad
\text{where} \quad
\widehat{\theta} = \theta - \frac{2\pi}{3} \biggl\lfloor\frac{3\theta}{2\pi}-1\biggr\rfloor \in \biggl[\frac{2\pi}{3},\frac{4\pi}{3}\biggr).
\ee
The geometric meaning of $\phimin$ is clarified by \autoref{fig:Kasner-triangle}.
The classification we find, and its refinements under various conditions, are the same for spacelike and timelike scattering maps, except for the $\theta$-periodicities: $2\pi/3$ in the spacelike case and $2\pi$ in the timelike case.

\begin{theorem}[Ultralocal scattering maps for self-gravitating scalar fields]
\label{theorem-ultralocal}
Spacelike or timelike, ultralocal scattering maps~$\Sbf$ are either $\Siso_{\alpha_0,\alpha_1,\alpha_2,\varphi,\epsilon}$ or $\Sani_{\Phi,c,\epsilon}$ defined in \eqref{Siso} and~\eqref{Sani}, respectively.  Among these maps, one distinguishes several subclasses:

\bei
\item Quiescence-preserving maps are $\Siso_{\alpha_0,\alpha_1,\alpha_2,\varphi,\epsilon}$ and $\Sani_{\Phi,c,\epsilon}$ under the condition that for all $\theta,\phi_0,\phi_1$ with $\abs{\phi_0} > \phimin(\theta+\pi\delta_{\epsilon=-1})$ one has $\abs{\Phi_0(\theta,\phi_0,\phi_1)} > \phimin(\theta)$.

\item Invertible maps are $\Sani_{\Phi,c,\epsilon}$ such that $\Phi(\theta,\,\cdot\,,\,\cdot\,)\colon I_0\times\RR\to I_0\times\RR$ is bijective for each~$\theta$.  Their inverse is $\Sani_{\Psi,1/c,\epsilon}$ with $\Psi(\theta,\,\cdot\,,\,\cdot\,)=\Phi(\theta+\pi\delta_{\epsilon=-1},\,\cdot\,,\,\cdot\,)^{-1}$ for each~$\theta$.
  They are idempotent if $c=1$ and $\Phi(\theta+\pi\delta_{\epsilon=-1},\,.\,,\,.\,)\circ\Phi(\theta,\,.\,,\,.\,)=\Id$ for all~$\theta$.

\item Shift-covariant maps are $\Siso_{\alpha_0,\alpha_1,\alpha_2,\varphi,\epsilon}$ with $\phi_1$-independent $\alpha_0,\alpha_1,\alpha_2$, and $\Sani_{\Phi,c,\epsilon}$ given in~\eqref{Sani-shift-cov}, below, which states that $\Phi_0/r(\Phi_0)=\epsilon a^{-1}\phi_0/r(\phi_0)$ and $\Phi_1=af+a\phi_1$ for some non-zero $a\in\RR$ and some suitably regular function $f=f(\theta,\phi_0)$.
  Among these, $\Siso$ are quiescence-preserving and non-invertible, while
  $\Sani$ are
  \bei
  \item invertible for any $f,a,c,\epsilon$,
  \item quiescence-preserving if and only if $|a|\leq 1$ and $\epsilon=+1$,
  \item quiescence-preserving and have quiescence-preserving inverse if and only if $a=\pm 1$ and $\epsilon=+1$,
  \item idempotent if and only if $a=\pm 1$, $c=1$, $\epsilon=\pm 1$, and $f(\theta,\phi_0)=-af(\theta+\pi\delta_{\epsilon=-1},\epsilon a \phi_0)$ for all $(\theta,\phi_0)$.
  \eei

\item Momentum-preserving $(\epsilon=+1)$ or momentum-reversing $(\epsilon=-1)$ maps are $\Sani_{\Phi,c,\epsilon}$ with $\Phi=\pm(\epsilon\phi_0,\phi_1+f(\theta,\phi_0))$ for some suitably regular function~$f$.
  They are invertible shift-covariant maps, with $a=\pm 1$.
\eei
\end{theorem}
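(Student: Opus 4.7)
The plan is to reduce the classification to an algebraic problem at a single point and then exploit the asymptotic momentum constraint to cut down the a priori freedom. By ultralocality and diffeomorphism covariance, the output at $x\in\Hcal$ depends only on the input at $x$, and $GL(3,\RR)$-covariance restricts the tensorial structure. By \autoref{lem:scalars}, the scalar outputs define a map $\Phi\colon(\thetamoi,\phimoi_0,\phimoi_1)\mapsto(\phipoi_0,\phipoi_1)$ satisfying the parity and periodicity of condition~(i). Covariance forces the $(1,1)$-tensor $\Kpoi$ to be of the form $a_0\delta+a_1\Kmoi+a_2\Kmoi^2$ with scalar-invariant coefficients, and $\gpoi$ to arise from $\gmoi$ via a symmetric $(1,1)$-tensor of the same polynomial form. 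Imposing $\Tr\Kpoi=1$ eliminates one coefficient, and the asymptotic Hamiltonian constraint ties $\Tr\Kpoi^2$ to $\phipoi_0$.

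Next I apply the asymptotic momentum constraint $\nablapoi_a\Kpoi^a{}_b=8\pi\phipoi_0\,\del_b\phipoi_1$. Computing the divergence of the Ansatz produces the term $\nablamoi\cdot\Kmoi^2$, which encodes derivative information about the eigenframe of $\Kmoi$ that cannot be rewritten purely in terms of $\del_b\phipoi_0,\del_b\phipoi_1$. Testing against singularity data sets on which $\phimoi_0,\phimoi_1$ are constant while $\Kmoi$ varies in space forces $a_2\equiv 0$, yielding the crucial proportionality $\Kpoicirc=\Omega\,\Kmoicirc$ for some scalar invariant $\Omega$. Matching $\Tr\Kpoi^2=1-8\pi\phipoi_0^2$ then gives either the degenerate branch $\Omega\equiv 0$ or $\Omega=\epsilon\,\rpoi/\rmoi$ with a sign $\epsilon\in\{\pm 1\}$.

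I then split on $\Omega$. In the isotropic branch $\Omega\equiv 0$, one has $\rpoi\equiv 0$ so $\phipoi_0=\epsilon/\sqrt{12\pi}$, $\Kpoi=\tfrac{1}{3}\delta$, and the momentum equation collapses to $\del_b\phipoi_1=0$, forcing $\phipoi_1$ to be a constant $\varphi\in\RR$; the metric is then unconstrained and, decomposed in the commutant of $\Kmoi$ spanned by $\{\delta,\cos\Theta_-,\cos 2\Theta_-\}$, yields exactly $\Siso_{\alpha_0,\alpha_1,\alpha_2,\varphi,\epsilon}$ as in~\eqref{Siso}. In the anisotropic branch, I write $\gpoi=\gmoi\cdot e^{2M}$ with $M$ symmetric and commuting with $\Kmoi$; plugging into the momentum constraint and reading off coefficients of $\del_b\thetamoi,\del_b\phimoi_0,\del_b\phimoi_1$ projected onto the three eigendirections of $\Kmoi$ yields an overdetermined linear system for the entries of $M$. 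Solvability is equivalent to the compatibility relations~\eqref{Sani-dkappa}, i.e.\ to the volume-preservation condition~(iii) on $\Phi$, and the solution recovers the exponent in~\eqref{Sani-gpoi} with $\xi$ defined by~\eqref{Sani-kappa}. Continuity of $\Sbf$ across the boundary $\phimoi_0=\pm 1/\sqrt{12\pi}$ (where $r=0$) produces conditions~(ii) and~(iv); the constant $c>0$ emerges as the associated integration constant.

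The subclass refinements then follow by direct inspection: quiescence preservation translates via the Kasner triangle (\autoref{fig:Kasner-triangle}) into the displayed inequality on $\abs{\Phi_0}$ and $\phimin$; invertibility reduces to pointwise bijectivity of $\Phi(\theta,\,\cdot\,,\,\cdot\,)$, with the inverse computed from~\eqref{Sani-Kpoi}--\eqref{Sani-gpoi}; shift covariance applied to~\eqref{Sani-kappa}--\eqref{Sani-gpoi} forces $\Phi_1=a\phimoi_1+af(\thetamoi,\phimoi_0)$ for some constant $a\neq 0$ and function $f$, and then~(iii) pins down $\Phi_0/r(\Phi_0)=\epsilon a^{-1}\phimoi_0/r(\phimoi_0)$; momentum-preservation and -reversal correspond to $a=\pm 1$, with the remaining idempotence and quiescence checks read off from the explicit formulas. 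The main obstacle is the anisotropic branch: beyond bookkeeping of the overdetermined system for~$M$, one must control the behavior where the spectrum of $\Kmoi$ degenerates (two or three Kasner exponents coinciding), where the notations $\cos\Theta_-,\sin\Theta_-$ become singular although the underlying tensors remain smooth. Resolving this requires working in a manifestly covariant basis such as $\{\delta,\Kmoicirc,(\Kmoicirc)^2-\tfrac{1}{3}\Tr(\Kmoicirc^2)\delta\}$ and matching onto the polar-coordinate formulas away from degeneracies.
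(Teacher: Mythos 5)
You take essentially the same route as the paper: reduce by ultralocality and $GL(3,\RR)$-covariance to a polynomial Ansatz for $\Kpoi$ and $\log(g_-^{-1}\gpoi)$ in $\{\delta,\Kmoicirc,\Kmoicirc^2\}$, use the asymptotic momentum constraint to eliminate the quadratic term, split into isotropic and anisotropic branches, and handle the subclass refinements by inspection. This matches the paper's structure closely.

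There is, however, a genuine gap where you pass from the proportionality $\Kpoicirc=\Omega\,\Kmoicirc$ to the dichotomy ``$\Omega\equiv 0$ or $\Omega=\epsilon\,\rpoi/\rmoi$''. The asymptotic Hamiltonian constraint gives only the pointwise relation $|\Omega|=\rpoi/\rmoi$; since $\Omega$ is a scalar invariant of the input data, nothing so far prevents a \emph{mixed} map in which $\Omega$ vanishes for some input values $(\theta_-,\phimoi_0,\phimoi_1)$ and not others. The paper closes this by proving that the densitized ratio $\omega\beta_1$ (with $\omega=\sqrt{|\gpoi|/|\gmoi|}$) is a \emph{universal constant} $\gamma$, not merely a scalar invariant; see \autoref{cor:Kg-scaling}. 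This rests on \autoref{lem:eigen-deriv}: the momentum constraint, after inserting the linear Ansatz, produces a relation of the form $\del_a(\omega\beta_1)\,\Kmoicirc{}^a_b=\sum_I\chi_I\del_b\zeta_I$, and testing this against the specially constructed data of Lemmas~\ref{lem:prescribed-values}--\ref{lem:derivatives} forces $\omega\beta_1=\gamma$ to be constant. Only then does the dichotomy $\gamma=0$ versus $\gamma\neq 0$ hold, and only then is the constant $c=|\gamma|^{1/3}$ in $\Sani_{\Phi,c,\epsilon}$ actually a constant — you assert that $c$ ``emerges as an integration constant'', but without the constancy of $\gamma$ this is not established. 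Relatedly, to kill the quadratic term you need test data on which \emph{all three} scalar invariants $\theta,\phimoi_0,\phimoi_1$ are locally constant while $\nablamoi_a(\Kcirc^2)^a_b\neq 0$ (which is what \autoref{lem:derivatives} arranges); merely having ``$K_-$ vary in space'' with constant $\phimoi_0,\phimoi_1$ does not guarantee this non-vanishing.
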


This generalizes the classification of rigidly conformal ultralocal maps in \autoref{prop-conformal-ultralocal}.
We observe that both families of ultralocal scattering maps simply scale the densitized trace-free extrinsic curvature $\sqrt{|g|}\,\Kcirc$ by a constant factor~$\gamma$, with $\gamma=0$ for~$\Siso_{\alpha_0,\alpha_1,\alpha_2,\varphi,\epsilon}$ and $\gamma=\epsilon c^3$ for~$\Sani_{\Phi,c,\epsilon}$.

\begin{corollary}[Scaling of trace-free second fundamental form]
\label{cor:Kg-scaling}
Under an ultralocal singularity scattering map~$\Sbf$, the trace-free part of $\sqrt{|g|}\,K$ scales by some constant $\gamma\in\RR$ that depends only on~$\Sbf$:
\be
\sqrt{|\gpoi|}\,\Bigl( \Kpoi - \tfrac{1}{3}\delta \Bigr)
= \gamma\sqrt{|\gmoi|}\,\Bigl( \Kmoi - \tfrac{1}{3}\delta \Bigr) .
\ee
\end{corollary}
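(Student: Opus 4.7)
The plan is to dispatch the statement by case analysis on the two families from Theorem 4.2 (the classification of ultralocal scattering maps), showing that $\gamma = 0$ in the isotropic case and $\gamma = \epsilon c^3$ in the anisotropic case, matching the claim made right before the corollary. Since the identity is tensorial and both sides are ultralocal, it suffices to verify it at each point $x \in \Hcal$.

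First I would handle the isotropic case $\Sbf = \Siso_{\alpha_0,\alpha_1,\alpha_2,\varphi,\epsilon}$. Here $\Kpoi = \tfrac{1}{3}\delta$ identically, so $\Kpoi - \tfrac{1}{3}\delta = 0$, making the claimed identity hold trivially with $\gamma = 0$. Next, for the anisotropic case $\Sbf = \Sani_{\Phi,c,\epsilon}$, I would use the scaling relation \eqref{Sani-Kpoi}, namely $\Kpoicirc = \epsilon(\rpoi/\rmoi)\Kmoicirc$ when $\rmoi \neq 0$ (and both sides vanish when $\rmoi = 0$, by condition~(ii) in the definition of a canonical transformation).

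The main computational step is evaluating $\sqrt{|\gpoi|}/\sqrt{|\gmoi|}$ from the explicit formula~\eqref{Sani-gpoi}. Writing $\gpoi = c^2(\rmoi/\rpoi)^{2/3} \exp(A)\,\gmoi$ with
\[
A = 16\pi\epsilon\,\xi\cos\Theta_- - 16\pi\epsilon\bigl(\del_{\thetamoi}\xi + \tfrac{\phipoi_0}{\rpoi}\del_{\thetamoi}\phipoi_1\bigr)\sin\Theta_-,
\]
the key observation is that $\Tr(\cos\Theta_-) = 0$ and $\Tr(\sin\Theta_-) = 0$, since both are sums of cosines (resp.\ sines) of three angles differing by $2\pi/3$. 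Hence $\Tr A = 0$, and therefore $\det(\exp A) = 1$. Taking determinants gives $|\gpoi|/|\gmoi| = c^6(\rmoi/\rpoi)^2$, so $\sqrt{|\gpoi|} = c^3(\rmoi/\rpoi)\sqrt{|\gmoi|}$ (the ratio $\rmoi/\rpoi$ being positive and well-defined away from the boundary, with a finite nonzero limit at the boundary by~(ii)).

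Combining the two scalings yields
\[
\sqrt{|\gpoi|}\,\Kpoicirc
= c^3 \tfrac{\rmoi}{\rpoi}\sqrt{|\gmoi|}\cdot \epsilon \tfrac{\rpoi}{\rmoi}\Kmoicirc
= \epsilon c^3\sqrt{|\gmoi|}\,\Kmoicirc,
\]
so $\gamma = \epsilon c^3$, confirming the claimed constant. The boundary case $\rmoi = 0$ is handled by continuity, since both $\Kmoicirc$ and $\Kpoicirc$ vanish there. I expect no serious obstacle: the proof is essentially bookkeeping, with the only subtle point being the tracelessness of the exponent in~\eqref{Sani-gpoi}, which is what conspires to cancel the $(\rmoi/\rpoi)^{2/3}$ anisotropy factor and produce an $\Sbf$-dependent but data-independent constant $\gamma$.
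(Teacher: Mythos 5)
Your proof is correct and follows essentially the route the paper has in mind: the corollary is stated immediately after \autoref{theorem-ultralocal} as a direct observation from the classification, with $\gamma=0$ for $\Siso$ and $\gamma=\epsilon c^3$ for $\Sani$, exactly the case analysis you carry out (the key step being $\Tr\cos\Theta_-=\Tr\sin\Theta_-=0$ so that $\det\gpoi/\det\gmoi=c^6(\rmoi/\rpoi)^2$, which cancels the $\rpoi/\rmoi$ in \eqref{Sani-Kpoi}). For completeness, note that the paper also records a second, classification-free derivation in \autoref{ssec:scaling}: after establishing the polynomial structure \autoref{lem:poly}, \autoref{lem:eigen-deriv} applied to the momentum constraint directly forces $\omega\beta_1$ to be constant, which is \eqref{proof-Kg-scaling}; this alternate route is logically prior to (and used in) the proof of the classification itself.
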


Different applications call for imposing different restrictions on the scattering maps.
Our local existence theory obtained in \autoref{theo:391} requires quiescence-preserving maps, to avoid BKL oscillations.
Our global existence theory in plane symmetry in~\cite{LLV-3}.
treats both sides of timelike singularity hypersurfaces on an equal footing, hence requires invertible maps, and we focus for definiteness on momentum-preserving maps, also characterized as quiescence-preserving shift-covariant maps whose inverse has the same properties.
As per \autoref{theorem-ultralocal} these are $\Sani_{\Phi,c,+}$ with $\Phi_0=\pm\phi_0$ and $\Phi_1=\pm(f(\theta,\phi_0)+\phi_1)$.

\paragraph{Proof of which maps are quiescence-preserving, idempotent, shift-covariant, etc.}

Proving the classification of scattering maps is somewhat involved, so we delay it until \autoref{section---5}.  For now we prove the second part of the theorem.
We restore the indices $\pm$ in the singularity scattering data, namely we denote $(\gpoi,\Kpoi,\phipoi_0,\phipoi_1)=\Sbf(\gmoi,\Kmoi,\phimoi_0,\phimoi_1)$

First consider $\Siso$.
It yields $\Kpoi=\frac{1}{3}\delta>0$ so it is quiescence-preserving.
It is manifestly not invertible (hence not idempotent).
For this class of scattering maps, shift-covariance states that the metric is unchanged upon shifting~$\phimoi_1$, which means that the functions $\alpha_0,\alpha_1,\alpha_2$ describing the change of metric are only functions of $\theta_-$ and~$\phimoi_0$.

We thus concentrate henceforth on~$\Sani$.
This anisotropic scattering map preserves quiescence provided it maps $\Kmoi>0$ to $\Kpoi>0$.
Given a Kasner angle~$\theta_-$, the corresponding extrinsic curvature $\Kmoi$ is positive if and only if its eigenvalues $1/3+(2/3)r_-\cos(\theta_-+2\pi j/3)$, $j=0,1,2$, are positive.  This holds for small enough~$r_-$, specifically $r_-<1/(2\cos\widehat\theta_-)$ with $\widehat\theta_-$ defined in~\eqref{phimin}.  Equivalently, the condition for $\Kmoi>0$ is $|\phimoi_0|>\phimin(\theta_-)$, and likewise $\Kpoi>0$ is equivalent to $|\phipoi_0|>\phimin(\theta_+)$.
Since $\theta_-=\theta_+$ for $\epsilon=+1$ and $\theta_++\pi$ for $\epsilon=-1$ we get the condition in the theorem.

We turn to invertibility or idempotence.
On general grounds, the inverse of an invertible scattering map $\Sani_{\Phi_1,c_1,\epsilon_1}$ is an invertible scattering map, which must be of the form $\Sani_{\Phi_2,c_2,\epsilon_2}$ because $\Siso_{\alpha_0,\alpha_1,\alpha_2,\varphi,\epsilon}$ is never invertible.
Let us check that (even without assuming invertibility)
\bel{Sani-compose}
\Sani_{\Phi_2,c_2,\epsilon_2} \circ \Sani_{\Phi_1,c_1,\epsilon_1} = \Sani_{\Phi_3,c_3,\epsilon_3}
\ee
with $c_3=c_2c_1$, $\epsilon_3=\epsilon_2\epsilon_1$, and $\Phi_3(\theta,\phi_0,\phi_1) = \Phi_2(\theta+\pi\delta_{\epsilon_1=-1},\Phi_1(\theta,\phi_0,\phi_1))$.
On general grounds, composing two ultralocal scattering maps gives an ultralocal scattering map, and we have the full classification available, so we simply need to fix parameters (the composition can manifestly not be of the form $\Siso$).
The sign $\epsilon_3$ is fixed by comparing Kasner angles: the phase $e^{i\theta}$ is multiplied by $\epsilon_1$ and then by $\epsilon_2$ upon applying the two scattering maps.
The scalar fields $(\phi_0,\phi_1)$ then manifestly transform according to the composition $\Phi_2(\theta+\pi\delta_{\epsilon_1=-1},\,\cdot\,,\,\cdot\,)\circ\Phi_1(\theta,\,\cdot\,,\,\cdot\,)$.
From volume factors of the metric we find $c_2^2c_1^2=c_3^2$.
This establishes~\eqref{Sani-compose}.
Imposing that $\Phi_3=\Id$, $c_3=1$, $\epsilon_3=1$ gives the characterization of invertible maps in the theorem, and their inverse.
Imposing further that $\Phi_2=\Phi_1$, $c_2=c_1$, $\epsilon_2=\epsilon_1$ gives the characterization of idempotent maps.

Next, we determine which $\Sani_{\Phi,c,\epsilon}$ are shift-covariant.
Shift-covariance sets $\del_{\phimoi_1}\phipoi_0=0$ and $\del_{\phimoi_1}\phipoi_1=a$ for some constant $a\in\RR$.
Additionally, $\Phi\colon(\thetamoi,\phimoi_0,\phimoi_1)\mapsto(\phipoi_0,\phipoi_1)$ must be an $\epsilon$-canonical transformation in the sense of \autoref{def:canonical-transfo}.
The condition of preserving volume in phase space is crucial: it gives
\[
\del_{\phimoi_0} \biggl( a \frac{\phipoi_0}{r(\phipoi_0)} - \epsilon \frac{\phimoi_0}{r(\phimoi_0)}\biggr) = 0,
\]
which is only possible provided $a\neq 0$.  We get $\phipoi_0/r(\phipoi_0) = \epsilon a^{-1} \phimoi_0/r(\phimoi_0) + b(\theta)$ where the integration constant~$b$ cannot depend on~$\phimoi_1$ because $\phipoi_0$~does not.
Boundary regularity requires $a\phipoi_0/r(\phipoi_0) - \epsilon\phimoi_0/r(\phimoi_0) = a\,b(\theta)$ to tend to zero as $r\to 0$.
This fixes $b=0$, hence gives the main characterization of shift-covariant maps in the theorem: $\phipoi_0/r(\phipoi_0) = \epsilon a^{-1} \phimoi_0/r(\phimoi_0)$.
The condition $\del_{\phimoi_1}\phipoi_1=a$ states that $a^{-1}\phipoi_1-\phimoi_1$ is a function $f(\theta,\phimoi_0)$.
We conclude that {\sl shift-covariant} $\Sani$~are characterized by $f,a,c,\epsilon$ and given explicitly by
\bse\label{Sani-shift-cov}
\be
\gathered
\phipoi_0 = \epsilon a^{-1} \mu^{-3} \phimoi_0 , \qquad
\phipoi_1 = a f(\theta_-,\phimoi_0) + a \phimoi_1 , \qquad
\Kpoicirc = \epsilon \mu^{-3} \Kmoicirc , \\
\gpoi = c^2 \mu^2 \exp\biggl(16\pi\epsilon\xi\cos\Theta_- - 16\pi\del_{\theta_-} \biggl(\epsilon\xi+ \frac{\phimoi_0}{r(\phimoi_0)}f\biggr)\sin\Theta_-\biggr) \gmoi
\endgathered
\ee
where we restored the $\pm$ indices that denote both sides of the singularity and where
\be
\mu=\mu(\phimoi_0) = (1 + 12\pi\phimoi_0^2(a^{-2}-1))^{1/6}, \qquad
\xi=\xi(\theta_-,\phimoi_0) = -\epsilon \int_{-1/\sqrt{12\pi}}^{\phimoi_0} \del_y f(\theta_-,y) \frac{y\,dy}{r(y)} .
\ee
It remains to translate the conditions on~$\Phi$ in \autoref{def:canonical-transfo} in terms of~$f$.
We find that~$f$ must be $2\pi/3$ (spacelike case) or $2\pi$ (timelike case) periodic and even in~$\theta$,
that $\del_\theta f(\theta,\phi_0)=o(r(\phi_0))$ as $\phi_0\to \pm 1/\sqrt{12\pi}$, and that
\bel{Sani-f-tech}
\xi\biggl(\theta_-,\frac{\pm 1}{\sqrt{12\pi}}\biggr) = 0 , \quad\text{namely}\quad
\int_{-1/\sqrt{12\pi}}^{1/\sqrt{12\pi}} \del_y f(\theta,y)\,\frac{y\,dy}{r(y)} = 0 .
\ee
\ese
It is easy to check from~\eqref{Sani-compose} that all shift-covariant~$\Sani$ are invertible, with inverse obtained by replacing $c\to 1/c$, $a\to 1/a$ and changing $f$ to the map
\[
(\theta,\phi_0)\mapsto -af\bigl(\theta+\pi\delta_{\epsilon=-1},\epsilon a \phi_0 \bigl(1+12\pi\phi_0^2(a^2-1)\bigr)^{-1/2} \bigr) .
\]
Idempotence then requires $c=1$ (recall $c>0$), $a=\pm 1$, and $f(\theta,\phi_0)=-af(\theta+\pi\delta_{\epsilon=-1},\epsilon a \phi_0)$.
If $\epsilon=a=+1$ this condition is that $f=0$.
If $\epsilon=+1$ and $a=-1$ this condition is that $f$ be an even function of~$\phi_0$.
If $\epsilon=-1$ and $a=+1$ this condition is that $f(\theta,\phi_0)=-f(\theta+\pi,-\phi_0)$.
If $\epsilon=-1$ and $a=-1$ this condition is that $f$ be $\pi$-periodic in~$\theta$.

We determine now under which condition on $f,a,c,\epsilon$ the shift-covariant maps~$\Sani$ are quiescence-preserving.
Since $\phi_0\mapsto\phi_0/r(\phi_0)$ is monotonic, the condition $\abs{\phi_0}>\phimin(\theta)$ translates to
\[
\biggl|\frac{\phi_0}{r(\phi_0)}\biggr|
> \biggl|\frac{\phimin(\theta)}{r(\phimin(\theta))}\biggr|
= \biggl(\frac{4\cos^2\widehat{\theta}-1}{12\pi}\biggr)^{1/2} ,
\qquad \text{where} \quad
\widehat{\theta} = \theta - \frac{2\pi}{3} \biggl\lfloor\frac{3\theta}{2\pi}-1\biggr\rfloor \in \biggl[\frac{2\pi}{3},\frac{4\pi}{3}\biggr).
\]
For $\epsilon=+1$, we want $\abs{\phipoi_0/r(\phipoi_0)}$ to be greater than this lower bound whenever $\abs{\phimoi_0/r(\phimoi_0)}$ is.  Since $\abs{\phipoi_0/r(\phipoi_0)}=\abs{a}^{-1}\abs{\phimoi_0/r(\phimoi_0)}$, the condition holds if and only if $\abs{a}\leq 1$.
For $\epsilon=-1$ the relevant angles~$\theta$ differ by~$\pi$ so the lower bounds are different.
Let us write the condition for $\theta_-=0$, $\theta_+=\pi$: then $\widehat{\theta}_-=2\pi/3$ so $\cos\widehat{\theta}_-=-1/2$ and we want the following implication
\[
\biggl|\frac{\phimoi_0}{r(\phimoi_0)}\biggr| > 0
\implies \bigg|\frac{\phipoi_0}{r(\phipoi_0)}\biggr| > \frac{1}{\sqrt{4\pi}} .
\]
There is no~$a$ that would ensure this, because the premise is obeyed by arbitrarily small $\phimoi_0/r(\phimoi_0)$, which lead to arbitrarily small $\phipoi_0/r(\phipoi_0)=\epsilon a^{-1}\phimoi_0/r(\phimoi_0)$.

Finally, we consider momentum-preserving and momentum-reversing maps, for which all of $\abs{\phi_0}$, $\abs{k_1-1/3}$, $\abs{k_2-1/3}$, $\abs{k_3-1/3}$ are continuous through the bounce.
In fact it is enough to require any one of them to be continuous: first, $\Siso$~is immediately ruled out, then, we observe for $\Sani$ that each $\abs{k_a-1/3}$ and $r(\phi_0)$ scales by the same factor $\rpoi/\rmoi$ upon crossing the singularity, so if any of them is continuous all of them are.
We can thus write $\phipoi_0=\epsilon a\phimoi_0$ for some sign $a=\pm 1$.
This sign is constant for $\phimoi_0>0$ and constant for $\phimoi_0<0$ in order for $\phipoi_0$~to remain continuous.
For $\phimoi_0\neq 0$, volume-preservation then imposes $\del_{\phimoi_1}\phipoi_1=a$, thus $\phipoi_1=a(f(\theta,\phimoi_0)+\phimoi_1)$.  Continuity of $\phipoi_1$ as we dial $\phimoi_0$ from positive to negative then forces the sign~$a$ to be the same for $\phimoi_0\gtrless 0$.
Altogether, the canonical transformation~$\Phi$ coincides with the particular case $a=\pm 1$ of what we found for shift-covariant maps:
  \bel{scattering-map-antipodal-K}
  \Kpoicirc = \epsilon \Kmoicirc , \qquad
  (\phipoi_0,\phipoi_1) = \pm \bigl(\epsilon \phimoi_0, \phimoi_1 + f(\theta,\phimoi_0)\bigr) ,
  \ee
  where $f$ is subject to the technical condition $\int_{-1/\sqrt{12\pi}}^{1/\sqrt{12\pi}} (yf'(y)dy/r(y))=0$,
  and $\gpoi$ is given as in~\eqref{Sani-gpoi}.

\subsection{Conformal scattering maps} 
\label{subsec-example-comparison}

We discuss here an interesting class of scattering maps in which the asymptotic profiles before and after the singularity are related by a conformal transformation.
In \autoref{secti-23} we introduced the notion of rigidly conformal scattering maps, which are such that $\gpoi$ and $\gmoi$ are in the same conformal class.  Note, however, that $g_{\pm}$~are simply convenient quantities to parametrize the asymptotic profiles $|s|^{2K_{\pm}}g_{\pm}$ for $s\gtrless 0$ by their values at $s=\pm 1$.  It is quite natural, thus, to compare the asymptotic profiles at other values of~$s$: this yields the more flexible notion of ``conformal scattering maps''.  For maps that are additionally ultralocal, as we will see, the whole asymptotic profiles are then conformally related in a suitable sense, reminiscent of the conformal cyclic cosmology proposal of Penrose, Tod, L\"ubbe, and others~\cite{LubbeTod,Lubbe,Tod:2002wd}.  An important difference is that we are considering here junctions of a Big Crunch with a Big Bang, while the conformal cyclic cosmology proposal maps (spacelike) future null infinity of an approximately de Sitter spacetime to the Big Bang of a new aeon.

\begin{definition}\label{def-conformal}
  A {\bf conformal scattering map} on a $3$-manifold~$\Hcal$ is a scattering map~$\Sbf$ such that for any data $(\gmoi,\Kmoi,\phimoi_0,\phimoi_1)$ and its image $(\gpoi,\Kpoi,\phipoi_0,\phipoi_1)$ under~$\Sbf$,
 for each point $x\in\Hcal$ there exists $s_-(x)<0<s_+(x)$ such that the metrics $|s_-|^{2\Kmoi}\gmoi$ and $|s_+|^{2\Kpoi}\gpoi$ are in the same conformal class.
\end{definition}

Let $\Sbf$ be a conformal scattering map that is ultralocal.
By \autoref{cor:Kg-scaling}, ultralocality implies $\Kpoi=\tfrac{1}{3}(1-\zeta)\delta+\zeta\Kmoi$ for $\zeta=\gamma\sqrt{|\gmoi|/|\gpoi|}$.
Conformality states that $|s_+|^{2\Kpoi}\gpoi=\Omega^2|s_-|^{2\Kmoi}\gmoi$ for some space-dependent $s_-<0<s_+$ and scalar factor~$\Omega$.  For any point $x\in\Hcal$ and $s>0$ we can thus rewrite the asymptotic profile after the singularity as
\bel{calc-profile-conf}
\aligned
|s|^{2\Kpoi}\gpoi & = \Bigl|\frac{s}{s_+(x)}\Bigr|^{2\Kpoi} |s_+(x)|^{2\Kpoi}\gpoi \\
& = \Bigl|\frac{s}{s_+(x)}\Bigr|^{2(1-\zeta)/3+2\zeta\Kmoi} \Omega^2 |s_-(x)|^{2\Kmoi}\gmoi 
= \biggl( \Omega \Bigl|\frac{s}{s_+(x)}\Bigr|^{(1-\zeta)/3}\biggr)^2 \biggl|\Bigl|\frac{s}{s_+(x)}\Bigr|^\zeta s_-(x)\biggr|^{2\Kmoi}\gmoi .
\endaligned
\ee
This means that the asymptotic spatial metric at $s>0$ is in the same conformal class as the one at $|s/s_+(x)|^\zeta s_-(x)<0$.  We change slightly the notation $s,s_\pm$ used in this derivation to state the following proposition.

\begin{proposition}[Conformal asymptotic profiles]\label{prop:conf-alls}
  Let $\Sbf$ be an ultralocal conformal scattering map on a $3$-manifold~$\Hcal$, and let $(g_{\pm},K_{\pm},\phi_{0\pm},\phi_{1\pm})$ be singularity data and its image under~$\Sbf$.  For any $x\in\Hcal$ and any $s_+>0$ there exists $s_-(x,s_+)<0$ such that $|s_{\pm}|^{2K_{\pm}}g_{\pm}$ are in the same conformal class.
\end{proposition}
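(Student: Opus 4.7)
}

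The plan is to bootstrap the existence of a single pair $(s_-(x), s_+(x))$ guaranteed by \autoref{def-conformal} to a one-parameter family of such pairs, by exploiting the linear relation between $K_+$ and $K_-$ enforced by ultralocality. First I would fix a point $x \in \Hcal$ and invoke \autoref{cor:Kg-scaling}: since $\Sbf$ is an ultralocal scattering map, there is a constant $\gamma \in \RR$ such that the densitized trace-free parts of the extrinsic curvatures satisfy $\sqrt{|\gpoi|}(\Kpoi - \tfrac{1}{3}\delta) = \gamma \sqrt{|\gmoi|}(\Kmoi - \tfrac{1}{3}\delta)$. Setting $\zeta \coloneqq \gamma \sqrt{|\gmoi|/|\gpoi|}$ (a scalar that depends only on~$x$), this rearranges to
\[
\Kpoi \ = \ \frac{1-\zeta}{3}\,\delta \ + \ \zeta \,\Kmoi.
\]

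Next I would apply \autoref{def-conformal} to obtain some preferred $s_-^{\mathrm{o}}(x) < 0 < s_+^{\mathrm{o}}(x)$ and a positive scalar $\Omega(x)$ with $|s_+^{\mathrm{o}}|^{2\Kpoi}\gpoi = \Omega^2 |s_-^{\mathrm{o}}|^{2\Kmoi}\gmoi$ at~$x$. For arbitrary $s_+ > 0$, I would then rewrite
\[
|s_+|^{2\Kpoi}\gpoi \ = \ \Bigl|\tfrac{s_+}{s_+^{\mathrm{o}}}\Bigr|^{2\Kpoi}\, |s_+^{\mathrm{o}}|^{2\Kpoi}\gpoi,
\]
and use that $\frac{1-\zeta}{3}\delta$ is a scalar matrix (hence commutes with $\Kmoi$) to split the matrix exponential
\[
\Bigl|\tfrac{s_+}{s_+^{\mathrm{o}}}\Bigr|^{2\Kpoi} \ = \ \Bigl|\tfrac{s_+}{s_+^{\mathrm{o}}}\Bigr|^{2(1-\zeta)/3} \,\Bigl|\tfrac{s_+}{s_+^{\mathrm{o}}}\Bigr|^{2\zeta \Kmoi}.
\]
Substituting the preferred conformal relation for $|s_+^{\mathrm{o}}|^{2\Kpoi}\gpoi$ and collecting the scalar factors from the $\Kmoi$-exponentials reproduces the computation already sketched in~\eqref{calc-profile-conf}, yielding
\[
|s_+|^{2\Kpoi}\gpoi \ = \ \Bigl(\Omega \,\Bigl|\tfrac{s_+}{s_+^{\mathrm{o}}}\Bigr|^{(1-\zeta)/3}\Bigr)^{\!2}\,\bigl|s_-(x, s_+)\bigr|^{2\Kmoi}\gmoi,
\]
where I define
\[
s_-(x, s_+) \ \coloneqq \ -\, \Bigl|\tfrac{s_+}{s_+^{\mathrm{o}}(x)}\Bigr|^{\zeta} \,|s_-^{\mathrm{o}}(x)|  \ < \ 0.
\]
This is exactly the desired conformal identification of $|s_+|^{2\Kpoi}\gpoi$ with $|s_-|^{2\Kmoi}\gmoi$ at the point~$x$, completing the proof.

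The argument is almost entirely formal once \autoref{cor:Kg-scaling} is in hand; the only subtle point to check carefully is that the factorization of $|s_+/s_+^{\mathrm{o}}|^{2\Kpoi}$ is valid, which relies crucially on $\Kpoi - \zeta\Kmoi$ being a scalar multiple of the identity, so that the two factors commute and each matrix exponential reduces to either a genuine scalar power of~$|s_+/s_+^{\mathrm{o}}|$ or a power of~$\Kmoi$. That commutativity is precisely what is guaranteed by ultralocality through \autoref{cor:Kg-scaling}, so no real obstacle arises; the proposition is essentially a clean corollary of the corollary.
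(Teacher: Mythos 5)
Your proof is correct and follows essentially the same route as the paper: invoke \autoref{cor:Kg-scaling} to write $\Kpoi$ as an affine expression in $\Kmoi$, then split the matrix power $|s_+/s_+^{\mathrm{o}}|^{2\Kpoi}$ into a scalar factor and a $|\cdot|^{2\zeta\Kmoi}$ factor, exactly the computation the paper records in~\eqref{calc-profile-conf}. The only cosmetic difference is that you give names $s_\pm^{\mathrm{o}}$ to the reference pair supplied by \autoref{def-conformal}, which is a harmless clarification.
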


The analogous statement exchanging $s_+$ and~$s_-$ only holds if we exclude the isotropic scattering map~$\Siso_{\alpha_0,\alpha_1,\alpha_2,\varphi,\epsilon}$.
Indeed, $\zeta=0$ in~\eqref{calc-profile-conf} in this case, so the asymptotic metric for any $s>0$ is conformal to the asymptotic metric at the same fixed proper time $s_-(x)<0$ before the singularity, while the asymptotic metrics for other $s<0$ are in a different conformal class.
The other family $\Sani_{\Phi,c,\epsilon}$ of ultralocal scattering maps, on the other hand, has $\zeta\neq 0$ so that $s_+\mapsto s_-(x,s_+)$ is a bijection and can be inverted to $s_-\mapsto s_+(x,s_-)$ for each~$x$.

To finish off our discussion of conformal ultralocal scattering maps we write down their classification, which is a straightforward consequence of \autoref{theorem-ultralocal}.
By \autoref{prop:conf-alls} it is enough to check whether the asymptotic metric at $s_+=1$ is conformal to some metric with $s_-<0$.
Thus, among $\Siso_{\alpha_0,\alpha_1,\alpha_2,\varphi,\epsilon}$ and $\Sani_{\Phi,c,\epsilon}$ we seek maps such that $\gpoi=\Omega^2|s_-|^{2\Kmoi}\gmoi$ for some scalars $\Omega,s_-$, or equivalently, $\exp(a\delta+b\cos\Theta_-)\gmoi$ for some scalars $a,b$ (we recall $\Kmoi=\tfrac{1}{3}\delta+\tfrac{2}{3}\rmoi\cos\Theta_-$).
For $\Siso_{\alpha_0,\alpha_1,\alpha_2,\varphi,\epsilon}$ we immediately find the condition to be that the coefficient $\alpha_2$ of $\cos(2\Theta_-)$ must vanish.
For $\Sani_{\Phi,c,\epsilon}$ the coefficient of $\sin\Theta_-$ must vanish.  Since it is known to vanish at the boundary $\rmoi=0$ (by isotropy), we simply write that its $\phimoi_0$ derivative vanishes.  To make the proposition self-contained we replace the auxiliary function $\xi$ in~\eqref{Sani-gpoi} using~\eqref{Sani-dkappa}.  After expanding derivatives and cancelling some terms the relation we find is surprisingly simple.

\begin{proposition}[Conformal ultralocal scattering maps for self-gravitating scalar fields]
  \label{conformal-ultralocal}
  Spacelike or timelike, conformal ultralocal scattering maps~$\Sbf$ are $\Siso_{\alpha_0,\alpha_1,\alpha_2,\varphi,\epsilon}$ with $\alpha_2=0$, and $\Sani_{\Phi,c,\epsilon}$ with
  \[
  \del_\theta \Phi_0(\theta,\phi_0,\phi_1) \del_{\phi_0} \Phi_1(\theta,\phi_0,\phi_1)
  = \del_{\phi_0} \Phi_0(\theta,\phi_0,\phi_1) \del_\theta \Phi_1(\theta,\phi_0,\phi_1) .
  \]
\end{proposition}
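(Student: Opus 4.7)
My plan is to invoke the classification in \autoref{theorem-ultralocal} and impose the conformality condition provided by \autoref{prop:conf-alls} on each of the two families. Since that proposition allows me to test conformality at a single value $s_+=1$, I only need to check when $\gpoi=\Omega^2|s_-|^{2\Kmoi}\gmoi$ can be solved for scalar functions $\Omega,s_-$ of $(\thetamoi,\phimoi_0,\phimoi_1)$. Writing $\Kmoi=\tfrac{1}{3}\delta+\tfrac{2}{3}\rmoi\cos\Theta_-$ and observing that $\gpoi$ is diagonal in a common eigenframe with $\Kmoi$ for both $\Sbf=\Siso$ and $\Sbf=\Sani$, the requirement becomes that $\log(\gpoi\gmoi^{-1})$, read in that diagonal basis, has no $\sin\Theta_-$, $\cos(2\Theta_-)$, or $\sin(2\Theta_-)$ component.

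For the isotropic family \eqref{Siso}, $\log(\gpoi\gmoi^{-1})=\alpha_0\delta+\alpha_1\cos\Theta_-+\alpha_2\cos(2\Theta_-)$, so I would verify that at a generic value of $\theta$ the three diagonal matrices $\delta$, $\cos\Theta_-$, $\cos(2\Theta_-)$ are linearly independent (an elementary $3\times 3$ Vandermonde-type check at, say, $\theta=\pi/6$). Continuity of $\alpha_2$ then forces $\alpha_2\equiv 0$, which is the stated isotropic characterization.

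For the anisotropic family, the exponent in \eqref{Sani-gpoi} has no $\cos(2\Theta_-)$ or $\sin(2\Theta_-)$ components built-in, so only the $\sin\Theta_-$ coefficient can obstruct conformality, and the condition reduces to
\[
Q(\thetamoi,\phimoi_0,\phimoi_1) \coloneqq \del_{\thetamoi}\xi + \frac{\phipoi_0}{\rpoi}\del_{\thetamoi}\phipoi_1 \equiv 0 .
\]
The key computation is to differentiate the defining integral \eqref{Sani-kappa} in $\thetamoi$ and integrate by parts in $y$, using boundary condition~(iv) of \autoref{def:canonical-transfo} to kill the boundary term at $y=-1/\sqrt{12\pi}$ (where $r(\Phi_0)\to 0$ but $(\Phi_0/r(\Phi_0))\del_{\theta}\Phi_1\to 0$ by assumption). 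This yields the explicit representation
\[
Q(\thetamoi,\phimoi_0,\phimoi_1) = -\int_{-1/\sqrt{12\pi}}^{\phimoi_0} \frac{(\del_\theta\Phi_0\,\del_{\phi_0}\Phi_1 - \del_{\phi_0}\Phi_0\,\del_\theta\Phi_1)(\thetamoi,y,\phimoi_1)}{r(\Phi_0(\thetamoi,y,\phimoi_1))^3}\,dy ,
\]
whose value at the lower endpoint $\phimoi_0=-1/\sqrt{12\pi}$ is automatically zero. Consequently $Q\equiv 0$ as a function of $\phimoi_0$ is equivalent to the pointwise vanishing of the integrand, which is precisely the identity $\del_\theta\Phi_0\,\del_{\phi_0}\Phi_1 = \del_{\phi_0}\Phi_0\,\del_\theta\Phi_1$ claimed in the proposition.

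The main delicate point will be the boundary bookkeeping in this integration by parts: condition (iv) of \autoref{def:canonical-transfo} is tailored precisely so that the boundary contribution vanishes despite $r(\Phi_0)$ degenerating at $y=-1/\sqrt{12\pi}$, and so that $Q$ picks up the trivial boundary value from~$\xi$. Once that is handled, both directions (necessity and sufficiency) of the anisotropic criterion drop out of the integral formula above, and the whole proposition collapses to the two elementary eigenframe calculations sketched for the two families.
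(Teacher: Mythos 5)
Your proposal is correct and follows essentially the same route as the paper: appeal to \autoref{theorem-ultralocal} and \autoref{prop:conf-alls}, test conformality at $s_+=1$, and check in an eigenframe of~$K_-$ that the only obstruction is the $\cos(2\Theta_-)$ component for~$\Siso$ and the $\sin\Theta_-$ coefficient~$Q$ for~$\Sani$. The only (cosmetic) difference is in the last step for~$\Sani$: you obtain the Poisson-bracket criterion by writing $Q$ explicitly as the integral $Q=-\int_{-1/\sqrt{12\pi}}^{\phimoi_0}(\del_\theta\Phi_0\,\del_{\phi_0}\Phi_1-\del_{\phi_0}\Phi_0\,\del_\theta\Phi_1)\,r(\Phi_0)^{-3}\,dy$ after integrating by parts, whereas the paper observes that $Q$ already vanishes at $\rmoi=0$ and differentiates it in~$\phimoi_0$ using~\eqref{Sani-dkappa}, which amounts to the same computation.
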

  

\section{Proof of the classification of ultralocal scattering maps}
\label{section---5} 

\subsection{A zoo of singularity data sets}
\label{ssec:zoo}

\paragraph{Prescribing values at points.}

The classification of ultralocal scattering maps requires several technical lemmas on the existence of singularity data sets, which essentially state that the momentum constraint is not so constraining after all.
We present these rather technical results in the present section and in \autoref{ssec:deriv}, which could be skipped on first reading.
We then move on to the classification proper: explaining the relevant tensor structures in \autoref{ssec:struct}, then showing in \autoref{ssec:scaling} that $\Kpoicirc$ is a multiple of~$\Kmoicirc$, and finally completing the classification in \autoref{ssec:completion}.

To state the lemmas we use the Kasner angle~$\theta$ introduced in~\eqref{Kasner-rtheta}, which is defined modulo $2\pi/3$ (spacelike case) or $2\pi$ (timelike case) or is completely ill-defined when the Kasner radius $r=((3/2)\Tr\Kcirc^2)^{1/2}=(1-12\pi\phi_0^2)^{1/2}$ vanishes, namely when $\phi_0=\pm 1/\sqrt{12\pi}$.
As shown by \autoref{lem:scalars}, any scalar quantity constructed from the singularity data without derivatives must be a function of $(\theta,\phi_0,\phi_1)$.
This triplet of scalars ranges over $\RR\times I_0\times\RR$, where $I_0=[-1/\sqrt{12\pi},1/\sqrt{12\pi}]$, modulo the ambiguity in~$\theta$.
When we say that a singularity data set is such that $(\theta,\phi_0,\phi_1)$ {\sl assumes some prescribed value} at a point~$x$, we mean this modulo the ambiguities in~$\theta$.
Note that prescribing the value of $(\theta,\phi_0,\phi_1)$ is equivalent to prescribing $\phi_0,\phi_1$ and eigenvalues $k_1,k_2,k_3$ of~$K$ compatible with the constraints (and with the convention that timelike eigenvectors have eigenvalue~$k_1$), so we often work directly with prescribed $(k_1,k_2,k_3,\phi_0,\phi_1)$, at the cost of having to impose the constraints explicitly rather than through the Kasner radius/angle parametrization.

\begin{lemma}[Singularity data sets with prescribed values at points]
  \label{lem:prescribed-values}
  Let $\Hcal$ be a $3$-manifold, let $x^{(1)},\dots,x^{(n)}\in\Hcal$ be distinct points, and let $(\theta^{(i)},\phi_0^{(i)},\phi_1^{(i)})\in\RR\times I_0\times\RR$ for any $i=1,\dots,n$.
Then there exists a smooth singularity data set $(g,K,\phi_0,\phi_1)$ such that $(\theta,\phi_0,\phi_1)$ assumes the prescribed value $(\theta^{(i)},\phi_0^{(i)},\phi_1^{(i)})$ at each~$x^{(i)}$, $i=1,\dots,n$.
\end{lemma}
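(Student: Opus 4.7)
I shall use the conformal-method parametrization of singularity data from \autoref{subsec:param-data}, which in the spacelike case with $\phi_0>0$ everywhere exhibits a one-to-one correspondence between singularity data sets and triples $(\tilde g,\sigma,\phi_1)$ of a Riemannian metric, a TT tensor, and a scalar (with analogous descriptions in the timelike case, and, by the symmetry $(\phi_0,\phi_1)\mapsto(-\phi_0,\phi_1)$ of the asymptotic constraints, for negative $\phi_0$; the boundary case $\phi_0^{(i)}=0$ is handled as a limit or by a minor adaptation). The reconstruction proceeds by solving the linear elliptic system~\eqref{asymptotic-constraints-elliptic} for a vector field~$W$, then setting $\tilde g\,\tilde K=\sigma+\tfrac{1}{2N}\tilde LW$, $\phi_0=\sqrt{(2/3)/(8\pi+\Tr\tilde K^2)}$, $g=\phi_0^{-2/3}\tilde g$ and $K-\tfrac{1}{3}\delta=\phi_0\tilde K$. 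My plan is to exploit the freedom in choosing $(\tilde g,\sigma,\phi_1)$ to match the prescribed values at each~$x^{(i)}$.

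Concretely, I would choose $\tilde g$ to be any Riemannian (resp.\ Lorentzian) metric on~$\Hcal$, generic enough to have no nontrivial conformal Killing fields, and pick a smooth $\phi_1$ on~$\Hcal$ with $\phi_1(x^{(i)})=\phi_1^{(i)}$ (e.g.\ via a partition of unity). Solving~\eqref{asymptotic-constraints-elliptic} produces a smooth~$W$ on~$\Hcal$, and the value of $\tilde LW$ at each~$x^{(i)}$ is then determined. Finally, I would choose a TT tensor~$\sigma$ such that
\[
\Bigl(\sigma+\tfrac{1}{2N}\tilde LW\Bigr)(x^{(i)}) \;=\; \tilde g(x^{(i)})\,\tilde K^{(i)}\qquad\text{for each }i,
\]
where $\tilde K^{(i)}$ is the traceless $(1,1)$-tensor with eigenvalues $(k_a^{(i)}-\tfrac{1}{3})/\phi_0^{(i)}$ along a $\tilde g$-orthonormal basis at~$x^{(i)}$, with $(k_a^{(i)})$ the Kasner exponents corresponding to $(\theta^{(i)},\phi_0^{(i)})$ via~\eqref{Kasner-rtheta}. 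Reconstructing $(g,K,\phi_0,\phi_1)$ via the conformal-method formulas above then yields a smooth singularity data set on~$\Hcal$; a direct evaluation at each~$x^{(i)}$ confirms that the triple $(\theta,\phi_0,\phi_1)$ attains the prescribed value.

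The main obstacle is the surjectivity of the evaluation map $\sigma\mapsto(\sigma(x^{(i)}))_{i=1}^n$ from TT tensors on~$\Hcal$ to $n$-tuples of traceless symmetric $(0,2)$-tensors at the~$x^{(i)}$: a finite-dimensional linear prescription on an infinite-dimensional space. I would establish it by a localized York-type construction: pick pairwise disjoint small balls $V_i$ around each~$x^{(i)}$; inside~$V_i$ take a symmetric traceless bump tensor $\rho_i$ compactly supported in~$V_i$ with $\rho_i(x^{(i)})$ equal to the prescribed value; find a vector field $U_i$ supported in~$V_i$ solving a York-type elliptic equation on~$V_i$ with zero Dirichlet boundary data so that $\rho_i-\tfrac{1}{2N}\tilde LU_i$ is transverse-traceless on~$V_i$ and still vanishes near~$\partial V_i$; extend by zero outside~$V_i$. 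Summing over~$i$ then produces the required TT tensor~$\sigma$, completing the construction.
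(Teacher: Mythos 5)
Your approach via the conformal-method parametrization of \autoref{subsec:param-data} is a genuinely different route than the paper's, which instead builds the data set explicitly, layer by layer, on small balls around each $x^{(i)}$ with a conformally flat metric, solving the momentum constraint by hand in each layer. Unfortunately your route has several gaps, the most serious of which is structural.

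The conformal method parametrizes only the subspace of singularity data with $\phi_0>0$ \emph{everywhere} on~$\Hcal$: the reconstruction formula $\phi_0=\sqrt{(2/3)/(8\pi+\Tr\Ht^2)}$ always returns a strictly positive $\phi_0$, and $\phi_0\to 0$ forces $\Tr\Ht^2\to\infty$, so a prescription $\phi_0^{(i)}=0$ is out of reach (that is not a boundary case of $I_0$; it is the regime where the conformal parametrization degenerates). More importantly, prescribing $\phi_0^{(1)}>0$ and $\phi_0^{(2)}<0$ at two points forces $\phi_0$ to change sign somewhere, which the conformal method cannot produce at all; the symmetry you invoke must also be $(\phi_0,\phi_1)\mapsto(-\phi_0,-\phi_1)$ (not $(-\phi_0,\phi_1)$, which flips the sign of the momentum-constraint source), and it globally flips the sign of $\phi_0$, so it does not help with mixed signs. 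The paper's layered construction handles this precisely by passing through an intermediate layer in which $\phi_0=0$ and $\phi_1$ is free to vary.

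Two further points. First, your localized York construction for surjectivity of the TT evaluation map does not work as stated: solving the York equation with zero Dirichlet data for $U_i$ on~$V_i$ does not make $\rho_i-\tfrac{1}{2N}\Lt U_i$ vanish near $\del V_i$, so extending by zero breaks transversality at the boundary. (Compactly supported TT tensors do exist and a repair is possible, e.g.\ via potentials in flat coordinates, which is implicitly what the paper gains by choosing the metric conformally flat inside each ball.) Second, in the timelike case the parametrization in \autoref{subsec:param-data} replaces the elliptic problem for $W$ by a wave equation with Cauchy data on a $2$-surface, and the lemma is stated for arbitrary $3$-manifolds, while the elliptic solvability cited in the paper assumes compactness; both require more care than a remark. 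The paper's purely local, explicit construction sidesteps all of these.
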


\begin{proof} Consider non-intersecting neighborhoods $B^{(i)}\ni x^{(i)}$ each diffeomorphic to the unit ball in~$\RR^3$.  We construct below a data set in each ball that connects in a~$C^\infty$ manner with the following trivial data set: $K=\tfrac{1}{3}\delta$, $\phi_0=1/\sqrt{12\pi}$ and $\phi_1=0$ outside $\bigcup_i B^{(i)}$, with the metric being an arbitrary smooth metric. Here we chose the sign of~$\phi_0$ arbitrarily.

  It is now enough to construct a data set on the unit ball $B\subset\RR^3$ such that
\bei

\item $k_1,k_2,k_3,\phi_0,\phi_1$ takes a prescribed value $(k_1^{(i)},k_2^{(i)},k_3^{(i)},\phi_0^{(i)},\phi_1^{(i)})$ at $0\in B$, and
  
\item $K=\tfrac{1}{3}\delta$, $\phi_0=1/\sqrt{12\pi}$, and $\phi_1=0$ uniformly in a neighborhood of the boundary of~$B$.
\eei
We choose the metric to be conformally flat, specifically a scalar multiple of the standard (Euclidean or Minkowski) metric on~$\RR^3$, and we rescale the trace-free extrinsic curvature accordingly:
\bel{gK-rescaled}
g = \Omega^{-2/3} \diag(\pm 1,1,1) , \qquad \Kcirc = \Omega \Ht,
\ee
where we choose $\Omega$ to be a radial function, namely $\Omega=\Omega(|x|^2)$.
  As we saw in one instance in \autoref{subsec:param-data}, this scaling is convenient since the momentum constraint remains simple:
\bel{momentum-conformal-rescaled}
\nabla_a K^a_b = 8\pi\phi_0\del_b\phi_1 \iff
\del_a\Ht^a_b = \frac{8\pi\phi_0}{\Omega} \del_b\phi_1.
\ee
 The Hamiltonian constraint $\Tr K^2=1-8\pi\phi_0^2\leq 1$, on the other hand, imposes an upper bound on~$\Omega$, with equality if $\phi_0=0$:
\[
 \Omega \leq \bigl(\tfrac{3}{2}\Tr(\Ht^2)\bigr)^{-1/2}.
\]

\begin{figure}
  \centering
\begin{tikzpicture}
    \node at (0,0) {$x^{(1)}$}; \node at (-1.3,0) {$B^{(1)}$};
    \draw[densely dashed] (0,0) circle (.4 and .3) circle (.6 and .45) circle (.8 and .6);
    \draw (0,0) circle (1 and .75);
    \node at (1.2,1.8) {$x^{(2)}$}; \node at (-.1,1.8) {$B^{(2)}$};
    \draw[densely dashed] (1.2,1.8) circle (.4 and .5) circle (.6 and .75) circle (.8 and 1);
    \draw (1.2,1.8) circle (1 and 1.25);
    \node at (7,2.6) {$x^{(3)}$};
    \draw[densely dashed] (6,2.6) arc[start angle=-180, end angle=-60, radius=1] node [above] {\scriptsize $\phi_1=\text{constant}$} node [pos=.7,below] {\scriptsize $\phi_0=0,\phi_1$ varies};
    \draw[densely dashed] (5.4,2.2) arc[start angle=-175, end angle=-60, radius=1.6] node [pos=.7,below] {\scriptsize $\phi_0>0,\phi_1=0$};
    \draw[densely dashed] (4.8,1.8) arc[start angle=-170, end angle=-60, radius=2.4] node [pos=.7,below] {\scriptsize $\Omega=\phi_0$};
    \draw (4.2,1.4) arc[start angle=-165, end angle=-60, radius=3.2] node [pos=.1,left] {$B^{(3)}$} node [pos=.4,below left] {\scriptsize trivial};
  \end{tikzpicture}
  \caption{\label{fig:layers}{\bf Singularity data set constructed in \autoref{lem:prescribed-values}.}
    It has prescribed values at~$x^{(i)}$ and is trivial away from ball-shaped neighborhoods.
    {\bf Left:} global structure.  {\bf Right:} some properties of the concentric layers around one~$x^{(i)}$.}
\end{figure}
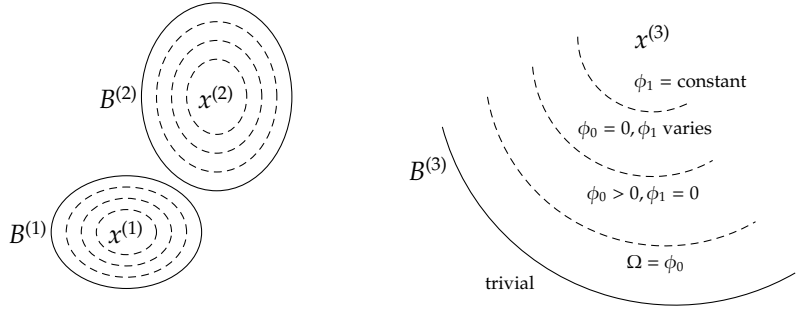

We construct the data in four concentric layers. To avoid any issue with regularity at the junction between layers (or at the center $0\in B$) we simply arrange for the singularity data to be constant in a neighborhood of each such junction (or of~$0$).
The whole construction is summarized in \autoref{fig:layers}.
For the three inner layers, in $0\leq|x|^2\leq 3/4$ (say), we choose $\Ht^a_b=\Kucirc^a_b$ to be a constant diagonal matrix
\bel{Ht-constant}
\Kucirc \coloneqq \diag\bigl(k_1^{(i)},k_2^{(i)},k_3^{(i)}\bigr) - \tfrac{1}{3} \delta .
\ee
This reproduces the prescribed data at~$0$ provided $\Omega(0)=1$. Since $\del_a\Kucirc^a_b=0$, the momentum constraint states $\phi_0\del_b\phi_1=0$, which we satisfy in three successive layers by imposing $\del_b\phi=0$, $\phi_0=0$, and $\del_b\phi=0$.
The purpose of these layers is two-fold: to allow us to tune the value of $\phi_1$ to zero, and to ensure a particular value $\phi_0>0$ at the boundary in order to connect to the last layer.
\bei

\item First, for $0\leq|x|^2\leq 1/4$, we keep $\phi_1=\phi_1^{(i)}$ constant and vary $\Omega$ smoothly from $\Omega=1$ for small~$|x|^2$ to $\Omega=(\tfrac{3}{2}\Tr(\Kucirc^2))^{-1/2}$ for $|x|^2$ close to $1/4$.  In this layer, the Hamiltonian constraint sets
\[
\phi_0 = \pm \sqrt{(2/3-\Omega^2\Tr\Kucirc^2)/(8\pi)}, 
\]
where the sign is that of~$\phi_0^{(i)}$.  In particular, for $|x|^2$ close to $1/4$ we have $\phi_0=0$.

\item Second, for $1/4\leq|x|^2\leq 1/2$, we keep $\phi_0=0$ and $\Omega$~constant, while varying $\phi_1$~smoothly until $\phi_1=0$ for $|x|^2$ near~$1/2$.
  
\item Third, for $1/2\leq|x|^2\leq 3/4$, we keep $\phi_1=0$ and vary~$\Omega$ like in the first layer, ensuring that $\phi_0>0$.  To simplify the construction of the next layer we vary $\Omega$ and $\phi_0$ until they become equal, which occurs for $\phi_0$~given in~\eqref{phi0-eq-Omega} below.  We choose $\Omega=\phi_0$ equal to this value for $|x|^2$ close to~$3/4$, where we recall that $\Ht=\Kucirc$ is still constant and given by~\eqref{Ht-constant}.

\eei

For the last layer, located in the interval $|x|^2\in[3/4,1]$, we take $\phi_0>0$ throughout, which allows us to choose $\Omega=\phi_0$ in~\eqref{gK-rescaled}.  The Hamiltonian constraint (together with $\phi_0>0$) can be solved:
\bel{phi0-eq-Omega}
\Omega = \phi_0 \quad \text{ at }  \phi_0 = \sqrt{\frac{2/3}{8\pi + \Tr\Ht^2}}.
\ee
 The momentum constraint~\eqref{momentum-conformal-rescaled} simplifies to $\del_a\Ht^a_b=8\pi\del_b\phi_1$ and we consider the following class of explicit solutions:
\[
\Ht^a_b = \alpha(|x|^2) \Kucirc^a_b , \qquad \phi_1 = \frac{1}{8\pi}\alpha'(|x|^2) x_a \Kucirc^a_b x^b,
\]
where $\Kucirc^a_b$ is~\eqref{Ht-constant}, $\alpha=\alpha(|x|^2)$ is a radial function and $\alpha'$ its $|x|^2$~derivative.
In order for the layer to properly join with the previous one and with the trivial data outside the ball, we choose $\alpha=1$ for $|x|^2$ close to~$3/4$, and $\alpha=0$ for $|x|^2$ close to~$1$.

This concludes the construction of the singularity data set on $B\subset\RR^3$ that is trivial near the boundary and has a prescribed value at~$0$.  Patching such balls around each point~$x^{(i)}$ into a trivial singularity data set on an arbitrary $3$-manifold $\Hcal$ is then immediate.
\end{proof}

\autoref{lem:prescribed-values} has a straightforward consequence, stated now.

\begin{lemma}[Always-constant ultralocal scalars can only depend on the signature]
\label{lem:constants}
Let $\Hcal$ be a $3$-manifold and $A$ be an ultralocal scalar function of singularity data on~$\Hcal$. If $A(x)$ is independent of $x\in\Hcal$ for any smooth singularity data $(g,K,\phi_0,\phi_1)\in\Ibf(\Hcal)$, then $A$~is an overall constant independent of the data itself, except for the signature of~$g$.
\end{lemma}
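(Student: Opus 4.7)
The plan is to reduce the statement to a pointwise fact about the function of scalar invariants using \autoref{lem:scalars} and then use the freedom in constructing singularity data from \autoref{lem:prescribed-values} to conclude.

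First I would apply \autoref{lem:scalars}: because $A$ is ultralocal and $GL(3,\RR)$-invariant, there exist functions $\widetilde{A}_{\textnormal{space}}, \widetilde{A}_{\textnormal{time}}\colon\RR\times I_0\times\RR\to\RR$ (satisfying the appropriate evenness and periodicity conditions in~$\theta$, and $\theta$-independent at the boundary $\phi_0=\pm1/\sqrt{12\pi}$) such that the value of $A$ at each point~$x$ equals $\widetilde{A}_{\textnormal{space}}(\theta(x),\phi_0(x),\phi_1(x))$ when the data is spacelike and $\widetilde{A}_{\textnormal{time}}(\theta(x),\phi_0(x),\phi_1(x))$ when it is timelike. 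Note that these representations are defined separately in each signature class because \autoref{lem:scalars} itself distinguishes the two signatures (through the different periodicities of~$\theta$ and through the normal form of $(g,K)$ under $GL(3,\RR)$).

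Next I would fix a signature, say spacelike, and take two arbitrary triples $(\theta^{(1)},\phi_0^{(1)},\phi_1^{(1)})$ and $(\theta^{(2)},\phi_0^{(2)},\phi_1^{(2)})$ in $\RR\times I_0\times\RR$. Choose any two distinct points $x^{(1)},x^{(2)}\in\Hcal$. By \autoref{lem:prescribed-values} applied to $n=2$ with a Riemannian background metric, there exists a smooth spacelike singularity data set $(g,K,\phi_0,\phi_1)$ on~$\Hcal$ at which $(\theta,\phi_0,\phi_1)$ assumes the prescribed values at the two prescribed points. The hypothesis that $A$ is spatially constant on this data then yields
\[
\widetilde{A}_{\textnormal{space}}(\theta^{(1)},\phi_0^{(1)},\phi_1^{(1)}) = A(x^{(1)}) = A(x^{(2)}) = \widetilde{A}_{\textnormal{space}}(\theta^{(2)},\phi_0^{(2)},\phi_1^{(2)}).
\]
Since the two triples were arbitrary, $\widetilde{A}_{\textnormal{space}}$ is a constant $c_{\textnormal{space}}\in\RR$. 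The same argument, with a Lorentzian background in \autoref{lem:prescribed-values}, shows that $\widetilde{A}_{\textnormal{time}}$ is a constant $c_{\textnormal{time}}\in\RR$. Hence $A$ depends on the data only through the signature of~$g$, as claimed.

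The main (minor) obstacle is bookkeeping around the ambiguity in the Kasner angle: when $r(\phi_0)=0$ the value of~$\theta$ is undefined, and even away from this locus $\theta$ is only determined up to the $2\pi/3$- or $2\pi$-periodicity and the sign flip coming from permutations of eigenvectors. This is harmless because the representing functions $\widetilde{A}_{\textnormal{space,time}}$ respect these symmetries by \autoref{lem:scalars}, and \autoref{lem:prescribed-values} prescribes eigenvalues $k_1,k_2,k_3$ directly (equivalent to prescribing $\theta$ modulo the relevant ambiguity), so the comparison above is well-defined. No other obstruction arises, and the proof is essentially a direct combination of the two preceding lemmas.
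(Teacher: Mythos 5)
Your proof is correct and follows essentially the same route as the paper: represent $A$ via \autoref{lem:scalars} as a function of $(\theta,\phi_0,\phi_1)$ within each signature class, then invoke \autoref{lem:prescribed-values} with $n=2$ to force that function to take the same value on any two prescribed triples. The paper's proof is a compressed version of yours, stated for a fixed signature (and implicitly repeated for the other), with the remark about the $\theta$-ambiguity left implicit since \autoref{lem:prescribed-values} is already phrased modulo that ambiguity.
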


\begin{proof} Fix a signature (spacelike or timelike) once and for all. We wish to show that $A$ is constant for data with this signature.
By \autoref{lem:scalars} we know that the ultralocal scalar~$A$ can be written as
$
A(x)=\widehat{A}(\theta(x),\phi_0(x),\phi_1(x))
$
for some function $\widehat{A}\colon\RR\times I_0\times\RR\to\RR$ that is even and ($2\pi/3$ or $2\pi$) periodic in~$\theta$. Our goal is to show that $\widehat{A}(\theta^{(1)},\phi_0^{(1)},\phi_1^{(1)})=\widehat{A}(\theta^{(2)},\phi_0^{(2)},\phi_1^{(2)})$ for any pair of values in $\RR\times I_0\times\RR$. \autoref{lem:prescribed-values} provides a singularity data set assuming the values $(\theta^{(i)},\phi_0^{(i)},\phi_1^{(i)})$, $i=1,2$ at two different points.
Since $A(x)$ is $x$-independent, it takes the same value at these points hence $\widehat{A}$ takes the same values for the two given $(\theta,\phi_0,\phi_1)$.  Altogether, $\widehat{A}$ is constant so $A$ only depends on the signature of~$g$ (and on the scattering map of course).
\end{proof}

\paragraph{Restriction to non-degenerate data.}

We now prove that ultralocal scalars and ultralocal scattering maps are characterized by their value on data for which $\phi_0\neq 0$ and the eigenvalues $k_1,k_2,k_3$ are distinct.  Some care is needed when stating the result, because our constructions of scattering data sets (such as the one used for \autoref{lem:prescribed-values}) involve regions where the extrinsic curvature~$K$ is in fact degenerate.

\begin{lemma}[Non-degenerate data distinguish ultralocal scalars]
\label{lem:continuity}
Let $\Hcal$ be a $3$-manifold and let $\sigma$ be a continuous ultralocal scalar.
Assume that for any scattering data set $(g,K,\phi_0,\phi_1)$ on~$\Hcal$, and any $x\in\Hcal$ such that $\phi_0(x)\neq 0$ and $K(x)$ has three distinct eigenvalues, one has $\sigma(g,K,\phi_0,\phi_1)(x) = 0$.
Then $\sigma=0$ identically.
\end{lemma}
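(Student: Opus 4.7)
The plan is to reduce the statement to a density argument on the space of pointwise scalar invariants. First, invoking \autoref{lem:scalars}, the ultralocal scalar $\sigma$ at a point $x$ is captured by its value $\widehat{\sigma}(\theta(x),\phi_0(x),\phi_1(x))$, where $\widehat{\sigma}\colon\RR\times I_0\times\RR\to\RR$ is even and $2\pi/3$- or $2\pi$-periodic in $\theta$, and is $\theta$-independent on the locus $r(\phi_0)=0$. The hypothesis that $\sigma$ is continuous translates to continuity of $\widehat{\sigma}$ on $\RR\times I_0\times\RR$ (with the natural quotient by the $\theta$-ambiguity along $r(\phi_0)=0$).

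Next I would identify the non-degenerate parameter set $N\subset\RR\times I_0\times\RR$ singled out by the hypothesis. Recalling the parametrization \eqref{Kasner-rtheta}, the three eigenvalues $k_1,k_2,k_3$ coincide pairwise precisely when $\theta\equiv 0\pmod{\pi/3}$, and $r(\phi_0)=0$ forces $k_1=k_2=k_3=1/3$. Hence the condition ``$\phi_0\neq 0$ and $K$ has three distinct eigenvalues'' corresponds exactly to
\[
N = \bigl\{(\theta,\phi_0,\phi_1)\colon \phi_0\neq 0,\ \phi_0\neq \pm 1/\sqrt{12\pi},\ \theta\not\equiv 0\pmod{\pi/3}\bigr\},
\]
which is open and dense in $\RR\times I_0\times\RR$.

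Then, given any $(\theta^{(0)},\phi_0^{(0)},\phi_1^{(0)})\in N$, \autoref{lem:prescribed-values} yields a smooth singularity data set $(g,K,\phi_0,\phi_1)$ on $\Hcal$ whose pointwise invariants at some $x^{(0)}\in\Hcal$ equal the prescribed triple. The hypothesis applies to this data at $x^{(0)}$ (since $\phi_0^{(0)}\neq 0$ and $K(x^{(0)})$ has three distinct eigenvalues), forcing $\widehat{\sigma}(\theta^{(0)},\phi_0^{(0)},\phi_1^{(0)})=0$. Thus $\widehat{\sigma}\equiv 0$ on $N$.

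Finally, since $N$ is dense in $\RR\times I_0\times\RR$ and $\widehat{\sigma}$ is continuous there, $\widehat{\sigma}$ vanishes identically, hence $\sigma=0$. The only mild care needed is at the degenerate strata $\phi_0=0$, $\phi_0=\pm 1/\sqrt{12\pi}$, and $\theta\equiv 0\pmod{\pi/3}$: each is a limit of points of $N$, so continuity immediately extends the vanishing. I do not foresee a genuine obstacle here; the only thing to watch is that the ``continuity of $\sigma$'' assumption in the statement is interpreted in the sense that $\widehat{\sigma}$ is continuous in its three scalar arguments, which is the natural reading compatible with the regularity comments made just before \autoref{lem:scalars}.
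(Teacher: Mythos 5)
Your high-level strategy (pin down $\widehat{\sigma}$, use density of the non-degenerate stratum, conclude by continuity) is the right shape, but the step ``the hypothesis that $\sigma$ is continuous translates to continuity of $\widehat{\sigma}$ on $\RR\times I_0\times\RR$'' is not a translation --- it is itself the nontrivial content. The regularity assumption you cite (see the comments near the start of \autoref{section---4}) only says that smooth data $(g,K,\phi_0,\phi_1)$ are sent to a function on $\Hcal$ that is continuous in the manifold variable~$x$. This does \emph{not} immediately give continuity of $\widehat{\sigma}$ in $(\theta,\phi_0,\phi_1)$: a priori, $\widehat{\sigma}$ could be discontinuous at a degenerate stratum (say $\phi_0=0$, or a coincidence of eigenvalues), and this would be invisible to the hypothesis unless one can exhibit a \emph{single} smooth data set whose pointwise invariants trace out a path in $\RR\times I_0\times\RR$ that approaches the degenerate stratum from within the open set~$N$. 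Only then does continuity-in-$x$ along that one data set force the limiting value of $\widehat{\sigma}$ to match. So your density argument, while stated on the abstract parameter space, presupposes the existence of data sets that realize the limit --- and that is precisely what the paper's proof supplies.

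Concretely, the paper's proof of \autoref{lem:continuity} (combined with \autoref{lem:non-degenerate}) never invokes continuity of $\widehat{\sigma}$. It fixes the target invariants, constructs an explicit data set on a ball $B\subset\RR^3$ taking those values at the origin, and arranges (for instance by adding a $\beta(|x|^2)(x^a x_b - \tfrac13\delta^a_b|x|^2)$ term to $\Ht$) that the origin lies in the closure of the set of points $x$ where $\phi_0(x)\neq 0$ and $K(x)$ has distinct eigenvalues. Continuity of $\sigma$ along $\Hcal$ for that one data set then kills $\sigma$ at the origin. Separate constructions handle the boundary strata $\phi_0=0$ and $K=\tfrac13\delta$ by exhibiting them as limits of non-degenerate points within a fixed data set built from \autoref{lem:prescribed-values}. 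To repair your proof, you would either need to (i) prove that $\widehat{\sigma}$ is continuous, which reduces to the same constructions, or (ii) replace the abstract density argument by the data-set-level argument directly, as the paper does. As written, the step is circular.
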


\begin{lemma}[Non-degenerate data distinguish ultralocal scattering maps]
\label{lem:non-degenerate}
Let $\Hcal$ be a $3$-manifold and let $\Sbf_1,\Sbf_2$ be two ultralocal scattering maps on~$\Hcal$.
Assume that for any scattering data set $(g,K,\phi_0,\phi_1)$ on~$\Hcal$, and any $x\in\Hcal$ such that $\phi_0(x)\neq 0$ and $K(x)$ has three distinct eigenvalues, one has $\Sbf_1(g,K,\phi_0,\phi_1)(x) = \Sbf_2(g,K,\phi_0,\phi_1)(x)$.
Then $\Sbf_1=\Sbf_2$.
\end{lemma}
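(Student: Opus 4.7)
The plan is to combine ultralocality with a density-continuity argument. By ultralocality, for each $x\in\Hcal$ the value $\Sbf_i(g,K,\phi_0,\phi_1)(x)$ depends only on $(g,K,\phi_0,\phi_1)(x)$; combined with diffeomorphism covariance, each $\Sbf_i$ is thus determined by a single induced map $\widehat{\Sbf}_i$ on the space $\Ipoint$ of algebraic data $(g,K,\phi_0,\phi_1)$ at a single point subject to $\Tr K=1$ and the pointwise Hamiltonian constraint (the momentum constraint involves spatial derivatives and so imposes no pointwise condition). Proving $\Sbf_1=\Sbf_2$ therefore reduces to proving $\widehat{\Sbf}_1=\widehat{\Sbf}_2$ on all of $\Ipoint$. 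The hypothesis yields this agreement on the open subset of $\Ipoint$ with $\phi_0\neq 0$ and three distinct eigenvalues of~$K$, which is dense in~$\Ipoint$ since its complement is cut out by the equations $\phi_0=0$ or $\theta\in\{0,\pi/3,2\pi/3,\dots\}$ in the parametrization~\eqref{Kasner-rtheta}.

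The agreement is then promoted by continuity. Given any (possibly degenerate) $D_0\in\Ipoint$, I would construct a smooth singularity data set $D$ on a small ball $B\subset\Hcal$ centered at a point~$x_0$ such that $D(x_0)=D_0$ and $D(x)$ is non-degenerate for every $x\in B\setminus\{x_0\}$. The construction is a variant of the one in the proof of \autoref{lem:prescribed-values}: one again writes $g=\Omega^{-2/3}\diag(\pm 1,1,1)$ and $\Kcirc=\Omega\Ht$ as in~\eqref{gK-rescaled}, but replaces the constant field $\Ht=\Kucirc$ of the innermost layer by a smoothly varying diagonal field that equals the prescribed trace-free part of $K_0$ at~$x_0$ and takes generic non-degenerate diagonal values on $B\setminus\{x_0\}$. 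The scalars $\Omega,\phi_0,\phi_1$ are adjusted as in the proof of \autoref{lem:prescribed-values} to solve the Hamiltonian and momentum constraints, and the data is matched to trivial data near $\del B$ through the same layered scheme.

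By ultralocality and the hypothesis of the lemma, $\Sbf_1(D)(x)=\Sbf_2(D)(x)$ at every $x\in B\setminus\{x_0\}$. The regularity condition built into the definition of scattering maps (footnote to \autoref{def:singu}) ensures that both $\Sbf_1(D)$ and $\Sbf_2(D)$ are continuous on~$B$. Taking the limit $x\to x_0$ therefore yields $\Sbf_1(D)(x_0)=\Sbf_2(D)(x_0)$, and by ultralocality this reads $\widehat{\Sbf}_1(D_0)=\widehat{\Sbf}_2(D_0)$. Since $D_0\in\Ipoint$ was arbitrary, I conclude $\Sbf_1=\Sbf_2$.

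The hardest part will be the construction of the interpolating data~$D$ in the second paragraph: one must simultaneously preserve the non-pointwise momentum constraint, match the prescribed possibly-degenerate value at the center, and keep the data non-degenerate on the punctured neighborhood. This is precisely the balancing act performed in \autoref{lem:prescribed-values}, now with a smoothly varying rather than constant $\Ht$ in the innermost layer; the conformal method used there provides enough freedom to accommodate this modification without genuinely new analytic difficulties.
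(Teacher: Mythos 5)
Your overall strategy is the same as the paper's: reduce to a pointwise map on~$\Ipoint$, note that the hypothesis gives agreement on the dense locus of non-degenerate data, and upgrade by a continuity argument using suitably constructed singularity data sets. The reduction in your first paragraph is fine and matches the paper's \autoref{prop-ultralocal}.

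The gap is in the second paragraph, where you posit a single unified construction: for an arbitrary $D_0\in\Ipoint$, a smooth data set~$D$ on a ball~$B$ with $D(x_0)=D_0$ and $D(x)$ non-degenerate on the \emph{entire} punctured ball $B\setminus\{x_0\}$. You claim this is a routine modification of \autoref{lem:prescribed-values}, but it is not. In \autoref{lem:prescribed-values} the momentum constraint inside the inner layers is trivially satisfied precisely because $\Ht$ is constant there, so $\del_a\Ht^a_b=0$ and one only needs $\phi_0\,\del_b\phi_1=0$. If you let $\Ht$ vary in the inner region (as your plan requires), the constraint $\del_a\Ht^a_b=8\pi(\phi_0/\Omega)\,\del_b\phi_1$ becomes a genuine obstruction. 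In the problematic case $D_0$ with $\phi_0(x_0)=0$, the right-hand side is forced to vanish at~$x_0$ (unless $\phi_1$ or $\Omega^{-1}$ degenerates), which in turn constrains $\del_a\Ht^a_b(x_0)$; meanwhile you also want $\phi_0$ to vanish \emph{only} at~$x_0$ and $K$ to be non-degenerate on all of $B\setminus\{x_0\}$. Satisfying all of this simultaneously is not obviously possible, and you do not attempt it. So as written the proposal leaves the central construction unproved, and the assertion that the conformal method accommodates it ``without genuinely new analytic difficulties'' is not substantiated.

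The paper avoids the difficulty by observing that a punctured neighborhood is more than continuity requires: it suffices that $x_0$ lie in the closure of the non-degenerate set, e.g.\ that $x_0$ be approached along a curve or a generic direction. The paper then treats the degeneracies one at a time with tailored constructions. For $\phi_0\neq 0$ and $K\neq\tfrac13\delta$ with two coincident eigenvalues, it uses the explicit data~\eqref{data-lem-non-degenerate}, for which the eigenvalues of~$\Ht$ become distinct as $x\to 0$ along a generic direction. For $\phi_0=0$, it reuses the layered data of \autoref{lem:prescribed-values}, where $\{\phi_0=0\}$ is a shell in the boundary of $\{\phi_0\neq 0\}$, so continuity propagates the agreement there. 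The case $K=\tfrac13\delta$ (which forces $\phi_0=\pm 1/\sqrt{12\pi}\neq 0$ by the Hamiltonian constraint) is handled the same way, looking at the boundary of $\{K\neq\tfrac13\delta\}$. This modular route achieves the needed closure property with constructions that are each manageable, instead of betting on a single harder construction.
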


\begin{proof}[Proof of Lemmas \ref{lem:continuity} and \ref{lem:non-degenerate}]
We establish the two lemmas simultaneously. Assume that we are given a scalar~$\sigma$ and two scattering maps $\Sbf_1,\Sbf_2$ satisfying the conditions in the two lemmas, respectively.

We construct singularity data sets $(g,K,\phi_0,\phi_1)$ on~$\Hcal$ taking any prescribed value at some $x\in\Hcal$, and show that $\sigma(g,K,\phi_0,\phi_1)(x)=0$ and that $\Sbf_1(g,K,\phi_0,\phi_1)(x)=\Sbf_2(g,K,\phi_0,\phi_1)(x)$ for these specific data sets.  Ultralocality extends this equality to any other data set taking the same value at $x\in\Hcal$, and diffeomorphism-invariance shows the choice of point does not matter.  Thus $\sigma=0$ and $\Sbf_1=\Sbf_2$ at all points for arbitrary data sets.

The key to show $\sigma(g,K,\phi_0,\phi_1)(x)=0$ for the data sets constructed below is to use continuity of $\sigma(g,K,\phi_0,\phi_1)$.  Since we know that it vanishes in the set of points $x\in\Hcal$ such that $\phi_0(x)\neq 0$ and $K(x)$ has three distinct eigenvalues, it must vanish as well in the closure of that set inside~$\Hcal$.  We simply need to ensure that the point of interest is in this closure.  The same reasoning applies to $\Sbf_1(g,K,\phi_0,\phi_1)(x)=\Sbf_2(g,K,\phi_0,\phi_1)(x)$ because, by definition, scattering maps send smooth data to (at least) continuous data.

Rather than constructing a complicated singularity data set that covers all cases at the same time, we first show that $\sigma$ vanishes and $\Sbf_1,\Sbf_2$ agree for all data with $\phi_0\neq 0$ and $K\neq\tfrac{1}{3}\delta$; in other words we treat the case where two eigenvalues of~$K$ coincide.
 We follow the construction used in the proof of \autoref{lem:prescribed-values}, building data on the unit ball $B\subset\RR^3$ and embedding it inside trivial data for the rest of~$\Hcal$.
 We choose $g,K$ as in~\eqref{gK-rescaled} with $\Omega=\abs{\phi_0}$ (where $\phi_0\neq 0$ will be specified later) so that the momentum constraint simplifies:
\bse
\label{data-lem-non-degenerate}
\be
g = \abs{\phi_0}^{-2/3} \diag(\pm 1,1,1) , \qquad \Kcirc = \abs{\phi_0} \Ht, \qquad
\del_a\Ht^a_b = 8\pi(\sgn\phi_0) \del_b\phi_1.
\ee
We consider the following class of explicit solutions:
\be
\aligned
\Ht^a_b 
& = \alpha(|x|^2) \,\Kucirc^a_b + \beta(|x|^2) \,\bigl(x^a x_b - \tfrac{1}{3} \delta^a_b |x|^2\bigr), 
\qquad
\quad 
  8\pi(\sgn\phi_0)\phi_1 = \alpha'(|x|^2) \,x_a \Kucirc^a_b x^b + \int^{|x|^2} \bigl(\tfrac{5}{3}\beta(q) + \tfrac{2}{3}\beta'(q)\bigr) \, dq,
\endaligned
\ee
\ese
  where $\alpha=\alpha(|x|^2)$ and $\beta=\beta(|x|^2)$ are radial functions and $\Kucirc\neq 0$ is a prescribed non-zero value.
  We choose a smooth function $\alpha=\beta$ whose derivatives (of all orders) vanish at $0$ and~$1$, with prescribed values $\alpha(0)=\beta(0)=1$ and $\alpha(1)=\beta(1)=0$.
  It is then easy to check (using $\Kucirc\neq 0$) that for $x$ approaching~$0$ in a generic direction the eigenvalues of~$\Ht$ are all distinct, hence those of~$K$ also are.
  The function $\sigma(g,K,\phi_0,\phi_1)$ thus vanishes at points $x$ approaching~$0$ hence at~$0$.  For the same reason, the continuous functions $\Sbf_1(g,K,\phi_0,\phi_1)$ and $\Sbf_2(g,K,\phi_0,\phi_1)$ agree at~$0$.
  The data at~$0$ that can be achieved using this construction is arbitrary except for the conditions $\phi_0\neq 0$ and $K\neq\tfrac{1}{3}\delta$.

Next, we show that $\sigma$ vanishes and $\Sbf_1,\Sbf_2$ agree for data with $\phi_0=0$. Consider the singularity data sets constructed in the proof of \autoref{lem:prescribed-values} and shift $\phi_1$ by some arbitrary overall constant. Since $\sigma(g,K,\phi_0,\phi_1)$ vanishes and $\Sbf_i(g,K,\phi_0,\phi_1)$, $i=1,2$, agree on the region $\{\phi_0\neq 0\}$, this must still be the case on its boundary.  It is easy to check that the data at such boundary points has $\phi_0=0$ of course, but no other restriction, namely it has arbitrary $\theta$ and~$\phi_1$.

The same singularity data sets also show that $\sigma$ vanishes and $\Sbf_1,\Sbf_2$ agree for $K=\tfrac{1}{3}\delta$: simply consider a point with $K=\tfrac{1}{3}\delta$ on the boundary of the region $\{K\neq\tfrac{1}{3}\delta\}$. This concludes the proof of Lemmas~\ref{lem:continuity} and~\ref{lem:non-degenerate}.
\end{proof}

\subsection{On derivatives of singularity data sets}
\label{ssec:deriv}

\paragraph{Scalars with vanishing derivatives.}

We continue our forays into constructing singularity data sets, but this time we additionally impose conditions on derivatives of the scalars $(\theta,\phi_0,\phi_1)$ at a point.  The saving grace is that we do not need to distinguish various special cases according to how many eigenvalues coincide: in applications later on, \autoref{lem:non-degenerate} allows us to assume $k_1,k_2,k_3$ are pairwise distinct.
This translates to two restrictions on the scalars: $\theta\neq 0\bmod{\pi/3}$ and $\phi_0\neq\pm 1/\sqrt{12\pi}$.
We denote
\bel{non-degenerate-scalars}
\Delta_\neq \coloneqq
(\RR\setminus\tfrac{\pi}{3}\ZZ)\times \bigl(\tfrac{-1}{\sqrt{12\pi}},\tfrac{1}{\sqrt{12\pi}}\bigr)\times\RR, 
\ee
so that $(\theta,\phi_0,\phi_1)\in\Delta_\neq$ means that the corresponding eigenvalues are pairwise distinct.
As before, when stating that a singularity data set $(g,K,\phi_0,\phi_1)$ assumes at some point~$x$ a certain value in~$\Delta_\neq$, the angle $\theta$ is understood up to $\theta\to-\theta$ and modulo $2\pi/3$ (spacelike case) or $2\pi$ (timelike case).

\begin{lemma}[Non-trivial data with locally constant scalars]
  \label{lem:derivatives}
  Let $\Hcal$ be a $3$-manifold and $x\in\Hcal$ be a point.
  For any prescribed value in~$\Delta_\neq$ (defined above) there exists a singularity data set $(g,K,\phi_0,\phi_1)$ on~$\Hcal$ such that, throughout a neighborhood of~$x$, $(\theta,\phi_0,\phi_1)$ assumes this prescribed value and $\nabla_a(\Kcirc^2)^a_b$ is nowhere vanishing.
\end{lemma}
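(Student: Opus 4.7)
My plan is to construct the desired data locally in a chart around~$x$ via a diagonal Ansatz, and then to extend it globally using the layered procedure from the proof of \autoref{lem:prescribed-values}.

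Work in a local chart $(y^1,y^2,y^3)$ around~$x$ (with $y^1$ timelike in the timelike case). Let $(k_1,k_2,k_3)$ be the Kasner exponents determined from $(\theta,\phi_0)\in\Delta_{\neq}$ by~\eqref{Kasner-rtheta}, and set $h_a\coloneqq k_a-1/3$; by the non-degeneracy assumption, $h_1,h_2,h_3$ are pairwise distinct. Prescribe
\[
K^a_b\equiv\diag(k_1,k_2,k_3),\quad \phi_0\equiv\phi_0^{(x)},\quad \phi_1\equiv\phi_1^{(x)}
\]
throughout the chart, together with the diagonal metric
\[
g = \diag\!\bigl(\epsilon_1,\ \epsilon_2\, e^{\alpha y^1},\ \epsilon_3\, e^{\beta y^1}\bigr),
\]
where $\epsilon_a=\pm 1$ realizes the signature and $\alpha,\beta$ are constants to be chosen. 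Since $(\theta,\phi_0,\phi_1)$ depends only on the eigenvalues of $K^a_b$ and on $\phi_0,\phi_1$, it is constant throughout the chart; the Hamiltonian constraint follows from the Kasner parametrization; and symmetry of $K$ with respect to~$g$ is automatic as both are diagonal in the same frame.

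The only non-vanishing off-diagonal Christoffels $\Gamma^a_{ab}$ (with $a\neq b$) for this metric are $\Gamma^2_{21}=\alpha/2$ and $\Gamma^3_{31}=\beta/2$, so $\nabla_a K^a_b$ vanishes automatically for $b=2,3$, while
\[
\nabla_a K^a_1 = \tfrac{1}{2}\bigl(\alpha(h_1-h_2)+\beta(h_1-h_3)\bigr).
\]
Taking $\alpha = h_1-h_3$ and $\beta = -(h_1-h_2)$ (both non-zero by distinctness) makes this vanish, so the full momentum constraint $\nabla_a K^a_b=8\pi\phi_0\del_b\phi_1=0$ holds. A direct computation then yields
\[
\nabla_a(\Kcirc^2)^a_1 = \sum_{a\neq 1}(h_1^2-h_a^2)\Gamma^a_{a1} = \tfrac{1}{2}(h_1-h_2)(h_1-h_3)(h_2-h_3)\neq 0,
\]
so the covector $\nabla_a(\Kcirc^2)^a_b$ is nowhere-vanishing throughout the chart.

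To produce a singularity data set on all of $\Hcal$, I would restrict the above to a small ball $B'\ni x$ (which serves as the required neighborhood) and, in a surrounding shell, interpolate to trivial data $(K=\tfrac{1}{3}\delta,\phi_0=1/\sqrt{12\pi},\phi_1=0)$ outside a larger ball, by the same layered conformal-rescaling strategy as in the proof of \autoref{lem:prescribed-values}. The main obstacle is technical rather than conceptual: one must adapt the radial layered interpolation to accommodate the non-constant local Ansatz on the inner boundary of the shell, which can be handled by first inserting a thin shell in which~$g$ is smoothly deformed until the full data set is constant on some sphere surrounding~$B'$, after which the existing procedure applies.
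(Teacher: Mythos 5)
Your local computation is correct: with the Ansatz $g = \diag(\epsilon_1,\epsilon_2 e^{\alpha y^1},\epsilon_3 e^{\beta y^1})$ and constant $(K,\phi_0,\phi_1)$, only $\Gamma^2{}_{21}=\alpha/2$ and $\Gamma^3{}_{31}=\beta/2$ contribute to $\nabla_a K^a_b$, so the momentum constraint reduces to the single equation $(h_1-h_2)\alpha+(h_1-h_3)\beta=0$, which your choice of $\alpha,\beta$ solves, and you then correctly find the constant, non-zero covector $\nabla_a(\Kcirc^2)^a_1=\tfrac12(h_1-h_2)(h_1-h_3)(h_2-h_3)$.

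The extension to all of $\Hcal$, however, is a genuine gap, not merely a technicality. Write $u_a\coloneqq\log|g_{aa}|$; in your chart $u=(0,\alpha y^1,\beta y^1)$. For a diagonal metric with constant diagonal $K$ one has $\nabla_a K^a_b = \tfrac12\sum_a (k_b-k_a)\,\del_b u_a$. If you multiply $u_a$ by a radial cutoff $\chi(|y|^2)$ so as to reach constant data on a surrounding sphere, the constraint acquires the extra terms $\tfrac12(\del_b\chi)\sum_a(k_b-k_a)u_a$, and for your slopes $\sum_a(k_b-k_a)u_a$ equals $(h_2-h_3)\beta y^1$ for $b=2$ and $(h_3-h_2)\alpha y^1$ for $b=3$, both non-zero by distinctness. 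One cannot even absorb the resulting violation into a varying $\phi_1$: the would-be source $8\pi\phi_0\,\del_b\phi_1=\nabla_a K^a_b$ fails to be a closed one-form, since $\del_3(\nabla_a K^a_2)=\del_2(\nabla_a K^a_3)$ would force $\beta=-\alpha$, that is $h_2=h_3$, which contradicts $\Delta_{\neq}$. So "smoothly deforming $g$" while keeping $(K,\phi_0,\phi_1)$ frozen is impossible, and the claim that the existing layered procedure then applies is unjustified.

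This is exactly why the paper does not use a constant-slope Ansatz. Its first layer has $u_a=\lambda(|y|^2)(k_{a+1}-k_{a-1})$ (indices mod $3$), which satisfies the pointwise identities $\sum_a u_a=0$ and $\sum_a k_a u_a=0$; these make $\sum_a(k_b-k_a)u_a\equiv 0$ for every $b$, so the momentum constraint holds for an arbitrary radial profile $\lambda$, including one that vanishes to all orders at the inner and outer boundaries. The price the paper pays is that its $\nabla_a(\Kcirc^2)^a_b\propto x_b\lambda'$ vanishes at the center, so they must evaluate the construction slightly off-center; your constant-slope formula avoids that. A compromise that keeps that advantage and fixes the gap is to replace $u=(0,\alpha y^1,\beta y^1)$ by the cyclic choice $u_a=C(k_{a+1}-k_{a-1})y^1$: it still gives $\nabla_a(\Kcirc^2)^a_1=\tfrac{C}{2}(k_1-k_2)(k_2-k_3)(k_3-k_1)\neq 0$ uniformly, satisfies both projective identities, and therefore admits a momentum-constraint-preserving cutoff.
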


\begin{proof}
  As in previous proofs we construct data with the desired properties on the unit ball $B\subset\RR^3$, such that the data smoothly reaches the trivial values $\phi_1=0$ and $K=\tfrac{1}{3}\delta$ at the boundary.  This yields a singularity data set on~$\Hcal$ by mapping the data through a diffeomorphism from $B$ to a neighborhood of $x\in\Hcal$ and extending it to $\Hcal$ using trivial data: constant $(K,\phi_0,\phi_1)$ and an arbitrary metric.  In fact, for the data set we construct on~$B$, $\nabla_a(\Kcirc^2)^a_b$ vanishes at~$0$ and is non-zero in a neighborhood of~$0$, so the diffeomorphism identifying $B$ with a neighborhood of $x\in\Hcal$ should be chosen to map $x$ close to $0\in B$ but not exactly at~$0$.

  We construct data on the ball $B\subset\RR^3$ in three layers and work in the standard basis of~$\RR^3$.
  Denote by $\phiu_0,\phiu_1$ and $\Ku=\diag(\ku_1,\ku_2,\ku_3)$ the prescribed data.
\bei
  \item The first layer, for $|x|^2\leq 1/2$, is described below.  It has constant $(K,\phi_0,\phi_1)=(\Ku,\phiu_0,\phiu_1)$, and has a variable metric that smoothly goes to $g=\diag(\pm 1,1,1)$ at $|x|^2=1/2$, with all derivatives vanishing.
  \item The second layer, for $1/2\leq|x|^2\leq 3/4$, is essentially the same as the third layer used in the proof of \autoref{lem:prescribed-values}.  It has $\phi_1 = \phiu_1$ and
\bel{gK-scaling-7}
  g = \Omega^{-2/3} \diag(\pm 1,1,1) , \qquad \Kcirc = \Omega \Kucirc ,
\ee
  where $\Omega$ interpolates from $1$ at $|x|^2=1/2$ to the value~\eqref{phi0-eq-Omega} at which $\Omega=\abs{\phi_0}$.
  \item The outermost layer, for $3/4\leq|x|^2\leq 1$ coincides with the outermost layer used in the proof of \autoref{lem:prescribed-values}, except for an overall sign of~$\phi_0$ and constant shift of~$\phi_1$.  It has~\eqref{gK-scaling-7} with $\Omega=\abs{\phi_0}$ and it interpolates from the previous layer to trivial data $\Kcirc=0$, $\phi_0=\pm 1/\sqrt{12\pi}$, and $\phi_1=\phiu_1$.
\eei
  In contrast to \autoref{lem:prescribed-values}, since we only want to prescribe data in one ball rather than multiple ones, there is no need to normalize the sign of~$\phi_0$ or the constant value of~$\phi_1$ in order to complete the data into data on~$\Hcal$.  One would otherwise need two additional layers for this purpose.

  We now construct the first layer, for $|x|^2\leq 1/2$.  We keep constant $K=\diag(k_1,k_2,k_3)$, $\phi_0$, and $\phi_1=0$, but we consider a diagonal metric with entries $\pm\exp(h_a)$, that is, 
$
  g = \diag(\pm e^{h_1},e^{h_2},e^{h_3}) ,
$
  where $h_a=h_a(x)$, $a=1,2,3$ are general functions.
  To work out the momentum constraint we compute
\bel{someq513}
  \nabla_a K^a_b = \del_a K^a_b + \Gamma^a{}_{ac} K^c_b - \Gamma^c{}_{ab} K^a_c
  = \del_a K^a_b + \del_c\bigl(\log|g|^{1/2}\bigr) K^c_b - \frac{1}{2} K^{ac} \del_b g_{ac}
\ee
  where we simply wrote the Christoffel symbols in terms of derivatives of the metric and used that $K^{ac}$ is symmetric to cancel two terms.
  For the data we are considering, the derivative term vanishes.
  Using that $g,K$ are diagonal we find that~\eqref{someq513} is equal to
\bel{momentum-diagonal}
  \sum_a \nabla_a K^a_b = \frac{1}{2} \sum_a (k_b - k_a) \del_b h_a , \qquad b=1,2,3,
\ee
  where we explicited the sum in $\nabla_a K^a_b$ to avoid confusion.
  Given that $\phi_1=0$, the momentum constraint states that this sum should vanish for all~$b$.  One rather symmetric solution is to choose
$
  h_a(x) = \lambda(|x|^2)\,(k_{a+1} - k_{a-1})$ 
  for $a=1,2,3$, 
  where indices of $k$ are understood modulo~$3$ and~$\lambda$ is some radial function that vanishes at $|x|^2=0$ and $|x|^2=1/2$ together with its derivatives of all orders (so as to keep the data smooth).

  We are free to impose that $\lambda'(|x|^2)\neq 0$ for all other values of~$|x|^2$: as we now show, this ensures that $\nabla_a(\Kcirc^2)^a_b\neq 0$ for $0<|x|^2<1/2$.
  Thanks to the momentum constraint $\nabla_a K^a_b=0$,
\[
  \nabla_a(\Kcirc^2)^a_b
  = (\nabla_a K^a_c) K^c_b + K^a_c \nabla_a K^c_b - \frac{2}{3} \nabla_a K^a_b + \frac{1}{9} \nabla_a \delta^a_b
  = K^a_c \nabla_a K^c_b.
\]
  Next, using the fact that $K$ is constant (in the given coordinates), and inserting the explicit form of Christoffel symbols we compute
\[
  \nabla_a(\Kcirc^2)^a_b
  = K^a_c \Gamma^c{}_{ad} K^d_b - K^a_c \Gamma^d{}_{ab} K^c_d 
  = \tfrac{1}{2} K^{ac} (g_{ca,d}+g_{cd,a}-g_{ad,c}) K^d_b
  - \tfrac{1}{2} (K^2)^{ad}(g_{da,b}+g_{db,a}-g_{ab,d}) .
\]
  In both terms the second and third derivatives of~$g$ cancel by symmetry of~$K^{ac}$ and of~$(K^2)^{ad}$.  We now write sums explicitly after using that $g$ and $K$ are diagonal:
\[
  \nabla_a(\Kcirc^2)^a_b
  = \frac{1}{2}K^d_b K^{ac} g_{ac,d} - \frac{1}{2}(K^2)^{ac} g_{ac,b}
  = \frac{1}{2}\sum_a (k_b-k_a) k_a \del_b h_a , \quad b=1,2,3 .
\]
  Using that $\del_bh_a=2x_b\lambda'(k_{a+1}-k_{a-1})$ with indices understood modulo~$3$, we compute by explicitly writing down the terms in the sum and factorizing to get
\[
  \nabla_a(\Kcirc^2)^a_b
  = x_b\lambda' (k_1-k_2) (k_2-k_3) (k_3-k_1) .
\]
  Since $x_b\lambda'$ is non-zero for $0<|x|^2<1/2$ and since we assumed that the prescribed data has distinct eigenvalues, we find as announced that $\nabla_a(\Kcirc^2)^a_b\neq 0$ for $0<|x|^2<1/2$.
\end{proof}

\paragraph{Data with non-trivial derivatives.}

In the course of proving the classification the momentum constraint reduces to an equation of the form
\bel{dchi-gammadzeta}
\del_a \gamma\,\Kcirc{}^a_b = \sum_I \chi_I \del_b \zeta_I ,
\ee
where $\gamma,\chi_I,\zeta_I$ are some scalar functions of the singularity data.
The following lemma states that this equation implies that both sides vanish separately.
Its proof is analogous to that of \autoref{lem:prescribed-values} and we give it in \autoref{app:scattering}.

\begin{lemma}[Extrinsic curvature and derivatives of scalars]
  \label{lem:eigen-deriv}
  Let $\Hcal$ be a $3$-manifold.
  Let $\gamma,\chi_I,\zeta_I$ be a finite collection of continuous ultralocal scalar fields such that for any singularity data $(g,K,\phi_0,\phi_1)$ on~$\Hcal$ the relation~\eqref{dchi-gammadzeta} holds
  at all points $x\in\Hcal$ such that $K(x)$ has three distinct eigenvalues.
  Then $\gamma$ is a constant: it only depends on the data through the signature of~$g$.
\end{lemma}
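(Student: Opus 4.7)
The plan is to show that $\widehat{\gamma}\colon \Delta_{\neq}\to\RR$, defined by $\gamma=\widehat{\gamma}(\theta,\phi_0,\phi_1)$ via \autoref{lem:scalars}, is in fact a constant, and then to invoke \autoref{lem:continuity} and \autoref{lem:constants} to extend this constancy to arbitrary singularity data with only a residual dependence on the signature of~$g$. Setting $A_\alpha\coloneqq\sum_I\widehat{\chi}_I\,\del_\alpha\widehat{\zeta}_I$ for $\alpha\in\{\theta,\phi_0,\phi_1\}$ and applying the chain rule to the ultralocal scalars $\gamma,\chi_I,\zeta_I$, the hypothesis~\eqref{dchi-gammadzeta} rewrites as the pointwise identity between 1-forms $\sum_\alpha\widehat{\gamma}_\alpha(\del_a\alpha)\,\Kcirc^a_b=\sum_\alpha A_\alpha\,\del_b\alpha$ at every point where $K$ has pairwise distinct eigenvalues. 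Choosing a local frame of eigenvectors of $K$ at such a point $x_0$, $\Kcirc$ is diagonal with distinct entries $\lambda_b=k_b-\tfrac{1}{3}$, and the identity decouples in each eigen-direction~$b$ into the scalar relation $\sum_\alpha(\lambda_b\widehat{\gamma}_\alpha-A_\alpha)\,\del_b\alpha(x_0)=0$.

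\paragraph{Key construction.}
The key step is to construct, for any prescribed value $(\theta_0,\phi_{0,0},\phi_{1,0})\in\Delta_{\neq}$ and any three prescribed 1-forms $\omega^{(\theta)},\omega^{(\phi_0)},\omega^{(\phi_1)}$ at a chosen point $x_0\in\Hcal$, a smooth singularity data set on $\Hcal$ whose scalars at $x_0$ take the prescribed values and whose gradients $d\theta|_{x_0}$, $d\phi_0|_{x_0}$, $d\phi_1|_{x_0}$ coincide with $\omega^{(\theta)},\omega^{(\phi_0)},\omega^{(\phi_1)}$. The construction follows the layered-ball strategy of \autoref{lem:prescribed-values} and \autoref{lem:derivatives}, based on the conformal ansatz $g=\Omega^{-2/3}\diag(\pm 1,1,1)$, $\Kcirc=\Omega\,\Ht$, now with $\Omega,\Ht,\phi_1$ allowed to carry prescribed non-trivial linear profiles at the centre of the innermost layer. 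In geodesic normal coordinates at~$x_0$, the Hamiltonian constraint forces $\del_a\phi_0=-(8\pi\phi_0)^{-1}K^c_d\,\del_a K^d_c$, the momentum constraint forces $\del_b\phi_1=(8\pi\phi_0)^{-1}\del_a K^a_b$, and $\del_a\theta$ is a prescribed linear combination of the $\del_a k_i$ via the Kasner parametrization; an elementary linear-algebra check (using distinctness of the eigenvalues of~$K$) shows that the 15 independent components of the traceless-symmetric 1-jet $\del_a K^c_d$ surject onto the 9 components of $(d\theta,d\phi_0,d\phi_1)$, with 6 diagonal components (in the eigenbasis) tuning $(d\theta,d\phi_0)$ and the off-diagonal components subsequently tuning~$d\phi_1$.

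\paragraph{Conclusion of the argument.}
Fixing $(\theta_0,\phi_{0,0},\phi_{1,0})\in\Delta_{\neq}$ and the eigenbasis of $K$ at~$x_0$, and dialing each triple $(\del_b\theta,\del_b\phi_0,\del_b\phi_1)(x_0)$ independently through the data sets just constructed, the scalar identity $\sum_\alpha(\lambda_b\widehat{\gamma}_\alpha-A_\alpha)\,\del_b\alpha=0$ then forces $\lambda_b\widehat{\gamma}_\alpha=A_\alpha$ for each $\alpha\in\{\theta,\phi_0,\phi_1\}$ and each $b\in\{1,2,3\}$. Since $\lambda_1,\lambda_2,\lambda_3$ are pairwise distinct, this yields $\widehat{\gamma}_\alpha=0$ (and $A_\alpha=0$) for every~$\alpha$, so that $\widehat{\gamma}$ has vanishing partial derivatives throughout~$\Delta_{\neq}$. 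As the quotient of $\Delta_{\neq}$ by the symmetries $\theta\to-\theta$ and $\theta\to\theta+\tfrac{2\pi}{3}$ (or $\theta\to\theta+2\pi$ in the timelike case) is connected, $\widehat{\gamma}$ equals some constant~$c$ throughout. Applying \autoref{lem:continuity} to the ultralocal scalar $\gamma-c$, which vanishes on all non-degenerate data, gives $\gamma\equiv c$ on every singularity data set; \autoref{lem:constants} then concludes that this constant depends only on the signature of~$g$.

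\paragraph{Main obstacle.}
The technical heart of the proof is the key construction above: realizing an arbitrary triple of 1-forms $(d\theta,d\phi_0,d\phi_1)$ at~$x_0$ through a genuine singularity data set defined and smooth on all of~$\Hcal$, rather than merely matching a 1-jet at one point. Ensuring compatibility with the momentum constraint beyond first order, so that an actual function $\phi_1$ with the prescribed gradient exists in a neighborhood of~$x_0$, requires (as in the proof of \autoref{lem:derivatives}) a carefully tuned radial Ansatz in the innermost ball layer, patched smoothly through intermediate layers into trivial data outside a small chart so as to produce a bona fide element of $\Ibf(\Hcal)$.
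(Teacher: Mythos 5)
The proposal takes a genuinely different route from the paper, and the route has a gap that is not filled. The paper's own proof is short and avoids realizing arbitrary 1-jets of $(\theta,\phi_0,\phi_1)$ altogether: it reuses the radial data set from \autoref{lem:prescribed-values}, for which $\Kcirc$ is always a non-negative multiple of a fixed $\Kucirc$ with distinct eigenvalues, so $\theta$ is constant and all scalars are radial functions of $|x|^2$. In that configuration the hypothesis~\eqref{dchi-gammadzeta} reduces, away from the coordinate planes, to $(k_b-\tfrac{1}{3})\gamma' = \sum_I\chi_I\zeta_I'$ with a right-hand side manifestly independent of~$b$; taking differences over~$b$ yields $(k_a-k_b)\gamma'=0$ and hence $\gamma'=0$. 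This equates $\gamma$ at the prescribed value with $\gamma$ at the trivial boundary data, after which \autoref{lem:continuity} finishes. No chain rule in $(\theta,\phi_0,\phi_1)$ and no surjectivity claim on 1-jets is needed.

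Your argument, by contrast, rests on a much stronger construction: that for every value in~$\Delta_\neq$ and every triple of covectors $\omega^{(\theta)},\omega^{(\phi_0)},\omega^{(\phi_1)}$ there is a smooth singularity data set whose scalars have precisely these values and gradients at~$x_0$. The linear algebra at the point~$x_0$ (showing that the $1$-jet of~$K$ surjects onto the $1$-jet of $(\theta,\phi_0,\phi_1)$) is plausible, but the genuine difficulty you flag in your ``Main obstacle'' paragraph is not resolved: one must produce an actual element of $\Ibf(\Hcal)$, i.e.\ satisfy the constraints on a neighborhood, not just at the $1$-jet level. Concretely, once~$K$ is chosen, the momentum constraint forces $d\phi_1=(8\pi\phi_0)^{-1}\nabla_aK^a_b\,dx^b$, which must be a \emph{closed} $1$-form near~$x_0$ — a nontrivial second-order condition on~$K$ and~$g$ that is not addressed. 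Moreover, the conformal-ansatz layers you refer to (\autoref{lem:prescribed-values}, \autoref{lem:derivatives}) are built with \emph{radial} scalars, which constrain all of $d\theta,d\phi_0,d\phi_1$ at~$x_0$ to be collinear; they cannot realize generic triples of $1$-forms, so the construction cannot be obtained by simply replaying those lemmas ``with linear profiles'' without substantial new work. Note also that the paper's \autoref{lem:independent-derivatives} deliberately proves only linear independence of the derivatives in a stepwise fashion — not full surjectivity onto $1$-jets — precisely because the latter is not available. Your conclusion would follow if the key construction were supplied, but as written the proof is incomplete, and I would recommend switching to the paper's radial-data argument, which sidesteps the entire issue.
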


\paragraph{Derivatives of scalars are independent.}

Thanks to \autoref{lem:eigen-deriv} the momentum constraint reduces from~\eqref{dchi-gammadzeta} down to the vanishing of a sum of terms $\chi_I\del_b\zeta_I$, $b=1,2,3$, with $\chi_I,\zeta_I$ being ultralocal scalars.
Expanding the derivatives gives a linear combination of $\del_b\theta,\del_b\phi_0,\del_b\phi_1$, and the following lemma states that these derivatives are linearly independent in a suitable sense.
Its proof is analogous to that of \autoref{lem:prescribed-values} and we give it in \autoref{app:scattering}.

\begin{lemma}[Linear independence of derivatives of scalars]
  \label{lem:independent-derivatives}
  Let $\Hcal$ be a $3$-manifold and let $\mu,\nu,\varkappa$ be continuous ultralocal scalar fields such that for any singularity data $(g,K,\phi_0,\phi_1)$ on~$\Hcal$ one has 
\bel{munuxi}
  \mu \del_b \theta + \nu \del_b \phi_0 + \varkappa \del_b \phi_1 = 0 , \qquad b=1,2,3,
\ee
  at all points $x\in\Hcal$ such that $K(x)$ has three distinct eigenvalues.
  Then one has $\mu=\nu=\varkappa=0$.
\end{lemma}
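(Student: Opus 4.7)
The overall plan is to show that $\mu,\nu,\varkappa$ vanish at every interior value $(\theta^\star,\phi_0^\star,\phi_1^\star)\in\Delta_\neq$ and then extend by continuity to the boundary. Recall that by \autoref{lem:scalars} the continuous ultralocal scalars $\mu,\nu,\varkappa$ depend only on $(\theta,\phi_0,\phi_1)$. For each fixed target in~$\Delta_\neq$ it suffices to construct a singularity data set $(g,K,\phi_0,\phi_1)$ on~$\Hcal$ with a distinguished point $x_0\in\Hcal$ such that the scalars $(v_1,v_2,v_3)\coloneqq(\theta,\phi_0,\phi_1)$ realize the target at~$x_0$ while the gradient matrix $M_{bi}(x_0)\coloneqq\del_b v_i(x_0)$ is invertible. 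Evaluating the hypothesis~\eqref{munuxi} at~$x_0$ then yields $M\,(\mu,\nu,\varkappa)^{T}=0$, hence $\mu=\nu=\varkappa=0$ at the target value, and continuity extends the vanishing to all of $\RR\times I_0\times\RR$.

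For the construction I will follow the multi-layer scheme used in the proof of \autoref{lem:prescribed-values}, modifying only the innermost layer so as to prescribe first derivatives in addition to point values. Working on the unit ball $B\subset\RR^3$ with $x_0=0$, I employ the conformally flat ansatz $g=\Omega^{-2/3}\diag(\pm 1,1,1)$ and $\Kcirc=\Omega\Ht$ of~\eqref{gK-rescaled}, so that the Hamiltonian constraint fixes $8\pi\phi_0^2=2/3-\Omega^2\Tr\Ht^2$ and the momentum constraint reads $\del_a\Ht^a{}_b=(8\pi\phi_0/\Omega)\del_b\phi_1$. I take
\[
\Ht(x)=\Kucirc+x^c D_c+O(|x|^2) , \qquad \Omega(x)=\Omega_0+\Omega_1^c x_c+O(|x|^2) ,
\]
where $\Omega_0>0$ and the diagonal traceless matrix~$\Kucirc$ are adjusted so that $(\theta,\phi_0)(0)=(\theta^\star,\phi_0^\star)$ via~\eqref{Kasner-rtheta}, the $D_c$ are arbitrary symmetric traceless $3\times 3$ matrices, and $\Omega_1^c\in\RR$ are arbitrary. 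The scalar $\phi_1$ is then obtained by integrating the momentum constraint with $\phi_1(0)=\phi_1^\star$. Outer layers smoothly interpolate to trivial data exactly as in the proof of \autoref{lem:prescribed-values}.

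Since $\Kucirc$ has three distinct eigenvalues (as guaranteed by $(\theta^\star,\phi_0^\star)\in\Delta_\neq$), first-order perturbation theory gives $\del_c\hat k_a(0)=(D_c)^a{}_a$ for the eigenvalues of~$\Ht$, and then~\eqref{Kasner-rtheta} together with the chain rule yields linear expressions for $\del_c\theta(0)$ and $\del_c\phi_0(0)$ in terms of $(D_c)^1{}_1$, $(D_c)^2{}_2$ and $\Omega_1^c$, while the momentum constraint gives
\[
\del_c\phi_1(0) = \frac{\Omega_0}{8\pi\phi_0^\star}\sum_{a=1}^{3}(D_a)^a{}_c .
\]
For each direction~$c$, I will first solve the single linear equation $\del_c\theta(0)=T_c$ for $(D_c)^1{}_1$ and $(D_c)^2{}_2$ (which is non-trivial because the relevant coefficients cannot simultaneously vanish under our assumption $\theta^\star\not\equiv 0\bmod\pi/3$), then adjust $\Omega_1^c$ to obtain $\del_c\phi_0(0)=P_c$, and finally adjust an off-diagonal entry $(D_{c'})^{c'}{}_c$ with $c'\ne c$ to obtain $\del_c\phi_1(0)=Q_c$. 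The crucial bookkeeping point is that such off-diagonal entries $(D_{c'})^{c'}{}_c$ with $c'\ne c$ are pairwise distinct across spatial directions and, moreover, they contribute neither to any $\del_{c''}\theta$ or $\del_{c''}\phi_0$ (since off-diagonal perturbations leave eigenvalues unchanged at first order) nor to $\del_{c''}\phi_1$ for $c''\ne c$ (since in the sum $\sum_a(D_a)^a{}_{c''}$ the entry $(D_{c'})^{c'}{}_c$ only appears when $c''=c$). Consequently the choices in distinct directions decouple, and $M(x_0)$ can be prescribed to equal the identity.

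The main obstacle is the combinatorial verification in this last step: one must identify which components of the symmetric traceless matrices $D_c$ and of $\Omega_1^c$ control which of the nine entries of~$M$, and arrange the resulting linear system to be solvable in a decoupled triangular order. Once this is done, the remaining technical points---smoothness of the radial interpolation to trivial data, diffeomorphism-covariant extension to an arbitrary $3$-manifold~$\Hcal$, and continuous extension of $\mu=\nu=\varkappa=0$ to the boundary values $\phi_0^\star=\pm 1/\sqrt{12\pi}$ and $\theta^\star\equiv 0\bmod\pi/3$---proceed as in the proofs of Lemmas~\ref{lem:prescribed-values} and~\ref{lem:continuity}.
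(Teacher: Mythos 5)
Your proof is correct in substance, but it takes a genuinely different route from the paper's.

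The paper's proof eliminates $\varkappa$, then $\nu$, then $\mu$ sequentially: for each coefficient it builds a separate tailored data set in which exactly one of $\del_b\theta,\del_b\phi_0,\del_b\phi_1$ is nonzero while the others vanish (or have already been shown to carry a zero coefficient), so that \eqref{munuxi} degenerates to a single product being zero. For instance, it first constructs a data set with constant $K,\phi_0$ but varying $\phi_1$ (diagonal metric with $h_1+h_2+h_3=0$, $\phi_1$ as in the paper's equation~\eqref{eq23094}) to kill~$\varkappa$, then reuses the construction of \autoref{lem:prescribed-values} with a varying conformal factor to kill~$\nu$, and finally builds data with varying Kasner angle via the perturbation~\eqref{data-lem-non-degenerate} to kill~$\mu$. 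Your approach instead builds a \emph{single} data set whose gradient matrix $M_{bi}=\del_b(\theta,\phi_0,\phi_1)_i$ is invertible at a point, from which $\mu=\nu=\varkappa=0$ follows by linear algebra. This is conceptually cleaner (one construction, one invertibility check) at the cost of a heavier bookkeeping burden: one must verify that the $18$ first-order parameters ($D_c$, $\Omega_1^c$) control the $9$ entries of $M$ in a triangular, decoupled manner, as you sketch. Your observation that off-diagonal entries $(D_{c'})^{c'}{}_c$ (with $c'\ne c$) feed into $\del_c\phi_1$ via $\sum_a(D_a)^a{}_c$ while being invisible to eigenvalue perturbations and to $\del_{c''}\phi_1$ for $c''\ne c$ is the correct and key point, and it does make the linear system solvable.

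One step you should tighten is the phrase ``The scalar $\phi_1$ is then obtained by integrating the momentum constraint.'' The one-form $\omega_b=(\Omega/(8\pi\phi_0))\,\del_a\Ht^a{}_b$ is not automatically closed once $\Ht,\Omega$ are prescribed at first order, so $\phi_1$ cannot simply be integrated; one must also choose the $O(|x|^2)$ (and higher) terms in $\Ht$ to enforce $\del_{[c}\omega_{b]}=0$ order by order. Those higher-order coefficients are free parameters that do not influence the $1$-jet of $(\theta,\phi_0,\phi_1)$ at $x=0$, so the fix is available and the argument goes through, but it deserves a sentence. (The paper avoids this issue by always writing down explicit solutions of the momentum constraint.) Similarly, you implicitly restrict to $\phi_0^\star\neq 0$ when dividing by $\phi_0$ in the expression for $\del_c\phi_1(0)$; as in the paper, the locus $\phi_0=0$ is reached afterwards by the continuity argument of \autoref{lem:continuity}, and you should say so explicitly, since $\phi_0=0$ lies in the \emph{interior} of~$\Delta_\neq$, not on its boundary.
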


\subsection{Technical steps for the classification of singularity scattering maps}
\label{app:scattering}

\paragraph{Proof of \autoref{lem:eigen-deriv}.}

Fix a signature for the metric once and for all.
We use the singularity data set on~$\Hcal$ that we constructed in the proof of \autoref{lem:prescribed-values} for the case of a single prescribed value $(\Ku,\phiu_0,\phiu_1)$.  We recall now solely the aspects that we need in this proof.
Outside a ball $B\subset\Hcal$ which we identify (by a diffeomorphism) with the unit ball in~$\RR^3$, the data are trivial.  Inside the ball, we have four layers, in which all scalar functions are radial, in the sense that they only depend on~$|x|^2$.  The data at~$0$ matches the prescribed value $(\Ku,\phiu_0,\phiu_1)$.  In the first three layers,
\be
g = \Omega^{-2/3} \diag(\pm 1,1,1) , \qquad \Kcirc = \Omega \Kucirc ,
\ee
with a suitably chosen radial function $\Omega>0$.
In particular, $\Omega(0)=1$ to reproduce the prescribed data at~$0$, while at the outer edge of the third layer $\Omega$ is equal to a value~\eqref{phi0-eq-Omega} for which $\Omega=\phi_0$.
In the last layer,
\be
g = \phi_0^{-2/3} \diag(\pm 1,1,1) , \qquad \Kcirc = \phi_0 \alpha \Kucirc ,
\ee
with a suitably chosen radial function~$\alpha$ interpolating smoothly from $\alpha=1$ at the inner edge of the layer to $\alpha=0$ in a neighborhood of the boundary of the ball, say for $|x|^2\geq 5/6$.  We can select~$\alpha$ so that it is positive for $|x|^2<5/6$.

Crucially, $\Kcirc$ is proportional to $\Kucirc$ throughout, and tends to~$0$ at $|x|^2=5/6$.  Choose now $\Ku$ to have pairwise distinct eigenvalues, so that $K(x)$ also does in the region $|x|^2<5/6$.  By assumption, we thus have
\be
\sum_I \chi_I \del_b \zeta_I = \del_a \gamma\,\Kcirc{}^a_b = (k_b -\tfrac{1}{3}) \del_b \gamma , \quad b=1,2,3,
\ee
in this region.  Since all scalars are radial we find
\be
2x_b \sum_I \chi_I \zeta_I' = 2x_b (k_b -\tfrac{1}{3}) \gamma' , \quad b=1,2,3,
\ee
where primes denote $|x|^2$~derivatives.  Away from the coordinate planes we can divide by $2x_b$ and take the difference of two of these equations to get $(k_b-k_a)\gamma'=0$ for all $a,b=1,2,3$.  Since $K(x)$ is non-degenerate for $|x|^2<5/6$, we learn that $\gamma'=0$ in this region minus the coordinate planes.  Since $\gamma$ is a radial function we finally get that $\gamma$ is a constant on $0<|x|^2<5/6$.  By continuity, $\gamma(0)$ and $\gamma(5/6)$ also take the same value.  The scalar field thus takes the same value for the prescribed data as for the trivial data $\Kcirc=0$, $\phi_0=1/\sqrt{12\pi}$, $\phi_1=0$.  Let us call this constant value~$\gamma_0$
Now $\gamma-\gamma_0$ obeys the conditions of \autoref{lem:continuity} hence $\gamma=\gamma_0$ throughout~$\Hcal$ for arbitrary data sets, as we wanted to prove.

\paragraph{Proof of \autoref{lem:independent-derivatives}}

As usual, \autoref{lem:scalars} implies that $\mu,\nu,\varkappa$ are functions of $(\theta,\phi_0,\phi_1)$.
We prove $\varkappa=0$, $\nu=0$, and $\mu=0$, in this order, by applying the continuity \autoref{lem:continuity} after proving these identities for data with any prescribed value $(\Ku,\phiu_0,\phiu_1)$ such that $\phiu_0\neq 0$ and $\Ku$~has pairwise distinct eigenvalues.
We take $\Ku$~diagonal without loss of generality.
Let us show $\varkappa=0$. As in previous proofs, we construct the data in layers in the unit ball $B\subset\RR^3$ and work in the standard basis.
We only describe the first layer, as the construction can easily be completed, using the same layers as in \autoref{lem:prescribed-values}, to a data set on~$B$ that is trivial near the boundary.
In the first layer, say $|x|^2\leq 1/2$, we keep $K=\Ku$ and $\phi_0=\phiu_0$ constant but vary $\phi_1$ and the metric, which we choose to be diagonal and of determinant $\pm1$, that is, 
$
  g = \diag(\pm e^{h_1},e^{h_2},e^{h_3})$
with $ h_1+h_2+h_3=0$. 
For this data, the momentum constraint simplifies further than~\eqref{momentum-diagonal}, namely we have 
$
\frac{-1}{2} \sum_a \ku_a \del_b h_a = 8\pi\phiu_0 \del_b \phi_1$ ($b=1,2,3$), 
whose solution is
\bel{eq23094}
\phi_1 = \frac{-1}{16\pi\phiu_0} \sum_a \ku_a h_a + \text{constant} .
\ee
The functions $h_a(|x|^2)$ are arbitrary except for $h_1+h_2+h_3=0$ and for the fact that they must vanish together with all their derivatives at~$0$ and $1/2$, so we can arrange that $\phi_1'\neq 0$ in ther interval $0<|x|^2<1/2$.
On the other hand, \eqref{munuxi} reads
$
0 = \varkappa \del_b \phi_1 = 2x_b \varkappa \phi_1' ,
$
so we conclude that $\varkappa$ vanishes near~$0$, except along the coordinate planes.
By continuity, $\varkappa$ vanishes at~$0$, namely it vanishes for the prescribed data.
By \autoref{lem:continuity}, $\varkappa=0$ identically.

Next, to prove $\nu=0$, we consider exactly the data set on $B\subset\RR^3$ used for \autoref{lem:prescribed-values}, but specify further the conformal factor~$\Omega$ used there.  Recall that in the first layer $\Kcirc=\Omega\Kucirc$ where $\Ku$ is the prescribed value and $\Omega=\Omega(|x|^2)$ interpolates between $\Omega(0)=1$ and some value at the outer boundary $|x|^2=1/4$ of the layer, with all derivatives vanishing at these end-points.  We choose $\Omega$ such that it takes the value $1$ again for some $|x|^2\in(0,1/4)$, but with a non-zero derivative.  Then the data at this point is equal to the prescribed value, but $\del_b\phi_0\neq 0$.  On the other hand, by construction of the data set, $\Kcirc$~is everywhere a non-negative multiple of the given data, in other words $\theta$ is constant.
Thus, \eqref{munuxi} reads $\nu \del_b \phi_0=0$ hence $\nu=0$.
Since this holds for arbitrary prescribed data, we conclude that $\nu=0$ identically.

Finally, proving $\mu=0$ requires building singularity data sets with variable~$\theta$, and we can ignore how $\phi_0,\phi_1$ vary since we already showed $\nu=\varkappa=0$.
We use the conformally flat data constructed in~\eqref{data-lem-non-degenerate}.
In particular, this data set has $\Kcirc=\abs{\phi_0}\Ht$ with
\be
\Ht^a_b = \alpha(|x|^2) \,\Kucirc^a_b + \beta(|x|^2) \,\bigl(x^a x_b - \tfrac{1}{3} \delta^a_b |x|^2\bigr).
\ee
Contrarily to what we do below~\eqref{data-lem-non-degenerate}, we now take $\alpha=\alpha(|x|^2)$ and $\beta=\beta(|x|^2)$ to be different radial functions.
Specifically, we fix some generic point $y\in B$ (we determine later the genericity condition) and impose some values for $\alpha,\beta,\alpha',\beta'$ at that particular point:
\be
\alpha(|y|^2) = 1 , \quad \beta(|y|^2) = 0 , \quad
\alpha'(|y|^2) = \Kucirc^a_b y^b y_a , \quad \beta'(|y|^2) = - \Tr(\Kucirc^2) .
\ee
Then, $\Ht(y) = \Kucirc$ while $\del_b\Ht(y)^a_c=2y_b\bigl(\alpha'\Kucirc^a_c+\beta'(y^a y_c - \tfrac{1}{3}\delta^a_c|y|^2)\bigr)$, so
\be
\del_b \Tr(\Ht^2) = 2\Tr(\Ht\del_b\Ht) = 4x_b\bigl(\alpha'\Tr(\Kucirc^2)+\beta'y^a \Kucirc_a^c y_c\bigr) = 0 .
\ee
On the other hand, we have 
\be
\aligned
\del_b \Tr(\Ht^3) = 3\Tr(\Ht^2\del_b\Ht)
& = 6y_b\bigl(\alpha'\Tr(\Kucirc^3) + \beta'\bigl(y^a (\Kucirc^2)^c_a y_c - \tfrac{1}{3}|y|^2\Tr(\Kucirc^2)\bigr)\bigr) \\
& = 2y_b(\ku_1-\ku_2)(\ku_2-\ku_3)(\ku_3-\ku_1)\bigl((\ku_2-\ku_3)y^1y_1 + (\ku_3-\ku_1)y^2y_2 + (\ku_1-\ku_2)y^3y_3\bigr)
\endaligned
\ee
where we obtained the second line by explicitly writing down all terms, using $\Tr\Kucirc=0$, and factorizing.  For generic~$y$ the result is nonzero provided eigenvalues of $\Kucirc$ are pairwise distinct.

We are interested in the Kasner angle, which one can get from \eqref{Kcirc-powers-tr}, and using that $\Kcirc$ is a positive multiple of~$\Ht$:
$
\cos(3\theta) = \frac{\Tr\Kcirc^3}{\sqrt{6}(\Tr\Kcirc^2)^{3/2}}
= \frac{\Tr\Ht^3}{\sqrt{6}(\Tr\Ht^2)^{3/2}} .
$
Taking a derivative and evaluating at~$y$ we get
\bel{dcos3theta}
-3 \sin(3\theta)\del_b\theta = \frac{\del_b\Tr(\Ht^3)}{\sqrt{6}(\Tr\Ht^2)^{3/2}} \neq 0.
\ee
As a consistency check we note that $\sin(3\theta)$ is indeed non-zero when $\theta\neq 0\bmod{\pi/3}$, namely when eigenvalues of~$K$ are pairwise distinct.
An important consequence of~\eqref{dcos3theta}, however, is that $\del_b\theta\neq 0$.
Then \eqref{munuxi} $\mu\del_b\theta=0$ implies that $\mu=0$ at the point~$y$, where data can take any prescribed value with $\phi_0\neq 0$ (so that the conformal scaling makes sense) and pairwise distinct eigenvalues for~$K$.  We conclude by \autoref{lem:continuity} that $\mu=0$ identically.
This establishes \autoref{lem:independent-derivatives}.

\subsection{Structure of scattering maps}
\label{ssec:struct}

\paragraph{Reduction to pointwise scattering maps.}

Let us consider a spacelike or timelike ultralocal singularity scattering map~$\Sbf$ on some (unimportant) $3$-manifold~$\Hcal$.  Its restriction $\Sbf_x$ to any one point $x\in\Hcal$ can be described as follows.
Any choice of local coordinates near~$x$ identifies the space of possible values of $(\gmoi, \Kmoi, \phimoi_0, \phimoi_1)(x)$ to the finite-dimensional space $\Ipoint$ of tuples $(\gpoint,\Kpoint,\phipoint_0,\phipoint_1)$ such that $\phipoint_0, \phipoint_1\in\RR$, $\gpoint$~is a quadratic form on $\RR^3$ with signature ${\pm}{+}{+}$ in the spacelike or timelike case, and $\Kpoint$~is a matrix that is symmetric with respect to~$\gpoint$ and that obeys $\Tr\Kpoint=1$ and $1-\Tr(\Kpoint^2)=8\pi\phipoint_0^2$.
Under this identification, $\Sbf$ yields a map $\Spoint\colon\Ipoint\to\Ipoint$ that is independent of the choice of~$x$ and of local coordinates thanks to diffeomorphism invariance.
Changing local coordinates acts with a matrix $A\in GL(3,\RR)$ on both sides of the singularity, namely
\[
\Spoint(A\cdot(\gpoint,\Kpoint))=A\cdot\Spoint(\gpoint,\Kpoint),
\]
where $A$~acts in the obvious manner $\gpoint_{ab}\mapsto A_a^cA_b^d\gpoint_{cd}$ and $\Kpoint^a_b\mapsto (A^{-1})^a_c A_b^d \Kpoint^c_d$.  We arrive at a first useful description of ultralocal singularity scattering maps.

\begin{lemma}[Reduction to pointwise scattering maps]\label{prop-ultralocal}
Specifying an ultralocal singularity scattering map~$\Sbf$ is equivalent to specifying a $GL(3,\RR)$-covariant map $\Spoint\colon\Ipoint\to\Ipoint$ that preserves the momentum constraint in the following sense.
  For any data $(\gmoi,\Kmoi,\phimoi_0,\phimoi_1)$ on some three-manifold $\Hcal$, and $(\gpoi,\Kpoi,\phipoi_0,\phipoi_1)$ its image under pointwise application of~$\Spoint$, one has:
\[
  \text{if } \nablamoi_a \Kmoi^a_b = 8 \pi \, \phimoi_0 \del_b \phimoi_1
  \text{ then } \nablapoi_a \Kpoi^a_b = 8 \pi \, \phipoi_0 \del_b \phipoi_1 .
\]
\end{lemma}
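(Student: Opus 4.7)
The plan is to prove the two directions of this equivalence separately, showing that the data of an ultralocal singularity scattering map~$\Sbf$ and a pointwise map $\Spoint$ satisfying the two stated conditions are equivalent.

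For the forward direction, suppose $\Sbf$ is an ultralocal singularity scattering map. First I would fix a point $x\in\Hcal$ and a chart around~$x$ to identify the fiber of $\Ibf(\Hcal)$ over~$x$ with $\Ipoint$. Ultralocality ensures that $\Sbf$ descends to a well-defined map $\Spoint_x\colon\Ipoint\to\Ipoint$ at each such point. The key observation is that diffeomorphism covariance of $\Sbf$, restricted to diffeomorphisms fixing~$x$, acts on the fiber through its derivative at~$x$, which ranges over all of $GL(3,\RR)$; hence $\Spoint_x$ must be $GL(3,\RR)$-covariant. Using diffeomorphisms taking one point to another, this covariance further implies $\Spoint_x$ is independent of~$x$, giving a well-defined $\Spoint$. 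The momentum-constraint-preservation condition holds simply because $\Sbf$ is required to map $\Ibf(\Hcal)$ to itself, and the momentum constraint is part of the definition of $\Ibf(\Hcal)$.

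For the reverse direction, given a $GL(3,\RR)$-covariant $\Spoint\colon\Ipoint\to\Ipoint$ preserving the momentum constraint, I would define $\Sbf$ by pointwise application of~$\Spoint$. The resulting map is manifestly ultralocal and local in the sense of \autoref{def:singu}. I must verify that its output actually lies in $\Ibf(\Hcal)$: the conditions that $\gpoi$ is a metric (of the same signature), $\Kpoi$ is symmetric with $\Tr\Kpoi=1$, and the \emph{Hamiltonian} constraint hold all follow pointwise from $\Spoint$ mapping $\Ipoint$ into itself; the momentum constraint holds precisely by assumption. Finally, diffeomorphism-covariance of $\Sbf$ is inherited from $GL(3,\RR)$-covariance of~$\Spoint$, since the action of a diffeomorphism on singularity data reduces pointwise to the action of its Jacobian.

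The main subtle point, and the only one requiring any real argument beyond unpacking definitions, is that $GL(3,\RR)$-covariance of the pointwise map really does arise from diffeomorphism invariance. This requires noting that every element of $GL(3,\RR)$ can be realized as the Jacobian at~$x$ of some diffeomorphism fixing~$x$ (for instance a linear map in a local chart, suitably cut off away from~$x$), and that the three-manifold~$\Hcal$ is irrelevant because we can transport data between points and between manifolds using local diffeomorphisms, as already observed in the remark following \autoref{def:singu}. Everything else is a direct translation between the two formulations, with no hidden compatibility condition to check beyond the two stated constraints (the Hamiltonian one being pointwise and hence automatic, the momentum one being differential and hence imposed by hand).
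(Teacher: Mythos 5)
Your proposal is correct and follows essentially the same route as the paper's (which consists of the paragraph immediately preceding the lemma): identify the fiber at a point with $\Ipoint$ via a chart, note that diffeomorphism covariance together with ultralocality gives a well-defined $GL(3,\RR)$-covariant pointwise map independent of the base point, and observe that preservation of the momentum constraint is exactly what is needed for the pointwise map to reproduce a well-defined scattering map since the Hamiltonian constraint is automatic (being pointwise). You spell out the reverse direction, which the paper treats as immediate, and correctly flag the one non-trivial ingredient (realizing arbitrary Jacobians by diffeomorphisms and transporting data between points).
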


\paragraph{Polynomial structure of extrinsic curvature.}

We have seen in \autoref{lem:scalars} that scalars such as $\phipoi_0$ and~$\phipoi_1$ are simply functions of $\theta,\phi_0,\phi_1$ that are even and $2\pi/3$-periodic (spacelike case) or $2\pi$-periodic (timelike case) in~$\theta$, and that are $\theta$-independent for $\phi_0=\pm 1/\sqrt{12\pi}$.
The tensors $\gpoi$ and~$\Kpoi$ are likewise constrained by covariance under $GL(3,\RR)$ (change of basis).  We focus first on~$\Kpoi$ for definiteness, then we apply the same arguments to $(\gmoi)^{-1}\gpoi$, and finally to its logarithm after showing it exists and is real.

Let us work in an orthonormal basis of eigenvectors of~$\Kmoi$, namely a basis $v_1,v_2,v_3$ in which $\gmoi=\diag(\pm 1,1,1)$ and $\Kmoi$ is diagonal.
The change of basis mapping one of the eigenvectors~$v_a$ to its opposite does not affect $\gmoi$ and~$\Kmoi$ hence the image $(\gpoi,\Kpoi,\phipoi_0,\phipoi_1)$ of $(\gmoi,\Kmoi,\phimoi_0,\phimoi_1)$ under the scattering map is also unaffected.
However, off-diagonal components of $\Kpoi$ in the basis $v_1,v_2,v_3$ change sign under such a change of basis, so they must vanish.
We learn that $\Kpoi$ is diagonal in the same basis as $\gmoi$ and~$\Kmoi$.
If $k_{1-},k_{2-},k_{3-}$ are all distinct the three matrices $\delta=\diag(1,1,1)$, $\Kmoi=\diag(k_{1-},k_{2-},k_{3-})$, and $K_-^2=\diag(k_{1-}^2,k_{2-}^2,k_{3-}^2)$ span the space of all diagonal matrices, so $\Kpoi$ is a linear combination of them.
It is most convenient later on to work with the traceless $\Kpoicirc=\Kpoi-\frac{1}{3}\delta$ and powers of $\Kmoicirc=\Kmoi-\frac{1}{3}\delta$, and write
\bel{Kpoib0delta}
\Kpoicirc = \beta_0 \delta + \beta_1 \Kmoicirc + \beta_2 \Kmoicirc^2
\ee
for some functions $\beta_0,\beta_1,\beta_2$ of $(\thetamoi,\phimoi_0,\phimoi_1)$.
Tracelessness of~$\Kpoicirc$ imposes $\beta_0=\frac{-1}{3}\beta_2\Tr\Kmoicirc^2$, of course, but it is more convenient for us to keep all three functions.
At this stage of the argument, these functions are only defined when eigenvalues are all distinct, namely when $\thetamoi\neq 0\bmod{\pi/3}$.
Let us comment on periodicity.
Exchanging the eigenvectors $v_2$ and~$v_3$ maps $\thetamoi\to-\thetamoi$ and swaps $k_{2\pm}\leftrightarrow k_{3\pm}$, which must leave~\eqref{Kpoib0delta} invariant, so the functions $\beta_0,\beta_1,\beta_2$ are even in~$\thetamoi$.
Likewise, they are $2\pi/3$ periodic in the spacelike case because the cyclic permutation $v_1\to v_2\to v_3\to v_1$ permutes eigenvalues $k_{a-}$ and $k_{a+}$ in the same way and maps $\thetamoi\to\thetamoi+2\pi/3$.
In the timelike case this cyclic permutation is not available because $v_1$~is singled out as being timelike with respect to the metric~$\gmoi$.

Whenever two eigenvalues of $\Kmoi$ coincide (say, $k_{1-}=k_{2-}$ for definiteness), the corresponding eigenvalues of~$\Kpoi$ also do, as we now prove.
For this, change basis in $\operatorname{Span}(v_1,v_2)$ from $v_1,v_2$ to another orthonormal pair of vectors $v'_1,v'_2$ with the same timelike/spacelike nature, namely with $\gmoi(v_i,v_j)=\gmoi(v'_i,v'_j)$ for $1\leq i,j\leq 2$.
This is an $O(2,\RR)$ or $O(1,1,\RR)$ transformation depending on signature.
The change of basis leaves $\gmoi,\Kmoi$ invariant hence must leave $\gpoi,\Kpoi$ invariant.
In particular, $\Kpoi$~remains diagonal, namely $v'_1,v'_2$ are also eigenvectors of~$\Kpoi$, which implies that $v_1$ and~$v_2$ have the same eigenvalue under~$\Kpoi$.

From this fact, and assuming that singularity scattering maps map smooth data to twice differentiable data, it would be possible to prove that $\beta_0,\beta_1,\beta_2$ extend to continuous functions for all $(\thetamoi,\phimoi_0,\phimoi_1)$.
The analysis is somewhat tedious but we will not need it:
indeed, \autoref{lem:non-degenerate} ensures that studying a singularity scattering map restricted to non-degenerate data is enough to fully characterize it.
We will simply impose at the end that the scattering maps we find have well-defined limits when two Kasner exponents coincide.

\paragraph{Polynomial structure of scattering maps.}

The arguments above apply if we replace $\Kpoi$ by the matrix $g_-^{-1}\gpoi$ with components $\gmoi^{ab}\gpoi_{bc}$, and they lead to expressing this matrix as a linear combination of $\delta,\Kmoi,\Kmoi^2$ with coefficients that are possibly singular at $r_-=0$.
The real matrix $g_-^{-1}\gpoi$ is diagonal in the real basis $v_1,v_2,v_3$ hence it has real eigenvalues.
They are non-zero since the matrix is invertible (with inverse $g_+^{-1}\gmoi$).
Consider briefly the special case where the two spacelike eigenvalues $k_{2-}=k_{3-}$ of $\Kmoi$ coincide.
Then as proven above for~$\Kpoi$, the entries $(2,2)$ and $(3,3)$ of $g_-^{-1}\gpoi$ are equal, from which we deduce $\gpoi(v_2,v_2)=\gpoi(v_3,v_3)$.
Because $\gpoi$~is diagonal and has signature ${-}{+}{+}$, exactly one of its diagonal entries must be negative, so by elimination $\gpoi(v_1,v_1)<0$.
We thus learn that eigenvalues of $g_-^{-1}\gpoi$ are all positive in this case $k_{2-}=k_{3-}$.
To extend the result to any $\Kmoi$, consider a smooth singularity data set $(\gmoi,\Kmoi,\phimoi_0,\phimoi_1)$ interpolating between a point where $k_{2-}=k_{3-}$ and a point with the desired value of~$\Kmoi$.
Continuity of the metric~$\gpoi$ implies that eigenvalues of $g_-^{-1}\gpoi$ vary continuously.  Since they are positive at a point and cannot vanish, they are positive everywhere.
We conclude that the matrix $g_-^{-1}\gpoi$ has {\sl positive} eigenvalues only.
The matrix thus admits a logarithm, to which the arguments above apply as well.
We conclude that the matrix $\log\bigl(g_-^{-1}\gpoi\bigr)$ is a linear combination of $\delta,\Kmoi,K_-^2$ too, as long as the $k_{a-}$ are pairwise distinct.

In practice, instead of $\delta,\Kmoi,K_-^2$ we write matrices as linear combinations of two other sets of matrices.
For $\Kpoi$ we write $\Kmoi=\tfrac{1}{3}+\Kmoicirc$ and express $\delta,\Kmoi,K_-^2$ as linear combinations of $\delta,\Kmoicirc,\Kmoicirc^2$ as stated above.
For the metric we express these matrices further in terms of $\delta,\cos(\Theta_-),\cos(2\Theta_-)$ where $\Theta_-=\diag(\theta_-,\theta_-+2\pi/3,\theta_-+4\pi/3)$: we recall the relations~\eqref{Kcirc-powers}
\bel{Kcirc-powers-copy}
\Kmoicirc = \frac{2\rmoi}{3} \cos\Theta_- , \qquad
\Kmoicirc^2 = \frac{2\rmoi^2}{9} \bigl(\delta+\cos(2\Theta_-)\bigr) .
\ee
Importantly, the angle $\thetamoi$, hence the matrices $\Theta_-$, $\cos\Theta_-$, and $\cos(2\Theta_-)$, are ill-defined at $\rmoi=0$.  Thus, a linear combination $\alpha_0\delta+\alpha_1\cos\Theta_-+\alpha_2\cos(2\Theta_-)$ only has a well-defined limit if the scalar fields $\del_{\thetamoi}\alpha_0,\alpha_1,\alpha_2$ vanish at $\rmoi=0$.
We deduce the following lemma.

\begin{lemma}[Polynomial structure of scattering maps]\label{lem:poly}
Any ultralocal singularity scattering map obeys
\bel{ultralocal-poly}
\aligned
\gpoi & = \exp\bigl(\alpha_0\,\delta + \alpha_1\cos\Theta_- + \alpha_2\cos(2\Theta_-)\bigr) \gmoi, 
\qquad
\qquad
\Kpoicirc = \beta_0 + \beta_1\Kmoicirc + \beta_2\Kmoicirc^2 , 
\endaligned
\ee
in which $\alpha_0,\alpha_1,\alpha_2,\beta_0,\beta_1,\beta_2$ are scalar functions, like $\phipoi_0,\phipoi_1$, namely functions of $\thetamoi,\phimoi_0,\phimoi_1$ that are even and periodic in~$\thetamoi$ with period $2\pi/3$ in the spacelike case and $2\pi$ in the timelike case.
In addition, $\del_{\thetamoi}\alpha_0,\alpha_1,\alpha_2$ vanish at $\rmoi=0$.
\end{lemma}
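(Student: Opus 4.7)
The plan is to carry out the argument essentially as sketched in the paragraphs preceding the lemma, organizing it into clean steps that produce the two formulas in~\eqref{ultralocal-poly}. By \autoref{prop-ultralocal} (reduction to pointwise scattering maps) and \autoref{lem:non-degenerate}, it suffices to work pointwise and on non-degenerate data, i.e.\ at points where $\Kmoi$ has three distinct eigenvalues $k_{1-},k_{2-},k_{3-}$; continuity will then extend the formulas to the full domain. Fix such a point and work in an $\gmoi$-orthonormal eigenbasis $v_1,v_2,v_3$ of $\Kmoi$, so that $\gmoi=\diag(\pm 1,1,1)$ and $\Kmoi=\diag(k_{1-},k_{2-},k_{3-})$.

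First I would establish the diagonal structure of $\Kpoi$ and of $\gmoi^{-1}\gpoi$ in this basis. The sign-flip $v_a\mapsto -v_a$ is a $GL(3,\RR)$ transformation fixing $(\gmoi,\Kmoi)$, hence fixes the image $(\gpoi,\Kpoi)$ by diffeomorphism covariance; the off-diagonal entries of $\Kpoi$ (respectively $\gmoi^{-1}\gpoi$) flip sign under it, so they vanish. Since the $k_{a-}$ are distinct, $\{\delta,\Kmoi,\Kmoi^2\}$ spans the space of diagonal matrices, so both $\Kpoi$ and $\gmoi^{-1}\gpoi$ are linear combinations of these three matrices with scalar coefficients that are ultralocal invariants. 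Rewriting in terms of the traceless parts, $\Kpoicirc$ takes the form $\beta_0\delta+\beta_1\Kmoicirc+\beta_2\Kmoicirc^2$ (tracelessness fixing $\beta_0$ in terms of $\beta_2$, but we keep all three for convenience). By \autoref{lem:scalars} the coefficients $\beta_0,\beta_1,\beta_2$, like $\phipoi_0,\phipoi_1$, are functions of $(\thetamoi,\phimoi_0,\phimoi_1)$ that are even in $\thetamoi$ and $2\pi/3$-periodic (spacelike) or $2\pi$-periodic (timelike) — the permutations $v_2\leftrightarrow v_3$ and (in the spacelike case) the cyclic permutation of the $v_a$ implement these symmetries.

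Next I would promote the analogous statement for $\gmoi^{-1}\gpoi$ to one for its logarithm. The key point is that $\gmoi^{-1}\gpoi$ has only positive eigenvalues. To see this, note that when $k_{2-}=k_{3-}$ (the spacelike pair coincides), the same covariance argument applied to $O(2)$ or $O(1,1)$ rotations in $\operatorname{Span}(v_2,v_3)$ forces the $(2,2)$ and $(3,3)$ diagonal entries of $\gpoi$ to coincide; since $\gpoi$ has Lorentzian or Riemannian signature matching $\gmoi$, the diagonal entries of $\gmoi^{-1}\gpoi$ must all be positive in this case. By connecting an arbitrary non-degenerate datum to a datum with $k_{2-}=k_{3-}$ through a smooth one-parameter family of singularity data sets (built as in \autoref{lem:prescribed-values}) and using continuity of the positive eigenvalues along this path (they cannot cross zero since $\gpoi$ is nondegenerate), positivity extends to all non-degenerate data. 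Hence $\log(\gmoi^{-1}\gpoi)$ is well-defined and real, and the same diagonal/linear-combination argument as for $\Kpoi$ expresses it as $\gamma_0\delta+\gamma_1\Kmoicirc+\gamma_2\Kmoicirc^2$.

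Finally I would repackage the metric coefficients using \eqref{Kcirc-powers-copy}, which expresses $\Kmoicirc$ and $\Kmoicirc^2$ as linear combinations of $\delta,\cos\Theta_-,\cos(2\Theta_-)$ (absorbing factors of $\rmoi$ into the scalar coefficients), yielding $\log(\gmoi^{-1}\gpoi)=\alpha_0\delta+\alpha_1\cos\Theta_-+\alpha_2\cos(2\Theta_-)$ for new scalar functions $\alpha_0,\alpha_1,\alpha_2$ with the stated symmetry and periodicity in $\thetamoi$. Exponentiating gives the first formula of~\eqref{ultralocal-poly}. The only subtle point remaining concerns behavior at $\rmoi=0$, where $\thetamoi$ (and thus $\cos\Theta_-,\cos(2\Theta_-)$) is ill-defined: since $\gpoi$ must be continuous on all non-degenerate smooth data including limits $\rmoi\to 0$, the combination $\alpha_0\delta+\alpha_1\cos\Theta_-+\alpha_2\cos(2\Theta_-)$ must admit a $\thetamoi$-independent limit, which forces $\alpha_1,\alpha_2$ and $\del_{\thetamoi}\alpha_0$ to vanish at $\rmoi=0$. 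The main technical obstacle I expect is precisely this boundary analysis, i.e.\ verifying that the limit $\rmoi\to 0$ constrains the coefficients as claimed and does not introduce hidden singularities; the positivity argument for $\gmoi^{-1}\gpoi$ via a connecting family is the second place where one must be careful to invoke continuity rather than smoothness of the scattering map.
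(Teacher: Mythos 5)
Your proposal follows essentially the same approach as the paper's own argument: the sign-flip argument to diagonalize $\Kpoi$ and $g_-^{-1}\gpoi$ in an eigenbasis of $\Kmoi$, expansion in the span $\{\delta,\Kmoi,\Kmoi^2\}$ for distinct eigenvalues, the $O(2)/O(1,1)$ covariance argument plus a connecting family to prove positivity of the eigenvalues of $g_-^{-1}\gpoi$ (hence existence of a real logarithm), reexpression via $\delta,\cos\Theta_-,\cos(2\Theta_-)$ using~\eqref{Kcirc-powers-copy}, and the $\rmoi\to 0$ boundary analysis forcing $\alpha_1=\alpha_2=\del_{\thetamoi}\alpha_0=0$ there. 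Both proofs rely on \autoref{prop-ultralocal} and \autoref{lem:non-degenerate} to reduce to pointwise non-degenerate data.
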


The tracelessness $\Tr\Kpoicirc=0$ translates to $\beta_0=-\beta_2\Tr(\Kmoicirc^2)/3$ but we do not need this for now.
We also do not analyse yet how $\beta_0,\beta_1,\beta_2$ behave at $\rmoi=0$.
The exponential in~\eqref{ultralocal-poly} is defined by its power series; it yields a matrix, whose upper index we lower using $\gmoi$, so as to obtain the $(0,2)$ tensor~$\gpoi$.  We easily compute the inverse metric and the ratio~$\omega$ of volume factors, which simplifies because $\cos\Theta_-$ and $\cos(2\Theta_-)$ are traceless:
\bel{poly-omega}
g_+^{-1} = \exp\bigl(-\alpha_0\,\delta - \alpha_1\cos\Theta_- - \alpha_2\cos(2\Theta_-)\bigr) g_-^{-1},
\qquad
\omega \coloneqq \sqrt{|\gpoi|/|\gmoi|} = e^{3\alpha_0/2} .
\ee

\subsection{Scaling of trace-free extrinsic curvature}
\label{ssec:scaling}

\paragraph{Simplifying the momentum constraint.}

Our next step is to plug the polynomial form~\eqref{ultralocal-poly} into the momentum constraint.
Since $\Kpoi$ has a constant trace, its trace-free part $\Kpoicirc=\Kpoi-\tfrac{1}{3}\delta$ has the same divergence as~$\Kpoi$ and the constraint reads 
$
\nablapoi_a \Kpoicirc{}^a_b = 8 \pi \, \phipoi_0 \del_b \phipoi_1 .
$
The Levi-Civita connections of two metrics $\gpoi$ and~$\gmoi$ differ by a tensor, whose components are
\[
(\Gammapoi-\Gammamoi)^c{}_{ab}
= \Gammapoi^c{}_{ab} - \Gammamoi^c{}_{ab}
= \frac{1}{2} (g_+^{-1})^{cd} \bigl( \nablamoi_a \gpoi_{bd} + \nablamoi_b \gpoi_{da} - \nablamoi_d \gpoi_{ab} \bigr) ,
\]
where we wrote the inverse metric of~$\gpoi$ explicitly as~$g_+^{-1}$ for emphasis.
Using the well-known identity $\Gamma^a{}_{ac}=\del_c(\log|g|^{1/2})$ we find
$
(\Gammapoi-\Gammamoi)^a{}_{ac}
= \Gammapoi^a{}_{ac} - \Gammamoi^a{}_{ac}
= \del_c(\log\omega)$.
Combining the above, we compute
\bse
\bel{ultralocal-div-Kpoi}
\aligned
\nablapoi_a \Kpoicirc{}^a_b
& = \nablamoi_a \Kpoicirc{}^a_b
+ \Kpoicirc{}^c_b (\Gammapoi-\Gammamoi)^a{}_{ac}
- \Kpoicirc{}^a_c (\Gammapoi-\Gammamoi)^c{}_{ab}
\\
& = \nablamoi_a \Kpoicirc{}^a_b
+ \Kpoicirc{}^a_b \del_a(\log\omega)
- \frac{1}{2} \Kpoicirc{}^a_c (g_+^{-1})^{cd} \nablamoi_b \gpoi_{da}
 = \omega^{-1}\, \nablamoi_a\bigl(\omega \Kpoicirc{}^a_b\bigr)
- \frac{1}{2} X_b. 
\endaligned
\ee
Here, we used the symmetry of $\Kpoicirc{}^a_c (g_+^{-1})^{cd}$ and cancelled two terms in $(\Gammapoi-\Gammamoi)^c{}_{ab}$,
while we introduced a notation for the last term: 
\bel{proof-def-X}
X_b \coloneqq \Kpoicirc{}^a_c (g_+^{-1})^{cd} \nablamoi_b \gpoi_{da} .
\ee
\ese

\paragraph{Most terms involve derivatives of scalars.}

The term~$X_b$ defined in~\eqref{proof-def-X} can be recast (using $\nablamoi\gmoi=0$) as the trace of a product of matrices:
\[
X_b = \Kpoicirc{}^a_c (g_+^{-1}\gmoi)^c_d \nablamoi_b (g_-^{-1}\gpoi)^d_a
= \Tr\bigl( \Kpoicirc{} (g_+^{-1}\gmoi) \nablamoi_b (g_-^{-1}\gpoi) \bigr) .
\]
Given their explicit polynomial forms in \autoref{lem:poly}, $\Kpoicirc{}$ and $g_+^{-1}\gmoi$ commute, so $(g_+^{-1}\gmoi) \nablamoi_b (g_-^{-1}\gpoi)$ can be replaced within the trace by $\nablamoi_b \log(g_-^{-1}\gpoi)$.
Explicitly,
\bel{proof-X-form}
X_b = \Tr\bigl( \Kpoicirc \nablamoi_b \log(g_-^{-1}\gpoi) \bigr)
= \Tr\bigl( \bigl(\beta_0+\beta_1\Kmoicirc+\beta_2\Kmoicirc^2\bigr)
\nablamoi_b \bigl(\alpha_0\,\delta + \alpha_1\cos\Theta_- + \alpha_2\cos(2\Theta_-)\bigr) \bigr) .
\ee
By writing $\cos\Theta_-$ and $\cos(2\Theta_-)$ in terms of $\delta,\Kmoicirc,\Kmoicirc^2$ using~\eqref{Kcirc-powers-copy} we obtain polynomial expressions in~$\Kmoicirc$.
Expanding further, the derivative $\nablamoi_b$ can either act on scalars~$\alpha_n$, giving terms of the form $\Tr({\dots})\del_b\alpha_n$, or act on powers of~$\Kmoicirc$, giving terms of the form $\Tr(\Kmoicirc^n\nablamoi_b\Kmoicirc)$ times a scalar.  Since $\Tr(\Kmoicirc^n\nablamoi_b\Kmoicirc)=\del_b\Tr(\Kmoicirc^{n+1})/(n+1)$ is the derivative of a scalar, all terms in~$X_b$ take the form $(\text{scalar})\del_b(\text{scalar})$.

The momentum constraint on the ``$+$'' side of the singularity states that $\nablapoi_a\Kpoicirc{}^a_b$ is also of the form $(\text{scalar})\del_b(\text{scalar})$, so \eqref{ultralocal-div-Kpoi}~can be written as
\bel{proof-momentum-simplified}
\nablamoi_a\bigl(\omega\Kpoicirc{}^a_b\bigr) = \sum_I \chi_I\del_b\zeta_I
\ee
for some collection of scalar fields $\chi_I$ and~$\zeta_I$ whose precise expression is not useful yet.

\paragraph{Scaling of trace-free extrinsic curvature.}

To get rid of derivatives of scalar fields in~\eqref{proof-momentum-simplified}, we consider particular configurations $(\gmoi,\Kmoi,\phimoi_0,\phimoi_1)$ constructed in \autoref{lem:derivatives}.  These data sets are such that $(\thetamoi,\phimoi_0,\phimoi_1)$ is constant in some domain $\Omega\subset\Hcal$ and is equal to any prescribed value in~$\Delta_\neq$.  This set, defined in~\eqref{non-degenerate-scalars}, consists of values such that the corresponding eigenvalues $k_1,k_2,k_3$ are pairwise distinct and $\phimoi_0\neq 0$.
Since all scalars are functions of $\thetamoi,\phimoi_0,\phimoi_1$, the first derivative $\del_a$ of any scalar then vanishes at~$x$.
In addition, the data sets are such that $\nablamoi_a(\Kmoicirc^2)^a_b\neq 0$ on~$\Omega$.

For these data sets, the right-hand side of~\eqref{proof-momentum-simplified} vanishes in the domain~$\Omega$.
We compute its left-hand side in~$\Omega$ by plugging the polynomial form~\eqref{ultralocal-poly}, then dropping all derivatives of scalar fields since they vanish for this configuration:
\[
\nablamoi_a\bigl(\omega\Kpoicirc{}^a_b\bigr)
= \nablamoi_a\bigl(\omega\beta_0\delta^a_b+\omega\beta_1\Kmoicirc{}^a_b + \omega\beta_2(\Kmoicirc^2)^a_b\bigr) 
 = \omega\beta_1\,\nablamoi_a\Kmoicirc{}^a_b + \omega\beta_2\,\nablamoi_a(\Kmoicirc^2)^a_b .
\]
The momentum constraint is $\nablamoi_a\Kmoicirc{}^a_b=8\pi\phimoi_0\del_b\phimoi_1$, which vanishes at~$x$ in the given configuration.  This eliminates the first term above and we learn that
\[
\omega\beta_2\,\nablamoi_a(\Kmoicirc^2)^a_b = 0 .
\]
For the data sets given by \autoref{lem:derivatives}, $\nablamoi_a(\Kmoicirc^2)^a_b\neq 0$, so we learn that $\beta_2=0$.  Because $\Kpoicirc$ is traceless we deduce $\beta_0=0$.  Altogether,
\[
\beta_0=\beta_2=0, \qquad \Kpoicirc=\beta_1\Kmoicirc \qquad
\text{when $\Kmoi$ has three different eigenvalues.}
\]
By a continuity argument identical to the proof of Lemmas~\ref{lem:continuity} and~\ref{lem:non-degenerate} we could prove that this conclusion holds even when eigenvalues are degenerate, but we do not need this.

\paragraph{Constant scaling of densitized trace-free extrinsic curvature.}

Now that we know $\Kpoicirc=\beta_1\Kmoicirc$ (for non-degenerate data) we can recalculate the left-hand side of~\eqref{proof-momentum-simplified} without assuming that scalar fields have vanishing derivative.  We get
\[
\nablamoi_a\bigl(\omega\Kpoicirc{}^a_b\bigr)
= \nablamoi_a\bigl(\omega\beta_1\,\Kmoicirc{}^a_b\bigr)
= \del_a(\omega\beta_1)\,\Kmoicirc{}^a_b + 8\pi\omega\beta_1\phimoi_0\del_b\phimoi_1 ,
\]
so \eqref{proof-momentum-simplified} takes the form
\[
\del_a(\omega\beta_1)\,\Kmoicirc{}^a_b
= - 8\pi\omega\beta_1\phimoi_0\del_b\phimoi_1 + \sum_I \chi_I \del_b \zeta_I .
\]
This identity takes the form~\eqref{dchi-gammadzeta} analyzed in \autoref{lem:eigen-deriv}, so we learn that the scalar coefficient $\omega\beta_1$ in front of $\Kmoicirc{}^a_b$ is an overall constant that only depends on the signature (and the scattering map), so
\bel{proof-Kg-scaling}
\Kpoicirc = \gamma \omega^{-1} \Kmoicirc
\ee
for some constant $\gamma\in\RR$.
Note that the conclusion of \autoref{lem:eigen-deriv} does not involve any non-degeneracy assumption: the identity holds for all data.  This gives an alternate proof of our \autoref{cor:Kg-scaling} that does not rely on the full classification.

A useful consequence of~\eqref{proof-Kg-scaling} is
\bel{rpoirmoi}
\rpoi = \sqrt{\tfrac{2}{3}\Tr\Kpoicirc^2} = \abs{\gamma}\omega^{-1} \sqrt{\tfrac{2}{3}\Tr\Kmoicirc^2} = \abs{\gamma}\omega^{-1}\rmoi .
\ee
For $\rpoi\neq 0$ (hence $\gamma\neq 0$ and $\rmoi\neq 0$ due to the above equation), one can write
\bel{Thetapoimoi}
\cos\Theta_+ = \frac{3\Kpoicirc}{2\rpoi} = (\sgn\gamma)\frac{3\Kmoicirc}{2\rmoi} = (\sgn\gamma)\cos\Theta_- ,
\quad \text{ hence } \begin{cases}
  \theta_+ = \theta_- & \text{if $\gamma>0$,} \\
  \theta_+ = \theta_- + \pi & \text{if $\gamma<0$.}
\end{cases}
\ee
We emphasize that while angles are only defined modulo $2\pi/3$ in the spacelike case because the three eigenvectors are indistinguishable, their difference $\theta_+-\theta_-$ is actually well-defined modulo~$2\pi$ in both the spacelike and timelike case because one can compare eigenvalues of $\Kpoi$ and~$\Kmoi$ on the same eigenvectors.

\subsection{Completion of the classification}
\label{ssec:completion}

\paragraph{Rigidly conformal case.}

As a warmup, we derive the classification of rigidly conformal and ultralocal maps announced in \autoref{prop-conformal-ultralocal}.
Specifically, we temporarily restrict ourselves to ultralocal scattering maps for which $\gpoi$ and $\gmoi$ have the same conformal class, namely $\alpha_1=\alpha_2=0$.
Then the expression~\eqref{proof-X-form} vanishes thanks to $\Tr\Kmoicirc=0$,
\[
X_b = \Tr\bigl(\beta_1 \Kmoicirc \nablamoi_b(\alpha_0\delta)\bigr)
= \Tr(\Kmoicirc \delta) \beta_1\del_b\alpha_0 = 0, 
\]
so~\eqref{ultralocal-div-Kpoi}, together with the momentum constraints, gives
\[
8\pi\phipoi_0\del_b\phipoi_1
= \nablapoi_a \Kpoicirc{}^a_b
= \gamma\omega^{-1}\nablamoi_a\Kmoicirc{}^a_b
= 8\pi\gamma\omega^{-1}\phimoi_0\del_b\phimoi_1 .
\]
The scalars $\omega,\phipoi_0,\phipoi_1$ are some functions of the scalars $(\thetamoi,\phimoi_0,\phimoi_1)$ given by the data.
The chain rule for $\phipoi_1=\phipoi_1(\thetamoi,\phimoi_0,\phimoi_1)$ yields
\[
8\pi\bigl(\phipoi_0\,\del_{\thetamoi}\phipoi_1\bigr)\,\del_b\thetamoi
+ 8\pi\bigl(\phipoi_0\,\del_{\phimoi_0}\phipoi_1\bigr)\,\del_b\phimoi_0
+ 8\pi\bigl(\phipoi_0\,\del_{\phimoi_1}\phipoi_1-\gamma\omega^{-1}\phimoi_0\bigr)\,\del_b\phimoi_1
= 0.
\]
By \autoref{lem:independent-derivatives}, the coefficients of $\del_b\thetamoi,\del_b\phimoi_0,\del_b\phimoi_1$ must vanish separately, namely
\bel{proof-conformal-coefs-vanish}
\phipoi_0\,\del_{\thetamoi}\phipoi_1
= \phipoi_0\,\del_{\phimoi_0}\phipoi_1
= \phipoi_0\,\del_{\phimoi_1}\phipoi_1-\gamma\omega^{-1}\phimoi_0 = 0.
\ee

Then, there are two very different cases, $\gamma=0$ and $\gamma\neq 0$.
\bei
\item
  If $\gamma=0$, we have $\Kpoicirc=0$ so $r(\phipoi_0)=0$ namely $\phipoi_0=\epsilon/\sqrt{12\pi}$ with $\epsilon=\pm 1$.
  This sign is constant since we require scattering maps to map sufficiently regular data to (at least) continuous data.
  Since $\phipoi_0\neq 0$, \eqref{proof-conformal-coefs-vanish} simply states that $\phipoi_1$ is a constant, while $\omega$ is completely unconstrained.
  This yields the isotropic scattering map given in~\eqref{Sirc}, with $\lambda^3=\omega$:
\[
  \Sirc_{\lambda,\varphi,\epsilon} \colon (g, K, \phi_0, \phi_1)
  \mapsto \biggl(\lambda^2 g, \ \frac{1}{3}\delta,\ \frac{\epsilon}{\sqrt{12\pi}},\ \varphi \biggr) .
\]

\item
  If $\gamma\neq 0$, then the last equation in~\eqref{proof-conformal-coefs-vanish} prevents $\phipoi_0$ from vanishing unless $\phimoi_0=0$.  Thus, we learn that $\del_{\thetamoi}\phipoi_1=\del_{\phimoi_0}\phipoi_1=0$ for $\phimoi_0\neq 0$, and, by continuity of~$\phipoi_1$, for $\phimoi_0=0$ as well.  In other words, $\phipoi_1=F(\phimoi_1)$ for some $F\colon\RR\to\RR$.
  The last equation in~\eqref{proof-conformal-coefs-vanish} reads
\bel{proof-conf-2}
  \phipoi_0\,F'(\phimoi_1) = \gamma\omega^{-1}\phimoi_0 ,
\ee
  which implies that $F'$~is nowhere vanishing (since it is independent of~$\phimoi_0$).

  We then have to solve~\eqref{proof-conf-2} and the Hamiltonian constraint
\[
  1 - 12 \pi \phipoi_0^2 = \frac{3}{2} \Tr(\Kpoicirc^2)
  = \frac{3}{2} \gamma^2 \omega^{-2} \Tr(\Kmoicirc^2)
  = \gamma^2 \omega^{-2} (1 - 12 \pi \phimoi_0^2)
\]
  for $\phipoi_0$ and~$\omega$.  Eliminating $\phipoi_0$ using~\eqref{proof-conf-2} gives
\bel{proof-conf-3}
  \gamma^{-2}\omega^2 = 1 + 12 \pi \phimoi_0^2 \bigl(F'(\phimoi_1)^{-2} - 1\bigr) .
\ee
  It is then immediate to solve~\eqref{proof-conf-2} for~$\phipoi_0$.
  Denoting $\mu\coloneqq\abs{\omega/\gamma}^{1/3}$, given in terms of $\phimoi_0,\phimoi_1$ by~\eqref{proof-conf-3}, and denoting $\epsilon=\sgn\gamma=\pm 1$, we find
\be
  \aligned
  \gpoi & = \omega^{2/3} \gmoi = |\gamma|^{2/3} \mu^2 \gmoi,
 \qquad
  &&\Kpoicirc = \epsilon \mu^{-3} \Kmoicirc,
 \\
  \phipoi_0 & = \epsilon \mu^{-3} \frac{\phimoi_0}{F'(\phimoi_1)},
 \qquad
  &&\phipoi_1  = F(\phimoi_1),
  \endaligned
\ee
  which is nothing by the anisotropic rigidly conformal scattering map defined in~\eqref{Sarc}.
\eei
This concludes the classification in \autoref{prop-conformal-ultralocal} of ultralocal scattering maps that are rigidly conformal.

\paragraph{Isotropic case.}
We return to general ultralocal scattering maps, in which $\alpha_1,\alpha_2$ may be nonzero.
The value of the constant~$\gamma$ plays a key role again in the classification.
We treat in this paragraph the case $\gamma=0$, namely $\Kpoicirc=0$: the asymptotic profile on the ``$+$''~side of the singularity undergoes {\sl isotropic scaling.}

In this case, the Hamiltonian constraint forces $\phipoi_0=\epsilon/\sqrt{12\pi}$ for some $\epsilon=\pm 1$.  Because we require scattering maps to map smooth data to (at least) continuous data, for such data $\phipoi_0$~cannot jump between the values $\pm 1/\sqrt{12\pi}$, namely $\epsilon(x)$ is independent of $x\in\Hcal$.  By \autoref{lem:constants} we learn that $\epsilon$~only depends on the scattering map and not on the data.
Next, since $\Kpoi=\tfrac{1}{3}\delta$ is constant and $\phipoi_0\neq 0$, the momentum constraint states that $\del_b\phipoi_1=0$.  Again we have a space-independent scalar~$\phipoi_1$, which by \autoref{lem:constants} can only depend on the scattering map.
Finally, the metric is not constrained beyond the polynomial structure given in \autoref{lem:poly}.
This yields the isotropic scattering $\Siso_{\alpha_0,\alpha_1,\alpha_2,\varphi,\epsilon}$ of~\eqref{Siso}:
\be
(\gpoi,\Kpoi,\phipoi_0,\phipoi_1)
= \biggl( \exp\Bigl(\alpha_0\,\delta + \alpha_1 \cos\Theta_- + \alpha_2 \cos(2\Theta_-)\Bigr) \gmoi , \ \frac{1}{3}\delta , \ \frac{\epsilon}{\sqrt{12\pi}}, \ \varphi \biggr), 
\ee
where $\del_{\thetamoi}\alpha_0=\alpha_1=\alpha_2=0$ for $\rmoi=0$ (namely $\phimoi_0=\pm 1/\sqrt{12\pi}$).

\paragraph{Anisotropic case.}

We now turn to the case $\gamma\neq 0$, using the same method as for the rigidly conformal maps.
A convenient form for the trace-free part $\Kpoicirc$ is
\bel{eq0934}
\Kpoicirc = \tfrac{2}{3}\rpoi\cos\Theta_+ = \tfrac{2}{3}\epsilon\rpoi\cos\Theta_- , \qquad
\text{with } \epsilon = \sgn\gamma = \pm 1,
\ee
where the second equality is obvious for $\rpoi=0$ and is~\eqref{Thetapoimoi} otherwise.
As we will see momentarily, inserting this expression of~$\Kpoicirc$ in the momentum constraint reduces it down to a short sum of terms of the form $(\text{scalar})\del_b(\text{scalar})$.
The chain rule rewrites the sum as a linear combination of $\del_b\thetamoi$, $\del_b\phimoi_0$, $\del_b\phimoi_1$, whose coefficients must all vanish by \autoref{lem:independent-derivatives}.
This vanishing gives three equations on derivatives of $\alpha_0,\alpha_1,\alpha_2,\phipoi_0,\phipoi_1$ with respect to $\thetamoi,\phimoi_0,\phimoi_1$, and we eventually get the solutions $\Sani_{\Phi,c,\epsilon}$ defined by~\eqref{Sani}.

Let us begin by calculating using~\eqref{eq0934} the remainder term~$X_b$ given in~\eqref{proof-X-form}:
\[
X_b = \tfrac{2}{3}\epsilon\rpoi\Tr\bigl( \cos\Theta_- \nablamoi_b \bigl(\alpha_0\,\delta+\alpha_1\cos(\Theta_-)+\alpha_2\cos(2\Theta_-)\bigr) \bigr) .
\]
Upon expanding derivatives we encounter the traces: $\Tr(\cos\Theta_-)=0$ and 
\[
\aligned
\Tr\bigl((\cos\Theta_-)^2\bigr)&=\tfrac{1}{2}\Tr\bigl(\cos(2\Theta_-)+\delta\bigr)=\tfrac{3}{2}, \\
\Tr\bigl(\cos\Theta_-\cos(2\Theta_-)\bigr) &= \tfrac{1}{2}\Tr\bigl(\cos(3\Theta_-)+\cos\Theta_-\bigr) = \tfrac{3}{2}\cos(3\theta_-) , \\
\Tr\bigl(\cos\Theta_-\,\del_b\cos\Theta_-\bigr) &= \tfrac{1}{2}\del_b \Tr\bigl((\cos\Theta_-)^2\bigr) = \tfrac{1}{2} \del_b(\tfrac{3}{2})=0 , \\
\Tr\bigl(\cos\Theta_-\,\del_b\cos(2\Theta_-)\bigr) &= \tfrac{4}{3}\del_b\Tr\bigl((\cos\Theta_-)^3\bigr) = \del_b\cos(3\theta_-) .
\endaligned
\]
Then $X_b$~simplifies to
\[
\aligned
X_b &= \tfrac{2}{3}\epsilon\rpoi\Bigl( \tfrac{3}{2}\del_b\alpha_1 + \tfrac{3}{2}\cos(3\theta_-)\del_b\alpha_2 + \alpha_2\,\del_b\cos(3\theta_-)\Bigr) \\
& = \epsilon\rpoi\Bigl( \del_b\bigl(\alpha_1 + \cos(3\theta_-)\alpha_2\bigr) - \tfrac{1}{3} \alpha_2\,\del_b\cos(3\theta_-)\Bigr) 
 = 16\pi \rpoi \del_b\xi + \epsilon\rpoi \alpha_2 \sin(3\theta_-)\del_b\theta_- ,
\endaligned
\]
where we introduced (with a factor chosen to simplify later expressions)
\bel{def-kappa}
\xi = \frac{\epsilon}{16\pi} \bigl(\alpha_1 + \cos(3\theta_-)\alpha_2\bigr) .
\ee

Using the divergence given in~\eqref{ultralocal-div-Kpoi} and our calculation of~$X_b$, the momentum constraint reads
\[
8\pi\phipoi_0\del_b\phipoi_1 - 8\pi\gamma\omega^{-1}\phimoi_0\del_b\phimoi_1
= \frac{-1}{2}X_b = \frac{-1}{2} \Bigl( 16\pi\rpoi\del_b\xi + \epsilon\rpoi\alpha_2 \sin(3\theta_-)\del_b\theta_-\Bigr) ,
\]
hence, dividing by $\rpoi=|\gamma|\omega^{-1}\rmoi$ provided it is nonzero,
\[
\del_b\xi + \frac{\phipoi_0}{\rpoi}\del_b\phipoi_1
= \epsilon \frac{\phimoi_0}{\rmoi}\del_b\phimoi_1 - \frac{1}{16\pi}\epsilon\alpha_2 \sin(3\theta_-)\del_b\theta_- .
\]

We then use the chain rule to write all $\del_b(\text{scalar})$ in terms of $\del_b\thetamoi$, $\del_b\phimoi_0$, $\del_b\phimoi_1$ and we write down the three equations stating that coefficients of these three derivatives must match due to \autoref{lem:independent-derivatives}:
\bel{derivatives-of-kappa}
\del_{\thetamoi}\xi + \frac{\phipoi_0}{\rpoi}\del_{\thetamoi}\phipoi_1 = - \frac{1}{16\pi}\epsilon\alpha_2 \sin(3\theta_-),
\quad
\del_{\phimoi_0}\xi + \frac{\phipoi_0}{\rpoi}\del_{\phimoi_0}\phipoi_1 = 0 ,
\quad
\del_{\phimoi_1}\xi + \frac{\phipoi_0}{\rpoi}\del_{\phimoi_1}\phipoi_1
 = \epsilon \frac{\phimoi_0}{\rmoi} .
\ee
The first equation lets us rewrite in terms of~$\xi$ the terms that appear in the polynomial form~\eqref{ultralocal-poly} of~$\gpoi$: first express $\cos(2\Theta_-)$ as $\cos(x-y)=\cos x\cos y+\sin x\sin y$ for $x=3\Theta_-$ and $y=\Theta_-$, then use \eqref{def-kappa} and~\eqref{derivatives-of-kappa}.  This yields
\[
\aligned
\alpha_0 \,\delta + \alpha_1 \cos\Theta_- + \alpha_2 \cos(2\Theta_-) 
& = \alpha_0 \,\delta + (\alpha_1 + \alpha_2\cos(3\theta_-)) \cos\Theta_-
+ \alpha_2 \sin(3\theta_-)\sin\Theta_- \\
& = \alpha_0 \,\delta + 16\pi\epsilon\xi \cos\Theta_-
-16\pi\epsilon\bigl(\del_{\thetamoi}\xi + \frac{\phipoi_0}{\rpoi}\del_{\thetamoi}\phipoi_1\bigr) \sin\Theta_- .
\endaligned
\]
On the other hand, \eqref{poly-omega} and \eqref{rpoirmoi} relate $\alpha_0$ to Kasner radii as
$\exp(3\alpha_0/2) = \omega = \abs{\gamma} \rmoi/\rpoi$.
Overall, for nonzero $\rmoi,\rpoi$,
\bel{proof-Sani-gpoi}
\aligned
\gpoi & = \exp\bigl(\alpha_0 \,\delta + \alpha_1 \cos\Theta_- + \alpha_2 \cos(2\Theta_-)\bigr)\,\gmoi\\
& = \biggl|\frac{\gamma\rmoi}{\rpoi}\biggr|^{2/3} \exp\biggl(16\pi\epsilon\xi\cos\Theta_- - 16\pi\epsilon\bigl( \del_{\thetamoi} \xi + \frac{\phipoi_0}{\rpoi} \del_{\thetamoi} \phipoi_1 \bigr) \sin\Theta_-\biggr)\,\gmoi.
\endaligned
\ee

The expressions we obtained for $(\gpoi,\Kpoi,\phipoi_0,\phipoi_1)$ coincide with those of the anisotropic scattering $\Sani_{\Phi,c,\epsilon}$ \eqref{Sani-gpoi} with $c=\abs{\gamma}^{1/3}$ and $\Phi\colon(\thetamoi,\phimoi_0,\phimoi_1)\mapsto (\phipoi_0,\phipoi_1)$, but there remains to show that the map~$\Phi$ is indeed an $\epsilon$-canonical transformation in the sense of \autoref{def:canonical-transfo}.
We prove the conditions in turn.
\bei
\item[(i)] {\bf Periodic.}  This condition simply states that $\phipoi_0,\phipoi_1$ are scalar fields.

\item[(ii)] {\bf Maximal-momentum preserving.}  We know $\rpoi = \abs{\gamma}\omega^{-1}\rmoi$ from~\eqref{rpoirmoi}, and $\abs{\gamma}\omega^{-1}$ is nowhere vanishing since $\omega$ is the ratio of volume factors of two non-degenerate metrics.  Thus, $\rpoi=0\iff\rmoi=0$ and $\rpoi/\rmoi=\abs{\gamma}\omega^{-1}$ remains finite and non-zero and becomes $\theta$-independent at the boundary (the last point being because $\omega$ is a scalar field).

\item[(iii)] {\bf Volume preserving.}
Imposing that the last two equations in~\eqref{derivatives-of-kappa} are compatible in the sense that $\del_{\phimoi_0}\del_{\phimoi_1}\xi = \del_{\phimoi_1}\del_{\phimoi_0}\xi$, we obtain for $\rpoi,\rmoi\neq 0$ that
\be
\del_{\phimoi_0}\Bigl(\frac{\phipoi_0}{\rpoi}\Bigr)\del_{\phimoi_1}\phipoi_1 - \del_{\phimoi_1}\Bigl(\frac{\phipoi_0}{\rpoi}\Bigr)\del_{\phimoi_0}\phipoi_1
= \epsilon \del_{\phimoi_0}\Bigl(\frac{\phimoi_0}{\rmoi}\Bigr) \del_{\phimoi_1} \phimoi_1 .
\ee
We included here the trivial factor $\del_{\phimoi_1} \phimoi_1$ to illustrate that this equation states preservation of the two-form $d(\phi_0/r)d\phi_1$ up to an overall sign~$\epsilon$.
 
\item[(iv)] {\bf Regular at boundaries.} One conclusion in \autoref{lem:poly} is that $\alpha_1=\alpha_2=0$ at the boundaries $\phimoi_0=\pm 1/\sqrt{12\pi}$ (because $\Kmoicirc=0$ has no preferred directions).  We deduce $\xi=0$, and its derivatives $\del_{\thetamoi}\xi$, $\del_{\phimoi_1}\xi$ along the boundaries thus vanish.  Inserting this fact (and $\alpha_2=0$) into~\eqref{derivatives-of-kappa} yields
\be
\frac{\phipoi_0}{\rpoi}\del_{\thetamoi}\phipoi_1 \to 0 , \qquad
\frac{\phipoi_0}{\rpoi}\del_{\phimoi_1}\phipoi_1 - \epsilon \frac{\phimoi_0}{\rmoi} \to 0 .
\ee
Since $\xi$ vanishes on both boundaries we can integrate on~$I_0$ the second equation in~\eqref{derivatives-of-kappa} to get
\be
\int_{-1/\sqrt{12\pi}}^{1/\sqrt{12\pi}} \frac{\phipoi_0}{\rpoi} \del_{\phimoi_0}\phipoi_1 \, d\phimoi_0 = 0 .
\ee
\eei
This concludes the proof of the first part of \autoref{theorem-ultralocal}, that is, 
the only ultralocal scattering maps are $\Siso_{\alpha_0,\alpha_1,\alpha_2,\varphi,\epsilon}$ and $\Sani_{\Phi,c,\epsilon}$.  The second part was proven already in \autoref{secti-24}.

This also concludes our study of scattering maps per se.
In the companion paper in~\cite{LLV-3}, we study the class of plane-symmetric spacetimes \cite{FloresSanchez:2003,FloresSanchez:2008,KhanPenrose,PLFStewart,Penrose0,Penrose,Szekeres70,Yurtsever88} 
and we 
apply our theory to the particular scenario of colliding plane-symmetric gravitational waves.


\paragraph*{Acknowledgments.} 

The authors wish to thank Paul Tod for pointing out to them the relevance of the reference~\cite{Barrow}. The authors gratefully acknowledge support from the Simons Center for Geometry and Physics, Stony Brook University. 
This research was initiated when the first author (BLF) was a research fellow at Princeton University, while the second author (PLF)
 was a visiting fellow at the Courant Institute for Mathematical Sciences, New York University.
The second author is also grateful to the Institut Mittag-Leffler where he attended the Program: ``General relativity, geometry and analysis: beyond the first 100 years after Einstein''. 

 
 \small


\begin{thebibliography}{99}\nonfrenchspacing
\newcommand{\auth}{\textsc}   
  
\bibitem{Alesci:2019sni} 
\auth{E.~Alesci, G.~Botta, G.~Luzi, and G.V.~Stagno,}
Bianchi I effective dynamics in quantum reduced loop gravity,
Phys.\ Rev.\ D 99 (2019), 106009. 

\bibitem{AlexakisFournodavlos}
\auth{S.~Alexakis and G.~Fournodavlos,}
Stable space-like singularity formation for axi-symmetric and polarized near-Schwarzschild black hole interiors, 
Preprint arXiv:2004.00692.
 
\bibitem{AnderssonRendall}
\auth{L.~Andersson and A.D.~Rendall,}
Quiescent cosmological singularities,
Commun.\ Math.\ Phys.\ 218 (2001), 479--511.

\bibitem{Ashtekar}
\auth{A.~Ashtekar,}
Singularity resolution in loop quantum cosmology: a brief overview,
J.~Phys.\ Conf.\ Ser.\ 189 (2009), 012003.

\bibitem{Ashtekar:2009}
\auth{A.~Ashtekar,}
Loop quantum cosmology: An overview,
{Gen. Relat. Grav.} {41} (2009), 707--74. 

\bibitem{AshtekarPawlowskiSingh}
\auth{A.~Ashtekar, T.~Pawlowski, and P.~Singh,}
Quantum nature of the Big Bang: an analytical and numerical investigation,
Phys.\ Rev.\ D 73 (2006), 124038.

\bibitem{AshtekarSingh}
\auth{A.~Ashtekar and P.~Singh,}
Loop quantum cosmology: A status report, 
{Class. Quant. Grav.} \textbf{28} (2004), 213001.

\bibitem{AshtekarWilsonEwing}
\auth{A.~Ashtekar and E.~Wilson-Ewing,}
Loop quantum cosmology of Bianchi I models,
Phys.\ Rev.\ D 79 (2009), 083535.

\bibitem{Barrow}
\auth{J.D.~Barrow,}
Quiescent cosmology,
Nature 272 (1978), 211--215.

\bibitem{BK-scalar}
\auth{V.A.~Belinski and I.M.~Khalatnikov,}
Effect of scalar and vector fields on the nature of the cosmological singularity,
\v{Z}\ \`Eksper.\ Teoret.\ Fiz.,\ 63 (1972), 1121--1134.

\bibitem{BKL} 
\auth{V.A.~Belinsky, I.M.~Khalatnikov, and E.M.~Lifshitz,}
Oscillatory approach to the singular point in relativistic cosmology,
Adv.\ Phys.\ 19 (1970), 525--573.

\bibitem{BeyerPLF3}
\auth{F.~Beyer and P.G.~LeFloch,}
Dynamics of self--gravitating fluids in Gowdy-symmetric spacetimes near cosmological singularities, 
Comm.\ Part.\ Diff.\ Equa.\ 42 (2017), 1199--1248. 

\bibitem{BiswasMazumdarMazumdar}
\auth{T.~Biswas, A.~Mazumdar and W.~Siegel,}
Bouncing universes in string-inspired gravity,
J.~Cosmo.\ Astro.\ Phys.\ 0603 (2006), 009.

\bibitem{BBMS}
\auth{T.~Biswas, R.~Brandenberger, A.~Mazumdar and W.~Siegel,}
Non-perturbative gravity, Hagedorn bounce \& CMB,
J.~Cosmo.\ Astro.\ Phys.\ 0712 (2007), 011.

\bibitem{Bojowald:2020wuc}
\auth{M.~Bojowald,}
Critical evaluation of common claims in loop quantum cosmology,
Preprint arXiv:2002.05703.

\bibitem{BV1} 
\auth{V.~Bozza  and  G.~Veneziano,}
$O(d,d)$-invariant collapse/inflation from colliding superstring waves,
JHEP 0010 (2000), 035.

\bibitem{BV} 
\auth{V.~Bozza  and  G.~Veneziano,}
Regular two-component bouncing cosmologies and perturbations therein,
J.\ Cosmol.\ Astro.\ Phys.\  9 (2005), 007.

\bibitem{Brandenberger}
\auth{R.H.~Brandenberger,}
\quad
The matter bounce alternative to inflationary cosmology, Preprint arXiv:1206.4196.

\bibitem{BrandenbergerPeter} 
\auth{R.~Brandenberger and P.~Peter,}
Bouncing cosmologies: progress and problems, 
Foundations of Physics 47 (2017), 797--850.

\bibitem{BrandenbergerVafa}
\auth{R.H.~Brandenberger and C.~Vafa,}
Superstrings in the early universe,
Nucl.\ Phys.\ B 316 (1989), 391.

\bibitem{Brizuela:2009nk}
\auth{D.~Brizuela, G.A.D.~Mena Marugan, and T.~Pawlowski,}
Big Bounce and inhomogeneities,
Class.\ Quant.\ Grav.\ 27 (2010), 052001.

\bibitem{BDV} 
\auth{A.~Buonanno, T.~Damour, and  G.~Veneziano,}
Pre-Big Bang bubbles from the gravitational instability of generic string vacua,
Nucl.\ Phys.\ B 543 (1999), 275--320.

\bibitem{Cesare-bounce}
\auth{M.~De Cesare and E.~Wilson-Ewing}
A generalized Kasner transition for bouncing Bianchi I models in modified gravity theories,
J. Cosmo. and Astro. Phys. 1912, no. 12 (2019), 039.

\bibitem{Chamseddine}
\auth{A.H.~Chamseddine and V.~Mukhanov,}
Resolving cosmological singularities,
J.~Cosmo.\ Astro.\ Phys.\ 1703 (2017), 009.

\bibitem{Christodoulou-book} 
\auth{D.~Christodoulou,} 
{\sl The formation of black holes in general relativity,} 
European Mathematical Society, EMIS, 2009.

\bibitem{CorichiSingh}
\auth{A.~Corichi and P.~Singh,}
Is loop quantization in cosmology unique?,
{Phys. Rev.} D \textbf{78} ( 2008 ), 024034. 

\bibitem{DafermosLuk}
\auth{M.~Dafermos and J.~Luk,}
The interior of dynamical vacuum black holes~I: The $C^0$-stability of the Kerr Cauchy horizon, Preprint arXiv:1710.01722.

\bibitem{Damour-et-al}
\auth{T.~Damour, M.~Henneaux, A.D.~Rendall, and M.~Weaver,}
Kasner-like behavior for subcritical Einstein-matter systems,
Ann.\ Henri Poincar\'e 3 (2002), 1049--1111.

\bibitem{DHS}
\auth{J.~Demaret, M.~Henneaux, and P.~Spindel,}
Nonoscillatory behaviour in vacuum Kaluza--Klein cosmologies,
Phys.\ Lett.\ B 164 (1985), 27--30.

\bibitem{EardleyGiddings}
\auth{D.M.~Eardley and S.B.~Giddings,}
Classical black hole production in high-energy collisions,
Phys.\ Rev.\ D 66 (2002), 044011.

\bibitem{Ebersohn} 
\auth{F.H.~Ebersohn, S.S.~Girimaji, D.~Staack, J.V.~Shebalin, 
B.~Longmier, and C.~Olsen,} 
Magnetic nozzle plasma plume: review of crucial physical phenomena,
48th AIAA/ASME/SAE/ASEE Joint Propulsion Conference \& Exhibit
30 July - 01 August 2012, Atlanta, Georgia, AIAA 2012--4274.

\bibitem{FKV} 
\auth{A.~Feinstein, K.E.~Kunze, and M.A.~V\'azquez-Mozo,}
Initial conditions and the structure of the singularity in pre-Big Bang cosmology,
Class.\ Quantum Grav.\ 17 (2000), 3599--3616.

\bibitem{FinelliBrandenberger}
\auth{F.~Finelli and R.~Brandenberger,}
On the generation of a scale-invariant spectrum of adiabatic fluctuations in cosmological models with a contracting phase,
Phys.\ Rev.\ D 65 (2002), 103522.

\bibitem{FloresSanchez:2003}
\auth{J.L. Flores and M. S\'anchez,}
Causality and conjugate points in general plane waves,
Class. Quantum Grav. 20 (2003) 2275--2291.

\bibitem{FloresSanchez:2008}
\auth{J.L. Flores and M. S\'anchez,}
The causal boundary of wave-type spacetimes, 
J. High Energy Phys. (2008), 036.

\bibitem{Fournodavlos:2016}
\auth{G.~Fournodavlos,}
On the backward stability of the Schwarzschild black hole singularity,
Comm.\ Math.\ Phys.\ 345 (2016), 923--971.

\bibitem{FournodavlosLuk}
\auth{G.~Fournodavlos and J.~Luk,}
Asymptotically Kasner-like singularities, Preprint arXiv:2003.13591. 

\bibitem{FournodavlosRonianskiSpeck}
\auth{G.~Fournodavlos, I. Rodnianski, and J. Speck,}
Stable Big-Bang formation for Einstein’s equations: the complete sub-critical regime,
Preprint arXiv:2012.05888. 

\bibitem{GasperiniVeneziano1}
\auth{M.~Gasperini and G.~Veneziano,}
Pre-Big Bang in string cosmology,
Astro.\ Phys.\ 1 (1993), 317.

\bibitem{GasperiniVeneziano2} 
\auth{M.~Gasperini and G.~Veneziano,}
The pre-Big Bang scenario in string cosmology,
Phys.\ Rep.\ 373 (2003), 1--212.

\bibitem{HE} 
\auth{S.W.~Hawking and G.F.R.~Ellis,} 
{\sl The large scale structure of spacetime},
Cambridge Univ.\ Press, London, 1973.

\bibitem{HMM}
\auth{M.~Holst, D.~Maxwell, and R.~Mazzeo,}
Conformal fields and the structure of the space of solutions of the Einstein constraint equations,
Preprint arXiv:1711.01042.

\bibitem{IJC} 
\auth{W.~Israel,} 
Singular hypersurfaces and thin shells in general relativity,
Nuovo Cim.\ 44B (1966), 1--14.

\bibitem{Jacobs}
\auth{K.C.~Jacobs,}
Spatially homogeneous and Euclidean cosmological models with shear,
Astrophys.\ J.\ 153 (1968), 661.

\bibitem{KaminskiPawloski}
\auth{W. Kaminski  and T. Pawlowski,}
 Cosmic recall and the scattering picture of Loop Quantum Cosmology,
 {Phys. Rev.} D {\bf 81} (2010), 084027. 

\bibitem{Kasner}
\auth{E.~Kasner,}
Finite representation of the solar gravitational field in flat space of six dimensions,
Amer.\ J.\ Math.\ 43 (1921), 130--133.

\bibitem{KhanPenrose}
\auth{K.~Khan and R.~Penrose,}
Scattering of two impulsive gravitational plane waves,
Nature 229 (1971), 185--186.

\bibitem{KOST}
\auth{J.~Khoury, B.A.~Ovrut, P.J.~Steinhardt, and N.~Turok,}
The ekpyrotic universe: colliding branes and the origin of the hot Big Bang,
Phys.\ Rev.\ D 64 (2001), 123522.

\bibitem{KohlprathVeneziano}
\auth{E.~Kohlprath and G.~Veneziano,}
Black holes from high-energy beam-beam collisions,
J.\ High Energy Phys.\ 0206 (2002), 057.

\bibitem{LeFlochLeFloch-1}
\auth{B.~Le Floch and P.G.~LeFloch,}
On the global evolution of self-gravitating matter. Nonlinear interactions in Gowdy symmetry,
Arch.\ Rational Mech.\ Anal.\  233 (2019), 45--86.

\bibitem{LeFlochLeFloch-2}
\auth{B.~Le Floch and P.G.~LeFloch,} 
Compensated compactness and corrector stress tensor for the Einstein equations in $\Tbb^2$ symmetry, 
{Portugaliae Math.} 77 (2020), 409--421. See also arXiv:1912.12981. 

\bibitem{LeFlochLeFloch-3}
\auth{B.~Le Floch and P.G.~LeFloch,}
On the global evolution of self-gravitating matter. Scattering maps for interfaces,
in preparation.

\bibitem{LeFlochLeFloch-4}
\auth{B.~Le Floch and P.G.~LeFloch,}
On the global evolution of self-gravitating matter. $\Tbb^2$ areal flows and compensated compactness, 
in preparation.

\bibitem{LeFlochLeFloch-5}
\auth{B.~Le Floch and P.G.~LeFloch,}
On the global evolution of self-gravitating matter,
in preparation.

\bibitem{LLV-2}
\auth{B.~Le Floch, P.G.~LeFloch, and G.~Veneziano,}
Universal scattering laws for quiescent bouncing cosmology, 
Physical Rev. D (2021). See also arXiv:2006.08620. 

\bibitem{LLV-3}
\auth{B.~Le Floch, P.G.~LeFloch, and G.~Veneziano,}
Cyclic spacetimes through singularity scattering maps. 
Plane-symmetric collisions, Preprint arXiv:2106.09666.

\bibitem{LeFloch-Oslo}
\auth{P.G.~LeFloch,}
Kinetic relations for undercompressive shock waves. Physical, mathematical, and numerical issues, 
Contemp.\ Math.\ 526 (2010), 237--272.

\bibitem{PLFMardare} 
\auth{P.G.~LeFloch and C.~Mardare,}
Definition and weak stability of spacetimes with distributional curvature, 
Portugal Math.\ 64 (2007), 535--573.

\bibitem{PLFRendall}
\auth{P.G.~LeFloch and A.D.~Rendall,}
A global foliation of Einstein-Euler spacetimes with Gowdy-symmetry on $T^3$,
Arch.\ Rational Mech.\ Anal.\ 201 (2011), 841--870.

\bibitem{PLFSormani}
\auth{P.G.~LeFloch and C.~Sormani,}
The nonlinear stability of rotationally symmetric spaces with low regularity,
J.\ Funct.\ Anal.\ 268 (2015), 2005--2065.

\bibitem{PLFStewart}
\auth{P.G.~LeFloch and J.M.~Stewart,}
The characteristic initial value problem for plane--symmetric spacetimes with weak regularity,
Class.\ Quantum Grav.\ 28 (2011), 145019--145035.

\bibitem{Lubbe}
\auth{C.~L\"ubbe,}
A conformal extension theorem based on null conformal geodesics.,
J.~Math.\ Phys.\ 50 (2009), 112502.

\bibitem{LubbeTod}
\auth{C.~L\"ubbe and P.~Tod,}
A global conformal extension theorem for perfect fluid Bianchi space-times,
Ann.\ Physics 323 (2008),  2905--2912.

\bibitem{LuzMena}
\auth{P. Luz and F.C. Mena,}
Singularity theorems and the inclusion of torsion in affine theories of gravity,
J. Math. Phys. 61 (2020), no. 1, 012502. 

\bibitem{Maxwell}
\auth{D.~Maxwell,}
Initial data in general relativity described by expansion, conformal deformation, and drift,
Comm.\ Anal.\ Geom.\ 29 (2020), 207--281.

\bibitem{NBV}
\auth{A.~Nayeri, R.H.~Brandenberger and C.~Vafa,}
Producing a scale-invariant spectrum of perturbations in a Hagedorn phase of string cosmology,
Phys.\ Rev.\ Lett.\ 97 (2006), 021302.

\bibitem{Penrose0}
\auth{R.~Penrose,}
A remarkable property of plane wave in general relativity,
Rev.\ Modern Phys.\ 37 (1965), 215--220.

\bibitem{Penrose}
\auth{R.~Penrose,}
The geometry of impulsive gravitational waves,
in: ``General Relativity, Papers in honour of J.L.~Synge'',
ed.\ L.~O'Raifeartaigh, 1972, Clarendon Press, Oxford, pp.~101--115.

\bibitem{PenroseCCC1}
\auth{R.~Penrose,}
Before the big bang: an outrageous new perspective and its implications for particle physics,
in: ``EPAC 2006 proceedings'',
ed.\ C.R.~Prior, 2006, European Physical Society Accelerator Group, Edinburgh, pp. 2759–2762.

\bibitem{Rendall:2008}
\auth{A.D.~Rendall,}
{\sl Partial differential equations in general relativity,}
Oxford Graduate texts in Math., Oxford Univ.\ Press, 2008.

\bibitem{RendallWeaver}
\auth{A.D.~Rendall and M.~Weaver,}
Manufacture of Gowdy spacetimes with spikes,
Class.\ Quantum Grav.\ 18 (2001), 2959--2975.

\bibitem{RodnianskiSpeck:2018b}
\auth{I. Rodnianski and J. Speck,}
Stable big bang formation in near-FLRW solutions to the Einstein-scalar field and Einstein-stiff fluid systems,
Selecta Math. (N.S.) 24 (2018), 4293--4459.

\bibitem{RodnianskiSpeck:2018c}
\auth{I. Rodnianski and J. Speck,}
On the nature of Hawking's incompleteness for the Einstein-vacuum equations: the regime of 
moderately spatially anisotropic initial data, 
Preprint arXiv:1804.06825. 

\bibitem{Speck:2018}
\auth{J.~Speck,}
The maximal development of near-FLRW data for the Einstein-scalar field system with spatial topology~$S^3$,
Comm.\ Math.\ Phys., 364 (2018) 879--979.

\bibitem{SteinhardtTurok2004}
\auth{P.J.~Steinhardt and N.~Turok,}
Beyond inflation: a cyclic universe scenario,
Phys.\ Scripta T 117 (2005), 76.

\bibitem{Szekeres70}
\auth{P.~Szekeres,}
Colliding gravitational waves,
Nature 228 (1970), 1183--1184.

\bibitem{Tod:2002wd}
\auth{K.P.~Tod,}
Isotropic cosmological singularities,
in ``The conformal structure of spacetime: Geometry, Analysis, Numerics'', Springer Verlag, 2002, pp.~123--134.

\bibitem{VenezianoSFD}
\auth{G.~Veneziano,}
Scale factor duality for classical and quantum strings,
Phys.\ Letters 265 (1991), 287--294.

\bibitem{Wilson-Ewing-LQC}
\auth{E.~Wilson-Ewing,}
The loop quantum cosmology bounce as a Kasner transition,
Class.\ Quant.\ Grav.\ 35 (2018), 065005.

\bibitem{Yurtsever88} 
\auth{U.~Yurtsever,}
Structure of the singularities produced by colliding plane waves, 
Phys. Rev. D38 (1988) 1706.

\bibitem{Zeldovich}
\auth{Y.B.~Zel'dovich,} 
The equation of state at ultrahigh densities and its relativistic limitations,
Sov.\ Phys.\ JETP 14 (1962), 1143.

\end{thebibliography}
\end{document}